\author{Matteo Capoferri}
\title{Master Thesis}
\newtheorem{thm}{Theorem}[chapter]
\newtheorem{lem}[thm]{Lemma}
\newtheorem{prop}[thm]{Proposition}
\newtheorem{cor}[thm]{Corollary}
\theoremstyle{definition}
\newtheorem{defn}[thm]{Definition}
\newtheorem{rem}[thm]{Remark}
\theoremstyle{definition}
\newtheorem{ex}[thm]{Example}
\titleformat{\chapter}[display]
{\normalfont\LARGE\bfseries}{\chaptertitlename\ \thechapter}{20pt}{\Huge}
\newcommand{\pa}[2]{\langle #1 \, , \, #2 \rangle}
\newcommand{\curv}[0]{\text{curv}\,}
\newcommand{\cha}[0]{\text{char}\,}
\newcommand{\Z}[0]{\mathbb{Z}}
\newcommand{\R}{\mathbb{R}}
\newcommand{\T}{\mathbb{T}}
\newcommand{\N}{\mathbb{N}}
\renewcommand{\S}[0]{\mathbb{S}}
\newcommand{\Hom}{\mathrm{Hom}}
\renewcommand{\arg}{\,\cdot\,}
\newcommand{\tor}{\mathrm{tor}}
\newcommand{\free}{\mathrm{free}}
\renewcommand{\mod}{\text{mod}\,\,}
\renewcommand{\tilde}[1]{\widetilde{#1}}
\renewcommand{\hat}[1]{\widehat{#1}}
\newcommand{\C}{\mathfrak{C}}
\newcommand{\D}{\mathfrak{D}}
\newcommand{\norm}[1]{\left\| #1 \right\|}
\renewcommand{\i}[1]{{#1}_\iota}
\newcommand{\ii}[1]{{#1}_{\iota, I}}
\newcommand{\I}[1]{{#1}_I}
\renewcommand{\braket}[2]{\left\langle #1 \, \vline \, #2 \right\rangle}
\newenvironment{enumerates}{\vspace{-0.25cm} \begin{enumerate}}{ \end{enumerate}}
\begin{document}
\includepdf[pages=-]{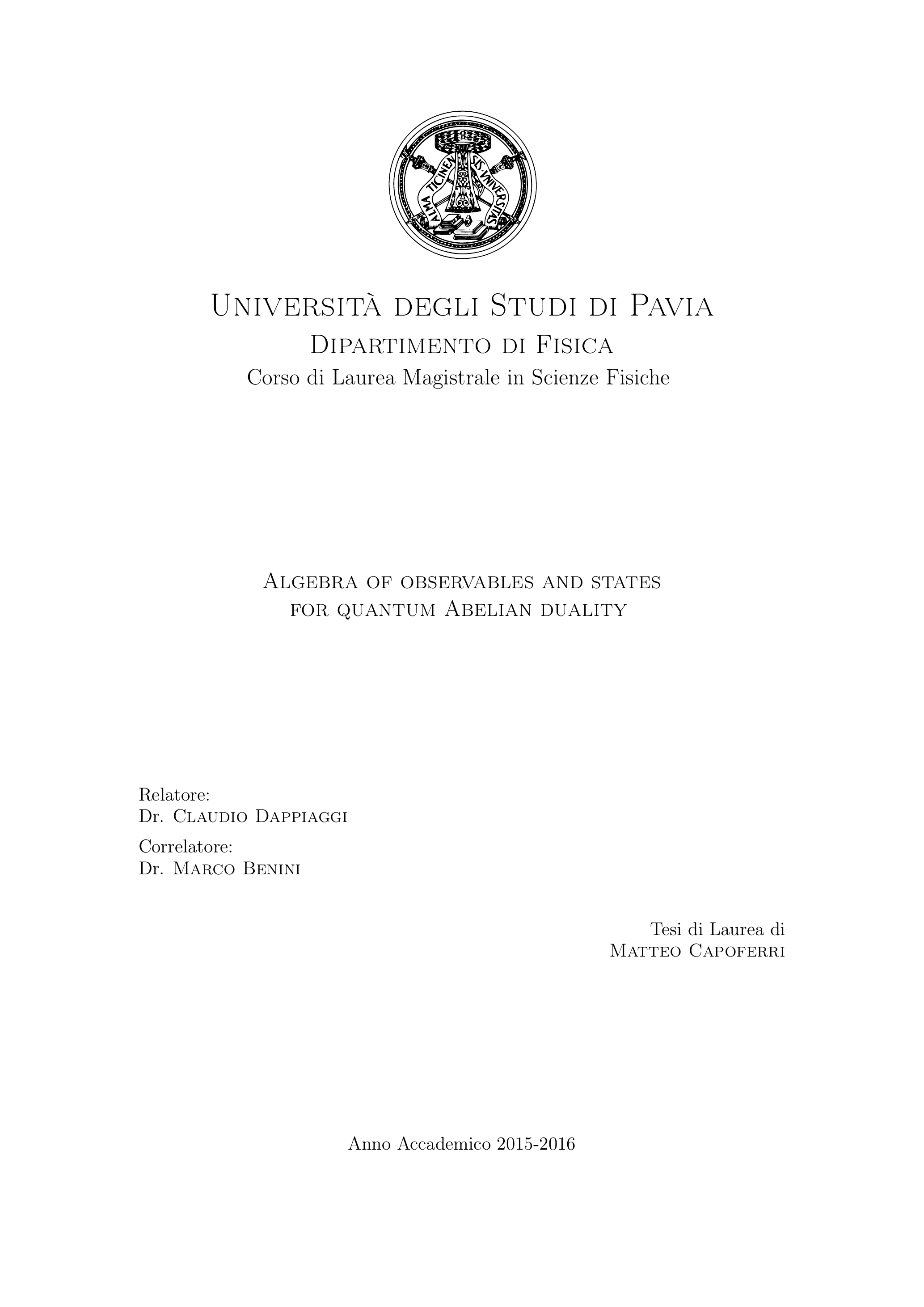}
{\thispagestyle{empty}\cleardoublepage }
\tableofcontents \markboth{\contentsname}{\contentsname}

\chapter*{Introduction} \addcontentsline{toc}{chapter}{Introduction} \markboth{Introduction}{Introduction}

The advent of Quantum Field Theory - henceforth QFT - brought about an actual revolution in physics. 
Its inspiring conceptual idea, namely the description of the elementary constituents of matter as \emph{fields}, allowed to give solution to puzzling, long-standing issues and to make predictions verified with unprecedented accuracy in the experimental arena. QFT was firstly formulated in Minkowski spacetime, an ambient space with a high degree of symmetry. The Poincar\'e group of isometries, together with positivity of energy, permits, in particular, to select a distinguished state, the \emph{vacuum state}, which enjoys the remarkable property of being unique.\\

In 1964 Haag and Kastler proposed a novel framework in which to formulate  QFT on Minkowski spacetime, the \emph{algebraic approach} \cite{HK64}, paving the way for the birth of what is nowadays known as Algebraic Quantum Field Theory (AQFT). This approach prescinds from the choice of a preferred Hilbert space and it encodes all the physical information of the dynamics in an abstract $C^\ast$-algebra, the algebra of observables, endowed with commutation or anticommutation relations, implementing in a natural way causality, locality and Lorentz covariance. Haag and Kastler theory offers a powerful and mathematically rigorous framework, appropriate for being applied to curved spacetimes. In fact, as soon as the background metric is slightly perturbed, thus giving rise to a non-vanishing curvature, the mathematics of standard QFT on Minkowski spacetime breaks down, and substantial modifications of the theory are in order.\\

Over the years, the Haag-Kastler theory, refined and extended in its range of application, led to a precise formulation of QFT on curved spacetimes \cite{WAL94,Dim80}, triggering the discovery of new physical effects never predicted before, preserving, at the same time, its original flavour. AQFT provides promising tools to tackle the description of quantum fields in presence of an external gravitational field. In the framework of AQFT there is not, so far, a fully fledged model of quantum gravity, since, for instance, the background spacetime is fixed by hand at the beginning. Nonetheless, by electing an abstract $C^\ast$-algebra as its key ingredient instead of a Hilbert space, it accomplishes the goal of encompassing all the cases where the absence of a timelike Killing vector field prevents to put forward the standard procedure to identify a global vacuum state. The Hilbert space can change in the Minkowski case as well; however, the advantage of AQFT is to prescind from a case by case analysis as far as the structural aspects of the observables of the theory are concerned.\\

The assignment of \emph{states}, videlicet positive and normalized linear functionals on the algebra of observables, allows to recover the probabilistic description of quantum theory, the expectation value of an observable being the real number returned by the state. Via the GNS construction, each state induces a cyclic representation of the algebra on a Hilbert space, unique up to unitary isomorphisms; different states may lead to inequivalent representations. This fact motivates the importance of the approach suggested by AQFT, that consists in describing a quantum field theory by an abstract $C^\ast$-algebra encoding the usual commutation relation, however without choosing one of its (in general inequivalent) representations right from the biginning. There exists a distinguished class of states, characterized by the short-distance behaviour of their two-point functions, commonly accepted as good physical states: the \emph{Hadamard states} \cite{DB60,KW91}. These states generalize the positive energy condition of the Minkowski vacuum, possessing the same ultraviolet behaviour, simultaneously guaranteeing that the quantum fluctuations of all observables are bounded. In 1996 Radzikowski \cite{Rad96} showed that the Hadamard condition can be formulated in terms of a constraint on the wavefront set of the two-point function of the state, thus bringing the powerful tecniques of microlocal analysis \cite{FIO1,FIO2} eminently inside QFT. This made it possible, for example, the development of a rigorous perturbation theory on any globally hyperbolic spacetime. \\

A further theoretical accomplishment that marks a milestone in the field was made in 2003 by Brunetti, Fredenhagen and Verch \cite{BFV03}, who, in a category-theoretic setting, proposed a model-independent, functorial approach to QFT. They suggested that each quantum field theory should be realized by a functor from the category of globally hyperbolic spacetimes to the category of $\ast$-algebras, called \emph{locally covariant quantum field theory}. Using this language, one can discribe the quantization of a given field theory on all spacetimes at once in a coherent way. Their approach possesses, furthermore, the advantage of implementing automatically the covariance requirements imposed by general relativity.\\

Even though the existence of Hadamard states is well-established for a wide class of spacetimes, a direct exhibition of a concrete Hadamard state has often proved to be a challenging task. The aim of our thesis is the construction of states on globally hyperbolic spacetimes with compact Cauchy surfaces for a quantum field theory described in terms of differential cohomology.\\

In 1985 Cheeger and Simons developed the theory of differential characters \cite{CS85}, a refinement of singular cohomology by means of differential forms. Later developments, e.g. \cite{SS08,BB14}, showed that the Cheeger-Simons differential character provide a model for the abstract theory of differential cohomology, described by a contravariant functor from the category of smooth manifolds to the category of Abelian groups. In recent years, differential cohomology has drawn the attention of theoretical physics. Classical Maxwell theory can be regarded as the theory of a connection on a given spacetime. More precisely, denoting by $L\to M$ a line bundle over a given spacetime $M$, we can call Maxwell field a connection on $L$ whose curvature satisfies Maxwell's equation. The Faraday two-form arises as the curvature of the connection. Since we are dealing wih a gauge theory, we have to introduce an equivalence relation between Maxwell fields, implemented by the gauge group of the theory $\mathcal{G}$. Then, the relevant configurations will be equivalence classes of Maxwell fields up to gauge transformations. If $\mathcal{A}(L)$ denotes the space of connections on $L$ satisfying Maxwell's equation, the gauge equivalence classes of Maxwell fields will be given by $\mathcal{A}(L)/\mathcal{G}$. So far we considered a fixed line bundle, however on the same manifold there can be several inequivalent ones, labelled by the second cohomology group with integer coefficients $H^2(M;\mathbb{Z})$. To fully describe Maxwell theory on a given spacetime, it is important to take into account also gauge classes of Maxwell fields associated to inequivalent line bundles. This brief outline shows that, if we aim at generalizing Maxwell theory to an Abelian gauge theory in arbitrary spacetime dimension, the differential and the topological information are equally relevant and substantial. This is why differential cohomology appears to be the appropriate environment in which to set such a theory. Furthermore, this framework offers the chance to implement the quantization of the electric and magnetic charges in a very natural way.\\

One of the first successful applications of differential cohomology to QFT can be found in \cite{FMSa,FMSb}, where the authors perform their analysis, via a Hamiltonian approach, in the case of ultrastatic spacetimes. Becker, Schenkel and Szabo in \cite{BSS14} put forward the investigation, showing that, in the framework of covariant quantum field theory, it is possible to construct a QFT for differential cohomology on a generic globally hyperbolic Lorentzian manifold, without further restrictions on the metric. A natural continuation of this paper is represented by the work \cite{BPhys} by Becker, Benini, Schenkel and Szabo, where the issue of an Abelian gauge theory with duality on globally hyperbolic spacetimes is tackled in a fully covariant fashion. This is made possible by a modification of Cheeger-Simons differential characters encompassing differential characters with compact support and smooth Pontryagin duality, discussed in detail in \cite{BMath}.\\

It is from here that our work starts. We confront ourselves with the question of how to build a state for such a QFT. Differential cohomology leads to codify the observables into commutative diagrams of Abelian groups whose rows and columns are short exact sequences. The central object, which encodes all the information, is troublesome to manipulate.
As we aim at constructing a state on it, the idea that guides us is to present the central pre-symplectic space as the direct sum of other pre-symplectic spaces of the diagram, more familiar and easier to handle, so that this decomposition is induced at the level of the associated algebras. We can hence split up the problem and pursue the construction of states on the various sectors in which the total algebra has been decomposed separately.

Looking at the diagram of differential cohomology with compact support \cite[Diagram (5.30)]{BMath}, we can confidently expect to obtain three different sectors for our theory: a torsion sector, a topological sector, and a differential sector. As far as the last one is concerned, we expect that, on account of the dynamical constraints, the two-point function of a state on its algebra of observables will have a too small domain to be a bidistribution. Therefore, strictly speaking, it does not make sense to inquire after the structure of its wave front set. What we will actually do is to look for states that are Hadamard in a weak sense, that is to say, states that are the restriction of some proper Hadamard state.

In \cite{BPhys} it is shown that the QFT on differential cohomology is compatible with quantum Abelian duality, where with Abelian duality we mean the generalisation to arbitrary spacetime dimension of the duality between electric and magnetic field in Maxwell theory. Once we have a state, we may wonder how it connects with duality. In particular, we can investigate if duality is compatible with the above decomposition and how it emerges at the level of the GNS triple.\\

After having sketched the framework and motivated the problems we are going to deal with, let us briefly outline the content of the thesis, articulated into four chapters.

In the first chapter we recall some definitions and results in a fragmentary way, just to fix the notation and to make fully intelligible the terms used throughout the work.

In the second chapter, Cheeger-Simons differential characters are presented and differential cohomology theory is built out of it. Then, differential cohomology with compact support and Pontryagin duality are addressed, following \cite{BMath}. Incidentally, a number of pairings is derived, for a later use. Lastly, the construction of the convariant QFT is discussed, according to \cite{BPhys}.

The third and the fourth chapters represent the original part of the thesis. In the third one, after constructing the pre-symplectic structures on the diagram of the observables, we show that there exist splittings realising the desired decomposition. We then discuss the two-dimensional and the four-dimensional cases separately. After observing that in a wide range of examples the torsion sector is vanishing, we build a state for the topological sector and a Hadamard state in the weak sense for the differential sector. In particular, the case $M=\R\times \S^1$ is discussed in full detail, from the differential characters to the state. In conclusion, some remarks are given for spacetimes with non-compact Cauchy surface.

The fourth chapter concerns quantum Abelian duality. After exhibiting a duality for our QFT, we show that it is possible to choose the splittings so that they are compatible with the duality. We then prove that at the level of the GNS Hilbert space the duality is implemented by unitary operators and we discuss the properties of the GNS representation.

\chapter{Notation and conventions}

In this first chapter we will briefly recall some definitions and results, just in order to fix the notation and to make clear the conventions adopted throughout the thesis, without any attempt of completeness.

\section{Fourier transform}
We will make use of the so-called \emph{ordinary frequency convention} for the Fourier transform.

\begin{defn}[Fourier transform]
Let $f:\mathbb{R} \to \mathbb{C}$ be an integrable function. We call \emph{Fourier transform} of $f$ the function defined by:
\begin{equation*}
\hat{f}(\xi)=\int_{-\infty}^{+\infty} dx\, f(x)\, e^{-2\pi i \xi x}.
\end{equation*}
\end{defn}

\begin{defn}[Inverse Fourier transform]
Let $f:\mathbb{R} \to \mathbb{C}$ be an integrable function. We call \emph{inverse Fourier transform} of $f$ the function defined by:
\begin{equation*}
\check{f}(\xi)=\int_{-\infty}^{\infty} dx\, f(x)\,e^{2\pi i \xi x}.
\end{equation*}
\end{defn}
The convention for the Fourier series is chosen accordingly.\\

We denote the n-sphere by $\S^n$. In particular, we assume the radius of $\S^1$ to be equal to $1/2\pi$, i.e. we identify $\S^1\simeq \R/\Z$. Consequently:
$$
\int_{\S^1} d\mu(\S^1) =1,
$$
where $\mu$ is the Haar measure on $\S^1$.

\section{Smooth manifolds}
Smooth manifolds will be the background of our construction. We refer the interested reader to \cite{LEE03} for additional insights.

\begin{defn}
We call \emph{smooth manifold} of dimension $n$ a second-countable, Hausdorff topological space that is locally homeomorphic to $\R^n$, equipped with an atlas whose transition maps are smooth.
\end{defn}

Unless otherwise stated, all manifolds are assumed to be smooth and paracompact.

\begin{defn}
A \emph{Lorentzian manifold} $({M}_n,g)$ is a smooth $n$-dimensional manifold ${M}_n$ endowed with a non-degenerate, smooth, symmetric metric tensor $g\in \otimes^2_s T^\ast {M}_n$ with signature $(n-1,1)$\footnote{A metric tensor has signature $(p,q)$ if, as a quadratic form, it has $p$ positive eigenvalues and $q$ negative eigenvalues.}. An oriented and time-oriented Lorenzian manifold is called \emph{spacetime}.
\end{defn}

\begin{defn}[Globally hyperbolic spacetime]
Let $(M,g)$ be a spacetime. We call \emph{Cauchy surface} a closed achronal set $\Sigma\subset M$ whose domain of dependence $\mathcal{D}(\Sigma)$ satisfies $\mathcal{D}(\Sigma)=M$. A spacetime $(M,g)$ is called \emph{globally hyperbolic} if it admits a Cauchy surface.
\end{defn}

\section{Algebraic formulation and states}
In the algebraic formulation of quantum theory \cite{Haag}, a physical system is described in terms of an abstract $C^\ast$-algebra, whose self-adjoint elements represent the physical observables. The advantage of such an approach is that it provides with a powerful and versatile theory, independent of the choice of a specific Hilbert space. For a complete description and characterization of the algebraic approach in QFT see \cite{Mor13,Str05,BDFY15}.\\

The properties of the system can be reconstructed once a \emph{state} is assigned.

\begin{defn}
Let $\mathcal{A}$ be a unital $\ast$-algebra. We call \emph{state} a linear functional $\omega:\mathcal{A}\to \mathbb{C}$ with the following properties:
\begin{enumerate}[(i)]
\item $\omega(a^\ast a)\geq 0 \quad \forall a\in \mathcal{A}$ (Positivity);
\item $\omega(\mathbb{I})=1$ (Normalization),
\end{enumerate}
where $^\ast:\mathcal{A}\to \mathcal{A}$ is the $\ast$-algebra involution and $\mathbb{I}$ is the unit of $\mathcal{A}$.
\end{defn}

The following result, due to Gelfand, Najmark and Segal - henceforth GNS theorem - states that the Hilbert space description can be recovered from the algebraic one, in a way which is unique up to unitary isomorphisms.

\begin{thm}[GNS Theorem]
Let $\mathcal{A}$ be a unital $C^\ast$-algebra with unit $\mathbb{I}$ and $\omega:\mathcal{A}\to \mathbb{C}$ a positive linear functional such that $\omega(\mathbb{I})=1$. Then:
\begin{enumerates}[(a)]
\itemsep-.3em
\item There exists a triple $(\mathcal{H}_\omega,\pi_\omega,\Psi_\omega)$, where $\mathcal{H}_\omega$ is a Hilbert space, $\pi_\omega:\mathcal{A}\to \mathcal{B}\mathcal{L}(\mathcal{H}_\omega)$ a representation of $\mathcal{A}$ over $\mathcal{H}_\omega$ and $\Psi_\omega$ a vector of $\mathcal{H}_\omega$, for which:

\begin{enumerates}[(i)]
\item $\Psi_\omega$ is cyclic for $\pi_\omega$, i.e. $\pi_\omega (\mathcal{A})\Psi_\omega$ is dense in $\mathcal{H}_\omega$;
\item $\langle \Psi_\omega\,\vline\,\pi_\omega(a)\Psi_\omega\rangle_{\mathcal{H}_\omega}=\omega(a)\quad \forall a\in\mathcal{A}$.
\end{enumerates}

\item If $(\mathcal{H},\pi,\Psi)$ is another triple satisfying (i) and (ii), then there exists a unitary operator $U:\mathcal{H}_\omega\to\mathcal{H}$ such that $\Psi=U\Psi_\omega$ and
$$
\pi(a)=U\pi_\omega(a)U^{-1}\quad \forall a\in \mathcal{A}.
$$
\end{enumerates}
\end{thm}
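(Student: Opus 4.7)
The plan is to carry out the standard GNS construction in four steps: (i) turn $\mathcal{A}$ itself into a pre-Hilbert space via $\omega$, (ii) quotient out the null vectors and complete, (iii) induce the representation by left multiplication, and (iv) prove uniqueness using the cyclic property.

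For part (a), I would first define the sesquilinear form $\pa{a}{b}_\omega := \omega(a^\ast b)$ on $\mathcal{A}$. Positivity of $\omega$ makes this a positive semi-definite Hermitian form, and the usual Cauchy--Schwarz inequality $|\omega(a^\ast b)|^2 \leq \omega(a^\ast a)\,\omega(b^\ast b)$ follows from positivity applied to $\omega((a+\lambda b)^\ast(a+\lambda b))$. Let $\mathcal{N}_\omega := \{a \in \mathcal{A} : \omega(a^\ast a) = 0\}$. Cauchy--Schwarz gives $\mathcal{N}_\omega = \{a : \omega(b^\ast a) = 0 \ \forall b\}$, which shows $\mathcal{N}_\omega$ is a left ideal of $\mathcal{A}$. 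The quotient $\mathcal{A}/\mathcal{N}_\omega$ then carries a genuine inner product descended from $\pa{\arg}{\arg}_\omega$; I define $\mathcal{H}_\omega$ as its Hilbert space completion, and set $\Psi_\omega := [\mathbb{I}]$.

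Next, I define $\pi_\omega(a)[b] := [ab]$ on the dense subspace $\mathcal{A}/\mathcal{N}_\omega$. Well-definedness is immediate because $\mathcal{N}_\omega$ is a left ideal. The crucial step, and the one I expect to be the main technical obstacle, is showing that $\pi_\omega(a)$ extends to a bounded operator on $\mathcal{H}_\omega$. This is where the $C^\ast$-structure enters: using functional calculus one shows that if $a \in \mathcal{A}$ then $\|a\|^2 \mathbb{I} - a^\ast a$ is positive, hence so is $b^\ast(\|a\|^2 \mathbb{I} - a^\ast a)b$ for any $b$; applying $\omega$ and using positivity yields $\omega(b^\ast a^\ast a b) \leq \|a\|^2 \omega(b^\ast b)$, i.e. $\|\pi_\omega(a)[b]\|^2 \leq \|a\|^2 \|[b]\|^2$. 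Hence $\pi_\omega(a)$ extends uniquely by continuity to a bounded operator on $\mathcal{H}_\omega$, and linearity plus $\pi_\omega(a^\ast) = \pi_\omega(a)^\ast$ are checked directly on the dense subspace. Cyclicity of $\Psi_\omega$ is tautological since $\pi_\omega(\mathcal{A})\Psi_\omega = \mathcal{A}/\mathcal{N}_\omega$, which is dense, and the identity $\braket{\Psi_\omega}{\pi_\omega(a)\Psi_\omega}_{\mathcal{H}_\omega} = \omega(\mathbb{I}^\ast a\, \mathbb{I}) = \omega(a)$ follows from the normalization $\omega(\mathbb{I}) = 1$.

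For part (b), given another triple $(\mathcal{H}, \pi, \Psi)$ satisfying (i) and (ii), I would define $U\colon \pi_\omega(\mathcal{A})\Psi_\omega \to \pi(\mathcal{A})\Psi$ by $U\pi_\omega(a)\Psi_\omega := \pi(a)\Psi$. Property (ii) applied to both triples gives
\[
\pa{\pi(a)\Psi}{\pi(b)\Psi}_{\mathcal{H}} = \omega(a^\ast b) = \pa{\pi_\omega(a)\Psi_\omega}{\pi_\omega(b)\Psi_\omega}_{\mathcal{H}_\omega},
\]
so $U$ is well-defined and isometric on a dense subspace. By cyclicity of both vectors, its continuous extension is a surjective isometry, i.e.\ unitary, with $U\Psi_\omega = \Psi$. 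Finally, $U\pi_\omega(a)\pi_\omega(b)\Psi_\omega = U\pi_\omega(ab)\Psi_\omega = \pi(ab)\Psi = \pi(a)U\pi_\omega(b)\Psi_\omega$ on the dense subspace $\pi_\omega(\mathcal{A})\Psi_\omega$ gives the intertwining relation $\pi(a) = U \pi_\omega(a) U^{-1}$ by continuity, completing the proof.
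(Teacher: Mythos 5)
Your construction is correct and is precisely the standard GNS argument (pre-inner product from $\omega$, quotient by the null left ideal, completion, left-multiplication representation with the $C^\ast$-norm bound $\omega(b^\ast a^\ast a b)\leq \norm{a}^2\omega(b^\ast b)$, and uniqueness via the densely defined isometry fixed by cyclicity). The paper only recalls this theorem as background without proving it, deferring to the cited literature, and your proof matches exactly the argument those references give, so there is nothing further to reconcile.
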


Let $({M},g)$ be a connected globally hyperbolic spacetime and let $P: C^\infty (M) \to C^\infty(M)$ a Cauchy hyperbolic operator \cite{B15}. Define $S:=\{f\in C^\infty(M)\,\,\vline\,\, Pf=0 \}$. 

\begin{defn}[Advanced and retarded Green operators {\cite[Definition 3.4.1]{BGP}}]
A linear map $G^+:C_c^\infty(M)\to C^{\infty}(M)$ satisfying:
\begin{enumerates}[(i)]
\item $P\circ G^+=\mathrm{Id}_{C_c^\infty(M)}$;
\item $G^+\circ P\upharpoonright _{C_c^\infty(M)}=\mathrm{Id}_{C_c^\infty(M)}$;
\item $\mathrm{supp}(G^+ f)\subseteq J_+(\mathrm{supp}(f))$, for all $f\in C_c^\infty(M)$,
\end{enumerates}
is called \emph{advanced Green operator for $P$}. 

Similarly, a linear map $G^-:C_c^\infty(M)\to C^{\infty}(M)$ satisfying (i), (ii) and
\begin{enumerate}[(iii')]
\item $\mathrm{supp}(G^- f)\subseteq J_-(\mathrm{supp}(f))$, for all $f\in C_c^\infty(M)$
\end{enumerate}
is called \emph{retarded Green operator for $P$}.
\end{defn}

If $P$ is a normally hyperbolic operator the advanced and retarded Green operators always exist; what is more, they are unique \cite{BGP}. This is the case, for instance, of the d'Alambert-de Rham operator.

\begin{defn}[Causal propagator]
The linear map $$G=G^+-G^-: C_c^\infty(M)\to C^\infty(M)$$ is called \emph{causal propagator for $P$}.
\end{defn}

The salient properties of the causal propagator are summarized by the following theorem.
\begin{thm}
With the above notation, the following sequence is exact:
\begin{equation}
\xymatrix{
0 \ar[r] & C_c^\infty(M) \ar[r]^-P &  C_c^\infty(M) \ar[r]^-G & C^\infty(M) \ar[r]^-P & C^\infty(M) \ar[r] & 0.
}
\end{equation}
\end{thm}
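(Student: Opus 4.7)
The plan is to verify exactness at each of the four nodes of the sequence separately, in each case reducing the claim to the defining properties (i)--(iii)/(iii') of $G^\pm$ together with the causal structure of the globally hyperbolic manifold $(M,g)$.

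First, I would dispose of the two easy nodes. Injectivity at the leftmost $C_c^\infty(M)$ is immediate: if $Pf=0$ with $f\in C_c^\infty(M)$, property (ii) of $G^+$ gives $f=G^+Pf=0$. At the second $C_c^\infty(M)$, the inclusion $\mathrm{im}\,P\subseteq \ker G$ is just $GP=(G^+-G^-)P=\mathrm{Id}-\mathrm{Id}=0$, again by (ii). For the reverse inclusion, if $Gf=0$ then $G^+f=G^-f$; call this common element $h$. By (iii) and (iii') its support is contained in $J_+(\mathrm{supp}\,f)\cap J_-(\mathrm{supp}\,f)$, which is compact by the standard causality property of globally hyperbolic spacetimes applied to the compact set $\mathrm{supp}\,f$. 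Hence $h\in C_c^\infty(M)$, and $Ph=PG^+f=f$ by (i).

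At the third node, $\mathrm{im}\,G\subseteq \ker P$ is again immediate from (i): $PG=PG^+-PG^-=0$. The interesting direction, $\ker P\subseteq \mathrm{im}\,G$, is the core analytic step. My plan is to use a partition of unity adapted to a Cauchy surface: pick Cauchy surfaces $\Sigma_-\prec\Sigma_+$ and a smooth cut-off $\chi\in C^\infty(M)$ with $\chi=0$ in $J_-(\Sigma_-)$ and $\chi=1$ in $J_+(\Sigma_+)$. Given $u\in C^\infty(M)$ with $Pu=0$, set
\[
f := P(\chi u) = -P\bigl((1-\chi)u\bigr),
\]
the equality using $Pu=0$. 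Since $P$ is a differential operator and $Pu=0$, one has $f=[P,\chi]u$, so $\mathrm{supp}\,f$ lies in $\mathrm{supp}(d\chi)\cap J_+(\Sigma_-)\cap J_-(\Sigma_+)$, which I can arrange to be compact (refining $\chi$ to a spatially compact transition region built from a global splitting of $M$ into $\mathbb{R}\times\Sigma$). Then one checks $G^+f=\chi u$ and $G^-f=-(1-\chi)u$, from which $Gf=\chi u+(1-\chi)u=u$.

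Finally I would handle surjectivity of the rightmost $P$ similarly: given $v\in C^\infty(M)$, split $v=v_++v_-$ with $\mathrm{supp}\,v_\pm$ past/future compact using the same Cauchy partition of unity, solve each piece using the extended advanced/retarded Green operators (which are uniquely determined on past/future compact data), and add. The main obstacle throughout is controlling supports: showing that the cut-off construction yields compactly supported $f$ in step three, and that the extended Green operators are well-defined on past- and future-compact sections in step four. Both hinge on the Cauchy temporal splitting $M\simeq \mathbb{R}\times\Sigma$ and the fact that, in a globally hyperbolic manifold, the intersection $J_+(K_1)\cap J_-(K_2)$ of causal cones over compacta is itself compact.
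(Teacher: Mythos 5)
Your treatment of the first two nodes is correct and is essentially the standard argument (the paper itself gives no proof here; it quotes this as a known result from B\"ar--Ginoux--Pf\"affle): injectivity via $G^+Pf=f$, the inclusion $\mathrm{im}\,P\subseteq\ker G$ via (ii), and for $\ker G\subseteq \mathrm{im}\,P$ the observation that $h:=G^+f=G^-f$ is supported in $J_+(\mathrm{supp}f)\cap J_-(\mathrm{supp}f)$, which is compact by global hyperbolicity, so $f=Ph$. The inclusion $\mathrm{im}\,G\subseteq\ker P$ and the sketch of surjectivity at the last node (splitting $v$ by a partition of unity $\{\chi_+,\chi_-\}$ with past/future compact supports and applying the extended Green operators) are also sound, provided you import the fact that $G^\pm$ extend to past/future compact sections, which is a genuine external input (B\"ar 2015) and not a consequence of (i)--(iii) alone.

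The genuine gap is at the third node. For $u\in C^\infty(M)$ with $Pu=0$ you set $f=[P,\chi]u$, supported in the slab between $\Sigma_-$ and $\Sigma_+$ intersected with $\mathrm{supp}\,u$, and you claim you can "arrange" this to be compact by refining $\chi$ to have a spatially compact transition region. That is impossible: $\chi$ must pass from $0$ to $1$ along \emph{every} inextendible causal curve, so $\mathrm{supp}\,d\chi$ must meet every such curve and therefore cannot be spatially compact when the Cauchy surface is noncompact. The compactness of $\mathrm{supp}\,f$ holds precisely when $u$ has spacelike compact support (then $\mathrm{supp}\,u\cap J_+(\Sigma_-)\cap J_-(\Sigma_+)$ is compact), and in fact exactness at this node is \emph{false} for general smooth solutions: on two-dimensional Minkowski space the constant solution $u\equiv 1$ of the wave equation is not in $\mathrm{im}\,G$, since $\mathrm{supp}(Gf)\subseteq J(\mathrm{supp}f)\neq M$. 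The correct statement (as in BGP) replaces $C^\infty(M)$ by the smooth functions with spacelike compact support $C^\infty_{sc}(M)$; in the thesis this distinction evaporates because the Cauchy surfaces of interest are compact, so that $J(\Sigma)=M$ and $C^\infty_{sc}(M)=C^\infty(M)$. Your argument becomes correct once you add this support hypothesis (or the compactness of $\Sigma$) explicitly; without it, the step and the claim it is meant to establish both fail.
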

The causal propagator allows to define a weakly non-degenerate bilinear map:
$$
\sigma(f,h):=\int_M f (Gh) \mathrm{dVol}_M, \qquad f,h\in C_c^\infty(M).
$$
Let $\mathcal{A}$ be the Weyl algebra whose generators are the abstract symbols $\mathcal{W}(Gf)$, $f\in C_c^\infty(M)$ satisfying the relations:
\begin{align*}
\mathcal{W}(Gf)\mathcal{W}(Gh)=e^{2\pi i \sigma(f,h)}\mathcal{W}(G(f+h)).
\end{align*}
The involution:
$$
^\ast: \mathcal{A}\to \mathcal{A}, \qquad \mathcal{W}(Gf) \mapsto\mathcal{W}(-Gf),
$$
endows $\mathcal{A}$ with a $\ast$-algebra structure.

\begin{defn}
Let $\omega:\mathcal{A} \to \mathbb{C}$ be an analytic state \cite{BR03}. We call \textit{two-point function} of $\omega$ the map:
\begin{eqnarray*}
&\omega_2: C_c^\infty(M)\times  C_c^\infty(M) \to \mathbb{C} &\\
&(f_1,f_2)\mapsto -\dfrac{\partial^2}{\partial s \partial t}\left[ \omega(\mathcal{W}(sGf_1 + tGf_2))e^{\frac{1}{2}ist\sigma(f_1,f_2)}    \right]_{|_{s=t=0}}.&
\end{eqnarray*}
\end{defn}

The higher $n$-point functions
$$
\omega_n: \underset{\mathrm{n}\,\mathrm{times}}{\underbrace{C_c^\infty(M)\times \cdots \times C_c^\infty(M)}} \to \mathbb{C} 
$$
are defined similarly (see \cite{WAL94} for details). Assuming that the $n$-point functions are continuous with respect to the standard topology on $C_c^\infty(M)$, the Schwartz kernel theorem \cite[Theorem 5.2.1]{Hor} allows to write them in terms of their distributional kernel:
\begin{equation}
\omega_n(f_1,\dots,f_n)=\int_{M^n} \omega_n(x_1,\dots,x_n)f_1(x_1)\cdots f_n(x_n) \,\mathrm{dVol_{M^n}}.
\end{equation}

\begin{defn}[Quasifree state {\cite[Definition 5.2.22]{BDFY15}}]
A state $\omega:\mathcal{A}\to \mathbb{C}$ is called \emph{quasifree} if its $n$-point functions satisfy:
\begin{equation*}
\begin{cases}
& \omega_n(f_1,\dots,f_n)=0, \qquad \text{for }n\text{ odd}\\
& \omega_n(f_1,\dots,f_n)=\sum_{\Pi} \omega_2(f_{i_1},f_{i_2})\cdots\omega_2(f_{i_{n-1}},f_{i_n}), \qquad \text{for }n\text{ even},
\end{cases}
\end{equation*}
where $\Pi$ denotes all the possible partitions of the set $\{1,2,\dots,n\}$ into pairs $$\{i_1,i_2\},\dots,\{i_{n-1},i_n\}$$ with $i_{2j-1}< i_{2k}$ for $j=1,2,\dots, n/2$.
\end{defn}

There is a distinguished class of states, called \emph{Hadamard states}, of particular interest to physics characterized by a constraint on the distributional singularity structure of their two-point functions \cite{Rad96}. In fact, an arbitrary state is in general too singular for physical purposes. Hadamard states, besides having an ultraviolet behaviour mimicking that of the Poincar\'e vacuum, guarantee that the quantum fluctuations of all observables are bounded. They also allow for an extension of the algebra of fields to encompass the Wick polynomials. The definition of a Hadamard state relies upon microlocal analysis tools and concepts: we refer the reader to \cite{Hor} for a complete and detailed explanation of the mathematical objects involved.

\begin{defn}[Hadamard state]
A state $\omega:\mathcal{A}\to \mathbb{C}$ is a \emph{Hadamard state} if its two-point function $\omega_2$ satisfies the \emph{microlocal spectrum condition}:
\begin{equation}
WF(\omega_2)=\{(x,y,k_x,-k_y)\in T^\ast M^2\setminus\{0\} \,\,\vline\,\, (x,k_x)\sim (y,k_y), k_x \vartriangleright 0 \},
\end{equation}
where $WF$ denotes the wavefront set and $(x,k_x)\sim (y,k_y)$ means that there is a lightlike geodesic connecting $x$ to $y$, such that $k_x$ is cotangent to the geodesic and $k_y$ is the parallel transport of $k_x$ along the geodesic. By $k_x \vartriangleright 0$ we mean that $k_x$ is non-vanishing and $k_x(v)\geq 0$ for all future-directed $v\in T_xM$.
\end{defn}

Observe that the above definition implies that, given two Hadamard states $\omega$ and $\tilde{\omega}$, the difference of their two-point functions is smooth:
$$
\omega_2-\tilde{\omega}_2\in C^\infty (M\times M;\mathbb{C}).
$$

\section{Fock space}\label{sec:1:Fock}
Let $\mathcal{H}$ be a complex Hilbert space and denote by $\mathcal{H}^{\otimes n}=\mathcal{H}\otimes \mathcal{H}\otimes \cdots \otimes \mathcal{H}$ the $n$-fold tensor product of $\mathcal{H}$ with itself.

\begin{defn}[Fock space]
The direct sum:
$$
\mathfrak{F}(\mathcal{H}):=\bigoplus_{n\in \N} \mathcal{H}^{\otimes n}
$$
with $\mathcal{H}^{\otimes 0}=\mathbb{C}$ is called \emph{Fock space}.
\end{defn}

A generic element in $\mathfrak{F}(\mathcal{H})$ can be seen as a sequence $\{\psi_n\}_{n\in\N}$, with $\psi_n\in\mathcal{H}^{\otimes n}$. Introduce on the operators $P_{\pm}:\mathfrak{F}(\mathcal{H})\to \mathfrak{F}(\mathcal{H})$ defined by the continuous linear extension of:
\begin{align*}
& P_+(\varphi_1\otimes \varphi_2 \otimes \cdots \otimes \varphi_n)=\dfrac{1}{n!}\sum_{\pi\in S_n} \varphi_{\pi(1)}\otimes \varphi_{\pi(2)} \otimes \cdots \otimes \varphi_{\pi(n)}, \\
& P_-(\varphi_1\otimes \varphi_2 \otimes \cdots \otimes \varphi_n)=\dfrac{1}{n!}\sum_{\pi\in S_n} \sigma(\pi)\,\, \varphi_{\pi(1)}\otimes \varphi_{\pi(2)} \otimes \cdots \otimes \varphi_{\pi(n)}, 
\end{align*}
where $S_n$ is the symmetric group of degree $n$ and $\sigma(\pi)$ denotes the sign of the permutation $\pi\in S_n$.

\begin{defn}
We call \emph{symmetrized Fock space} the image of $\mathfrak{F}(\mathcal{H})$ via $P_+$:
$$
\mathfrak{F}_+(\mathcal{H}):=P_+ \mathfrak{F}(\mathcal{H}).
$$
Analogously, we call \emph{antisymmetrized Fock space} the image of $\mathfrak{F}(\mathcal{H})$ via $P_-$:
$$
\mathfrak{F}_-(\mathcal{H}):=P_- \mathfrak{F}(\mathcal{H}).
$$
\end{defn}

We introduce the \emph{number operator} $N$:
\begin{eqnarray*}
& \mathcal{D}(N):=\{\{\psi_n\}_{n\in\N}\in \mathfrak{F}(\mathcal{H})\,\,\vline\,\, \sum_{n\in\N} n^2 \norm{\psi_n}^2<+\infty    \}, &\\
& N: \mathcal{D}(N)\to \mathfrak{F}(\mathcal{H})&\\
&\{\psi_n\}_{n\in\N}\mapsto \{n\psi_n\}_{n\in \N},&
\end{eqnarray*}
which we interpret as ``counting'' the number of particles in each $n$-particle subspace $\mathcal{H}^{\otimes n}$ of $\mathfrak{F}(\mathcal{H})$. $N$ is essentially self-adjoint, as can be seen directly from its spectral decomposition. 

There exists a distinguished vector in $\mathfrak{F}(\mathcal{H})$ called \emph{vacuum vector}:
$$
\Psi:=(1,0,0,\dots).
$$
We can resort to the vacuum vector to define the creation and the annihilation operators, $a^\dagger$ and $a$, which prove to be crucial in many respects. For each $\varphi\in\mathcal{H}$, $a^\dagger(\varphi)$ and $a(\varphi)$ are defined as follows: 
\begin{align*}
& a(\varphi)\Psi=0,\\
& a^\dagger(\varphi)\Psi=(0,\varphi,0,\dots),\\
&a(\varphi)(\varphi_1\otimes\varphi_2\otimes \cdots \otimes \varphi_n)=n^{1/2} \langle \varphi\, \vline\, \varphi_1\rangle_\mathcal{H} \varphi_2 \otimes \cdots \otimes \varphi_n,\\
&a^\dagger(\varphi)(\varphi_1\otimes\varphi_2\otimes \cdots \otimes \varphi_n)=(n+1)^{1/2} \varphi \otimes \varphi_1\otimes\varphi_2\otimes \cdots \otimes \varphi_n
\end{align*}
The creation and annihilation operators can be extended by linearity to densely defined operators on $\mathfrak{F}(\mathcal{H})$; it can be proved that they admit a well-defined extension to the domain $\mathcal{D}(N^{1/2})$ of the operator $N^{1/2}$. At last, we define symmetrized/antisymmetrized creation and annihilation operators on the Fock space $\mathfrak{F}(\mathcal{H})$ by:
\begin{align*}
& a_\pm(\varphi):=a(\varphi)P_\pm,\\
& a^\dagger_\pm(\varphi):=P_\pm a^\dagger (\varphi).
\end{align*}

On Fock spaces, there is a procedure that allows to promote operators defined on the one-particle Hilbert space $\mathcal{H}$ to operators on the whole symmetrized/antisymmetrized Fock spaces. This procedure is called \emph{second quantization} \cite{BR03}. Let us discuss it for unitary operators. Let $U:\mathcal{H}\to \mathcal{H}$ be a unitary operator. Set $U_0=\mathbb{I}$ and:
\begin{equation*}
U_n(P_\pm(\varphi_1\otimes\varphi_2\otimes \cdots \otimes \varphi_n)):=P_\pm(U\varphi_1\otimes U\varphi_2 \otimes \cdots \otimes U\varphi_n)
\end{equation*}
for $n\in\N_0$, and then extend them by continuity. We call second quantization of $U$ the operator $\Gamma(U):\mathfrak{F}_\pm(\mathcal{H})\to \mathfrak{F}_\pm(\mathcal{H})$ given by:
\begin{equation}\label{eq:1:unitary_second_quantization}
\Gamma(U):= \bigoplus_{n\in\N} U_n.
\end{equation}
It is a straightforward check that $\Gamma(U)$ is unitary.
\chapter{Differential Cohomology and Covariant QFT}\label{chapter:2}
In this chapter, after introducing briefly the salient elements of Cheeger-Simons theory of differential characters, we will analyse in quite some detail how a covariant quantum field theory can be built out of it. 

\section{Differential cohomology}
Differential cohomology is an algebro-differential construction which puts in connection algebraic homology and cohomology with the geometry of differential forms. To some extent, it allows to refine cohomology through differential forms. 

Cheeger and Simons \cite{CS85} were the first, in 1985, who investigated differential cohomology along with its graded commutative ring structure. Various constructions later developed, such as smooth Deligne cohomology, differential cocycles and De Rham-Federer currents, were recognised to be different models for the same theory. Simons-Sullivan \cite{SS08} and B\"ar-Becker \cite{BB14} showed that differential cohomology is uniquely determined up to unique natural equivalences. In our work, for practical purposes and for the sake of clarity we will adopt the Cheeger-Simons model; notwithstanding, the overall philosophy is model-independent.

\subsection{Smooth singular homology and cohomology}
Just to fix the notation, we succinctly recall a few definitions and results regarding smooth singular homology and cohomology. We refer the interested reader to \cite{HAT02} for further insights and detailed proofs.\\

Let $M$ be a smooth manifold. We denote by $C_\ast(M;\mathbb{Z})$ the chain complex of smooth singular chains in $M$ with coefficients in $\Z$. For each $k\geq 0$ there exists a boundary map
$$
\partial_k: C_k(M;\Z) \to C_{k-1}(M;\mathbb{Z})
$$
which is a homomorphism of free Abelian groups fulfilling $\partial_{k+1}\circ \partial_k=0$. The subscript $_k$ will be dropped in the following whenever the domain and the target spaces are clear from the context. There are two distinguished subgroups of $C_k(M;\Z)$ for every $k\in\N$: the \emph{smooth singular k-cycles} $Z_k(M;\Z):= Ker(\partial_{k})$ and the \emph{smooth singular k-boundaries} $B_k(M;\Z):=Im(\partial_{k+1})$.
\begin{defn}
The \emph{k-th smooth singular homology group} is the quotient:
\begin{equation}
H_k(M;\Z):=\dfrac{Z_k(M;\Z)}{B_k(M;\Z)}=\dfrac{Ker(\partial_k)}{Im(\partial_{k+1})}.
\end{equation}
\end{defn} 

Let $S\subseteq M$ be a smooth submanifold of $M$. The inclusion $\iota: S \hookrightarrow M$ allows to identify $C_\ast(S;\Z)$ as a subcomplex in $C_\ast(M;\Z)$. We define the complex of \emph{smooth singular chains on $M$ relative to $S$} to be the quotient:
\begin{equation*}
C_\ast(M,S;\Z):=\dfrac{C_\ast(M;\Z)}{C_\ast(S;\Z)}.
\end{equation*}
It can be shown that $C_k(M,S;\Z)$ turns out to be a free Abelian group and that, again, the boundary homomorphism $\partial_k:C_k(M,S;\Z)\to C_{k-1}(M,S;\Z)$ is well defined. Its kernel is the Abelian group of relative cycles $Z_k(M,S;\Z)$, whereas its image is the Abelian group of relative boundaries $B_{k-1}(M,S;\Z)$.  Therefore, we call \emph{relative k-th homology group} the quotient
$$
H_k(M,S;\Z):=\dfrac{Z_k(M,S;\Z)}{B_k(M,S;\Z)}=\dfrac{Ker(\partial_k)}{Im(\partial_{k+1})}.
$$

Given an Abelian group $G$, the cochain complex $C^\ast(M;G)$ of G-valued smooth singular cochains in $M$ is defined by 
$$C^k(M;G):=\Hom(C_k(M;\Z);G),$$ where the coboundary map is obtained by dualizing the boundary homomorphism:
\begin{gather*}
\delta_k: C^k(M;G)\to C^{k+1}(M;G)\\
\varphi\mapsto \delta_k \varphi= \varphi \circ \partial_{k+1}.
\end{gather*}
The Abelian groups of $k$-th $G$-cocycles and $k$-th $G$-coboundaries are respectively defined as $Z^k(M;G):=Ker(\delta^k)$ and $B^k(M;G):=Im(\delta^{k-1})$. 
\begin{defn}
We define the \emph{$k$-th smooth singular cohomology group with $G$ coefficients as} the quotient:
$$
H^k(M;G):=\dfrac{Z^k(M;G)}{B^k(M;G)}.
$$
\end{defn}
The relative version of a smooth singular cohomology is obtained analogously, replacing everywhere the homology groups with their relative counterparts.

As far as categorical properties are concerned, let $\mathsf{Man}$ be the category whose objects are smooth manifolds and whose morphisms are smooth maps between smooth manifolds. Furthermore, let $\mathsf{Ab}$ be the category whose objects are Abelian groups and whose morphisms are group homomorphisms. Then, $H_k(\,\cdot\,;\Z):\mathsf{Man} \to \mathsf{Ab}$ is a covariant functor from $\mathsf{Man}$  to $\mathsf{Ab}$, whereas $H^k(\arg;G): \mathsf{Man}\to \mathsf{Ab}$ is a contravariant functor between the same categories.\\

The cohomology group $H^k(M;G)$ is not isomorphic to $\Hom(H_k(M;\Z);G)$ in general. Their relation is set by the universal coefficient theorem for cohomology \cite[Theorem 3.2]{HAT02}:
\begin{equation}\label{seq:2:Universal_coefficient_cohomology}
\xymatrix@C-=0.5cm{
0 \ar[r] & \text{Ext}(H_{k-1}(M;\Z),G) \ar[r] & H^k(M;G) \ar[r] & \Hom(H_k(M;\Z);G) \ar[r]  &0.\\
}
\end{equation}
It entails that the homomorphism $$h: H^k(M;G) \to \Hom(H_k(M;\Z);G)$$ is always an epimorphism but it may fail being a monomorphism. However, when $G=\R,\T$, $h$ turns out to be an isomorphism, because $\text{Ext}(\arg,\R)=0$ and $\text{Ext}(\arg, \T)=0$, being $\R$ and $\T$ divisible. Contrariwise, for $G=\Z$, $\text{Ext}(\arg,\Z)\neq 0$ and $H^k(M;\Z)\not\simeq \Hom(H_k(M;\Z);\Z)$ in general. We denote by $H_{\tor}^k(M;\Z)$ the image of $\text{Ext}(H_{k-1}(M;\Z),\Z)$ in $H^k(M;\Z)$ in \eqref{seq:2:Universal_coefficient_cohomology}, namely the \emph{k-th torsion subgroup}, and by $H_{\free}^k(M;Z):=H^k(M;\Z)/H^k_{\tor}(M;\Z)$ the \emph{k-th free cohomology group}.

\subsection{Differential Characters}
\begin{defn}
Let $M$ be a smooth manifold and let $\Omega^k(M)$ be the vector space of differential forms of degree $k$ on $M$. The Abelian group of Cheeger-Simons differential characters of degree $k\in\N\setminus \{0\}$ is defined by:
\begin{equation}\label{def:2:differential_characters}
\hat{H}^k(M;\Z):=\left\{ h\in \Hom(Z_{k-1}(M;\Z);\T)\,\,|\,\, h\circ \partial \in \Omega^k(M) \right\},
\end{equation}  
where $h\circ\partial \in\Omega^k(M)$ means that there exists $\omega_h\in \Omega^k(M)$ such that
\begin{equation}\label{eq:2:differential_characters_omega1}
h(\partial \gamma)=\int_\gamma \omega_h \quad\mod \Z
\end{equation}
for all $\gamma\in C_k(M;\Z)$.
\end{defn}
The Abelian group structure is inherited from $\T$. Resorting to the additive notation, we have that $(h+h^\prime)(z):=h(z)+h^\prime(z)$, $\forall h,h^\prime \in \hat{H}^k(M;\Z)$ and $\forall z\in Z_{k-1}(M;\Z)$. The unit element is the constant homomorphism $\textbf{0}\in \hat{H}^k(M;\Z)$, $z\mapsto \textbf{0}(z)=1\in \T\,\,\forall z\in Z_{k-1}(M;\Z)$, while for every $h\in \hat{H}^k(M;\Z)$ the inverse element $-h$ is defined by $-h(z):=(h(z))^{-1}$ $\forall z\in Z_{k-1}(M;\Z)$.\\

There exist four noteworthy homomorphisms having $\hat{H}^k(M;\Z)$ as domain or target. Firstly, it is straightforward to prove that $\omega_h$ is uniquely determined by \eqref{eq:2:differential_characters_omega1}. Furthermore, $\omega_h$ is closed and with integer periods. Introducing the notation
$$
\Omega_\Z ^k(M):=\left\{ \omega \in \Omega^k(M) \,\, \vline \,\, \int_z \omega \in \Z\quad \forall z\in Z_k(M;\Z) \right\},
$$
we have that $\omega_h\in \Omega_\Z^k(M)\subseteq \Omega_d^k(M)\subseteq \Omega^k(M)$. It is therefore defined the \emph{curvature} homomorphism:
\begin{gather}\label{def:2:curvature}
\curv : \hat{H}^k(M;\Z)\to \Omega_\Z^k(M) \nonumber\\
h \mapsto \omega_h.
\end{gather}

Secondly, consider any element $h\in \hat{H}^k(M;\Z)$. Since $Z_{k-1}(M;\Z)$ is a free $\Z$-module, $h$ can be lifted to $\tilde{h}\in \Hom(Z_{k-1}(M;\Z);\R)$ along the quotient map $\R\to \T$. Construct a k-cochain via
\begin{gather*}
\eta_{\tilde{h}} : C_k(M;\Z) \to \R \\
\gamma \mapsto \eta_{\tilde{h}}(\gamma) = \int_\gamma \curv h - \tilde{h}\circ \partial (\gamma).
\end{gather*}
It is an easy check that $\eta_{\tilde{h}}$ defines an integer-valued coclosed cochain, i.e. $\delta \eta_{\tilde{h}}=0$. Therefore, the cohomology class $[\eta_{\tilde{h}}]\in H^k(M;\Z)$ is well defined and independent of the choice of the lifting. We call \emph{characteristic class} the homomorphism given by:
\begin{gather}\label{def:2:characteristic_class}
\cha : \hat{H}^k(M;\Z) \to H^k(M;\Z) \nonumber\\
h\mapsto [\eta_{\tilde{h}}].
\end{gather}

$\mathrm{curv}$ and $\mathrm{char}$ are surjective but, in general, not injective. This means that differential characters contain further information stored in the kernel of the above maps. Consider an arbitrary element $u\in H^{k-1}(M;\T)$. As $\T$ is a divisible group, the universal coefficient theorem yields an isomorphism:
$$
iso: H^{k-1}(M;\T)\to \Hom(H_{k-1}(M;\Z);\T).
$$
Denoting by $\pi: Z_{k-1}(M;\Z) \to H_{k-1}(M;\Z)$ the quotient homomorphism, set 
$$
\kappa (u):=iso(u) \circ \pi : Z_{k-1}(M;\Z) \to \T.
$$
We call \emph{inclusion of flat classes} the homomorphism:
\begin{gather}\label{def:2:kappa}
\kappa: H^{k-1}(M;\T) \to \hat{H}^k(M;\Z) \nonumber\\
u\mapsto \kappa (u).
\end{gather}
By construction the curvature of $\kappa (u)$ is vanishing.

At last, let $A\in \Omega^{k-1}(M)$. It defines a differential character $h_A\in \hat{H}^k(M;\Z)$ via:
\begin{equation}
h_A(z)=\int_z A \quad\mod\Z, \qquad \forall z\in Z_{k-1}(M;\Z).
\end{equation}
It is well-defined and its curvature is given by $\curv A=dA$. As a matter of fact,
$$
h_A(\partial \gamma)=\int_{\partial \gamma} A \quad \mod\Z= \int_\gamma dA \quad \mod \Z, \qquad \forall \gamma \in C_k(M;\Z).
$$
The characteristic class of $h_A$ is trivial, as one realizes picking the real lift $\int_\cdot A \in \Hom (Z_{k-1}(M;\Z);\R)$. We call \emph{topological trivialization} the injective homomorphism
\begin{gather}	\label{def:2:iota}
\iota: \dfrac{\Omega^{k-1}(M)}{\Omega_\Z^{k-1}(M)}\to \hat{H}^k(M;\Z) \nonumber\\
[A]\mapsto h_A.
\end{gather}

Cheeger and Simons \cite{CS85} showed that differential characters together with the maps just defined fit into the following commutative diagram, in which rows and columns are short exact sequences:

\begin{equation}\label{diag:2:diff_char}
\xymatrix{
  & 0 \ar[d] & 0\ar[d] & 0\ar[d] &\\
0\ar[r] & \dfrac{H^{k-1}(M;\R)}{H^{k-1}_{\free}(M;\Z)} \ar[d] \ar[r]& \dfrac{\Omega^{k-1}(M)}{\Omega^{k-1}_\Z(M)} \ar[d]^-\iota\ar[r]^-d & d\Omega^{k-1}(M)\ar[d] \ar[r] & 0\\
0\ar[r] & H^{k-1}(M;\T) \ar[d]\ar[r]^-\kappa& \hat{H}^k(M;\Z) \ar[d]^-\cha\ar[r]^-\curv& \Omega^k_\Z(M)\ar[d] \ar[r]& 0\\
0\ar[r] & H^k_{\tor}(M;\Z)\ar[d]\ar[r] & H^{k}(M;\Z) \ar[d]\ar[r]& H^k_{\free}(M;\Z) \ar[d]\ar[r]&0\\
  & 0 & 0 & 0 &\\
}
\end{equation}

In addition to this, the assignment of $\hat{H}^k(M;\Z)$ to $M$ defines a contravariant functor from the category of smooth manifolds $\mathsf{Man}$ to the category of Abelian groups $\mathsf{Ab}$; the same holds true for all of the other Abelian groups in diagram \eqref{diag:2:diff_char}. Furthermore, the homomorphisms in diagram \eqref{diag:2:diff_char} are natural transformations.

It is worth noticing that differential cohomology is uniquely defined by diagram \eqref{diag:2:diff_char} up to unique natural isomorphisms.

\begin{defn}[Differential cohomology theory {\cite[Definition 5.9]{BB14}}]   We call differential cohomology theory a contravariant functor $\tilde{H}^\ast(\arg;\Z)$ from the category of smooth manifold $\mathsf{Man}$ to the category of $\Z$-graded Abelian groups $\mathsf{Ab}^\Z$, together with four natural transformations:
\begin{enumerates}[$\bullet$]
\item $\widetilde{\curv}: \widetilde{H}^\ast(\arg;\Z) \Rightarrow \Omega^\ast_\Z(\arg)$, called \emph{curvature};
\item $\widetilde{\cha}:\widetilde{H}^\ast(\arg;\Z) 	\Rightarrow H^\ast(\arg;\Z) $, called \emph{characteristic class};
\item $\tilde{\iota}: \dfrac{\Omega^{\ast-1}(\arg)}{\Omega^{\ast-1}_\Z(\arg)}\Rightarrow \widetilde{H}^\ast(\arg;\Z)$, called \emph{topological trivialization};
\item $\tilde{\kappa}: H^{\ast-1}(\arg;\T) \Rightarrow \widetilde{H}^\ast(\arg;\Z)$, called \emph{inclusion of flat classes};
\end{enumerates}
such that, for any differentiable manifold $M$, the following diagram is commutative and has exact rows and columns:
$$
\xymatrix{
  & 0 \ar[d] & 0\ar[d] & 0\ar[d] &\\
0\ar[r] & \dfrac{H^{\ast-1}(M;\R)}{H^{\ast-1}_{\free}(M;\Z)} \ar[d] \ar[r]& \dfrac{\Omega^{\ast-1}(M)}{\Omega^{\ast-1}_\Z(M)} \ar[d]^-{\tilde{\iota}} \ar[r]^-d & d\Omega^{\ast-1}(M)\ar[d] \ar[r] & 0\\
0\ar[r] & H^{\ast-1}(M;\T) \ar[d]\ar[r]^-{\tilde{\kappa}}& \tilde{H}^\ast(M;\Z) \ar[d]^-{\tilde{\cha}}\ar[r]^-{\tilde{\curv}} & \Omega^\ast_\Z(M)\ar[d] \ar[r]& 0\\
0\ar[r] & H^\ast_{\tor}(M;\Z)\ar[d]\ar[r] & H^{\ast}(M;\Z) \ar[d]\ar[r]& H^\ast_{\free}(M;\Z) \ar[d]\ar[r]&0\\
  & 0 & 0 & 0 &\\
}
$$
\end{defn}

\begin{thm}[Uniqueness {\cite[Theorems 5.11-5.15]{BB14}}] 
Given an arbitrary differential cohomology theory $(	\tilde{H}^\ast(\arg;	Z), \tilde{\mathrm{curv}},\tilde{\mathrm{char}}, \tilde{\iota}, \tilde{\kappa})$ there exists a unique natural isomorphism
\begin{equation*}
\Xi: \tilde{H}^\ast(\arg;\Z) \Rightarrow \hat{H}^\ast(\cdot;\Z)
\end{equation*}
such that
$$
\Xi\circ \tilde{\mathrm{curv}}=\mathrm{curv},\qquad \Xi\circ\tilde{\mathrm{char}}=\mathrm{char},\qquad \Xi\circ \tilde{\iota}=\iota, \qquad \Xi\circ\tilde{\kappa}=\kappa.
$$
\end{thm}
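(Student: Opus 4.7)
The plan is to separate the statement into uniqueness of $\Xi$ and its existence, exploiting the fact that the outer columns of the two commutative diagrams (for $\tilde{H}^\ast$ and for $\hat{H}^\ast$) already coincide, so that the only freedom lies in the middle term.

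For \textbf{uniqueness}, suppose $\Xi$ and $\Xi'$ both satisfy the four intertwining conditions and set $\Delta := \Xi - \Xi'$. From $\curv\circ\Delta = 0$ and the exactness of the middle row of the Cheeger--Simons diagram, one factorises $\Delta = \kappa\circ\lambda$ for a natural transformation $\lambda:\tilde{H}^\ast(\arg;\Z)\Rightarrow H^{\ast-1}(\arg;\T)$. The identities $\Delta\circ\tilde{\kappa} = 0$ and $\Delta\circ\tilde{\iota} = 0$, together with injectivity of $\kappa$, give $\lambda\circ\tilde{\kappa} = 0$ and $\lambda\circ\tilde{\iota} = 0$. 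Because the image of $\tilde{\kappa}$ equals the kernel of $\tilde{\curv}$, the first equation lets us further factor $\lambda = \mu\circ\tilde{\curv}$ for a natural $\mu:\Omega^\ast_\Z(\arg)\Rightarrow H^{\ast-1}(\arg;\T)$, and the second then forces $\mu\circ d = 0$, so $\mu$ vanishes on exact forms. A naturality argument -- pulling back along smooth maps to contractible model manifolds, where every closed integer-periodic form is exact and $H^{\ast-1}(\arg;\T)$ is trivial in positive degree -- then forces $\mu = 0$, whence $\Delta = 0$.

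For \textbf{existence and the isomorphism property}, I would construct $\Xi_M$ manifold by manifold. Given $\tilde{h}\in\tilde{H}^\ast(M;\Z)$, the surjectivity of $\curv$ supplies $\hat{h}_0\in\hat{H}^\ast(M;\Z)$ with $\curv(\hat{h}_0) = \tilde{\curv}(\tilde{h})$; the ambiguity is a flat class $\kappa(v)$ with $v\in H^{\ast-1}(M;\T)$ which the compatibility conditions $\Xi\circ\tilde{\kappa} = \kappa$, $\Xi\circ\tilde{\iota} = \iota$ and $\cha\circ\Xi = \tilde{\cha}$ pin down canonically, by the same rigidity already used in the uniqueness part. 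After verifying that the resulting $\tilde{h}\mapsto\Xi_M(\tilde{h})$ is well defined, additive, and natural in $M$, the short five-lemma applied to the morphism between the two middle-row short exact sequences -- with identity maps on the outer terms $H^{\ast-1}(M;\T)$ and $\Omega^\ast_\Z(M)$, and $\Xi_M$ in the middle -- immediately delivers that $\Xi_M$ is an isomorphism.

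The main obstacle is the \emph{existence} step: producing a coherent natural family of maps $\Xi_M$ rather than a collection of unrelated per-manifold choices. The local topological trivializations are canonical only up to flat classes, and one has to verify that these ambiguities glue consistently through naturality. This is the technical heart of the Bär--Becker proof, which for the purposes of this thesis one is content to invoke.
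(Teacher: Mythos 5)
The thesis contains no proof of this theorem at all: it is quoted verbatim from B\"ar--Becker and the citation \cite{BB14} does all the work, so the only thing to assess is your blind attempt on its own merits (and against the cited argument). Your uniqueness skeleton is the right one, and you silently corrected the mis-typed intertwining conditions of the statement (read $\curv\circ\Xi=\widetilde{\curv}$, $\cha\circ\Xi=\widetilde{\cha}$, which is the version that typechecks): setting $\Delta=\Xi-\Xi'$, using exactness of the middle row and injectivity of $\kappa$ to write $\Delta=\kappa\circ\lambda$, then $\lambda=\mu\circ\widetilde{\curv}$ with $\mu\circ d=0$, is exactly the correct use of the diagram.

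The genuine gap is the very last step, ``naturality with contractible models forces $\mu=0$''. Since $\mu$ kills exact forms, it factors through $\Omega^k_\Z(M)/d\Omega^{k-1}(M)\simeq H^k_{\free}(M;\Z)$, i.e.\ it amounts to a natural transformation $H^k_{\free}(\arg;\Z)\Rightarrow H^{k-1}(\arg;\T)$, and contractible models give no information about it: restricting to a contractible open $U\subseteq M$ only tells you that $\mu_M(\omega)|_U=0$, which is automatic because $H^{k-1}(U;\T)$ is trivial in positive degree, while pulling back along a map $M\to N$ with $N$ contractible only produces exact forms, on which you already know $\mu$ vanishes. Neither detects the non-exact part of $\Omega^k_\Z(M)$, which is precisely what is at stake. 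To kill $\mu$ one needs $(k-1)$-connected but $k$-cohomologically nontrivial test objects --- manifold thickenings of finite skeleta of $K(\Z,k)$, through which any integral degree-$k$ class on a fixed $M$ factors --- and this homotopy-theoretic input is the actual content of \cite{BB14}, Theorems 5.11--5.15. The same remark applies to your existence step, where ``the compatibility conditions pin down the flat ambiguity canonically'' begs the question and is, as you acknowledge, deferred to \cite{BB14}; so the net effect of the proposal is an incomplete uniqueness argument plus the same citation the thesis already makes. (The short-five-lemma conclusion that any such $\Xi_M$ is an isomorphism is fine once existence and compatibility with $\kappa$ and $\curv$ are in place.)
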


Differential characters can be given a unique graded commutative ring structure. Following \cite{BB14}, we will adopt an axiomatic point of view:
\begin{defn}[Internal product] \label{def:2:ring_structure}
An \emph{internal product} of differential characters is a map:
\begin{equation}
\begin{split}
\cdot : \hat{H}^k(M;\Z)\times \hat{H}^l(M;\Z) &\to \hat{H}^{k+l}(M;\Z)\\
(h,h')&\mapsto h\cdot h'
\end{split}
\end{equation}
such that the following requirements are fulfilled:
\begin{enumerates}[(i)]
\item \emph{Ring Structure}. $(\hat{H}^\ast(M;\Z),+,\cdot )$ is a ring, i.e.\ $\cdot$ is associative and $\Z$-bilinear;
\item \emph{Graded commutativity}. It holds $$h\cdot h'=(-1)^{kl}h'\cdot h$$ for any $h\in \hat{H}^k(M;\Z)$ and $h'\in \hat{H}^l(M;\Z)$;
\item \emph{Compatibility with curvature}.  It holds $$\curv (h\cdot h')=\curv h \wedge \curv h'$$ for any $h\in \hat{H}^k(M;\Z)$ and $h'\in \hat{H}^l(M;\Z)$, i.e. the curvature is a ring homomorphism;
\item \emph{Compatibility with characteristic class}. It holds $$\cha(h\cdot h')=\cha h\smile \cha h'$$ for any $h\in \hat{H}^k(M;\Z)$ and $h'\in \hat{H}^l(M;\Z)$, where $\smile$ denotes the cup product of singular cohomology, i.e. the characteristic class is a ring homomorphism;
\item \emph{Compatibility with topological trivialization}. It holds $$\iota(A)\cdot h=\iota(A\wedge \curv h)$$ for any $A\in\frac{\Omega^{k-1}(M)}{\Omega_\Z^{k-1}(M)}$ and $h\in\hat{H}^l(M;\Z)$, where $\wedge$ denotes the wedge product of differential forms;
\item \emph{Compatibility with inclusion of flat classes}. It holds $$\kappa (u)\cdot h=\kappa(u\smile \cha h)$$ for any $u\in H^{k-1}(M;\T)$ and $h\in\hat{H}^l(M;\Z)$.
\end{enumerates} 
\end{defn}

\begin{rem}
Observe that in (v) and (vi) $\iota$ and $\kappa$ act on the first entry of the inner product respectively. When they happen to be in the second entry, the above formulas must be duly modified, taking into account the grading of the product:
\begin{equation}\label{def:2:ring_structure_SecondEntry}
h\cdot \iota(A)=(-1)^k \iota(\curv h\wedge A), \qquad h\cdot \kappa(u)=(-1)^k \kappa(\cha h \smile u)
\end{equation}
for all $h\in\hat{H}^k(M;\Z)$, $A\in \frac{\Omega^{l-1}(M)}{\Omega_\Z^{l-1}(M)}$ and $u\in H^{l-1}(M;\T)$.
\end{rem}

\begin{thm}[Uniqueness of ring structure {\cite[Corollary 6.5]{BB14}}] The ring structure on differential cohomology is determined uniquely by axioms $(i)-(v)$ in Definition \ref{def:2:ring_structure}.
\end{thm}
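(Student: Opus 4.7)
My plan is to compare two internal products $\cdot_1,\cdot_2$ on differential characters, both satisfying axioms (i)--(v), and show that their difference $D(h,h'):=h\cdot_1 h'-h\cdot_2 h'$ vanishes identically. By axiom (i), $D$ is $\Z$-bilinear. By axioms (iii) and (iv), $\curv D(h,h')=0$ and $\cha D(h,h')=0$ for every $h,h'$. A chase of the commutative diagram \eqref{diag:2:diff_char} identifies $\ker(\curv)\cap\ker(\cha)$ in $\hat{H}^{k+l}(M;\Z)$ with the image of the top-left vertex $H^{k+l-1}(M;\R)/H^{k+l-1}_{\free}(M;\Z)$, namely the subgroup of flat and topologically trivial differential characters. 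Hence $D$ takes values in this subgroup.

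Next I would use axiom (v) to constrain $D$ further. For any $A\in\Omega^{k-1}(M)/\Omega_\Z^{k-1}(M)$ and any $h\in\hat{H}^l(M;\Z)$ one has $\iota(A)\cdot_1 h=\iota(A\wedge\curv h)=\iota(A)\cdot_2 h$, so $D(\iota(A),h)=0$; combined with graded commutativity (ii) this also yields $D(h,\iota(A))=0$. Since $\mathrm{Im}(\iota)=\ker(\cha)$ by exactness of the middle column of \eqref{diag:2:diff_char}, the map $D$ descends to a $\Z$-bilinear map
\[
\bar D:\, H^k(M;\Z)\times H^l(M;\Z)\;\longrightarrow\; \frac{H^{k+l-1}(M;\R)}{H^{k+l-1}_{\free}(M;\Z)}.
\]

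The remaining -- and hardest -- step is to prove that $\bar D\equiv 0$. My strategy is to exploit naturality: since $\curv$, $\cha$, $\iota$ and $\kappa$ are all natural transformations and the axioms (iii)--(v) are phrased entirely through them, any product satisfying (i)--(v) must itself be natural under pullback along smooth maps, and so is $\bar D$. Via classifying maps $M\to K(\Z,k)\times K(\Z,l)$ for the pair of characteristic classes, one can then reduce $\bar D$ to a universal operation which, under the constraints imposed by (iii)--(v), admits only the trivial solution. The genuine obstacle here is the rigour of this universal step, since Eilenberg--MacLane spaces are not smooth manifolds: one has to work either with smooth approximations (e.g.\ smooth skeleta or Postnikov truncations) or to extend the Cheeger--Simons model to a presheaf- or stack-theoretic setting. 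Alternatively, following the strategy in \cite{BB14}, one may fix the explicit Cheeger--Simons model, write down the canonical product on it, and verify by a direct cocycle-level computation that axioms (i)--(v) already pin down any candidate product to this canonical one.
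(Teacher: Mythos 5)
First, note that the paper itself does not prove this statement: it is imported verbatim from \cite{BB14} (Corollary 6.5), so there is no in-text argument to compare against; your proposal has to stand on its own. Its first half does: defining $D(h,h')=h\cdot_1 h'-h\cdot_2 h'$, observing that axioms (iii)--(iv) force $\curv D=0$ and $\cha D=0$, identifying $\ker(\curv)\cap\ker(\cha)$ with the image of $H^{k+l-1}(M;\R)/H^{k+l-1}_{\free}(M;\Z)$ by a chase of diagram \eqref{diag:2:diff_char}, and using axiom (v) together with exactness of the middle column ($\mathrm{Im}\,\iota=\ker\cha$) to descend $D$ to a $\Z$-bilinear map $\bar D$ on $H^k(M;\Z)\times H^l(M;\Z)$ -- all of this is correct and is the natural way to begin.

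The genuine gap is the assertion that ``any product satisfying (i)--(v) must itself be natural under pullback.'' That does not follow. Axioms (iii)--(v) constrain the \emph{images} of the product under the natural maps $\curv$, $\cha$ and its restriction to $\mathrm{Im}\,\iota$; they say nothing about how the product itself transforms under $f^\ast$, and naturality of $\curv$, $\cha$, $\iota$, $\kappa$ does not transfer to an arbitrary bilinear pairing compatible with them. Since your entire final step (reduction of $\bar D$ to a universal operation on classifying spaces, where a K\"unneth/degree count kills it) is powered by naturality of $\bar D$, the proof breaks precisely at its crucial point. In \cite{BB14} this issue does not arise because there the product is a natural transformation from the outset, so naturality is part of the hypotheses; if you insist on the axiom list (i)--(v) as literally stated in Definition \ref{def:2:ring_structure}, you must either derive naturality from those axioms -- which is not obvious, and would have to exploit associativity and the interplay of all degrees, neither of which your argument ever uses -- or add it explicitly. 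A secondary, smaller issue is one you flag yourself: even granting naturality, the Eilenberg--MacLane step is only a plan (smooth approximation of skeleta, vanishing of degree-$(k+l-1)$ bilinear operations into $H^{\ast}(\arg;\T)$), not a proof; it can be completed, but as written the hardest part of the theorem is deferred rather than established.
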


\subsection{Relative differential cohomology}\label{subsec:2:rel_diff_coho}
As a prolegomenon to differential cohomology with compact support, this section will be devoted to a review of differential characters on relative cycles. \\

Let $S\subseteq M$ be a submanifold. The Abelian group of $k$-differential forms vanishing on $S$ will be denoted by
$$
\Omega^k(M,S):=\left\{ \omega\in\Omega^k(M)\,\,\vline\,\, \omega_{|S}=0 \right\},
$$
while $\Omega^k_\Z(M,S)$ will denote the Abelian subgroup in $\Omega^k(M,S)$ of relative $k$-forms with integral periods on relative cycles. Since the exterior derivative $d$ preserves these subgroups, $(\Omega^\ast(M,S),d)$ is a subcomplex of $(\Omega^\ast(M),d)$. The de Rham cohomology of such a subcomplex is called \emph{de Rham cohomology of M relative to S} and denoted by $H_{dR}^\ast(M,S)$.

In the following, we will specialize to properly embedded submanifolds $S\subseteq M$, i.e. to submanifolds such that the embedding map $i:S\hookrightarrow M$ is proper. Such a restriction is motivated by the fact that, in this case, a relative version of the De Rham theorem is available. In fact, when $S$ is properly embedded, every differential form on $S$ admits a smooth extension to $M$. As a consequence, the following sequence is short exact:
\begin{equation*}
\xymatrix{
0 \ar[r]& \Omega^\ast(M,S) \ar[r]& \Omega^\ast(M) \ar[r] & \Omega^\ast(S) \ar[r] & 0.
}
\end{equation*}
Every differential form can be thought of as a cochain via integration; this yields a commutative diagram of short exact sequences:
\begin{equation*}
\xymatrix{
0 \ar[r] & \Omega^\ast(M,S) \ar[r] \ar[d]^-{\int_\cdot} & \Omega^\ast(M) \ar[r] \ar[d]^-{\int_\cdot} & \Omega^\ast(S)  \ar[r] \ar[d]^-{\int_\cdot} &0\\
0 \ar[r] & C^\ast(M,S;\R) \ar[r] & C^\ast(M;\R) \ar[r] & C^\ast(S;\R)  \ar[r]  &0
}
\end{equation*}
and a corresponding commutative diagram of long exact sequences in cohomology:
{\footnotesize
\begin{equation*}
\xymatrix@C-=0.45cm{
\cdots \ar[r] & H^{k-1}_{dR}(M) \ar[r] \ar[d]^-{\simeq} & H^{k-1}_{dR}(S) \ar[r] \ar[d]^-{\simeq}& H^{k}_{dR}(M,S) \ar[r] \ar[d] &  H^{k}_{dR}(M) \ar[r] \ar[d]^-{\simeq} & H^{k}_{dR}(S) \ar[r] \ar[d]^-{\simeq} &  \cdots\\
\cdots \ar[r] & H^{k-1}(M;\R) \ar[r]  & H^{k-1}(S;\R) \ar[r] & H^{k}(M,S;\R) \ar[r]  &  H^{k}(M;\R) \ar[r] & H^{k}(S;\R) \ar[r] &  \cdots
}
\end{equation*}
}
De Rham theorem \cite[Theorem 5.36]{WER83} guarantees that the first and the last two vertical arrows are isomorphisms; from the five lemma it ensues that also $H^k_{dR}(M,S)\to H^k(M,S;\R)$ is an isomorphism, thus providing a relative version of de Rham theorem.\\

We are now in the position to define relative differential characters:

\begin{defn}[Differential characters on relative cycles]
Let $M$ be a smooth manifold. The Abelian group of Cheeger-Simons relative differential characters on $M$ with respect to $S\subseteq M$ of degree $k\in\N\setminus \{0\}$ is defined by:
\begin{equation}\label{def:2:differential_characters}
\hat{H}^k(M,S;\Z):=\left\{ h\in \Hom(Z_{k-1}(M,S;\Z);\T)\,\,|\,\, h\circ \partial \in \Omega^k(M) \right\},
\end{equation}  
where $h\circ\partial \in\Omega^k(M)$ means that there exists $\omega_h\in \Omega^k(M)$ such that
\begin{equation}\label{eq:2:differential_characters_omega}
h(\partial \gamma)=\int_\gamma \omega_h \quad\mod \Z
\end{equation}
for all $\gamma\in C_k(M;\Z)$.
\end{defn}

\begin{rem}
Observe that $\omega_h$ actually lies in $\Omega^k(M,S)$ and has, in particular, integral periods. In fact, choosing $\gamma\in C_k(S)$ we obtain ${\omega_h }_{|S}=0$. Picking, moreover, $\gamma\in Z_{k}(M,S)$ gives $\int_\gamma \omega_h \in \Z$.
\end{rem}

With minor adaptations, retracing the reasoning made for absolute differential cohomology it is possible to introduce curvature, characteristic class, inclusion of flat classes and topological trivialization maps for the relative case.

\begin{prop}[{\cite[Theorem 3.5]{BMath}}] \label{prop:2:diag_rel_diff_char}  Let $M$ be a manifold and $S\subseteq M$ a properly embedded submanifold. Then the diagram
\begin{equation}\label{diag:2:relative_diff_char}
\xymatrix{
  & 0 \ar[d] & 0\ar[d] & 0\ar[d] &\\
0\ar[r] & \dfrac{H^{k-1}(M,S;\R)}{H^{k-1}_{\free}(M,S;\Z)} \ar[d] \ar[r]& \dfrac{\Omega^{k-1}(M,S)}{\Omega^{k-1}_\Z(M,S)} \ar[d]^-\iota\ar[r]^-d & d\Omega^{k-1}(M,S)\ar[d] \ar[r] & 0\\
0\ar[r] & H^{k-1}(M,S;\T) \ar[d]\ar[r]^-\kappa& \hat{H}^k(M,S;\Z) \ar[d]^-{\mathrm{char}}\ar[r]^-{\mathrm{curv}}& \Omega^k_\Z(M,S)\ar[d] \ar[r]& 0\\
0\ar[r] & H^k_{\tor}(M,S;\Z)\ar[d]\ar[r] & H^{k}(M,S;\Z) \ar[d]\ar[r]& H^k_{\free}(M,S;\Z) \ar[d]\ar[r]&0\\
  & 0 & 0 & 0 &\\
}
\end{equation}
is commutative and has exact rows and columns.
\end{prop}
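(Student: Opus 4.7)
The plan is to mirror the Cheeger--Simons proof of the absolute diagram \eqref{diag:2:diff_char}, replacing every piece of data by its relative counterpart. The inputs I need are already available in the preceding paragraphs: the relative de Rham theorem just established, the fact that $Z_{k-1}(M,S;\Z)$ is a free $\Z$-module (so every $\T$-valued homomorphism lifts along $\R\to\T$), and the relative universal coefficient theorem, which, since $\T$ is divisible, supplies an isomorphism $H^{k-1}(M,S;\T)\simeq \Hom(H_{k-1}(M,S;\Z),\T)$.

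First I would extend the four structural maps to the relative setting. For $\curv$, the remark after the definition already shows that the unique $\omega_h$ lies in $\Omega^k(M,S)$ and has integer periods on relative cycles, so $\curv h\in\Omega^k_\Z(M,S)$. For $\cha$, I pick a lift $\tilde h\in\Hom(Z_{k-1}(M,S;\Z),\R)$ (possible by freeness) and set $\eta_{\tilde h}(\gamma):=\int_\gamma \curv h-\tilde h(\partial\gamma)$ for $\gamma\in C_k(M,S;\Z)$; one checks that $\eta_{\tilde h}$ is a $\Z$-valued cocycle in $C^k(M,S;\Z)$ whose cohomology class is independent of the lift. For $\iota$, a relative form $A\in\Omega^{k-1}(M,S)$ yields the character $z\mapsto \int_z A\bmod\Z$, which is well defined on $Z_{k-1}(M,S;\Z)$ precisely because $A_{|S}=0$, and descends to the quotient by $\Omega^{k-1}_\Z(M,S)$. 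Finally, $\kappa$ is defined by composing the universal-coefficient isomorphism with the projection $Z_{k-1}(M,S;\Z)\to H_{k-1}(M,S;\Z)$.

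Next I would verify exactness of the three rows and three columns. The central row is the crucial one: the kernel of $\curv$ consists of characters that factor through $H_{k-1}(M,S;\Z)$, which is exactly the image of $\kappa$; injectivity of $\kappa$ follows from the relative universal coefficient theorem. For the central column the idea is the same, replacing $\curv$ by $\cha$: a character has trivial characteristic class iff it admits a lift $\tilde h$ with $\eta_{\tilde h}$ exact, which by the relative de Rham isomorphism identifies such characters with the image of $\iota$. The outer rows and columns are the standard short exact sequences relating $H^\ast_{dR}$, $H^\ast(\,\cdot\,;\R)$, $H^\ast(\,\cdot\,;\Z)$, $H^\ast_{\tor}$ and $H^\ast_{\free}$ in the relative case, and commutativity of each square is read off directly from the definitions of the four maps.

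The main obstacle is surjectivity of $\curv$: given $\omega\in\Omega^k_\Z(M,S)$ I must manufacture a relative character with that curvature. I would do this by choosing, via the relative de Rham isomorphism, an integral singular cocycle $c\in Z^k(M,S;\Z)$ cohomologous to $\omega$ viewed as a real cochain; the difference $\omega-c$ is then an exact real cochain, whose primitive, restricted to $Z_{k-1}(M,S;\Z)$ and reduced mod $\Z$, provides a homomorphism to $\T$ whose coboundary integrates $\omega$ in the sense of \eqref{eq:2:differential_characters_omega}. Once this existence step is handled, the remaining exactness checks reduce to short diagram chases and commutativity is immediate, so the proof is essentially an adaptation of the classical Cheeger--Simons argument to the properly embedded submanifold setting.
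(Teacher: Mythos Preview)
The paper does not supply its own proof of this proposition: it is stated as a citation of \cite[Theorem 3.5]{BMath} and is followed immediately by a remark on the role of proper embedding, with no proof environment. Your proposal is therefore not competing against an in-text argument but against the referenced one, and your sketch is precisely the expected adaptation of the absolute Cheeger--Simons argument to relative cycles. The ingredients you invoke---freeness of $Z_{k-1}(M,S;\Z)$ for lifting to $\R$, the relative de Rham isomorphism (valid because $S$ is properly embedded), and the relative universal coefficient theorem with divisible coefficients $\T$---are exactly the right ones, and your treatment of the surjectivity of $\curv$ via an integral cocycle representative is the standard mechanism. In short, your approach is correct and coincides with what the cited reference does; there is nothing to compare against within the paper itself.
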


\begin{rem}
Notice that the result just stated relies heavily on $S$ being properly embedded. If the inclusion $i:S\hookrightarrow M$ is not proper, several rows and columns may fail being exact.
\end{rem}

As for the absolute case, the assignment of $\hat{H}^k(M,S;\Z)$ to $(M,S)$ defines a contravariant functor
$$
\hat{H}^k(\arg;\Z): \mathsf{Pair} \to \mathsf{Ab}
$$
from the category $\mathsf{Pair}$ \footnote{The category $\mathsf{Pair}$ is the category whose objects are pairs $(M,S)$, with $M$ in $\mathsf{Man}$ and $S\subseteq M$ a submanifold, and whose morphisms $f: (M,S)\to (M',S')$ are those morphisms $f:M\to M'$ in $\mathsf{Man}$ such that $f(S)\subseteq S'$.} to the category of Abelian groups $\mathsf{Ab}$. Furthermore,
\begin{eqnarray}
&\curv: \hat{H}^k(\arg;\Z) \Rightarrow \Omega^k_\Z(\arg), \qquad & \cha :\hat{H}^k(\arg;\Z) \Rightarrow H^k(\arg;\Z),\\
&\iota: \dfrac{\Omega^{k-1}(\arg)}{\Omega^{k-1}(\arg)}\Rightarrow\hat{H}^k(\arg;\Z), \qquad &\kappa: H^{k-1}(\arg;\T)\Rightarrow \hat{H}^k(\arg;\Z),
\end{eqnarray}
are natural transformations between functors from $\mathsf{Pair}$ to $\mathsf{Ab}$.\\

There exists a natural homomorphism, of particular relevance to our work, mapping relative differential characters into absolute ones:
\begin{equation}\label{map:2:I_relative_diff_char}
I: \hat{H}^k(M,S;\Z)\to \hat{H}^k(M;\Z).
\end{equation}
Consider the quotient map $$\pi:C_\ast(M;\Z)\to C_\ast(M,S;\Z)=\dfrac{C_\ast(M;\Z)}{C_\ast(S;\Z)}.$$ 
Since $\pi$ preserves the boundary morphism $\partial$, it maps the group $Z_\ast(M;\Z)$ of cycles to the group $Z_\ast(M,S;\Z)$ of relative cycles. Then, for every $h\in \hat{H}^k(M,S;\Z)$, we set $I h := h\circ \pi\in \hat{H}^k(M;\Z)$. It is straightforward to check that the following diagram is commutative:
\begin{equation}
\xymatrix{
Z_{k-1}(M;\Z) \ar[r]^-\pi \ar[d]_{f_\ast} & Z_{k-1}(M,S;Z) \ar[d]^-{f_\ast}\\
Z_{k-1}(M';\Z) \ar[r]_-\pi &Z_{k-1}(M',S';\Z)\\
}
\end{equation}
for every morphism $f:(M,S)\to(M',S')$ in $\mathsf{Pair}$. This entails the naturality of $I$:
\begin{equation}
\xymatrix{
\hat{H}^k(M',S';\Z) \ar[r]^-I \ar[d]_-{f^\ast} & \hat{H}^k(M';\Z) \ar[d]^-{f^\ast}\\
\hat{H}^k(M,S;\Z) \ar[r]_-I \ & \hat{H}^k(M;\Z) \\
}
\end{equation}
Furthermore, the diagrams:
\begin{equation}
\xymatrix{
\hat{H}^k(M,S;\Z) \ar[r]^-I\ar[d]_-\curv &\hat{H}^k(M;\Z)\ar[d]^-\curv & \hat{H}^k(M,S;\Z) \ar[r]^-I\ar[d]_-\cha &\hat{H}^k(M;\Z)\ar[d]^-\cha\\
\Omega^k_\Z(M,S) \ar[r] & \Omega^k_\Z(M) & H^k(M,S;\Z)\ar[r] &H^k(M;\Z)\\ 
H^{k-1}(M,S;\T) \ar[r] \ar[d]_-\kappa& H^{k-1}(M;\T) \ar[d]^-\kappa & \dfrac{\Omega^{k-1}(M,S)}{\Omega_\Z^{k-1}(M,S)} \ar[r]\ar[d]_-\iota & \dfrac{\Omega^{k-1}(M)}{\Omega^{k-1}_\Z (M)} \ar[d]^-\iota\\
\hat{H}^k(M,S;\Z) \ar[r]_-I &\hat{H}^k(M;\Z) & \hat{H}^k(M,S;\Z) \ar[r]_-I &\hat{H}^k(M;\Z)\\
}
\end{equation}
are commutative for every $(M,S)$ in $\mathsf{Pair}$. 

In general, the homomorphism I is not injective. In fact, the commutativity of the diagram
\begin{equation*}
\xymatrix{
0 \ar[r] & H^{k-1}(M,S;\T) \ar[r]^-\kappa  \ar[d]& \hat{H}^k(M,S;\Z) \ar[d]^-I \ar[r]^-\curv & \Omega^k_\Z(M,S)\ar[r]\ar[d] & 0\\
0 \ar[r] & H^{k-1}(M;\T) \ar[r]_-\kappa & \hat{H}^k(M;\Z) \ar[r]_-\curv & \Omega^k_\Z(M)\ar[r] & 0
}
\end{equation*}
along with the exactness of the rows, implies that $I$ is a monomorphism if and only if $H^{k-1}(M,S;\T)\to H^{k-1}(M;\T)$ is such, the right vertical arrow being nothing but an inclusion, therefore an injective map. Picking $M=\R^m$ and $S=M\setminus (R^{m-k+1}\times \mathbb{B}^{k-1})$, where $\mathbb{B}^n$ is the n-dimensional closed unit Euclidean ball, we have $H^{k-1}(M,S;\T)\simeq \T$ and $H^{k-1}(M;\T)\simeq 0$, thus exhibiting a counterexample.

\subsection{Differential cohomology with compact support}
In this section, we will discuss differential cohomology with compact support. At variance with above, much more attention will be devoted to details, in view of its prominence in the construction of the algebraic states.

\begin{defn}[Directed set]
We call \emph{directed set} a non-empty set I endowed with a reflexive and transitive binary relation $\preceq$, such that for all $\alpha, \beta \in I$ there exists $\gamma \in I$ for which $\alpha\preceq \gamma$ and $\beta \preceq \gamma$. 
\end{defn}

Let $\mathsf{DSet}$ be the category whose objects are directed sets and whose morphisms are functions preserving the pre-order relation. Introduce the functor:
\begin{equation}\label{eq:2:functor_K}
\mathcal{K}: \mathsf{Man} \to \mathsf{DSet}
\end{equation}
between the category of smooth manifolds and the category of directed sets. $\mathcal{K}$ assigns to each object $M$ in $\mathsf{Man}$ the directed set $\mathcal{K}_M:=\{K\subseteq M\,\,\vline\,\, K\, \text{is compact}\}$, where $\preceq$ is given by the inclusion, and to each morphism $f:M\to M'$ in $\mathsf{Man}$ the morphism $\mathcal{K}_f:\mathcal{K}_M \to \mathcal{K}_{M'}$, $K\mapsto f(K)$ in $\mathsf{DSet}$.

Interpreting the directed set $\mathcal{K}_M$ as a category, construct the contravariant functor:
\begin{gather*}
(M,M\setminus \arg): \mathcal{K}_M \to \mathsf{Pair}\\
K\mapsto (M,M\setminus K).
\end{gather*}
Composing such a functor with the relative differential cohomology functor yields the covariant functor:
\begin{equation}\label{eq:2:functor_diff_char_comp_supp}
\hat{H}^k(M,M\setminus \arg;\Z)= \hat{H}^k(\arg;\Z) \circ (M,M\setminus\arg): \mathcal{K}_M \rightarrow \mathsf{Ab}.
\end{equation}

\begin{defn}\label{def:2:diff_char_comp_supp}
The Abelian group of \emph{differential characters with compact support} of degree $k$ over $M$ is defined by the colimit:
\begin{equation}
\hat{H}^k_c(M;\Z):= \mathrm{colim}\left( \hat{H}^k(M,M\setminus \arg;\Z):\mathcal{K}_M \rightarrow \mathsf{Ab} \right)
\end{equation}
of the functor \eqref{eq:2:functor_diff_char_comp_supp} over $\mathcal{K}_M$.
\end{defn}

The procedure adopted in Definition \ref{def:2:diff_char_comp_supp} represents an intuitively clear approach. There is, nonetheless, an alternative, yet equivalent, way to present differential cohomology with compact support, which has the advantage of involving only properly embedded submanifolds.

Introduce the directed set:
\begin{equation*}
\mathcal{O}_M ^c:=\{O\subseteq M\,\,\vline\,\, \text{O is open}, \overline{O}\in \mathcal{K}_M, \partial \overline{O}\text{ is smooth}\},
\end{equation*}
where, again, the preorder relation $\preceq$ is given by the inclusion. By construction, the complement $M\setminus O$ is a properly embedded submanifold for every $O\in \mathcal{O}_M^c$. Composing the functor 
\begin{gather*}
(M,M\setminus \arg): \mathcal{O}_M^c\rightarrow \mathsf{Pair}\\
O\mapsto (M,M\setminus O)
\end{gather*}
with the relative cohomology functor yields:
\begin{equation}\label{eq:2:functor_comp_diff_coho_v2}
\hat{H}^k(M,M\setminus\arg;\Z):= \hat{H}^k(\arg;\Z)\circ (M,M\setminus \arg): \mathcal{O}_M^c\rightarrow \mathsf{Ab}.
\end{equation}

\begin{prop}
The colimit of the functor \eqref{eq:2:functor_comp_diff_coho_v2} is isomorphic to the Abelian group of differential characters with compact support $\hat{H}^k_c(M;\Z)$:
\begin{equation}\label{eq:2:colim_iso_diff_char_cs}
\hat{H} ^k _c(M;\Z)\simeq \mathrm{colim} \left( \hat{H}^k (M,M\setminus\arg;\Z):\mathcal{O}_M^c\rightarrow \mathsf{Ab}\right).
\end{equation}
\end{prop}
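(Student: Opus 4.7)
The plan is to identify $\mathcal{O}_M^c$ as cofinal inside $\mathcal{K}_M$ via the closure assignment $O\mapsto \overline{O}$, and to parlay this cofinality into an isomorphism of the associated colimits in $\mathsf{Ab}$. The whole argument reduces to one geometric lemma plus routine manipulations with the universal property of colimits.

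The geometric input is the assertion that for every $K\in\mathcal{K}_M$ there exists $O\in\mathcal{O}_M^c$ with $K\subseteq O$. I would prove it by choosing a smooth proper exhaustion function $f\colon M\to [0,\infty)$, available on any paracompact manifold by means of a partition of unity, invoking Sard's theorem to select a regular value $c$ with $K\subseteq f^{-1}([0,c))$, and setting $O:=f^{-1}([0,c))$. Then $O$ is open, $\overline{O}=f^{-1}([0,c])$ is compact, and $\partial O=f^{-1}(c)$ is smooth, so $O\in \mathcal{O}_M^c$.

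With the lemma at hand, I would construct mutually inverse homomorphisms $\alpha$ and $\beta$ between the two colimits via the universal property. For $\alpha$, note that for each $O\in\mathcal{O}_M^c$ the inclusion $M\setminus \overline{O}\subseteq M\setminus O$ yields a pair morphism $(M,M\setminus \overline{O})\to (M,M\setminus O)$ (identity on $M$), so contravariance gives $\hat{H}^k(M,M\setminus O;\Z)\to \hat{H}^k(M,M\setminus \overline{O};\Z)$; postcomposing with the canonical map to $\hat{H}^k_c(M;\Z)$ assembles into a cocone over $\mathcal{O}_M^c$ and thereby defines $\alpha\colon \mathrm{colim}_{\mathcal{O}_M^c}\hat{H}^k(M,M\setminus \arg;\Z)\to \hat{H}^k_c(M;\Z)$. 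For $\beta$, given $K\in\mathcal{K}_M$ the lemma supplies some $O_K\in\mathcal{O}_M^c$ with $K\subseteq O_K$, and contravariance applied to the pair morphism $(M,M\setminus O_K)\to (M,M\setminus K)$ yields $\hat{H}^k(M,M\setminus K;\Z)\to \hat{H}^k(M,M\setminus O_K;\Z)$; independence of the choice of $O_K$ follows because any two such choices are dominated in $\mathcal{O}_M^c$ by a further element obtained by applying the lemma to the union of their closures. Verifying $\alpha\circ\beta=\mathrm{id}$ and $\beta\circ\alpha=\mathrm{id}$ then amounts to inserting the coherent chains $(M,M\setminus K)\leftarrow (M,M\setminus O_K)\leftarrow (M,M\setminus \overline{O_K})$ and $O\subseteq \overline{O}\subseteq O'$ into the respective diagrams, where in the second chain $O'\in \mathcal{O}_M^c$ is produced by applying the lemma to $\overline{O}$. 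I expect the only substantive step to be the cofinality lemma itself; the remainder is a formal consequence of the universal property, and could equivalently be packaged as saying that the closure functor $\mathcal{O}_M^c\to \mathcal{K}_M$ is cofinal in the categorical sense.
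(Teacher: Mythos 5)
Your proposal is correct and rests on the same idea as the paper's own proof: the cofinal interleaving of the directed systems $\mathcal{K}_M$ and $\mathcal{O}_M^c$ via the chains $O\subseteq\overline{O}$ and $K\subseteq O_K\subseteq\overline{O_K}$, which the paper packages more compactly by forming the auxiliary directed set $\mathcal{S}_M:=\mathcal{K}_M\cup\mathcal{O}_M^c$ and observing that both subsets are cofinal in it. The only genuine addition on your side is that you actually prove the geometric input (every compact $K$ is contained in some $O\in\mathcal{O}_M^c$, via a proper exhaustion function and Sard's theorem), which the paper leaves implicit in its cofinality claim; your explicit mutually inverse maps $\alpha,\beta$ are just the hand-rolled version of that same cofinality argument.
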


\begin{proof}
Define:
$$
\mathcal{S}_M:=\mathcal{K}_M\cup \mathcal{O}^c_M.
$$
This allows the construction of a functor:
\begin{equation*}
\hat{H}^k(M,M\setminus \arg;\Z): \mathcal{S}_M \rightarrow \mathsf{Ab}. 
\end{equation*}
Since $\mathcal{O}_M^c$ and $\mathcal{K}_M$ are cofinal in $\mathcal{S}_M$, the following chain of isomorphisms holds:
\begin{gather*}
\text{colim} ( \hat{H}^k (M,M\setminus\arg;\Z):\mathcal{O}_M^c\rightarrow \mathsf{Ab})
 \simeq \text{colim} ( \hat{H}^k (M,M\setminus\arg;\Z):\mathcal{S}_M \rightarrow \mathsf{Ab})\\
\simeq \text{colim} ( \hat{H}^k (M,M\setminus\arg;\Z):\mathcal{K}_M\rightarrow \mathsf{Ab})=\hat{H}^k_c(M;\Z).
\end{gather*}
The proposition is thus proved.
\end{proof}

In order to obtain a commutative diagram of short exact sequences for compactly supported differential cohomology, define:
$$
\Omega^k_{c,\Z}(M):=\text{colim} \left(\Omega^k_\Z(M,M\setminus \arg):\mathcal{K}_M\rightarrow \mathsf{Ab}\right).
$$
The properties of \text{colim} for AB5 categories yield:
$$
\text{colim}\left( \dfrac{\Omega^k(M,M\setminus \arg)}{\Omega^k_\Z(M,M\setminus\arg)}:\mathcal{K}_M \rightarrow \mathsf{Ab}\right)=\dfrac{\Omega^k_c(M)}{\Omega^k_{c,\Z}(M)}.
$$

\begin{thm}
Let $M$ be a smooth manifold. Then the diagram:
\begin{equation}\label{diag:2:diff_char_compact_support}
\xymatrix{
  & 0 \ar[d] & 0\ar[d] & 0\ar[d] &\\
0\ar[r] & \dfrac{H^{k-1}_c(M;\R)}{H^{k-1}_{c,\free}(M;\Z)} \ar[d] \ar[r]& \dfrac{\Omega^{k-1}_c(M)}{\Omega^{k-1}_{c,\Z}(M)} \ar[d]^-\iota\ar[r]^-d & d\Omega^{k-1}_c(M)\ar[d] \ar[r] & 0\\
0\ar[r] & H^{k-1}_c(M;\T) \ar[d]\ar[r]^-\kappa& \hat{H}_c^k(M;\Z) \ar[d]^-{\mathrm{char}}\ar[r]^-{\mathrm{curv}}& \Omega^k_{c,\Z}(M)\ar[d] \ar[r]& 0\\
0\ar[r] & H^k_{c,\tor}(M;\Z)\ar[d]\ar[r] & H^{k}_c(M;\Z) \ar[d]\ar[r]& H^k_{c,\free}(M;\Z) \ar[d]\ar[r]&0\\
  & 0 & 0 & 0 &\\
}
\end{equation}
is commutative and has exact rows and columns.
\end{thm}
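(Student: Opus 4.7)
The strategy is to obtain the compactly supported diagram as a filtered colimit of the relative diagrams \eqref{diag:2:relative_diff_char}, using that filtered colimits of Abelian groups are exact (i.e.\ $\mathsf{Ab}$ is an AB5 category).

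First, I would fix the index category. By the proposition just established, $\hat{H}^k_c(M;\Z)$ can be computed as the colimit over the directed set $\mathcal{O}_M^c$, whose salient feature is that for every $O\in\mathcal{O}_M^c$ the submanifold $M\setminus O$ is properly embedded in $M$. Consequently, Proposition \ref{prop:2:diag_rel_diff_char} applies to each pair $(M,M\setminus O)$, giving, for every $O$, a commutative diagram of the form \eqref{diag:2:relative_diff_char} with exact rows and columns. The morphisms in $\mathcal{O}_M^c$ (given by inclusion $O\subseteq O'$) induce morphisms of pairs $(M,M\setminus O')\to (M,M\setminus O)$ in $\mathsf{Pair}$, and, by functoriality and naturality of $\mathrm{curv}$, $\mathrm{char}$, $\iota$, $\kappa$ established after Proposition \ref{prop:2:diag_rel_diff_char}, we obtain a directed system of such diagrams.

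Next I would take the colimit over $\mathcal{O}_M^c$ entry by entry. Since filtered colimits in $\mathsf{Ab}$ are exact, they preserve short exact sequences, kernels, images and quotients; hence the colimit diagram is again commutative with short exact rows and columns. What remains is to identify each colimit with the entry appearing in \eqref{diag:2:diff_char_compact_support}. The central entry, the top-middle entry $\Omega^{k-1}_c(M)/\Omega^{k-1}_{c,\Z}(M)$, and the right-middle entry $\Omega^k_{c,\Z}(M)$ are immediate, being precisely the definitions adopted just before the statement. For the remaining entries, one uses the fact that compactly supported singular cohomology with coefficients in an Abelian group $G$ is, by definition,
\begin{equation*}
H^k_c(M;G)=\mathrm{colim}\bigl(H^k(M,M\setminus\arg;G):\mathcal{K}_M\to \mathsf{Ab}\bigr),
\end{equation*}
together with the AB5 property: exactness of filtered colimits commutes with the torsion/free decomposition (so the colimit of $H^k_{\tor}(M,M\setminus O;\Z)$ is $H^k_{c,\tor}(M;\Z)$ and the colimit of $H^k_{\free}(M,M\setminus O;\Z)$ is $H^k_{c,\free}(M;\Z)$) and with the quotients $H^{k-1}(M,M\setminus O;\R)/H^{k-1}_{\free}(M,M\setminus O;\Z)$ and with the image $d\Omega^{k-1}(M,M\setminus O)\subseteq \Omega^k(M)$, whose colimit is $d\Omega^{k-1}_c(M)$ because $d$ commutes with colimits.

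The step that requires the most care is the identification of the colimits with the compactly supported objects, especially the verification that the cofinality argument invoked to replace $\mathcal{O}_M^c$ by $\mathcal{K}_M$ (and vice versa) in the previous proposition is compatible with all the other functors in the diagram, i.e.\ that $\Omega^k_\Z(M,M\setminus\arg)$, $\Omega^k(M,M\setminus\arg)$, $H^k(M,M\setminus\arg;G)$ and the various torsion/free subgroups also admit the same cofinality replacement. Once this is settled, the commutativity and the exactness of rows and columns in \eqref{diag:2:diff_char_compact_support} follow automatically from the corresponding properties of \eqref{diag:2:relative_diff_char} by exactness of filtered colimits.
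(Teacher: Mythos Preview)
Your proposal is correct and follows essentially the same approach as the paper: invoke Proposition~\ref{prop:2:diag_rel_diff_char} for each properly embedded pair $(M,M\setminus O)$ via the index set $\mathcal{O}_M^c$, and then pass to the colimit using that filtered colimits in $\mathsf{Ab}$ are exact (AB5). The paper's proof is more terse, but the content is identical, including the remark that the cofinality isomorphism \eqref{eq:2:colim_iso_diff_char_cs} is what legitimizes the use of Proposition~\ref{prop:2:diag_rel_diff_char} at each stage.
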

\begin{proof}
The assertion follows immediately from diagram \eqref{diag:2:relative_diff_char}, from Proposition \ref{prop:2:diag_rel_diff_char} and from the fact that colim is an exact functor for diagrams of Abelian groups over directed sets.
\end{proof}

\begin{rem}
Observe that while invoking Proposition \ref{prop:2:diag_rel_diff_char} we have implicitly used the isomorphism \eqref{eq:2:colim_iso_diff_char_cs} which allows us to deal with properly embedded submanifolds only.
\end{rem}

Differential cohomology with compact support is a functor:
\begin{prop}\label{prop:2:diff_coho_comp_is_functor}
Let $\mathsf{Man}_{m,\hookrightarrow}$ be the category of smooth m-dimensional manifolds whose morphisms are open embeddings. Differential cohomology with compact support is a covariant functor from $\mathsf{Man}_{m,\hookrightarrow}$ to the category of Abelian groups $\mathsf{Ab}$:
\begin{equation}
\hat{H}^k_c(\arg;\Z): \mathsf{Man}_{m,\hookrightarrow} \rightarrow \mathsf{Ab}
\end{equation}
\end{prop}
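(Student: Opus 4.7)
The plan is to define a pushforward $\hat H^k_c(f)$ for every open embedding $f\colon M\hookrightarrow M'$ in $\mathsf{Man}_{m,\hookrightarrow}$ and then to verify the functoriality identities. By the isomorphism \eqref{eq:2:colim_iso_diff_char_cs} I may realise $\hat H^k_c(M;\Z)$ as a colimit over $\mathcal O^c_M$; this has the advantage that $M\setminus O$ is properly embedded for every $O\in\mathcal O^c_M$, so that the diagram in Proposition \ref{prop:2:diag_rel_diff_char} is at my disposal. The open embedding $f$ transports $\mathcal O^c_M$ into $\mathcal O^c_{M'}$: given $O\in\mathcal O^c_M$, the set $f(O)$ is open in $M'$, its closure equals $f(\overline O)$ and is compact, and $\partial\overline{f(O)}=f(\partial\overline O)$ is smooth. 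By the universal property of the colimit, it then suffices to construct, for each $O\in\mathcal O^c_M$, a homomorphism
\[
\phi_O\colon\hat H^k(M,M\setminus O;\Z)\longrightarrow\hat H^k(M',M'\setminus f(O);\Z)
\]
natural in $O$.

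To construct $\phi_O$ I would factor $f$ through its image. The restriction $\tilde f\colon M\to f(M)$ is a diffeomorphism and defines an isomorphism of pairs $(M,M\setminus O)\simeq(f(M),f(M)\setminus f(O))$ in $\mathsf{Pair}$, yielding an isomorphism $(\tilde f^{-1})^*$ on relative differential characters via the contravariant functoriality of Section \ref{subsec:2:rel_diff_coho}. On the other hand, the inclusion of pairs $j\colon(f(M),f(M)\setminus f(O))\hookrightarrow(M',M'\setminus f(O))$ induces a pullback $j^*$ in the wrong direction. The key point is that $j^*$ is in fact an isomorphism, an excision statement for relative differential cohomology, whence $\phi_O:=(j^*)^{-1}\circ(\tilde f^{-1})^*$.

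The bulk of the work sits in this excision claim, and I would derive it from the short five lemma applied to the middle horizontal short exact sequence of diagram \eqref{diag:2:relative_diff_char} evaluated at the two pairs related by $j$. For the flank $H^{k-1}(-,-;\T)$, classical excision applies: $M'\setminus f(M)$ is closed, it lies inside the open set $M'\setminus f(O)$ (whose interior coincides with itself because $\overline{f(O)}$ is closed), and the $\T$-coefficients present no difficulty. For the flank $\Omega^k_\Z(-,-)$, every element is closed (since integer periods on all cycles forces closedness) and supported in $\overline{f(O)}\subset f(M)$, so extension by zero through the open embedding is a bijection onto $\Omega^k_\Z(f(M),f(M)\setminus f(O))$; the integral-period condition is preserved because closedness makes integration a homological invariant and excision holds for the relative singular homology of the pairs involved. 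With both flanking arrows isomorphisms, the short five lemma forces $j^*$ to be one as well, establishing excision for $\hat H^k(-,-;\Z)$.

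Once $\phi_O$ is in hand, its naturality in $O$ follows from the functoriality of pullbacks in $\mathsf{Pair}$ and the uniqueness of the excision inverses, so passing to the colimit produces $\hat H^k_c(f)$. The identities $\hat H^k_c(\mathrm{id}_M)=\mathrm{id}$ and $\hat H^k_c(g\circ f)=\hat H^k_c(g)\circ\hat H^k_c(f)$ for composable open embeddings $M\xrightarrow{f}M'\xrightarrow{g}M''$ follow from the same naturality considerations together with the evident compatibilities $(g\circ f)(O)=g(f(O))$ and $\widetilde{g\circ f}=\tilde g\circ\tilde f$. The principal obstacle is thus the excision lemma; the rest is naturality bookkeeping.
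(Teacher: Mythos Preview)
Your argument is correct and follows essentially the same route as the paper: factor $f$ through its image, use the diffeomorphism onto the image for one half, invoke excision for relative differential cohomology for the other half, and pass to the colimit. The paper works over the directed set $\mathcal K_M$ rather than $\mathcal O^c_M$ and simply cites excision from \cite[Theorem 3.8]{BMath}, whereas you choose $\mathcal O^c_M$ (legitimately, via \eqref{eq:2:colim_iso_diff_char_cs}) and supply a self-contained five-lemma proof of excision using the middle row of \eqref{diag:2:relative_diff_char}; that added detail is sound and the rest is the same bookkeeping.
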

\begin{proof}
Let $f:M\to M'$ be a morphism in $\mathsf{Man}_{m,\hookrightarrow}$, i.e. an open embedding. Then $f$ can be written as the composition of an embedding $i:f(M)\to M'$ and a diffeomorphism $g:M\to f(M)$: $f=i\circ g$. For every $K'\in\mathcal{K}_{f(M)}$ it holds $f(M)\setminus K'\subseteq M'\setminus K'$; therefore, $i$ induces a natural transformation
\begin{equation*}
i^\ast: \hat{H}^k(M',M'\setminus \arg;\Z)\circ \mathcal{K}_i \Rightarrow \hat{H}^k(f(M),f(M)\setminus \arg;\Z)
\end{equation*}
between functors from $\mathcal{K}_{f(M)}$ to $\mathsf{Ab}$. The excision theorem for relative differential cohomology \cite[Theorem 3.8]{BMath} guarantees that $i^\ast$ is, in addition, an isomorphism. Denote its inverse by:
\begin{equation*}
(i^\ast)^{-1}: \hat{H}^k(f(M),f(M)\setminus \arg;\Z) \Rightarrow \hat{H}^k(M',M'\setminus \arg;\Z)\circ \mathcal{K}_i.
\end{equation*}
As $g$ is a diffeomorphism, we have immediately that it induces a natural isomorphism:
\begin{equation*}
g^\ast: \hat{H}^k(f(M),f(M)\setminus \arg;\Z) \circ \mathcal{K}_g \Rightarrow \hat{H}^k(M,M\setminus \arg;\Z)
\end{equation*}
whose inverse is:
\begin{equation*}
(g^\ast)^{-1}: \hat{H}^k(M,M\setminus \arg;\Z) \Rightarrow \hat{H}^k(f(M),f(M)\setminus \arg;\Z) \circ \mathcal{K}_g.
\end{equation*}
What we have obtained is that
\begin{equation*}
f^\ast=g^\ast\circ (i^\ast \mathcal{K}_g): \hat{H}^k(M',M'\setminus \arg;\Z)\circ \mathcal{K}_f \Rightarrow \hat{H}^k(M,M\setminus \arg;\Z)
\end{equation*}
is a natural isomorphism. Denoting by $(f^\ast)^{-1}=((i^\ast)^{-1}\mathcal{K}_g)\circ (g^\ast)^{-1}$ its inverse, by resorting to the universal property of the colimit we get a canonical homomorphism of Abelian groups:
\begin{equation*}
f_\ast:\hat{H}^k_c(M;\Z) \to \hat{H}^k_c(M';\Z).
\end{equation*}
The assignment $f\mapsto \hat{H}^k_c(f;\Z):=f_\ast$ is well-behaved with respect to identity and composition. The proposition is proved.
\end{proof}

Naturality of curvature, characteristic class, inclusion of flat classes and topological trivialization is preserved when passing to the colimit; they can therefore be promoted to natural transformations between functors from $\mathsf{Man}_{m,\hookrightarrow}$ to $\mathsf{Ab}$.

\begin{rem}
Contrary to ordinary cohomology, cohomology with compact support is \emph{not} an invariant of homotopy type. See, e.g., \cite[Example 3.34]{HAT02}. This fact will have some drawbacks when using differential cohomology with compact support to construct field theories.
\end{rem}

It can be proved \cite{BMath} that differential cohomology with compact support $\hat{H}^\ast_c(M;\Z)$ can be given a module structure over the ring $\hat{H}^\ast(M;\Z)$ of differential characters:
\begin{gather}
\arg : \hat{H}^k(M;\Z)\cdot \hat{H}^l_c(M;\Z)\to \hat{H}^{k+l}_c(M;\Z) \nonumber\\
(h,h')\mapsto h\cdot h'.
\end{gather}
The module structure is natural and compatible with the homomorphisms $\curv$, $\cha$, $\iota$, $\kappa$:
\begin{subequations}\label{eq:2:module_structure}
\begin{align}
& \curv(h\cdot h')=\curv h\wedge \curv h', &&\cha(h\cdot h')=\cha h \smile \cha h'\\
& \iota[A] \cdot h'= \iota [A\wedge \curv h'], && h\cdot \iota[A']= (-1)^k \iota [\curv h \wedge A']\\
& \kappa u \cdot h'= \kappa( u\smile \cha h'), && h\cdot \kappa u'=(-1)^k \kappa(\cha h \smile u'),
\end{align}
\end{subequations}
for every $h\in\hat{H}^k(M;\Z)$, $h' \in \hat{H}^l_c(M;\Z)$, $[A]\in\frac{\Omega^{k-1}(M)}{\Omega^{k-1}_\Z(M)}$, $[A']\in \frac{\Omega_c^{l-1}(M)}{\Omega_{c,\Z}^{l-1}(M)}$, $u\in H^{k-1}(M;\T)$, $u'\in H^{l-1}_c(M;\T)$.\\

There is a natural way to map differential characters with compact support into differential characters. As a matter of fact, applying the colimit prescription to \eqref{map:2:I_relative_diff_char} results in a natural homomorphism, which we denote again by $I$ with a slight abuse of notation:
\begin{equation}\label{eq:2:I}
I:\hat{H}^k_c(M;\Z)\to \hat{H}^k(M;\Z).
\end{equation}
As above, it is in general neither injective nor surjective.

\section{Duality and Pairings}\label{section_two_duality_pairing}
The next step consists of showing that, under suitable assumptions, differential characters and differential characters with compact support are one the dual of the other. In doing this, following \cite{BMath}, we will obtain as by-products several pairings between spaces of interest, which will represent one of the technical tools needed to build our theory.\\

Henceforth, we will restrict the analysis to the category $\mathsf{oMan}_{m,\hookrightarrow}$ of oriented, connected $m$-dimensional smooth manifolds with orientation preserving open embeddings as morphisms. Such assumption yields a wide range of useful results; for instance, a volume form is available.

\begin{defn}[Pontryagin dual]
Let $G$ be an Abelian group. The \emph{Pontryagin dual} of $G$, or character group of $G$, is the group:
$$
G^\star:=\Hom (G,\T).
$$
\end{defn}

For our purposes, elements in the Pontryagin duals of the Abelian groups appearing in diagrams \eqref{diag:2:diff_char} and \eqref{diag:2:diff_char_compact_support} are too singular. Therefore, we need to refine our concept of duality, in order to rule out ill-behaved elements.

\begin{defn}[Smooth Pontryagin dual: differential forms]
We call \emph{smooth Pontryagin dual} of $\Omega^k_c(M)$ the Abelian group $\Omega_c^k(M)^\star _\infty$ defined by:
{\footnotesize
\begin{equation*}
\Omega_c^k(M)^\star_\infty:=\left\{ \varphi\in \Omega^k_c(M)^\star \,\, \vline \,\, \exists \omega \in \Omega^{m-k}(M) \,\,\text{s.t.}\,\, \varphi(\psi)=\int_M \omega \wedge \psi\,\,\mod \Z\,\, \forall\psi\in \Omega^k_c(M)  \right\}.
\end{equation*} 
}
Analogously, we call \emph{smooth Pontryagin dual} of $\Omega^k(M)$ the Abelian group $\Omega^k(M)^\star_\infty$ defined by:
{\footnotesize
\begin{equation*}
\Omega^k(M)^\star_\infty:=\left\{ \varphi\in \Omega^k(M)^\star \,\, \vline \,\, \exists \psi \in \Omega^{m-k}_c(M) \,\,\text{s.t.}\,\, \varphi(\omega)=\int_M \omega \wedge \psi\,\,\mod \Z\,\, \forall\omega\in \Omega^k(M)  \right\}.
\end{equation*} 
}
\end{defn}

We introduce the weakly non-degenerate bilinear pairing:
\begin{eqnarray}\label{pairing:2:differential_forms}
&\pa{\arg}{\arg}_\Omega: \Omega^k(M)\times \Omega^{m-k}_c(M)\to \T \nonumber&\\
&(\omega, \psi)\mapsto \pa{\omega}{\psi}_\Omega=(-1)^k \int_M \omega \wedge \psi\,\,\mod\Z.&
\end{eqnarray}
It induces the isomorphisms:
\begin{equation*}
\begin{split}
&\Omega^k(M) \overset{\simeq}{\longrightarrow} \Omega^{m-k}_c(M)^\star_\infty, \qquad \omega \mapsto \pa{\omega}{\arg}_\Omega\\
&\Omega^{m-k}_c(M) \overset{\simeq}{\longrightarrow} \Omega^{k}(M)^\star_\infty, \qquad \psi \mapsto \pa{\arg}{\psi}_\Omega.
\end{split}
\end{equation*}
In fact, surjectivity holds by definition, whilst injectivity ensues from the weak non-degeneracy of the pairing.\\

\cite[Lemma A.1]{BMath} and \cite[Lemma A.2]{BMath} allow to characterize in an alternative way the Abelian group of differential forms with integer period and the Abelian group of differential forms with integer period and compact support:
\begin{equation*}
\begin{split}
&\Omega^k_\Z(M)=\left\{ \omega\in\Omega^k(M)\,\,\vline\,\, \int_M \omega\wedge \Omega^{m-k}_{c,\Z}(M)\subseteq \Z \right\},\\
&\Omega^{m-k}_{c,\Z}(M)=\left\{ \psi\in\Omega_c^{m-k}(M) \,\,\vline \,\, \int_M \Omega_\Z^k(M) \wedge\psi \subseteq \Z \right\}.
\end{split}
\end{equation*}
The latter requires the additional assumption of having a manifold $M$ of finite type.
In view of this, \eqref{pairing:2:differential_forms} induces two additional pairings:
\begin{gather}
\pa{\arg}{\arg}_\Omega: \Omega^k_\Z(M)\times \dfrac{\Omega_c^{m-k}(M)}{\Omega_{c,\Z}^{m-k}(M)}\to \T \nonumber\\
(\omega,[\psi])\mapsto \pa{\omega}{\psi}_\Omega=(-1)^k \int_M \omega \wedge \psi\,\,\mod\Z,
\end{gather}
and
\begin{gather}
\pa{\arg}{\arg}_\Omega: \dfrac{\Omega^k(M)}{\Omega^k_\Z(M)}\times \Omega_{c,\Z}^{m-k}(M)\rightarrow \T \nonumber\\
([\omega],\psi)\mapsto \pa{\omega}{\psi}_\Omega=(-1)^k \int_M \omega\wedge\psi\,\,\mod\Z,
\end{gather}
from which the following isomorphisms stem:
\begin{subequations}\label{pairing:2:Z}
\begin{align}
&\Omega^k_\Z(M) \overset{\simeq}{\longrightarrow} \left( \dfrac{\Omega_c^{m-k}(M)}{\Omega_{c,\Z}^{m-k}(M)}\right)^\star_\infty, \quad &&\omega \mapsto \pa{\omega}{\arg}_\Omega,\label{pairing:2:Z1}\\
&\dfrac{\Omega_c^{m-k}(M)}{\Omega_{c,\Z}^{m-k}(M)} \overset{\simeq}{\longrightarrow} \Omega^k_\Z(M)^\star_\infty, \quad&&[\psi]\mapsto \pa{\arg}{\psi}_\Omega,\label{pairing:2:Z2}\\
&\dfrac{\Omega^k(M)}{\Omega^k_\Z(M)} \overset{\simeq}{\longrightarrow} \Omega_{c,\Z}^{m-k}(M)^\star_\infty, \quad &&[\omega]\mapsto \pa{\omega}{\arg}_\Omega,\label{pairing:2:Z3}\\
&\Omega_{c,\Z}^{m-k}(M) \overset{\simeq}{\longrightarrow} \left( \dfrac{\Omega^k(M)}{\Omega^k_\Z(M)} \right)^\star_\infty, \quad &&\psi \mapsto \pa{\arg}{\psi}_\Omega.\label{pairing:2:Z4}
\end{align}
\end{subequations}
\eqref{pairing:2:Z2} and \eqref{pairing:2:Z4} require $M$ to be of finite type.\\

Let us come eventually to differential characters.
\begin{defn}[Smooth Pontryagin dual: differential characters] We define as the \emph{smooth Pontryagin dual} of $\hat{H}^k(M;\Z)$ the Abelian group:
\begin{equation*}
\hat{H}^k(M;\Z)^\star_\infty:=(\iota^\star)^{-1} \left( \dfrac{\Omega^{k-1}(M)}{\Omega^{k-1}_\Z(M)}\right)^\star_\infty,
\end{equation*}
where $\iota^\star:=\Hom (\iota,\T): \hat{H}^k(M;\Z)^\star\to \left(\frac{\Omega^{k-1}(M)}{\Omega^{k-1}_\Z(M)}\right)^\star$ is the Pontryagin dual of the topological trivialization $\iota$.

We define as the \emph{smooth Pontryagin dual} of $\hat{H}^k_c(M;\Z)$ the Abelian group:
\begin{equation*}
\hat{H}^k_c(M;\Z)^\star_\infty:=(\iota^\star)^{-1}\left( \dfrac{\Omega^{k-1}_c(M)}{\Omega^{k-1}_{c,\Z}(M)} \right)^\star_\infty,
\end{equation*}
where $\iota^\star:=\Hom(\iota;\T):\hat{H}^k_c(M;\Z)^\star\to \left(\frac{\Omega^{k-1}_c(M)}{\Omega^{k-1}_{c,\Z}(M)}\right)^\star$ is the Pontryagin dual of the topological trivialization $\iota$ (in the compact support case).
\end{defn}

Smooth Pontryagin duals of differential characters possess a functorial behaviour:

\begin{prop}
The smooth Pontryagin dual of differential characters and the smooth Pontryagin dual of differential characters with compact support define functors:
\begin{gather*}
\hat{H}^k(\arg;\Z)^\star_\infty: \mathsf{oMan}_{m,\hookrightarrow} \to \mathsf{Ab},\\
\hat{H}^k_c(\arg;\Z)^\star_\infty: \mathsf{oMan}_{m,\hookrightarrow} \to \mathsf{Ab}.
\end{gather*}
\end{prop}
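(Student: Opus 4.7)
The plan is to inherit the functorial structure from the already-established functors $\hat{H}^k(\arg;\Z)$ on $\mathsf{Man}$ (contravariant) and $\hat{H}^k_c(\arg;\Z)$ on $\mathsf{Man}_{m,\hookrightarrow}$ (covariant, by Proposition \ref{prop:2:diff_coho_comp_is_functor}), by post-composing with the contravariant Pontryagin dual functor $\Hom(\arg,\T):\mathsf{Ab}\to\mathsf{Ab}$. This yields a covariant functor $\hat{H}^k(\arg;\Z)^\star$ and a contravariant functor $\hat{H}^k_c(\arg;\Z)^\star$ on $\mathsf{oMan}_{m,\hookrightarrow}$ for free; the only content of the proposition is then that the induced morphisms preserve the smooth subgroups cut out by the preimages under $\iota^\star$.

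Concretely, for an orientation-preserving open embedding $f:M\to M'$, the induced maps read $f_\ast\varphi=\varphi\circ f^\ast$ and $f^\ast\varphi'=\varphi'\circ f_\ast$. By the naturality of the topological trivialisation $\iota$, which is a natural transformation on $\mathsf{Pair}$ and hence descends to $\mathsf{oMan}_{m,\hookrightarrow}$, we have
\begin{equation*}
\iota_M\circ f^\ast=f^\ast\circ\iota_{M'},\qquad \iota_{M'}\circ f_\ast=f_\ast\circ\iota_M,
\end{equation*}
where $f^\ast$ and $f_\ast$ on the form side denote pullback and extension by zero respectively. Picking $\varphi\in\hat{H}^k(M;\Z)^\star_\infty$, by the isomorphism \eqref{pairing:2:Z4} there exists $\psi\in\Omega^{m-k+1}_{c,\Z}(M)$ with $\varphi\circ\iota_M=\pa{\arg}{\psi}_\Omega$; then
\begin{equation*}
(f_\ast\varphi)\circ\iota_{M'}=\varphi\circ\iota_M\circ f^\ast=\pa{f^\ast(\arg)}{\psi}_\Omega=\pa{\arg}{f_\ast\psi}_\Omega
\end{equation*}
by the change-of-variables formula for an orientation-preserving open embedding applied to a compactly supported integrand. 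The compact-support case is entirely parallel, with the roles of absolute and compactly supported forms swapped.

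The step I expect to require the most care is the verification that $f_\ast\psi$ lies in $\Omega^{m-k+1}_{c,\Z}(M')$, i.e.\ that extension by zero preserves integer periods. Granting the finite-type characterisation of $\Omega^{\ast}_{c,\Z}$ recalled just before \eqref{pairing:2:Z}, this reduces to the identity $\int_{M'}\omega'\wedge f_\ast\psi=\int_M f^\ast\omega'\wedge\psi\in\Z$ together with the fact that pullback preserves integer periods. Without finite-type, the same property can be checked directly against singular cycles, exploiting that $f_\ast\psi$ is supported inside $f(M)$. Functoriality with respect to identities and composition is automatic from the corresponding properties of $\hat{H}^k(\arg;\Z)$, $\hat{H}^k_c(\arg;\Z)$, and $\Hom(\arg,\T)$.
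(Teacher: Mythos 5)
The paper states this proposition without proof (it is imported from \cite{BMath}), and your argument is correct and is essentially the intended one: dualise the already-established functorialities of $\hat{H}^k(\arg;\Z)$ and $\hat{H}^k_c(\arg;\Z)$ with $\Hom(\arg,\T)$, then use naturality of $\iota$ to check that the smoothness condition is preserved, the witnesses being the extension by zero $f_\ast\psi$ and the pullback $f^\ast\omega$ respectively. The one delicate point you flag, $f_\ast\psi\in\Omega^{m-k+1}_{c,\Z}(M')$, is handled adequately, and can alternatively be read off from the covariant functoriality of $\Omega^\ast_{c,\Z}(\arg)$ under open embeddings (already noted, via passage to the colimit, after Proposition \ref{prop:2:diff_coho_comp_is_functor}), which sidesteps any finite-type hypothesis.
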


Consider the following pairing:
\begin{eqnarray}\label{pairing:2:duality_diff_char}
&\pa{\arg}{\arg}_c: \hat{H}^k(M;\Z)\times \hat{H}^{m-k}_c(M;\Z) \to \T \nonumber&\\
&(h,h')\mapsto \pa{h}{h'}_c=\left( h\cdot h'\right) \mu.&
\end{eqnarray}
The right-hand side must be interpreted as follows: Pick a representative of $h'\in \hat{H}^{m-k}_c(M;\Z)$, still denoted by the same symbol, $h'\in\hat{H}^{m-k}(M,M\setminus K;\Z)$ for some compact $K\subseteq M$; $(h\cdot h')$ is then an element in $\hat{H}^{m}(M,M\setminus K;\Z)$. Evaluate it on the unique relative homology class $\mu\in H_m(M,M\setminus K;\Z)$ which restricts to the orientation of $M$ for each point of $K$ \cite[Lemma 3.27]{HAT02}.
By partial evaluation, it induces two homomorphisms:
\begin{subequations}\label{eq:2:Pontryagin_duality_diff_char}
\begin{align}
&\hat{H}^k(M;\Z) \to \hat{H}^{m-k}_c(M;\Z)^\star _\infty, \qquad h	\mapsto \pa{h}{\arg}_c,\\
&\hat{H}^{m-k}_c(M;\Z) \to \hat{H}^k(M;\Z)^\star _\infty, \qquad h'\mapsto \pa{\arg}{h'}_c.
\end{align}
\end{subequations}

Similarly, introduce a pairing for the singular cohomology:
\begin{eqnarray}\label{pairing:2:duality_cohomology}
&\pa{\arg}{\arg}_H: H^k(M;\T) \times H_c^{m-k}(M;\Z) \to \T \nonumber&\\
& (u,u')\mapsto \pa{u}{u'}_H=(u\smile u')\mu.&
\end{eqnarray}
Again, we choose a representative of $u'\in H_c^{m-k}(M;\mathbb{Z})$, $u'\in H^{m-k}(M,M\setminus K;\Z)$ for some compact $K\subseteq M$ and we evaluate $(u\smile u')\in H^m(M,M\setminus K;\T)$ on the unique relative homology class $\mu\in H_m(M,M\setminus K;\Z)$ which restricts to the orientation of $M$ for every point of $K$.

By partial evaluation, it induces the homomorphisms:
\begin{subequations}\label{eq:2:Pontryagin_duality_coho}
\begin{align}
 & H^k(M;\T) \to H_c^{m-k}(M;\Z)^\star, \qquad u \mapsto \pa{u}{\arg}_H \label{eq:2:Pontryagin_duality_coho_A}\\
 & H_c^{m-k}(M;\Z) \to H^k(M;\T)^\star, \qquad u'\mapsto \pa{\arg}{u'}_H \label{eq:2:Pontryagin_duality_coho_B}.
\end{align}
\end{subequations}
By \cite[Lemma 5.3]{BMath}, \eqref{eq:2:Pontryagin_duality_coho_A} is an isomorphim, while \eqref{eq:2:Pontryagin_duality_coho_B} is an isomorphism if $M$ is of finite type.

\begin{thm}\label{thm:2:iso_exact_rows}
Let $M$ be an object in $\mathsf{oMan}_{m,\hookrightarrow}$. Then the diagram:
\begin{equation}
\xymatrix{
0 \ar[r] & H^{m-k-1}(M;\T) \ar[r]^-\kappa \ar[d]^-\simeq & \hat{H}^{m-k}(M;\Z) \ar[r]^-{\mathrm{char}} \ar[d]^-\simeq &\Omega_\Z^{m-k}(M)\ar[r] \ar[d]^\simeq & 0\\
0 \ar[r] & H^k_c(M;\Z)^\star \ar[r]^-{\mathrm{char}^\star} & \hat{H}^{k}_c(M;\Z)^\star_\infty\ar[r]^-{\iota^\star} & \left( \dfrac{\Omega_c^{k-1}(M)}{\Omega_{c,\Z}^{k-1}(M)} \right) ^\star _\infty \ar[r] & 0
}
\end{equation}
is commutative with exact rows. Furthermore, the vertical arrows, given by \eqref{pairing:2:Z}, \eqref{eq:2:Pontryagin_duality_coho} and \eqref{eq:2:Pontryagin_duality_diff_char}, are natural isomorphisms.

Let $M$ be an object in $\mathsf{oMan}_{m,\hookrightarrow}$ of finite type. Then the diagram:
\begin{equation}
\xymatrix{
0 \ar[r] & \dfrac{\Omega_c^{m-k-1}(M)}{\Omega_{c,\Z}^{m-k-1}(M)} \ar[r]^-{\iota} \ar[d]^-\simeq & \hat{H}^{m-k}_c(M;\Z) \ar[r]^-{\mathrm{char}} \ar[d]^-\simeq & H_c^{m-k}(M;\Z) \ar[r]\ar[d]^-\simeq & 0\\
0 \ar[r] & \Omega^{k}_\Z(M)^\star_\infty \ar[r]^-{\mathrm{curv}^\star} & \hat{H}^k(M;\Z)^\star_\infty \ar[r]^-{\kappa^\star} & H^{k-1}(M;\T)^\star \ar[r] & 0
}
\end{equation}
is commutative with exact rows. Furthermore, the vertical arrows, given by \eqref{pairing:2:Z}, \eqref{eq:2:Pontryagin_duality_coho} and \eqref{eq:2:Pontryagin_duality_diff_char}, are natural isomorphisms.
\end{thm}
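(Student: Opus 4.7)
The plan is to reduce both statements to an application of the Five Lemma, after verifying (i) exactness of the two rows, (ii) commutativity of the two squares, and (iii) that the leftmost and rightmost vertical arrows are isomorphisms. The top rows are part of the input: in the first diagram it is a middle row of \eqref{diag:2:diff_char}, in the second it is a middle column of \eqref{diag:2:diff_char_compact_support} (with $k$ replaced by $m-k$). The outer vertical arrows have already been identified as isomorphisms in the preceding material, namely \eqref{eq:2:Pontryagin_duality_coho} on the cohomological side and \eqref{pairing:2:Z} on the differential-form side. The finite-type hypothesis required in the second part of the statement is exactly what makes \eqref{eq:2:Pontryagin_duality_coho_B} and \eqref{pairing:2:Z4} into isomorphisms.

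The next step is commutativity of the two squares, which comes down to unpacking the definition \eqref{pairing:2:duality_diff_char} of $\pa{\arg}{\arg}_c$ and invoking the compatibility axioms \eqref{eq:2:module_structure} of the module structure of $\hat{H}^\ast_c(M;\Z)$ over $\hat{H}^\ast(M;\Z)$. For the left square of the first diagram the key identity is $\kappa(u)\cdot h'=\kappa\bigl(u\smile\mathrm{char}(h')\bigr)$; combined with the observation that evaluation of $\kappa(\arg)$ on the relative fundamental class $\mu$ of \cite[Lemma 3.27]{HAT02} reduces to the cup-product pairing on cohomology, it yields
\[
\pa{\kappa(u)}{h'}_c=(\kappa(u)\cdot h')\mu=\bigl(u\smile\mathrm{char}(h')\bigr)\mu=\pa{u}{\mathrm{char}(h')}_H.
\]
The right square is treated analogously, using $h\cdot\iota(A)=(-1)^{m-k}\iota\bigl(\mathrm{curv}(h)\wedge A\bigr)$ and the fact that $\iota(\arg)\mu$ reduces to $\int_M(\arg)\mod\Z$; the outcome is precisely the pairing $\pa{\mathrm{curv}(h)}{A}_\Omega$ of \eqref{pairing:2:differential_forms}. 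The two squares of the second diagram are verified by the same manipulations, with the roles of absolute and compactly supported objects interchanged.

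What remains is the exactness of the bottom rows. Injectivity of the leftmost horizontal arrow is dual to surjectivity of $\mathrm{char}$ (respectively $\kappa$) in the undualized short exact sequence; the vanishing of the composition of the two horizontal arrows is dual to $\mathrm{char}\circ\iota=0$ (respectively $\mathrm{curv}\circ\kappa=0$); exactness at the middle follows by a standard quotient argument together with the divisibility of $\T$. The main technical obstacle I anticipate is showing the surjectivity of the rightmost horizontal arrow \emph{onto the smooth Pontryagin dual}: divisibility of $\T$ produces only \emph{some} character extending a given smooth one, with no a priori control on smoothness. This is circumvented by the commutativity just established: given a smooth character $\varphi_0$, push it through the right vertical isomorphism to obtain a class in the corresponding space of forms, lift that class through the surjective $\mathrm{curv}$ (respectively $\mathrm{char}$) to a preimage $h_0$ in the top-row differential cohomology, and observe that $\pa{h_0}{\arg}_c$ is smooth by construction and restricts to $\varphi_0$ along $\iota$ (respectively $\kappa$) by the commutativity of the right (respectively left) square. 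The Five Lemma then delivers the central isomorphism, and naturality of the vertical arrows is inherited from the naturality of the pairings and of the internal product, completing the proof.
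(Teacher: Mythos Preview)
The paper does not supply its own proof of this theorem; it is quoted from \cite{BMath} (the section opens with ``following \cite{BMath}'') and the text passes directly to Corollary~\ref{cor:2:non_deg_pair_c}. So there is nothing in the paper to compare against beyond the surrounding infrastructure that you correctly invoke.

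Your argument is sound and is essentially the standard route. A couple of remarks. First, the label $\mathrm{char}$ on the top-right arrow of the first diagram is a typo in the paper (the target is $\Omega_\Z^{m-k}(M)$, so the map is $\mathrm{curv}$); you have implicitly corrected this by lifting through $\mathrm{curv}$ in your surjectivity step. Second, your diagram chase for surjectivity of $\iota^\star$ onto the \emph{smooth} dual tacitly uses that the middle vertical arrow already lands in $\hat{H}^k_c(M;\Z)^\star_\infty$; this is true precisely because of the right-square commutativity you just established ($\iota^\star\pa{h}{\arg}_c=\pa{\mathrm{curv}\,h}{\arg}_\Omega$ is smooth by construction), but it is worth making that dependency explicit before invoking the Five Lemma. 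With that clarification the argument is complete.
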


The above theorem establishes, in particular, that differential cohomology and differential cohomology with compact support are one the smooth Pontryagin dual of the other:

\begin{cor}\label{cor:2:non_deg_pair_c}
Let $M$ be an oriented, connected m-dimensional smooth manifold of finite type. Then, the pairing \eqref{pairing:2:duality_diff_char} is weakly non degenerate.
\end{cor}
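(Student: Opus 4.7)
The plan is to reduce the claim directly to the previous theorem. Recall that weak non-degeneracy of $\pa{\arg}{\arg}_c$ amounts to the injectivity of the two partial evaluation maps
\begin{equation*}
\hat{H}^k(M;\Z)\to \hat{H}^{m-k}_c(M;\Z)^\star, \qquad \hat{H}^{m-k}_c(M;\Z)\to \hat{H}^k(M;\Z)^\star,
\end{equation*}
obtained from \eqref{pairing:2:duality_diff_char}. So I would spend no time constructing anything new and simply observe that the theorem above already identifies these maps, up to factoring through the smooth Pontryagin duals, as isomorphisms.

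More precisely, first I would apply Theorem \ref{thm:2:iso_exact_rows} with the role of $k$ played by $m-k$. The middle vertical arrow of the first diagram, namely $h\mapsto \pa{h}{\arg}_c$, then becomes an isomorphism $\hat{H}^k(M;\Z)\overset{\simeq}{\to}\hat{H}^{m-k}_c(M;\Z)^\star_\infty$. Composing with the inclusion $\hat{H}^{m-k}_c(M;\Z)^\star_\infty\hookrightarrow \hat{H}^{m-k}_c(M;\Z)^\star$ shows that the first partial evaluation map is injective; in particular, if $\pa{h}{h'}_c=0$ for all $h'\in \hat{H}^{m-k}_c(M;\Z)$, then $h=0$. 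Note that this half does not require finite type.

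Next I would invoke the second part of Theorem \ref{thm:2:iso_exact_rows}, again with $k$ replaced by $m-k$, which requires the finite type hypothesis assumed in the corollary. The middle vertical arrow of that diagram, $h'\mapsto \pa{\arg}{h'}_c$, is a natural isomorphism $\hat{H}^{m-k}_c(M;\Z)\overset{\simeq}{\to}\hat{H}^{k}(M;\Z)^\star_\infty$, hence injective into $\hat{H}^k(M;\Z)^\star$. Thus if $\pa{h}{h'}_c=0$ for all $h\in \hat{H}^k(M;\Z)$, then $h'=0$, which completes the proof.

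There is really no obstacle here: Theorem \ref{thm:2:iso_exact_rows} is doing all the work, and the only subtlety worth stating explicitly is that the two injectivity statements use the two parts of that theorem, respectively, and that it is the second (hence the finite type hypothesis on $M$) which gives the non-trivial direction relating differential characters with compact support to those without.
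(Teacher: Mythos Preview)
Your proposal is correct and follows exactly the route the paper intends: the corollary is stated immediately after Theorem~\ref{thm:2:iso_exact_rows} with no proof, the preceding sentence indicating that weak non-degeneracy is a direct consequence of the middle vertical arrows in both diagrams being isomorphisms. Your explicit observation that only the second direction requires the finite type hypothesis is a useful clarification that the paper leaves implicit.
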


To conclude the present section, let us show how it is possible to introduce, thanks to the above results, a $\T$-valued pairing on differential characters with compact support. Define:
\begin{eqnarray}\label{pairing:2:comp_supp_char}
&\pa{I\arg}{\arg}_c: \hat{H}_c^k(M;\Z)\times \hat{H}^{m-k}_c(M;\Z) \to \T \nonumber&\\
&(h,h')\mapsto \pa{Ih}{h'}_c,&
\end{eqnarray}
where $I: \hat{H}^k_c(M;\Z)\to \hat{H}^k(M;\Z)$ is the homomorphism defined in \eqref{eq:2:I}. Due to $I$ being neither injective nor surjective in general, such a pairing may be degenerate. When $M$ is of finite type, the degeneracy of the pairing coincides with the kernel of $I$. We will return to this point extensively later.


\section{Quantum Field Theory}\label{section:2.3_quantum_field_theory}

Differential characters provide a powerful and versatile mathematical tool. In the present section, in the framework of locally covariant quantum field theory, we will show how to construct a model for generalized Abelian gauge theory out of differential cohomology.\\

\subsection{Semi-classical configuration space}
From the perspective of physics, the most remarkable feature of the model will be the natural simultaneous discretization of both the electric and the magnetic charges. For the sake of clarity, we normalize the electric and magnetic fluxes in such a way that they are quantized in the same lattice $\Z\subseteq \R$. Furthermore, following \cite{BPhys}, we adopt the adjective ``semi-classical'' to mean that the Dirac charge quantization, which usually arises as a quantum effect, takes place at an earlier stage.\\

Introduce the category \textsf{Loc}$_m$ of $m$-dimensional time-oriented, globally hyperbolic, Lorentzian manifolds whose morphisms are causal embeddings, i.e.\ orientation and time-orientation preserving isometric embeddings $f:M\to M'$ with open and causally compatible\footnote{The image of an orientation and time-orientation preserving isometric embedding $f: M \to M^\prime$ is \textit{causally compatible} if $J^\pm_{M'}(f(p)) \cap f(M)=f\left( J^\pm_M(p) \right)$ for every $p\in M$.} image. Henceforth, every time we pick an object $M$ in $\textsf{Loc}_m$, it will be interpreted as the \emph{spacetime} in hand. Furthermore, for any graded Abelian group $G^\ast=\oplus _{k\in\Z}G^k$ we will adopt the notation:
$$
G^{m,n}:=G^m\times G^n.
$$

The discretization is obtained selecting a distinguished subspace in the product of differential cohomology groups by imposing suitable constraints:

\begin{defn}\label{def:2:semi_class_conf_space}
We call \emph{semi-classical configuration space} $\C^k(M;\Z)$ the Abelian group:
\begin{equation*}
\C^k(M;\Z):=\left\{  (h,\tilde{h})\in \hat{H}^k(M;\Z)\times \hat{H}^{m-k}(M;\Z)\,\,\vline\,\, \curv h=\ast\curv \tilde{h}  \right\},
\end{equation*} 
\end{defn}
where $\ast$ denotes the Hodge dual for differential forms induced by the orientation and the metric on $M$.

As a consequence, the restriction must be consistently extended to the other elements in diagram \eqref{diag:2:diff_char}.

\begin{defn}
We call \emph{semi-classical topologically trivial fields} $\mathfrak{T}^k(M;\Z)$ the Abelian group:
\begin{equation}
\mathfrak{T}^k(M;\Z):=\left\{ ([A],[\tilde{A}])\in \dfrac{\Omega^{k-1,m-k-1}(M)}{\Omega^{k-1,m-k-1}_\Z(M)}\,\,\vline\,\, dA=\ast d\tilde{A} \right\}.
\end{equation}
\end{defn}

This allows to obtain a new commutative diagram of Abelian groups, for which the central object is the semi-classical configuration space $\C^k(M;\Z)$:
\begin{prop}
Let $M$ be an object in $\mathsf{Loc}_m$. The diagram
\emph{
{\footnotesize
\begin{equation}\label{diag:2:scconfig}
\xymatrix@C-=0.6cm{
& 0 \ar[d] & 0 \ar[d] & 0 \ar[d] &\\
0\ar[r] & \dfrac{H^{k-1,m-k-1}(M;\mathbb{R})}{H^{k-1,m-k-1}_{\text{free}}(M;\mathbb{Z})} \ar[r]^-{\tilde{\kappa}\times \tilde{\kappa}} \ar[d] & \mathfrak{T}^k(M;\mathbb{Z}) \ar[r]^-{d_1} \ar[d]^-{\iota \times \iota} & d\Omega^{k-1}\cap *d\Omega^{m-k-1}(M) \ar[r]\ar[d]^-{\subseteq}& 0\\
0\ar[r]& H^{k-1,m-k-1}(M;\mathbb{T}) \ar[r]^-{\kappa \times \kappa} \ar[d] & \mathfrak{C}^k(M;\mathbb{Z})  \ar[r]^-{\text{curv}_1} \ar[d]^-{\text{char} \times \text{char}} & \Omega^{k}_{\mathbb{Z}}\cap *\Omega^{m-k}_{\mathbb{Z}}(M) \ar[r] \ar[d]^-{([\cdot],[*^{-1}\cdot])}& 0\\
0 \ar[r] & H^{k,m-k}_{\text{tor}}(M;\mathbb{Z}) \ar[r] \ar[d] & H^{k,m-k}(M;\mathbb{Z}) \ar[r]\ar[d] & H^{k,m-k}_{\free}(M;\mathbb{Z}) \ar[r] \ar[d] &0\\
& 0 & 0 & 0 & \\
}
\end{equation}
}}
is commutative and it has exact rows and columns. The homomorphisms $d_1$ and $\curv_1$ are defined respectively as:
\emph{
\begin{gather}
\curv_1:\C^k(M;\Z) \to \Omega_\Z^k\cap \ast \Omega_\Z ^{m-k}(M)\nonumber\\
(h,\tilde{h})\mapsto \curv h =\ast \curv \tilde{h}
\end{gather}}
and
\emph{\begin{gather}
d_1:\mathfrak{T}^k(M;\Z) \to d\Omega^{k-1}\cap \ast d\Omega^{m-k-1}(M)\nonumber\\
([A],[\tilde{A}] )\mapsto dA=\ast d\tilde{A}.
\end{gather}}
\end{prop}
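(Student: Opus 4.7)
The plan is to view diagram \eqref{diag:2:scconfig} as a fibered product of two copies of the fundamental diagram \eqref{diag:2:diff_char}, one in degree $k$ and one in degree $m-k$, glued along the right-hand column via the Hodge star. Indeed, by construction
$$
\C^k(M;\Z) = \hat{H}^k(M;\Z) \times_{\Omega^k(M)} \hat{H}^{m-k}(M;\Z),
$$
where the fiber product is taken with respect to $\curv$ on the first factor and $\ast\circ\curv$ on the second, and similarly $\mathfrak{T}^k(M;\Z)$ is the analogous fibered product for topologically trivial fields. Every arrow in \eqref{diag:2:scconfig} acts componentwise (up to Hodge star on the second component), so the commutativity of each square reduces verbatim to the commutativity of \eqref{diag:2:diff_char} applied in each degree, together with the obvious compatibility of $\ast$ with $d$ on the right column.

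For the middle row I would chase injectivity and kernel-image matching componentwise: $\kappa\times\kappa$ is injective since each $\kappa$ is; if $\curv_1(h,\tilde h)=0$ then $\curv h=0$ and $\curv\tilde h=0$, so by exactness of the original rows both $h$ and $\tilde h$ lie in the image of $\kappa$, giving $(h,\tilde h)\in\operatorname{Im}(\kappa\times\kappa)$. Surjectivity of $\curv_1$ onto $\Omega^k_\Z\cap\ast\Omega^{m-k}_\Z(M)$ is the step that uses the constraint: given $\omega$ in the target, write $\omega=\ast\tilde\omega$ with $\tilde\omega\in\Omega^{m-k}_\Z(M)$, then pick $h,\tilde h$ with $\curv h=\omega$, $\curv\tilde h=\tilde\omega$ using surjectivity of $\curv$ in \eqref{diag:2:diff_char}; the pair $(h,\tilde h)$ automatically lies in $\C^k(M;\Z)$. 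The top row follows mutatis mutandis replacing $(\kappa,\curv,\hat H)$ by $(\tilde\kappa,d,\mathfrak T)$, and the bottom row is simply the product of the torsion/free splitting sequences, which is visibly exact.

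For the columns, the left column is the literal product of the leftmost columns of \eqref{diag:2:diff_char} and is exact as a product of short exact sequences. The right column is exact because every $\omega\in\Omega^k_\Z\cap\ast\Omega^{m-k}_\Z(M)$ has image $([\omega],[\ast^{-1}\omega])$ whose vanishing is equivalent to $\omega\in d\Omega^{k-1}\cap\ast d\Omega^{m-k-1}(M)$, again reducing to two componentwise instances of the right column of \eqref{diag:2:diff_char}. The main obstacle is the middle column: injectivity of $\iota\times\iota$ is componentwise, and $\ker(\cha\times\cha)=\operatorname{Im}(\iota\times\iota)$ follows from exactness of the middle column of \eqref{diag:2:diff_char} in each degree together with the observation that if $(h,\tilde h)\in\C^k$ with $\cha h=0$ and $\cha\tilde h=0$, then $h=\iota[A]$, $\tilde h=\iota[\tilde A]$ and the constraint $\curv h=\ast\curv\tilde h$ becomes precisely $dA=\ast d\tilde A$, i.e.\ $([A],[\tilde A])\in\mathfrak T^k$. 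The delicate point is surjectivity of $\cha\times\cha$ onto $H^{k,m-k}(M;\Z)$: starting from an arbitrary lift $(h_0,\tilde h_0)$ with the correct characteristic classes one obtains $\curv h_0-\ast\curv\tilde h_0\in\Omega^k(M)$, and one must correct $(h_0,\tilde h_0)$ by an element of $\operatorname{Im}(\iota\times\iota)$ to enforce the Hodge constraint; the needed correction exists because, working within a fixed characteristic class, the curvature can be shifted by arbitrary elements of $d\Omega^{k-1}$ (resp.\ $d\Omega^{m-k-1}$) via topological trivializations, and any class mismatch on $M\cong\R\times\Sigma$ is accommodated by standard Hodge-theoretic arguments in the globally hyperbolic setting. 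I expect this last adjustment to be the genuinely nontrivial computation; once it is in place, the remaining four-lemma / five-lemma style verifications close the diagram.
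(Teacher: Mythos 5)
Most of your reduction to diagram \eqref{diag:2:diff_char} (commutativity, injectivity of $\kappa\times\kappa$, $\tilde\kappa\times\tilde\kappa$, $\iota\times\iota$, exactness of the left column and bottom row, kernel--image matching in the first two rows and in the middle column) is exactly what the paper does, and those parts are fine. The genuine gap is in the two surjectivity statements that couple the two degrees through the Hodge star, and your proposal does not actually prove either of them. First, exactness of the right column requires surjectivity of $([\cdot],[\ast^{-1}\cdot]):\Omega^k_\Z\cap\ast\Omega^{m-k}_\Z(M)\to H^{k,m-k}_\free(M;\Z)$; this does \emph{not} reduce to two componentwise instances of the right column of \eqref{diag:2:diff_char}, because componentwise surjectivity only gives $\omega$ with $[\omega]=z$ and $\tilde\omega$ with $[\tilde\omega]=\tilde z$, whereas one needs a \emph{single} form $F$ with integral periods such that $\ast^{-1}F$ also has integral periods and $[F]=z$, $[\ast^{-1}F]=\tilde z$. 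Your write-up only checks the kernel condition for this column and silently treats surjectivity as automatic. Second, for surjectivity of $\cha\times\cha$ you correctly identify that one must correct a lift $(h_0,\tilde h_0)$ by an element of $\mathrm{Im}(\iota\times\iota)$, i.e.\ solve $dA-\ast d\tilde A=\ast\curv\tilde h_0-\curv h_0$, but dismissing this as ``standard Hodge-theoretic arguments in the globally hyperbolic setting'' is not an argument: there is no Hodge decomposition on a noncompact Lorentzian manifold, the Bernal--S\'anchez splitting $M\simeq\R\times\Sigma$ is not metrically a product, and $\Sigma$ need not be compact. This correction problem is precisely the nontrivial analytic input.

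The paper supplies that input with a hyperbolic PDE argument, not Hodge theory: given $(z,\tilde z)$ realized by $([\omega],[\tilde\omega])$ with $\omega\in\Omega^k_\Z(M)$, $\tilde\omega\in\Omega^{m-k}_\Z(M)$, it solves $[\delta\theta]=[\omega]$ and $[\delta\tilde\theta]=[\tilde\omega]$ by setting $\theta=G_{k+1}(d\chi_+\wedge\omega)$ (and similarly for $\tilde\theta$), where $G$ is the causal propagator of the d'Alembert--de Rham operator $\square=d\delta+\delta d$ and $\{\chi_+,\chi_-\}$ is a partition of unity with past/future compact supports; then $F:=\delta\theta+\ast\delta\tilde\theta$ satisfies $[F]=[\omega]$, $[\ast^{-1}F]=[\tilde\omega]$ and has integral periods, proving surjectivity of $([\cdot],[\ast^{-1}\cdot])$. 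Surjectivity of $\cha\times\cha$ is then deduced from this: one pulls $(z,\tilde z)$ back through $([\cdot],[\ast^{-1}\cdot])\circ\curv_1$ to some $(h,\tilde h)\in\C^k(M;\Z)$ and absorbs the remaining torsion discrepancy by adding $(\kappa u,\kappa\tilde u)$, which leaves the curvature constraint untouched. To close your proof you need to supply this Green-operator construction (or an equivalent use of the well-posed Cauchy problem for the constraint $\curv h=\ast\curv\tilde h$); it is exactly the place where the hypothesis $M\in\mathsf{Loc}_m$, i.e.\ global hyperbolicity, is used, and it cannot be replaced by a formal diagram chase.
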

\begin{proof}
Commutativity follows at once from diagram \eqref{diag:2:diff_char}. For the same reason, the left column and the bottom row are exact, being the Cartesian product of exact sequences, and $\tilde{\kappa}\times\tilde{\kappa}$, $\kappa\times \kappa$ and $\iota\times\iota$ are monomorphisms. The homomorphism $\subseteq$ in the last column is nothing but an inclusion, therefore injective.

As far as $\curv_1$ is concerned, consider an element $\omega=\ast \tilde{\omega}\in \Omega_\Z^k\cap \ast \Omega_\Z ^{m-k}(M)$. Due to the surjectivity of $\curv$, choose $h\in \hat{H}^k(M;\Z)$ and $\tilde{h}\in \hat{H}^{m-k}(M;\Z)$ such that $\curv h=\omega$ and $\curv \tilde{h}=\tilde{\omega}$. Then, $(h, \tilde{h})\in \C^k(M;\Z)$ and $\curv_1$ is surjective.

Let $dA=\ast d\tilde{A}\in d\Omega^{k-1}\cap \ast d\Omega^{m-k-1}(M)$. Then $([A],[\tilde{A}])$ lies in $\mathfrak{T}^k(M;\Z)$ and $d_1([A],[\tilde{A}])=dA=\ast d\tilde{A}$, entailing the surjectivity of $d_1$.

The exactness of the first two rows at the middle object is again a straightforward consequence of their counterparts in diagram \eqref{diag:2:diff_char}. The same holds true for the last two columns, due to the same reason.

Let us come to showing that $([\arg],[\ast^{-1}\arg])$ is surjective too. Pick $(z,\tilde{z})\in H^{k,m-k}_\free(M;\Z)$ and realize it via the de Rham theorem as $(z,\tilde{z})\overset{dR}{=}([\omega],[\tilde{\omega}])$ for some $\omega\in \Omega^{k}_\Z(M)$, $\tilde{\omega}\in \Omega^{m-k}_\Z(M)$. Denoting by $\delta$ the codifferential, solve for $\theta \in \Omega^{k+1}(M)$ and $\tilde{\theta}\in\Omega^{m-k+1}(M)$ the equations $[\delta \theta]=[\omega]$ and $[\delta \tilde{\theta}]=[\tilde{\omega}]$. A solution exists; in fact, let $\square:=d\delta + \delta d$ be the Laplace-de Rham operator on $k$-forms, let $G^{\pm}_k$ be its advanced/retarded Green's operators and let $G_k=G^+_k-G^-_k$ be the causal propagator. Following \cite{B15}, introduce a partition of unity $\{\chi_+,\chi_-\}$ on $M$ such that $\chi_{+}$ has past compact support and $\chi_-$ has future compact support. As regards the first equation, $\theta=G_{k+1}(d\chi_+\wedge \omega)$ is a solution; as a matter of fact:
\begin{align*}
\delta \theta & = \delta G_{k+1}(d\chi_+\omega)=\delta d G_k(\chi_+\omega)\\
&= \delta d \left(G^+_k(\chi_+\omega)-G^-_k(\chi_+\omega)-G^-_k(\chi_-\omega)+G^-_k(\chi_-\omega)\right)\\
&= \delta d\left(G^+_k(\chi_+\omega)+G^-_k(\chi_-\omega)\right) -\delta G^-_{k+1}(d\omega)\\
&= \square \left(G^+_k(\chi_+\omega)+G^-_k(\chi_-\omega)\right) - d\delta \left(G^+_k(\chi_+\omega)+G^-_k(\chi_-\omega) \right)\\
&= \omega - d\delta \left(G^+_k(\chi_+\omega)+G^-_k(\chi_-\omega) \right).
\end{align*}
A similar reasoning works for $\tilde{\theta}$.

Set $F:=\delta \theta +\ast \delta \tilde{\theta}$. The properties of the codifferential yield $[F]=[\delta\theta]=[\omega]\in H^k_{dR}(M)$ and $[\ast^{-1}F]=[\delta \tilde{\theta}]=[\tilde{\omega}]\in H_{dR}^{m-k}(M)$. Since $\omega$ and $\tilde{\omega}$ have integer periods, the same holds true for $F$ and $\ast^{-1}F$. Hence, $F\in \Omega^k_\Z\cap \ast \Omega^{m-k}_\Z(M)$.

At last, the surjectivity of $\cha\times \cha$. Choose an element $(x,\tilde{x})\in H^{k,m-k}(M;\Z)$ and let $(z,\tilde{z})$ be its image in $H^{k,m-k}_\free(M;\Z)$. As $\curv_1$ and $([\arg],[\ast^{-1}\arg])$ are both epimorphisms, their composition $([\arg],[\ast^{-1}\arg])\circ \curv_1$ is an epimorphism as well. We can, therefore, pick a pre-image $(h,\tilde{h})\in\C^k(M;\Z)$ of $(z,\tilde{z})$ via $([\arg],[\ast^{-1}\arg])\circ \curv_1$. The exactness of the bottom row entails that we can write $(x,\tilde{x})=(\cha h, \cha \tilde{h})+(t,\tilde{t})$, for some $(t,\tilde{t})\in H^{k,m-k}_{\tor}(M;\Z)$. Let $(u,\tilde{u})\in H^{k-1,m-k-1}(M;\T)$ be a pre-image of $(t,\tilde{t})$ in the middle element of the first column; thanks to commutativity of the left-bottom square, we obtain that $(h+\kappa u, \tilde{h}+ \kappa \tilde{u})$ is a pre-image of $(x,\tilde{x})$ via $\cha \times \cha$ and $\cha\times\cha$ is surjective.
\end{proof}

\begin{rem}
Inheriting functorial properties from $\hat{H}^k(\arg;\Z)$, $\C^k(\arg;\Z): \textsf{Loc}_m \to \textsf{Ab}$ is a contravariant functor. For each morphism $f:M\to M'$ in $\textsf{Loc}_m$ we denote by $f^\ast$ its counterpart $\C^k(f;\Z):\C^k(M';\Z) \to \C^k(M;\Z)$ in $\textsf{Ab}$. Notice that, as $f$ preserves the metric, the condition in Definition \ref{def:2:semi_class_conf_space} is not affected by the pull-back.\\
\end{rem}

The semi-classical configuration space can be presented as the space of solutions of a well-posed Cauchy problem and, consequently, put in bijective correspondence with initial data on an arbitrary Cauchy surface $\Sigma \subseteq M$. 

By \cite{DL12} and \cite{FL16}, denoting by $\iota_\Sigma: \Sigma\to M$ the embedding map, for every $(B,\tilde{B})\in \Omega^{k,m-k}_d(\Sigma)$ there exists a unique $F\in \Omega^k(M)$ solving the initial value problem,
\begin{subequations}\label{eq:2:cauchy_problem_F}
\begin{align}
dF=0,\qquad &\iota^\ast_\Sigma F=B,\\
d\ast^{-1}F=0,\qquad &\iota^\ast_\Sigma\ast^{-1} F=\tilde{B},
\end{align}
\end{subequations}
with $\text{supp}(F)\subseteq J(\text{supp}(B) \cup \text{supp}(\tilde{B}))$.

Analogously, with the initial condition $(\tilde{B},B)\in \Omega^{m-k,k}_d(\Sigma)$, let $\tilde{F}$ be the solution to the Cauchy problem:
\begin{subequations}\label{eq:2:cauchy_problem_F_tilde}
\begin{align}
d\tilde{F}=0,\qquad &\iota^\ast_\Sigma \tilde{F}=\tilde{B},\\
d\ast\tilde{F}=0,\qquad &\iota^\ast_\Sigma\ast \tilde{F}=B,
\end{align}
\end{subequations}
with $\text{supp}(\tilde{F})\subseteq J(\text{supp}(B) \cup \text{supp}(\tilde{B}))$. Combining \eqref{eq:2:cauchy_problem_F} and \eqref{eq:2:cauchy_problem_F_tilde}, one shows that $F-*\tilde{F}$ solves the Cauchy problem \eqref{eq:2:cauchy_problem_F} with vanishing initial data and, hence, $F=\ast \tilde{F}$. Furthermore, it can be proved by \cite[Eq. (2.11)-(2.12)]{BPhys} that if, in addition, the initial data has integer period $(B,\tilde{B})\in \Omega^{k,m-k}_\Z(M)$ so do $F$ and $\ast^{-1}F$. We thus obtain:
\begin{prop}\label{prop:2:iso_config_diffchar}
Let M be an object in $\mathsf{Loc}_m$ and $\Sigma\subseteq M$ a Cauchy surface. The embedding $\iota_\Sigma: \Sigma \to M$ induces an isomorphism of Abelian groups:
\begin{equation}
\Omega_\Z^k\cap \ast \Omega^{m-k}_\Z(M) \overset{(\iota^\ast_\Sigma, \iota^\ast_\Sigma \ast^{-1})}{\longrightarrow}   \Omega_\Z^{k,m-k}(\Sigma),
\end{equation}
whose inverse is the map assigning to each initial data $(B,\tilde{B})\in \Omega_\Z^{k,m-k}(\Sigma)$ the unique solution $F$ of the Cauchy problem \eqref{eq:2:cauchy_problem_F}.
\end{prop}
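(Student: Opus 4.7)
The plan is to assemble the result from the Cauchy problem theory that the author has just cited. First I would observe that the map $(\iota_\Sigma^*,\iota_\Sigma^*\ast^{-1})$ is a well-defined group homomorphism into $\Omega_\Z^{k,m-k}(\Sigma)$: any $F\in\Omega_\Z^k\cap \ast\Omega_\Z^{m-k}(M)$ is automatically $d$-closed together with $\ast^{-1}F$ (integer periods together with Stokes' theorem force $dF=0$ and $d\ast^{-1}F=0$), and pullback along $\iota_\Sigma$ preserves integer periods, because for every cycle $z\in Z_k(\Sigma;\Z)$ one has $\int_z \iota_\Sigma^* F=\int_{\iota_{\Sigma,*}z}F\in\Z$.

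For injectivity, suppose $F_1,F_2\in\Omega_\Z^k\cap\ast\Omega_\Z^{m-k}(M)$ share the same image. Then $F:=F_1-F_2$ is closed, $\ast^{-1}F$ is closed, and the pullbacks $\iota_\Sigma^* F$ and $\iota_\Sigma^*\ast^{-1}F$ vanish. Thus $F$ solves \eqref{eq:2:cauchy_problem_F} with zero initial data, and the uniqueness part of the Cauchy problem theory from \cite{DL12,FL16} forces $F=0$, i.e.\ $F_1=F_2$.

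For surjectivity, start from an arbitrary $(B,\tilde B)\in\Omega_\Z^{k,m-k}(\Sigma)$. Since integer-period forms are closed, $(B,\tilde B)\in\Omega_d^{k,m-k}(\Sigma)$, so \cite{DL12,FL16} supplies a unique $F\in\Omega^k(M)$ solving \eqref{eq:2:cauchy_problem_F}. The discussion preceding the proposition also identifies $F$ with $\ast\tilde F$, where $\tilde F$ is the solution of \eqref{eq:2:cauchy_problem_F_tilde}, so that $\ast^{-1}F$ is automatically closed. Invoking \cite[Eq.~(2.11)–(2.12)]{BPhys}, the hypothesis that the initial data have integer periods propagates to $F$ and $\ast^{-1}F$, whence $F\in\Omega_\Z^k\cap\ast\Omega_\Z^{m-k}(M)$, and by construction it is a pre-image of $(B,\tilde B)$. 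The inverse map is precisely the assignment $(B,\tilde B)\mapsto F$.

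I expect no substantial obstacle, since the non-trivial analytic and topological content, namely well-posedness of the Cauchy problem in $\Omega_d^{\bullet}$ and the persistence of integer periods under evolution, is exactly what the cited references provide. The task reduces to checking that well-definedness, injectivity and surjectivity all follow cleanly from those inputs, and that the inverse is the natural one, sending initial data to the unique solution of \eqref{eq:2:cauchy_problem_F}.
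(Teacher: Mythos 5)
Your proposal is correct and takes essentially the same route as the paper, whose proof is in fact the discussion immediately preceding the proposition: existence and uniqueness for the Cauchy problem \eqref{eq:2:cauchy_problem_F} from \cite{DL12,FL16}, the identification $F=\ast\tilde{F}$, and the propagation of integer periods via \cite[Eq. (2.11)-(2.12)]{BPhys}. Your extra routine checks (closedness from integer periods, pullback preserving integer periods on cycles, injectivity from linearity plus uniqueness with vanishing data) only make explicit what the paper leaves implicit.
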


Since the inclusion of flat classes and the curvature maps are natural transformations, the diagram:
\begin{equation*}
\xymatrix{
0 \ar[r] & H^{k-1,m-k-1}(M;\T) \ar[r]^-{\kappa\times\kappa} \ar[d]^-{\iota^\ast_\Sigma\times\iota^\ast_\Sigma} & \C^k(M;\Z) \ar[r]^-{\curv_1} \ar[d]^-{\iota^\ast_\Sigma\times\iota^\ast_\Sigma} & \Omega_\Z^k\cap \ast\Omega^{m-k}_\Z(M)\ar[r]\ar[d]^-{(\iota^\ast_\Sigma,\iota^\ast_\Sigma \ast^{-1})} &0\\
0 \ar[r] & H^{k-1,m-k-1}(\Sigma;\T) \ar[r]^-{\kappa\times\kappa} & \hat{H}^{k,m-k}(\Sigma;\Z) \ar[r]^-{\curv\times\curv} & \Omega_\Z^{k,m-k}(\Sigma) \ar[r] &0\\
}
\end{equation*}
is commutative and has exact rows. In fact, the upper row is the central row in diagram \eqref{diag:2:scconfig}, whilst the lower one is obtained from the central row in diagram \eqref{diag:2:diff_char} where $M$ has been replaced with the Cauchy surface $\Sigma$. The right vertical arrow is an isomorphism by Proposition \ref{prop:2:iso_config_diffchar}; from \cite[Lemma A.1]{BPhys} it ensues that the left vertical arrow is an isomorphism too. Therefore, the central vertical arrow turns out to be an isomorphism, in view of the Five Lemma \cite[Lemma 7.1]{MAS91}.

\begin{thm}\label{thm:2:isomorphism_i_C}
Let $M$ be an object in $\mathsf{Loc}_m$ and $\Sigma\subseteq M$ a Cauchy surface. The embedding map $	\iota_\Sigma: \Sigma \to M$ induces an isomorphism of Abelian groups:
\begin{equation*}
\xymatrix{
\C^k(M;\Z) \ar[r]^-{\iota^\ast_\Sigma\times\iota^\ast_\Sigma} & \hat{H}^{k,m-k}(\Sigma;\Z).
}
\end{equation*}
\end{thm}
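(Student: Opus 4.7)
The plan is to deduce the theorem by an application of the Five Lemma to a commutative diagram relating the central row of the semi-classical configuration space diagram \eqref{diag:2:scconfig} to the product of two copies of the central row of the differential cohomology diagram \eqref{diag:2:diff_char} over the Cauchy surface $\Sigma$.

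First I would assemble the diagram
\begin{equation*}
\xymatrix@C-=0.4cm{
0 \ar[r] & H^{k-1,m-k-1}(M;\T) \ar[r]^-{\kappa\times\kappa} \ar[d]^-{\iota^\ast_\Sigma\times\iota^\ast_\Sigma} & \C^k(M;\Z) \ar[r]^-{\curv_1} \ar[d]^-{\iota^\ast_\Sigma\times\iota^\ast_\Sigma} & \Omega_\Z^k\cap \ast\Omega^{m-k}_\Z(M)\ar[r]\ar[d]^-{(\iota^\ast_\Sigma,\iota^\ast_\Sigma \ast^{-1})} &0\\
0 \ar[r] & H^{k-1,m-k-1}(\Sigma;\T) \ar[r]^-{\kappa\times\kappa} & \hat{H}^{k,m-k}(\Sigma;\Z) \ar[r]^-{\curv\times\curv} & \Omega_\Z^{k,m-k}(\Sigma) \ar[r] &0
}
\end{equation*}
and check that the two rows are short exact: the upper row is the central horizontal line of \eqref{diag:2:scconfig}, whose exactness has been established in the chapter, while the lower row is the Cartesian product of two copies of the central horizontal line of \eqref{diag:2:diff_char} specialised to $\Sigma$. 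Commutativity of both squares is automatic because $\kappa$ and $\curv$ are natural transformations of the relevant functors, so they intertwine with the pull-back $\iota_\Sigma^\ast$; in the right-hand square one additionally uses that $\iota^\ast_\Sigma$ commutes with the Hodge star in the sense encoded by the map $(\iota_\Sigma^\ast, \iota_\Sigma^\ast \ast^{-1})$, which is exactly the way $\curv_1$ is built from $\curv\times\curv$.

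Next I would invoke the two input isomorphisms. The right vertical arrow is an isomorphism by Proposition \ref{prop:2:iso_config_diffchar}, since the Cauchy problems \eqref{eq:2:cauchy_problem_F}--\eqref{eq:2:cauchy_problem_F_tilde} are well-posed in $\Omega_\Z^k\cap \ast \Omega^{m-k}_\Z(M)$ and the assignment $(B,\tilde B)\mapsto F$ supplies the inverse to $(\iota^\ast_\Sigma,\iota^\ast_\Sigma\ast^{-1})$. The left vertical arrow is an isomorphism by \cite[Lemma A.1]{BPhys}: restriction to a Cauchy surface is an isomorphism on singular cohomology with $\T$-coefficients in each degree, and the product of two isomorphisms is an isomorphism.

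Finally, an appeal to the Five Lemma \cite[Lemma 7.1]{MAS91}, applied to the above diagram with zeros attached at either end, forces the middle vertical arrow $\iota^\ast_\Sigma\times\iota^\ast_\Sigma:\C^k(M;\Z)\to \hat{H}^{k,m-k}(\Sigma;\Z)$ to be an isomorphism as well, which is precisely the statement of the theorem. No step here is really an obstacle, the diagram is already written down in the text preceding the theorem; the only piece of genuine content is checking that commutativity of the right-hand square relies on the compatibility between $\ast$ and $\iota_\Sigma^\ast$ implicit in the definition of the map $(\iota_\Sigma^\ast,\iota_\Sigma^\ast\ast^{-1})$, so that the Five Lemma can be invoked verbatim.
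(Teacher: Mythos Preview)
Your proof is correct and follows exactly the argument given in the paper: the same commutative diagram with exact rows, the same identification of the outer vertical arrows as isomorphisms via Proposition~\ref{prop:2:iso_config_diffchar} and \cite[Lemma A.1]{BPhys}, and the same appeal to the Five Lemma. There is nothing to add.
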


The above theorem states that the initial value problem:
\begin{equation*}
\curv h=\ast\curv \tilde{h}, \qquad \iota^\ast_\Sigma h=h_\Sigma, \qquad \iota^\ast_\Sigma \tilde{h}=\tilde{h}_\Sigma,
\end{equation*}
for $(h,\tilde{h})\in \C^k(M;\Z)$ with initial data $(h_\Sigma, \tilde{h}_\Sigma)\in \hat{H}^{k,m-k}(\Sigma;\Z) $ is well-posed. In particular, the semi-classical configuration space can be interpreted as the very space of solutions to such initial value problem with data specified by pairs of differential characters on a Cauchy surface.

\subsection{Semi-classical observables}
For a generic field theory, the space of observables comprises functionals over the relevant configuration space. In the case in hand, $\C^k(M;\Z)$ possesses the additional structure of an Abelian group.  Therefore, a distinguished class of functionals is available: the Abelian group of characters $\C^k(M;\Z)^\star:=\Hom(\C^k(M;\Z), \T)$. We are in a situation similar to that of Section \ref{section_two_duality_pairing}: some elements of this space are too singular for our purposes. Therefore, we refine our notion of observable, restricting $\C^k(M;\Z)^\star$ to a more regular subgroup:
\begin{defn}[Semi-classical observables]\label{def:2:semi_classical_observables}
We call \emph{semi-classical observables} the subspace $\mathfrak{D}^k(M;\Z)$ of elements $\varphi\in\C^k(M;\Z)^\star$ for which there exists $\omega=\ast\tilde{\omega}\in \Omega^k_{sc,\Z}\cap \ast\Omega_{sc,\Z}^{m-k}(M)$\footnote{The subscript $_{\text{sc}}$ denotes forms with spacelike compact support. A differential form $\omega\in\Omega^k(M)$ has spacelike compact support if there exists a compact set $K\subseteq M$ such that $\text{supp}(\omega)\subseteq J(K)$.} such that, for some Cauchy surface $\Sigma$, it holds:
\begin{equation}\label{eq:2:semi_classical_observables}
\varphi\left( (\iota\times\iota)([A],[\tilde{A}]) \right)=\int_\Sigma \left( \tilde{A}\wedge \omega -(-1)^{k(m-k)}A\wedge \tilde{\omega}\right)\,\,\mod \Z
\end{equation}
for all $([A],[\tilde{A}])\in\mathfrak{T}^k(M;\Z)$. 
\end{defn}

\begin{rem}
The definition of semi-classical observables is independent of the choice of the Cauchy surface $\Sigma$. Let us prove it explicitly. Take $\omega=\ast\tilde{\omega}\in \Omega^k_{sc,\Z}\cap \ast\Omega_{sc,\Z}^{m-k}(M)$ as in Definition \ref{def:2:semi_classical_observables}. Observe that $\omega$ and $\ast\tilde{\omega}$ are closed and coclosed; therefore, $\square\, \omega=0$ and $\square\, \tilde{\omega}=0$, where $\square=d\delta + \delta d$ is the D'Alembert-De Rham operator.
Denoting by $G^{\pm}_k$ the advanced/retarded Green operators and $G_k=G^+_k-G^-_k$ the causal propagator of the Green hyperbolic operator $\square$ acting on the $k$-forms, there exists $\tilde{\beta}\in\Omega^{m-k}_c(M)$ such that $G_{m-k}\tilde{\beta}=\tilde{\omega}$. $d\omega=0$ implies that $dG_{k}\ast\tilde{\beta}=G_{k+1} d\ast\tilde{\beta}=0$, i.e.\ $d\ast\tilde{\beta}=\square\,\alpha$ for some $\alpha\in\Omega^{k+1}_c(M)$. Likewise, $d\tilde{\omega}=0$ entails $dG_{m-k}\tilde{\beta}=G_{m-k+1}d\tilde{\beta}=0$, i.e.\ $d\tilde{\beta}=\square\,\tilde{\alpha}$ for some $\tilde{\alpha}\in \Omega^{m-k+1}_c(M)$. Interpreting $\Sigma$ as the boundary of $J^-(\Sigma)\subseteq M$ and, at the same time, as the boundary of $J^+(\Sigma)\subseteq M$ with reversed orientation, we obtain the following chain of identities:
\begin{align}\label{eq:2:independence_Sigma_scobservables}
&\varphi  \left( (\iota\times\iota)([A],[\tilde{A}])\right) = \int_\Sigma \left(\tilde{A}\wedge G_k\ast\tilde{\beta}-(-1)^{k(m-k)}A\wedge G_{m-k}\tilde{\beta} \right) \,\,\mod\Z\nonumber\\
&= \int_{J^-(\Sigma)} d\left(  \tilde{A}\wedge G^+ _k \ast\tilde{\beta}-(-1)^{k(m-k)}A\wedge G^+_{m-k}\tilde{\beta}\right)\,\,\mod\Z\nonumber\\
&+ \int_{J^+(\Sigma)} d\left(  \tilde{A}\wedge G^-_k \ast\tilde{\beta} -(-1)^{k(m-k)}A\wedge G^-_{m-k}\tilde{\beta}\right)\,\,\mod\Z\nonumber\\
&= \int_{J^-(\Sigma)} \left(  (-1)^{m-k}\tilde{A}\wedge G^+_{k+1}\square \alpha -(-1)^{k(m-k)}(-1)^k A\wedge G^+_{m-k+1}\square \tilde{\alpha}\right)\,\,\mod\Z\nonumber\\
&+ \int_{J^+(\Sigma)} \left(  (-1)^{m-k}\tilde{A}\wedge G^-_{k+1}\square \alpha -(-1)^{k(m-k)}(-1)^k A\wedge G^-_{m-k+1}\square \alpha\right)\,\,\mod\Z\nonumber\\
&= \int_M \left( (-1)^{m-k} \tilde{A} \wedge \alpha - (-1)^{k(m-k+1)} A\wedge \tilde{\alpha} \right) \,\,\mod\Z,
\end{align}
where we made use of Stoke's theorem and of the property $G^\pm_k\square=\square G^\pm_k= \text{Id}$ for compactly supported forms. Thus, \eqref{eq:2:independence_Sigma_scobservables} recasts \eqref{eq:2:semi_classical_observables} as an integral over the whole spacetime $M$ in a way manifestly independent of the choice of $\Sigma$.
\end{rem}

The assignment:
\begin{gather*}
\mathfrak{D}^k(\arg;\Z):\mathsf{Loc}_m\to \mathsf{Ab}\\
M\mapsto \mathfrak{D}^k(M;\Z)
\end{gather*}
defines a covariant functor. Firstly observe that the assignment of characters group $\C^k(M;\Z)^\star=\Hom(\C^k(M;\Z);\T)$ to each $M$ in $\mathsf{Loc}_m$, without any regularity restriction, is a covariant functor. In fact, consider a morphism $f:M\to M'$. By dualizing with respect to $^\star$ the associated morphism $f^\ast=\C^k(f;\Z)$ provided by the functoriality of the semi-classical configuration space, we get a pushforward map of characters along $f$:
$$
f_\ast=:(f^\ast)^\star: \C^k(M;\Z)^\star \to \C^k(M';\Z)^\star.
$$
Now, $\mathfrak{D}^k(\arg;\Z)$ can be realized as a subfunctor of $\C^k(\arg;\Z)^\star$, the regularity condition \eqref{eq:2:semi_classical_observables} being preserved by $f_\ast$. Let $\varphi\in\mathfrak{D}^k(M;\Z)$ and let $\omega=\ast\tilde{\omega}\in\Omega^k_{sc,\Z}\cap \ast\Omega^{m-k}_{sc,\Z}(M)$ be as in Definition \ref{def:2:semi_classical_observables}. In view of Proposition \ref{prop:2:iso_config_diffchar}, we can push forward $\omega$ and $\tilde{\omega}$ to $f_\ast\omega=\ast f_\ast \tilde{\omega}\in \Omega_{sc,\Z}^k\cap \ast\Omega^{m-k}_{sc,\Z}(M')$ by pushing forward the initial data for the corresponding Cauchy problem from $\Sigma$ to $\Sigma '$, where $\Sigma '$ is obtained extending $f(\overline{U})$ to a Cauchy surface in $M'$, with $U\subseteq \Sigma$ an open relatively compact neighbourhood of $\text{supp}(\omega)\cap \Sigma$ (cfr.\ \cite{BS06}). Then, for each $([A],[\widetilde A]) \in \mathfrak{T}^k(M^\prime)$:
\begin{equation}
\begin{split}
f_\ast\varphi\left((\iota\times\iota)([A],[\tilde{A}]) \right)= &\varphi \left((\iota\times\iota)([f^\ast A],[f^\ast \tilde{A}]) \right)\\
=& \int_\Sigma \left(f^\ast \tilde{A}\wedge \omega -(-1)^{k(m-k)}f^\ast A\wedge \tilde{\omega} \right)\,\,\mod\Z\\
=& \int_{\Sigma'} \left( \tilde{A}\wedge f_\ast \omega -(-1)^{k(m-k)} A\wedge f_\ast\tilde{\omega} \right)\,\,\mod\Z.
\end{split}
\end{equation}

\subsection{Space-like compact gauge fields}
In the spirit of differential characters with compact support, the present section will be devoted to show that the Abelian group of space-like compact gauge fields $\mathfrak{C}^k_{sc}(M;\Z)$ is isomorphic to the group $\mathfrak{D}^k(M;\Z)$ of semi-classical observables. To begin with, let us define the former properly.\\

Let $K\subseteq M$ be a compact subset. Define the \emph{semi-classical gauge fields on $M$ relative to $M\setminus J(K)$} as the Abelian group $\C^k(M,M\setminus J(K);\Z)$ of elements $(h,\tilde{h})\in \hat{H}^{k,m-k}(M,M\setminus J(K);\Z)$  such that $\curv h=\ast \curv \tilde{h}\in \Omega^{k}(M,M\setminus J(K))$, with $\hat{H}^{k,m-k}(M,M\setminus J(K);\Z)$ given by \eqref{eq:2:functor_diff_char_comp_supp}.
\begin{defn}
We call \emph{semi-classical gauge fields with space-like compact support} the colimit:
\begin{equation}\label{eq:2:definition_sc_gauge_gauge_fields}
\C^k_{sc}(M;\Z):=\text{colim} \left( \C^k(M,M\setminus J(\arg);\Z):\mathcal{K}_M \to \mathsf{Ab} \right).
\end{equation}
\end{defn}

\begin{rem}\label{rem:2:colim_sigma}
An alternative but equivalent procedure consists in taking the colimit over the directed set $\mathcal{K}_\Sigma$ rather than over $\mathcal{K}_M$. Denoting by $\mathcal{C}_M$ the directed set of closed subsets of $M$, introduce the map $J:\mathcal{K}_M\to\mathcal{C}_M$, $K\mapsto J(K)$. The functor $\C^k(M,M\setminus J(\arg);\Z):\mathcal{K}_M\to \mathsf{Ab}$ can be thought of as the composition of the functors $\C^k(M,M\setminus \arg;\Z):\mathcal{C}_M\to \mathsf{Ab}$ and $J:\mathcal{K}_M\to \mathcal{C}_M$. $J$ preserves the preorder relation and $\mathcal{K}_\Sigma\subseteq \mathcal{K}_M$ is cofinal with respect to $J$. Then:
\begin{equation}
\C^{k}_{sc}(M;\Z)\simeq \text{colim}\left( \C^k(M,M\setminus J(\arg);\Z): \mathcal{K}_\Sigma \to \mathsf{Ab} \right).
\end{equation}
\end{rem}

The assignment of the Abelian group $\C^k_{sc}(M;\Z)$ to each object $M$ in $\mathsf{Loc}_m$ is a covariant functor:
$$
\C^k_{sc}(\arg;\Z):\mathsf{Loc}_m \to \mathsf{Ab}.
$$
For the explicit construction of the morphism $f_\ast:=\C^k_{sc}(f;\Z):\C^k_{sc}(M;\Z)\to\C^k_{sc}(M';\Z)$ corresponding to $f:M\to M'$ in $\mathsf{Loc}_m$ see \cite[Lemma A.3]{BPhys}.

\begin{rem}
The group homomorphism $I:\hat{H}^k(M,M\setminus J(K);\Z)\to \hat{H}^k(M;\Z)$ defined by \eqref{map:2:I_relative_diff_char} induces a homomorphism 
$$
I:\C^k(M,M\setminus J(K);\Z)\to \C^k(M;\Z),
$$ 
which, in turn, yields a homomorphism $I:\C^k_{sc}(M;\Z)\to \C^k(M;\Z)$ via the colimit. For the sake of clarity, we denote all these homomorphisms by the same letter, $I$; it will be clear from the context what is the domain and what is the target in each case. As before, the map $I$ is neither injective nor surjective, in general, thus entailing that space-like compact gauge fields cannot be represented faithfully as elements in $\C^k(M;\Z)$. The fact that $\C^k_{sc}(M;\Z)$ is \emph{not} a subgroup of $\C^k(M;\Z)$ motivates its sophisticated definition.
\end{rem}

Let $K\subseteq \Sigma$ be a compact set and pick $(B,\tilde{B})\in\Omega^{k,m-k}_\Z(\Sigma, \Sigma \setminus K)$. From the support properties of the Cauchy problem \eqref{eq:2:cauchy_problem_F}, retracing the above reasoning, it is easy to check that there exists a unique solution $F\in\Omega^k_\Z\cap\ast\Omega^{m-k}_\Z(M,M\setminus J(K))$ to \eqref{eq:2:cauchy_problem_F}. Therefore, we obtain a relative version of Proposition \ref{prop:2:iso_config_diffchar}:
\begin{prop}\label{prop:2:cauchy_probl_relative_forms}
Let M be an object in $\mathsf{Loc}_m$, $\Sigma\subseteq M$ a Cauchy surface and $K\subseteq M$ a compact set. The embedding $\iota_\Sigma: \Sigma \to M$ induces an isomorphism of Abelian groups:
\begin{equation}
\Omega_\Z^k\cap \ast \Omega^{m-k}_\Z(M,M\setminus J(K)) \overset{(\iota^\ast_\Sigma, \iota^\ast_\Sigma \ast^{-1})}{\longrightarrow}   \Omega_\Z^{k,m-k}(\Sigma,\Sigma \setminus K),
\end{equation}
whose inverse is the map assigning to each initial data $(B,\tilde{B})\in \Omega_\Z^{k,m-k}(\Sigma,\Sigma\setminus K)$ the unique solution $F\in\Omega^k_\Z\cap\ast\Omega^{m-k}_\Z(M,M\setminus J(K))$ of the Cauchy problem \eqref{eq:2:cauchy_problem_F}.
\end{prop}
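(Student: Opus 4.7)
The plan is to deduce this relative statement as a direct restriction of the absolute isomorphism of Proposition \ref{prop:2:iso_config_diffchar}. Since Proposition \ref{prop:2:iso_config_diffchar} already supplies a bijection
$$
(\iota^\ast_\Sigma,\iota^\ast_\Sigma \ast^{-1}):\Omega_\Z^k\cap \ast \Omega^{m-k}_\Z(M)\longrightarrow \Omega_\Z^{k,m-k}(\Sigma),
$$
what remains is to show that this map and its inverse preserve the support conditions defining the relative subgroups. (As the preceding paragraph makes clear, the intended setting is $K\subseteq \Sigma$; this entails no loss of generality by Remark \ref{rem:2:colim_sigma}, where $\mathcal{K}_\Sigma$ is seen to be cofinal.)

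First I would verify that the map sends the relative subspace into the relative subspace. Let $F\in \Omega_\Z^k\cap \ast \Omega^{m-k}_\Z(M,M\setminus J(K))$, so that $\mathrm{supp}(F)\subseteq J(K)$. Since $\Sigma$ is achronal and $K\subseteq \Sigma$, one has $\Sigma\cap J(K)=K$: any point $p\in \Sigma$ that lies on a causal curve from $K\subseteq \Sigma$ must coincide with its endpoint in $K$, otherwise two distinct points of $\Sigma$ would be causally related. Hence $\iota^\ast_\Sigma F$ is supported in $K$ and therefore vanishes on $\Sigma\setminus K$; the same argument applies to $\iota^\ast_\Sigma \ast^{-1} F$. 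Integrality of the periods is inherited from $F$ and $\ast^{-1}F$.

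Next I would verify that the inverse restricts consistently. Given $(B,\tilde{B})\in \Omega_\Z^{k,m-k}(\Sigma,\Sigma\setminus K)$, apply Proposition \ref{prop:2:iso_config_diffchar} to obtain the unique $F\in \Omega_\Z^k\cap \ast \Omega^{m-k}_\Z(M)$ solving the Cauchy problem \eqref{eq:2:cauchy_problem_F}. The support statement recalled just before Proposition \ref{prop:2:iso_config_diffchar} (from \cite{DL12,FL16}) gives $\mathrm{supp}(F)\subseteq J\bigl(\mathrm{supp}(B)\cup \mathrm{supp}(\tilde{B})\bigr)\subseteq J(K)$, so that $F$ vanishes on $M\setminus J(K)$ and in particular $F\in \Omega^k(M,M\setminus J(K))$; the same holds for $\ast^{-1}F$.

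Finally I would assemble the two inclusions into the claimed isomorphism: both the map and its inverse restrict to morphisms between the relative subgroups, and because they remain mutually inverse after restriction (they still act by pullback along $\iota_\Sigma$, respectively by solving the Cauchy problem), the restriction is an isomorphism of Abelian groups. The only genuinely non-formal input is the identity $\Sigma\cap J(K)=K$ for $K\subseteq \Sigma$, which is the one place where the Lorentzian/causal structure is used; everything else is a direct transcription of the absolute argument, and I expect no further obstacle.
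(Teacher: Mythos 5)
Your argument is essentially the paper's own: the text obtains the relative statement by ``retracing'' the proof of Proposition \ref{prop:2:iso_config_diffchar}, using the support property $\mathrm{supp}(F)\subseteq J(\mathrm{supp}(B)\cup\mathrm{supp}(\tilde{B}))\subseteq J(K)$ in one direction and the fact that forms supported in $J(K)$ pull back to forms supported in $\Sigma\cap J(K)=K$ (for $K\subseteq\Sigma$, the setting fixed just before the proposition) in the other, exactly as you do. The only cosmetic point is that $\Sigma\cap J(K)=K$ really uses acausality rather than mere achronality of $\Sigma$ (achronality alone would still allow distinct points joined by a null geodesic), which is harmless here since the Cauchy surfaces in play are smooth and spacelike, hence acausal.
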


Recalling the functoriality of relative differential cohomology and diagram \eqref{diag:2:relative_diff_char}, we obtain that the diagram
{\scriptsize
\begin{equation}\label{diag:2:biline_relative}
\xymatrix@C-=0.47cm{
0 \ar[r] & H^{k-1,m-k-1}(M,M	\setminus J(K);\T) \ar[r]^-{\kappa\times\kappa} \ar[d]^-{\iota^\ast_\Sigma\times\iota^\ast_\Sigma} & \C^k(M,M\setminus J(K);\Z) \ar[r]^-{\curv_1} \ar[d]^-{\iota^\ast_\Sigma\times\iota^\ast_\Sigma} & \Omega_\Z^k\cap \ast\Omega^{m-k}_\Z(M,M\setminus J(K))\ar[r]\ar[d]^-{(\iota^\ast_\Sigma,\iota^\ast_\Sigma \ast^{-1})} &0\\
0 \ar[r] & H^{k-1,m-k-1}(\Sigma,\Sigma\setminus K;\T) \ar[r]^-{\kappa\times\kappa} & \hat{H}^{k,m-k}(\Sigma,\Sigma\setminus K;\Z) \ar[r]^-{\curv\times\curv} & \Omega_\Z^{k,m-k}(\Sigma,\Sigma \setminus K) \ar[r] &0\\
}
\end{equation}
}commutes and has exact rows. The third and the first vertical arrows are isomorphisms by, respectively, Proposition \ref{prop:2:cauchy_probl_relative_forms} and \cite[Lemma A.2]{BPhys}. The reduced Five Lemma then yields:
\begin{thm}
The embedding map $\iota_\Sigma:\Sigma \to M$ induces an isomorphism of Abelian groups:
\begin{equation*}
\xymatrix{
\C^k(M,M\setminus J(K);\Z) \ar[r]^-{\iota_\Sigma^\ast\times \iota_\Sigma^\ast} &\hat{H}^k(\Sigma, \Sigma\setminus K;\Z).
}
\end{equation*}
\end{thm}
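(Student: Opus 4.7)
The plan is straightforward: all the hard work has been front-loaded into the commutative diagram \eqref{diag:2:biline_relative}, and the theorem reduces to a clean application of the Five Lemma, in direct parallel to the absolute-case argument for Theorem \ref{thm:2:isomorphism_i_C}.

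First I would re-verify the commutativity and exactness of \eqref{diag:2:biline_relative}. Commutativity of each square follows from the naturality of $\kappa$ and $\curv$ under the pullback $\iota_\Sigma^\ast$, which is recorded in the relative setting via Proposition \ref{prop:2:diag_rel_diff_char}. Exactness of each row is inherited from the central row of the relative differential cohomology diagram \eqref{diag:2:relative_diff_char}, refined by the constraint $\curv h=\ast\curv\tilde h$ that defines $\C^k$; this is the same refinement performed in the absolute case, so no new bookkeeping is required.

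Second, I would check that the two outer vertical arrows are already known to be isomorphisms. The right-hand arrow $(\iota_\Sigma^\ast,\iota_\Sigma^\ast\ast^{-1})$ is an isomorphism by the relative well-posedness result of Proposition \ref{prop:2:cauchy_probl_relative_forms}: a pair of integer-period forms on $\Sigma$ supported in $K$ extends uniquely to an integer-period solution $F\in\Omega^k_\Z\cap\ast\Omega^{m-k}_\Z(M,M\setminus J(K))$ with support in $J(K)$. The left-hand arrow is an isomorphism by the cited Lemma A.2 of \cite{BPhys}, which handles the topological counterpart (relative cohomology with $\T$ coefficients).

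With both outer vertical morphisms established as isomorphisms and the rows short exact, the Five Lemma (in its short-exact-sequence form) immediately forces the middle arrow $\iota_\Sigma^\ast\times\iota_\Sigma^\ast$ to be an isomorphism as well, which is exactly the statement of the theorem. I do not anticipate a genuine obstacle here: the nontrivial analytic content—namely the support-respecting well-posedness of the Cauchy problem for closed-and-coclosed integer-period forms relative to $J(K)$—was already absorbed into Proposition \ref{prop:2:cauchy_probl_relative_forms}, and the remaining steps are diagrammatic.
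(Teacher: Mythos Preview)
Your proposal is correct and matches the paper's own argument essentially verbatim: the paper invokes Proposition \ref{prop:2:cauchy_probl_relative_forms} and \cite[Lemma A.2]{BPhys} for the outer vertical isomorphisms in diagram \eqref{diag:2:biline_relative} and then applies the (reduced) Five Lemma to conclude. There is nothing to add.
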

Taking the colimit of \eqref{diag:2:biline_relative} over the directed set $\mathcal{K}_\Sigma$ and recalling Remark \ref{rem:2:colim_sigma}, we obtain the commutative diagram:
\begin{equation}\label{diag:2:iso_Csc_H_sigma}
\xymatrix@C-=0.5cm{
0 \ar[r] & H^{k-1,m-k-1}_{sc}(M;\T) \ar[r]^-{\kappa\times\kappa} \ar[d]^-{\iota^\ast_\Sigma\times\iota^\ast_\Sigma} & \C^k_{sc}(M;\Z) \ar[r]^-{\curv_1} \ar[d]^-{\iota^\ast_\Sigma\times\iota^\ast_\Sigma} & \Omega_{sc,\Z}^k\cap \ast\Omega^{m-k}_{sc,\Z}(M)\ar[r]\ar[d]^-{(\iota^\ast_\Sigma,\iota^\ast_\Sigma \ast^{-1})} &0\\
0 \ar[r] & H^{k-1,m-k-1}_c(\Sigma;\T) \ar[r]^-{\kappa\times\kappa} & \hat{H}^{k,m-k}_c(\Sigma;\Z) \ar[r]^-{\curv\times\curv} & \Omega_{c,\Z}^{k,m-k}(\Sigma) \ar[r] &0\\
}
\end{equation}
whose rows are exact and whose vertical arrows are isomorphisms, because colim is an exact functor over diagrams of Abelian groups. What we have shown is that $\C^k_{sc}(M;\Z)$ can be realised as the space of solutions to the initial value problem:
$$
\curv h=\ast \curv \tilde{h}, \qquad \iota^\ast_\Sigma h=h_\Sigma, \qquad \iota^\ast_\Sigma \tilde{h}=\tilde{h}_\Sigma
$$
for $(h,	\tilde{h})\in \hat{H}^{k,m-k}_{sc}(M;\Z)$ with initial data $(h_\Sigma,\tilde{h}_\Sigma)\in\hat{H}^{k,m-k}_c(\Sigma;\Z)$.
\begin{thm}\label{thm:2:isomorphism_i_C_sc}
Let $M$ be an object in $\mathsf{Loc}_m$ and $\Sigma \subseteq M$ a Cauchy surface. The embedding $\iota_\Sigma:\Sigma\to M$ induces an isomorphism of Abelian groups:
\begin{equation}
\xymatrix{
\C^k_{sc}(M;\Z) \ar[r]^-{\iota^\ast_\Sigma \times \iota^\ast_\Sigma} & \hat{H}^{k,m-k}_c(\Sigma;\Z).
}
\end{equation}
\end{thm}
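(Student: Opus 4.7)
The plan is to apply the Five Lemma to the commutative diagram \eqref{diag:2:iso_Csc_H_sigma}, whose construction has already been carried out in the paragraphs immediately preceding the theorem statement. Since that diagram is obtained by taking the colimit over the directed set $\mathcal{K}_\Sigma$ of the diagram \eqref{diag:2:biline_relative} (cf.\ Remark \ref{rem:2:colim_sigma}), and since $\mathrm{colim}$ is an exact functor on diagrams of Abelian groups indexed by a directed set, the rows of \eqref{diag:2:iso_Csc_H_sigma} remain short exact. What I then need to verify is that the two outer vertical arrows of \eqref{diag:2:iso_Csc_H_sigma} are isomorphisms; the central vertical arrow will follow automatically.

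First I would handle the rightmost vertical arrow $(\iota^\ast_\Sigma,\iota^\ast_\Sigma\ast^{-1})\colon \Omega^k_{sc,\Z}\cap \ast\Omega^{m-k}_{sc,\Z}(M)\to \Omega^{k,m-k}_{c,\Z}(\Sigma)$ by appealing to Proposition \ref{prop:2:cauchy_probl_relative_forms}, which produces an isomorphism at each compact $K\in\mathcal{K}_\Sigma$; passing to the colimit then preserves the isomorphism property by exactness of $\mathrm{colim}$. For the leftmost vertical arrow $\iota^\ast_\Sigma\times\iota^\ast_\Sigma\colon H^{k-1,m-k-1}_{sc}(M;\T)\to H^{k-1,m-k-1}_c(\Sigma;\T)$ I would invoke the analogous relative-cohomology statement, namely \cite[Lemma A.2]{BPhys}, again at each $K\in\mathcal{K}_\Sigma$, and pass to the colimit in the same way.

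Once both outer vertical arrows have been identified as isomorphisms, the reduced Five Lemma \cite[Lemma 7.1]{MAS91} applied to \eqref{diag:2:iso_Csc_H_sigma} immediately yields that the central arrow $\iota^\ast_\Sigma\times\iota^\ast_\Sigma\colon \C^k_{sc}(M;\Z)\to \hat{H}^{k,m-k}_c(\Sigma;\Z)$ is an isomorphism as well, which is exactly the claim. The main delicate point of the argument is not the invocation of the Five Lemma but rather the bookkeeping around the colimit: one has to commute it past the formation of the diagram and past the formation of both $\C^k_{sc}(M;\Z)$ and $\hat{H}^{k,m-k}_c(\Sigma;\Z)$, which is the reason it is convenient to index over the cofinal subset $\mathcal{K}_\Sigma\subseteq \mathcal{K}_M$ as noted in Remark \ref{rem:2:colim_sigma}. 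With that technical matching of index sets in place, the theorem becomes a formal consequence of the material already assembled.
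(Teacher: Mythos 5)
Your argument is correct and follows essentially the same route as the paper: the same diagram, the same identification of the outer vertical arrows via Proposition \ref{prop:2:cauchy_probl_relative_forms} and \cite[Lemma A.2]{BPhys}, exactness of $\mathrm{colim}$ over the (cofinal) directed set $\mathcal{K}_\Sigma$, and the Five Lemma. The only cosmetic difference is that the paper applies the reduced Five Lemma at each fixed compact $K$ and then passes to the colimit, whereas you pass to the colimit first and apply the Five Lemma once at that level; both orderings are valid.
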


Now, for every Cauchy surface $\Sigma$ in $M$, a $\T$-valued pairing between differential characters $\hat{H}^k(\Sigma;\Z)$ and differential characters with compact support $\hat{H}^{m-k}_c(\Sigma;\Z)$ is defined through \eqref{pairing:2:duality_diff_char}:
$$
\pa{\arg}{\arg}_c: \hat{H}^k(\Sigma;\Z) \times \hat{H}^{m-k}_c(\Sigma;\Z) \to \T.
$$
Exploiting Theorem \ref{thm:2:isomorphism_i_C} and Theorem \ref{thm:2:isomorphism_i_C_sc}, we introduce a $\T$-valued pairing between semi-classical configurations $\C^k(M;\Z)$ and space-like compact gauge fields $\C^k_{sc}(M,\Z)$ by:
\begin{eqnarray}\label{pairing:2:config_gauge_fields}
&\pa{\arg}{\arg}:\C^k(M;\Z) \times \C^k_{sc}(M;\Z) \to \T \nonumber&\\
&\left((h,\tilde{h}),(h',\tilde{h}') \right)\mapsto \pa{\iota^\ast_\Sigma \tilde{h}}{\iota^\ast_\Sigma h'}_c - (-1)^{k(m-k)}\pa{\iota^\ast_\Sigma h}{\iota^\ast_\Sigma \tilde{h}'}_c.&
\end{eqnarray}

\begin{prop}
The pairing \eqref{pairing:2:config_gauge_fields} does not depend on the choice of the Cauchy surface.
\end{prop}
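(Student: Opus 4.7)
The strategy is to show that, up to integers, the bilinear expression defining the pairing can be rewritten as an integral over the whole spacetime $M$ whose integrand depends only on the spacetime data $(h,\tilde h)$ and $(h',\tilde h')$, and not on any Cauchy surface.

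First, by the standard splitting theorem for globally hyperbolic spacetimes (Bernal--Sánchez), any two Cauchy surfaces $\Sigma_1,\Sigma_2$ can be joined by a smooth one-parameter family of Cauchy surfaces. By transitivity, it is enough to prove the assertion when $\Sigma_2\subseteq J^+(\Sigma_1)$, in which case we denote by $R$ the compact-in-time slab bounded by $\Sigma_1$ and $\Sigma_2$, so that $\partial R = \Sigma_2 - \Sigma_1$ with the induced orientation.

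Second, I would reduce to the topologically trivial sector. By the exactness of the central rows in diagram \eqref{diag:2:scconfig} and its space-like compact counterpart \eqref{diag:2:iso_Csc_H_sigma}, any element of $\mathfrak C^k(M;\Z)$ (respectively $\mathfrak C^k_{sc}(M;\Z)$) can locally be written as the sum of an image of $\iota\times\iota$ and a flat class coming from $\kappa\times\kappa$, with curvatures $\curv h = \ast\curv \tilde h$ and $\curv h' = \ast\curv \tilde h'$ determining the $\iota$-parts. On the topologically trivial part, the pairing on each $\Sigma$ is, by compatibility of $\langle\,\cdot\,,\,\cdot\,\rangle_c$ with the topological trivialization (cf.\ Theorem \ref{thm:2:iso_exact_rows}), exactly of the form
\begin{equation*}
\int_\Sigma\bigl(\tilde A\wedge\curv h' - (-1)^{k(m-k)} A\wedge\curv \tilde h'\bigr)\ \mod\Z,
\end{equation*}
to which the Green-operator computation already carried out in the Remark following Definition~\ref{def:2:semi_classical_observables} applies verbatim: the curvatures $\curv h'=\ast\curv\tilde h'$ are closed, coclosed and spacelike compact, hence of the form $G_\bullet \tilde\beta$ for some compactly supported forms, and Stokes' theorem on $J^\pm(\Sigma)$ recasts the pairing as an integral over $M$ that does not mention $\Sigma$.

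Third, for the contributions coming from the flat/characteristic-class sectors, I would exploit the compatibility of $\langle\,\cdot\,,\,\cdot\,\rangle_c$ with $\kappa$ and $\mathrm{char}$ (again via Theorem \ref{thm:2:iso_exact_rows}) to reduce the question to the $\T$-valued duality pairing $\langle\,\cdot\,,\,\cdot\,\rangle_H$ between singular cohomology and compactly supported singular cohomology on $\Sigma$. Since $\iota_{\Sigma_1}$ and $\iota_{\Sigma_2}$ are homotopic as maps into $M$ (any two Cauchy surfaces are cobordant via $R$), the pull-backs on $H^\ast(M;\T)$ and on $H^\ast_{sc}(M;\Z)\to H^\ast_c(\Sigma;\Z)$ intertwine through an isomorphism that preserves $\langle\,\cdot\,,\,\cdot\,\rangle_H$, giving the desired equality on this sector as well. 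Combining the two contributions, the total pairing on $\Sigma_1$ equals the total pairing on $\Sigma_2$.

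The main obstacle I expect is the bookkeeping in the mixed sectors, where, say, the first entry is topologically trivial but the second comes from a flat class. The axioms in Definition \ref{def:2:ring_structure} together with the compatibility \eqref{eq:2:module_structure} make these cross-terms reduce to pairings of a closed form on $\Sigma$ with a singular cohomology class, and one has to check carefully that Stokes' theorem on $R$ (with the spacelike compact support of the gauge field killing the contributions from the lateral boundary of $R$) exactly produces the integer ambiguity allowed modulo $\Z$. Once this reduction is set up, the argument becomes a direct adaptation of the Green-operator and Stokes computation of the Remark.
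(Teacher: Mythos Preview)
Your sector decomposition has a genuine gap. Not every $(h,\tilde h)\in\mathfrak C^k(M;\Z)$ can be written as $(\iota\times\iota)([A],[\tilde A])+(\kappa\times\kappa)(u,\tilde u)$, even ``locally''. The image of $\iota\times\iota$ has trivial characteristic class, and the image of $\kappa\times\kappa$ has torsion characteristic class (the left column of diagram \eqref{diag:2:diff_char} sends $H^{k-1}(M;\T)$ to $H^k_{\tor}(M;\Z)$). Hence any configuration whose characteristic class has a nonzero free part --- equivalently, whose curvature is not exact --- escapes your decomposition entirely. Your Green-operator argument for the $\iota$-sector and your homotopy argument for the $\kappa$-sector are each fine on their own domains, but together they do not exhaust $\mathfrak C^k(M;\Z)$, and the missing piece is precisely the topologically interesting one.

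The paper avoids any decomposition. It observes that the orientation classes $\mu\in H_{m-1}(\Sigma,\Sigma\setminus K;\Z)$ and $\mu'\in H_{m-1}(\Sigma',\Sigma'\setminus K';\Z)$ push forward to the \emph{same} class in $H_{m-1}(M,M\setminus J(K);\Z)$, so representative cycles differ by some $\partial\gamma$. The difference of the two pairings is therefore the relative character $\tilde h\cdot h'-(-1)^{k(m-k)}h\cdot\tilde h'$ evaluated on $\partial\gamma$, which by the defining property \eqref{eq:2:differential_characters_omega1} of differential characters equals $\int_\gamma\curv\bigl(\tilde h\cdot h'-(-1)^{k(m-k)}h\cdot\tilde h'\bigr)\ \bmod\ \Z$. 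Using $\curv(h\cdot h')=\curv h\wedge\curv h'$ together with the constraints $\curv h=\ast\curv\tilde h$ and $\curv h'=\ast\curv\tilde h'$, the integrand becomes $\curv\tilde h\wedge\ast\curv\tilde h'-\curv\tilde h'\wedge\ast\curv\tilde h$, which vanishes pointwise by symmetry of the inner product of forms. This handles all sectors uniformly in three lines, with no need for Green operators, splittings, or homotopy invariance.
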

\begin{proof}
Let $\Sigma$ and $\Sigma'$ be two arbitrary Cauchy surfaces of $M$. We prove that the difference between \eqref{pairing:2:config_gauge_fields} evaluated on $\Sigma$ and $\Sigma '$ respectively is vanishing. Pick $(h,\tilde{h})\in\C^k(M;\Z)$ and, for some compact $K\subseteq \Sigma$, let $(h',\tilde{h}')\in\C^k(M,M\setminus J(K);\Z)$ be a representative of $(h',\tilde{h}')\in\C^k_{sc}(M;\Z)$, denoted, with a slight abuse of notation, by the same symbol. Let $\mu\in H_{m-1}(\Sigma,\Sigma\setminus K)$ be the unique homology class which restricts to the orientation of $\Sigma$ for each point of $K$. Analogously, set $K':=\Sigma '\cap J(K)$ and let $\mu '\in H_{m-1}(\Sigma ',\Sigma' \setminus K')$ be the unique homology class which restricts to the orientation of $\Sigma '$ for each point of $K'$. The orientations of both $\Sigma$ and $\Sigma '$ are chosen consistently with the orientation of $M$. For every choice of a Cauchy surface $\tilde{\Sigma}\subseteq M$, global hyperbolicity provides with an embedding $\iota_{\tilde{\Sigma}}:\tilde{\Sigma}\to M$ and with a projection $\pi_{\tilde{\Sigma}}:M\to \tilde{\Sigma}$. We can, therefore, exploit such maps to compare $\mu$ with the homology class $\tilde{\mu}=\pi_{\Sigma\ast}\iota_{\Sigma '\ast}\mu' \in H_{m-1}(\Sigma,\Sigma\setminus K;\Z)$, obtained mapping $\mu '$ to $\Sigma$. As $\tilde{\mu}$ also will restrict to the orientation of $\Sigma$ for every point of $K$, due to uniqueness we obtain $\tilde{\mu}=\mu$. By \cite[Lemma A.2]{BPhys}, it ensues that $\iota_{\Sigma '\ast}\mu'=\iota_{\Sigma \ast} \mu \in\ H_{m-1}(M,M \setminus J(K))$. Choosing representatives $\nu\in Z_{m-1}(\Sigma, \Sigma\setminus K)$ for $\mu$ and $\nu'\in Z_{m-1}(\Sigma ',\Sigma '\setminus K')$ for $\mu'$, their pushforwards differ by a boundary. Hence, there exists $\gamma \in C_{m}(M,M\setminus J(K))$ such that $\iota_{\Sigma\ast}\nu- \iota_{\Sigma '\ast}\nu'=\partial \gamma$. Therefore:
\begin{equation*}
\begin{split}
&\pa{(h,\tilde{h})}{(h',\tilde{h}')}_{\Sigma}-\pa{(h,\tilde{h})}{(h',\tilde{h}')}_{\Sigma '}= \left( \tilde{h}\cdot h' -(-1)^{k(m-k)}h\cdot \tilde{h}' \right)(\partial \gamma)\\
&= \int_\gamma \curv \left( \tilde{h}\cdot h' -(-1)^{k(m-k)}h\cdot \tilde{h}' \right)\,\,\mod\Z\\
&= \int_\gamma\left( \curv\tilde{h}\wedge \ast \curv \tilde{h}'-(-1)^{k(m-k)} \ast \curv \tilde{h}\wedge \curv \tilde{h}'\right) \,\,\mod \Z\\
&=\int_\gamma \left( \curv \tilde{h}\wedge \ast\curv \tilde{h}'-\curv\tilde{h}'\wedge \ast\curv\tilde{h}\right)\,\,\mod\Z=0,
\end{split}
\end{equation*}
where we have used \eqref{eq:2:module_structure} and $\curv h=\ast \curv\tilde{h}$, $\curv h'=\ast \curv\tilde{h}'$. The subscripts $_\Sigma$ and $_{\Sigma'}$ indicate the Cauchy surface with respect to which we are computing the pairing.
\end{proof}

The pairing \eqref{pairing:2:config_gauge_fields} allows to realize elements in $\C^k_{sc}(M;\Z)$ as group characters on $\C^k(M;\Z)$:
\begin{eqnarray*}
\C^k_{sc}(M;\Z)\ni (h',\tilde{h}')\mapsto \pa{\arg}{(h',\tilde{h}')} \in \C^k(M;\Z)^\star.
\end{eqnarray*}
They are actually something more than a generic subgroup of the character group, since, as group characters, they separate points in the semi-classical configuration space.

\begin{prop}\label{prop:2:non_deg_pair_C_Csc}
The pairing \eqref{pairing:2:config_gauge_fields} is weakly non-degenerate.
\end{prop}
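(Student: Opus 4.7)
The natural strategy is to transport the question to a Cauchy surface $\Sigma \subseteq M$ by means of the isomorphisms of Theorem \ref{thm:2:isomorphism_i_C} and Theorem \ref{thm:2:isomorphism_i_C_sc}, and then to reduce the statement to the weak non-degeneracy of the pairing $\pa{\arg}{\arg}_c$ on $\Sigma$ proved in Corollary \ref{cor:2:non_deg_pair_c}. The whole argument is a matter of ``decoupling'' the two summands appearing in \eqref{pairing:2:config_gauge_fields}.

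\textbf{Step 1 (Translation to $\Sigma$).} Fix a Cauchy surface $\Sigma$. Applying $\iota^\ast_\Sigma \times \iota^\ast_\Sigma$ yields isomorphisms
$$
\C^k(M;\Z) \;\overset{\simeq}{\longrightarrow}\; \hat{H}^{k,m-k}(\Sigma;\Z),
\qquad
\C^k_{sc}(M;\Z) \;\overset{\simeq}{\longrightarrow}\; \hat{H}^{k,m-k}_c(\Sigma;\Z),
$$
and by the very definition of \eqref{pairing:2:config_gauge_fields} the pairing becomes, on the nose,
$$
\bigl( (h_\Sigma,\tilde h_\Sigma),(h'_\Sigma,\tilde h'_\Sigma) \bigr)
\;\longmapsto\;
\pa{\tilde h_\Sigma}{h'_\Sigma}_c \; - \; (-1)^{k(m-k)} \pa{h_\Sigma}{\tilde h'_\Sigma}_c .
$$
Hence weak non-degeneracy of \eqref{pairing:2:config_gauge_fields} is equivalent to weak non-degeneracy of this pairing between $\hat{H}^{k,m-k}(\Sigma;\Z)$ and $\hat{H}^{k,m-k}_c(\Sigma;\Z)$.

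\textbf{Step 2 (Decoupling).} Assume that a fixed $(h_\Sigma,\tilde h_\Sigma) \in \hat{H}^{k,m-k}(\Sigma;\Z)$ pairs to $0$ against every $(h'_\Sigma,\tilde h'_\Sigma) \in \hat{H}^{k,m-k}_c(\Sigma;\Z)$. Setting $\tilde h'_\Sigma = 0$ and letting $h'_\Sigma$ range over $\hat{H}^k_c(\Sigma;\Z)$ gives $\pa{\tilde h_\Sigma}{h'_\Sigma}_c = 0$ for every $h'_\Sigma$; symmetrically, setting $h'_\Sigma=0$ and letting $\tilde h'_\Sigma$ range gives $\pa{h_\Sigma}{\tilde h'_\Sigma}_c = 0$ for every $\tilde h'_\Sigma$. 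The two conditions defining membership in the kernel of the $\T$-valued pairing have thus been split into two independent ones, each phrased in terms of the Pontryagin pairing on $\Sigma$.

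\textbf{Step 3 (Applying non-degeneracy on $\Sigma$).} Each of the two conditions produced in Step 2 is the vanishing of an element against all elements of the smooth Pontryagin dual on $\Sigma$. Corollary \ref{cor:2:non_deg_pair_c}, applied to $\Sigma$ (which is oriented, connected and, by the assumption implicitly in force in this chapter, of finite type) forces $h_\Sigma = 0$ and $\tilde h_\Sigma = 0$, i.e.\ $(h_\Sigma,\tilde h_\Sigma)=0$ in $\hat{H}^{k,m-k}(\Sigma;\Z)$. The verification of non-degeneracy in the other slot is entirely symmetric: fixing $(h'_\Sigma,\tilde h'_\Sigma)$ and letting $(h_\Sigma,\tilde h_\Sigma)$ vary, one recovers the same two partial evaluations, now with the roles of the two arguments exchanged.

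\textbf{Where the work lies.} The argument itself is a one-line reduction, so the substantive content sits in two places: (i) Theorems \ref{thm:2:isomorphism_i_C} and \ref{thm:2:isomorphism_i_C_sc}, which let one replace the analytically delicate objects $\C^k(M;\Z)$ and $\C^k_{sc}(M;\Z)$ by the much more manageable differential cohomology on a single Cauchy surface, and (ii) Corollary \ref{cor:2:non_deg_pair_c}, whose proof rests on the finite-type assumption on $\Sigma$ and on Theorem \ref{thm:2:iso_exact_rows}. I expect the only technical caveat in writing the above proof cleanly to be the bookkeeping needed to guarantee that $\Sigma$ meets the hypotheses of Corollary \ref{cor:2:non_deg_pair_c} (orientability, connectedness and finite type), all of which are automatic in the setting of the thesis, where Cauchy surfaces are taken to be compact.
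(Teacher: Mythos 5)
Your proposal is correct and follows essentially the same route as the paper: reduce via Theorems \ref{thm:2:isomorphism_i_C} and \ref{thm:2:isomorphism_i_C_sc} to the pairing of initial data on a Cauchy surface and then invoke Corollary \ref{cor:2:non_deg_pair_c}. The only difference is that you spell out the decoupling of the two summands (testing against $(h'_\Sigma,0)$ and $(0,\tilde h'_\Sigma)$ separately), which the paper leaves implicit.
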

\begin{proof}
Since the embedding map $\iota:\Sigma \to M$ induces the isomorphisms $\C^k(M;\Z)\simeq \hat{H}^{k,m-k}(\Sigma;\Z)$ and $\C^k_{sc}(M;\Z)\simeq \hat{H}^{k,m-k}_c(\Sigma,\Z)$ by Theorem \ref{thm:2:isomorphism_i_C} and Theorem \ref{thm:2:isomorphism_i_C_sc} respectively, \eqref{pairing:2:config_gauge_fields} is nothing but the pairing between the intial data of the relevant Cauchy problems:
$$
\pa{\arg}{\arg}:\hat{H}^{k,m-k}(\Sigma;\Z)\times \hat{H}^{k,m-k}_c(\Sigma;\Z)\to \T
$$
$$
\left((\iota^\ast_\Sigma h, \iota^\ast_\Sigma \tilde{h}),(\iota^\ast_\Sigma h',\iota^\ast_\Sigma \tilde{h}')\right)\mapsto \pa{\iota^\ast_\Sigma \tilde{h}}{\iota^\ast_\Sigma h'}_c-(-1)^{k(m-k)}\pa{\iota^\ast_\Sigma h}{\iota^\ast_\Sigma \tilde{h}'}_c
$$
By resorting to Corollary \ref{cor:2:non_deg_pair_c}, we conclude.
\end{proof}

Let us come to the main point of the section: the maps obtained by partial evaluation of $\eqref{pairing:2:config_gauge_fields}$ lie in $\mathfrak{D}^k(M;\Z)$ and every semi-classical observable is of this form.

\begin{prop}
The group homomorphism:
\begin{eqnarray}\label{eq:2:iso_D_C_sc}
&\mathcal{O}:\C^k_{sc}(M;\Z) \to \mathfrak{D}^k(M;\Z) \nonumber &\\
& (h',\tilde{h}')\mapsto \pa{\arg}{(h',\tilde{h}')}	&
\end{eqnarray}
is a natural isomorphism between the functors $\C^k(\arg;\Z)$ and $\mathfrak{D}^k(\arg;\Z)$ from the category $\mathsf{Loc}_m$ to the category $\mathsf{Ab}$.
\end{prop}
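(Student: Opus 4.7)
The plan is to prove in turn that $\mathcal{O}$ is well-defined with values in $\mathfrak{D}^k(M;\Z)$, injective, surjective, and natural in $M$.

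For well-definedness, the natural candidates for the forms $\omega, \tilde\omega$ appearing in Definition \ref{def:2:semi_classical_observables} are $\omega := \curv h'$ and $\tilde\omega := \curv \tilde h'$. These lie in $\Omega^k_{sc,\Z}(M)$ and $\Omega^{m-k}_{sc,\Z}(M)$ by diagram \eqref{diag:2:diff_char_compact_support}, and they satisfy $\omega = \ast \tilde\omega$ by the defining constraint of $\C^k_{sc}(M;\Z)$. Unfolding \eqref{pairing:2:config_gauge_fields} at the input $((\iota\times\iota)([A],[\tilde A]),(h',\tilde h'))$, naturality of the topological trivialization under $\iota^\ast_\Sigma$ turns each summand into a pairing on $\Sigma$ of the form $\pa{\iota[\iota^\ast_\Sigma \tilde A]}{\iota^\ast_\Sigma h'}_c$, which by the module identity \eqref{eq:2:module_structure} equals $\int_\Sigma \tilde A \wedge \omega \bmod \Z$; reassembling the two summands reproduces \eqref{eq:2:semi_classical_observables}.

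For injectivity I would invoke Proposition \ref{prop:2:non_deg_pair_C_Csc} directly. For surjectivity I transport the problem to a Cauchy surface $\Sigma$: via Theorem \ref{thm:2:isomorphism_i_C} a semi-classical observable $\varphi \in \mathfrak{D}^k(M;\Z)$ becomes a character $\varphi_\Sigma$ on $\hat{H}^{k,m-k}(\Sigma;\Z)$, and the regularity clause \eqref{eq:2:semi_classical_observables} translates into the assertion that both components of $\varphi_\Sigma$ belong to the smooth Pontryagin dual, with representing compactly supported forms $\iota^\ast_\Sigma \omega$ and $\iota^\ast_\Sigma \tilde\omega$. Corollary \ref{cor:2:non_deg_pair_c}, together with the diagrams of Theorem \ref{thm:2:iso_exact_rows}, then furnishes $(h'_\Sigma, \tilde h'_\Sigma) \in \hat{H}^{k,m-k}_c(\Sigma;\Z)$ realising $\varphi_\Sigma$ through $\pa{\arg}{\arg}_c$, and Theorem \ref{thm:2:isomorphism_i_C_sc} supplies the unique $(h',\tilde h') \in \C^k_{sc}(M;\Z)$ whose Cauchy data are $(h'_\Sigma, \tilde h'_\Sigma)$. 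By construction of \eqref{pairing:2:config_gauge_fields} one concludes $\mathcal{O}(h',\tilde h') = \varphi$.

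Naturality in $M$ reduces to the adjointness $\pa{f^\ast(h,\tilde h)}{(h',\tilde h')}_M = \pa{(h,\tilde h)}{f_\ast(h',\tilde h')}_{M'}$ for every morphism $f : M \to M'$ in $\mathsf{Loc}_m$, which is inherited from the naturality of the cup product, of the relative fundamental class, and of the topological trivialization packaged inside \eqref{pairing:2:config_gauge_fields}. I expect the main obstacle to lie in the surjectivity step: aligning the regularity datum on $M$ with the smoothness conditions on $\Sigma$, i.e.\ verifying that $\iota^\ast_\Sigma \omega$ and $\iota^\ast_\Sigma \tilde\omega$ are compactly supported, that they represent $\varphi_\Sigma$ correctly on each topologically trivial subgroup under $\iota^\ast_\Sigma \times \iota^\ast_\Sigma$, and that the Cauchy surface integral \eqref{eq:2:semi_classical_observables} reassembles with the sign $(-1)^{k(m-k)}$ into the two $\pa{\arg}{\arg}_c$ contributions, which is a careful bookkeeping exercise in the module structure \eqref{eq:2:module_structure}.
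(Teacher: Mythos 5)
Your proposal is correct and follows essentially the same route as the paper's proof: well-definedness of the image in $\mathfrak{D}^k(M;\Z)$ via the module-structure computation \eqref{eq:2:module_structure} applied to $(\iota\times\iota)([A],[\tilde A])$, injectivity from the weak non-degeneracy of \eqref{pairing:2:config_gauge_fields} (Proposition \ref{prop:2:non_deg_pair_C_Csc}), surjectivity by transporting $\varphi$ to a smooth Pontryagin dual element on a Cauchy surface and combining Theorems \ref{thm:2:isomorphism_i_C}, \ref{thm:2:iso_exact_rows} and \ref{thm:2:isomorphism_i_C_sc}, and naturality from the naturality of the pairing (which the paper imports from an external lemma rather than re-deriving). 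No gaps worth flagging.
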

\begin{proof}
To begin with, let us prove that the map is surjective. Let $\varphi \in \mathfrak{D}^k(M;\Z)$. In view of Theorem \ref{thm:2:isomorphism_i_C}, we can find a unique element $\varphi_\Sigma\in \hat{H}^{k,m-k}(\Sigma;\Z)^\star_\infty$ such that $\varphi_\Sigma\circ (\iota^\ast_\Sigma \times \iota^\ast_\Sigma)=\varphi$. Then, by invoking Theorem \ref{thm:2:iso_exact_rows}, it is possible to select $(h'_\Sigma,\tilde{h}'_\Sigma)\in \hat{H}^{k,m-k}(\Sigma;\Z)$ such that $\varphi_\Sigma=\langle \arg,(h'_\Sigma,\tilde{h}'_\Sigma)\rangle _\Sigma$. Eventually, Theorem \ref{thm:2:isomorphism_i_C_sc} guarantees the existence of a unique $(h',\tilde{h}')\in\C^k_{sc}(M;\Z)$ whose pull-back to the Cauchy surface $\Sigma$ coincides with $(h'_\Sigma, \tilde{h}'_\Sigma)$. By construction, it holds that $\varphi=\pa{\arg}{(h',\tilde{h}')}$.

As far as injectivity is concerned, firstly we have to show that for every $(h',\tilde{h}')\in \C^k_{sc}(M;\Z)$ the group character $\pa{\arg}{(h',\tilde{h}')}$ satisfies the regularity condition \eqref{eq:2:semi_classical_observables}. For all $([A],[\tilde{A}])\in\mathfrak{T}^k(M;\Z)$ we have:
{\small \begin{equation}
\begin{split}
\pa{(\iota\times\iota)([A],[\tilde{A}])}{(h',\tilde{h}')}= &\left( \iota[\tilde{A}]\cdot h'-(-1)^{k(m-k)} \iota[A]\cdot \tilde{h}' \right) \mu\\
=  & \int_\Sigma \left( \tilde{A}\wedge \curv h'-(-1)^{k(m-k)} A\wedge \curv \tilde{h}' \right) \,\,\mod \Z,
\end{split}
\end{equation}}
where, for the last equality, we made use of \eqref{eq:2:module_structure}. Injectivity now follows at once from Proposition \ref{prop:2:non_deg_pair_C_Csc}.

To prove naturality, take a morphism $f:M\to M'$ in $\mathsf{Loc}_m$. Lemma \cite[Lemma A.4]{BPhys} establishes that the pairing \eqref{pairing:2:config_gauge_fields} is natural. Recalling that $f_\ast:=\mathfrak{D}^k(f,\Z)= (f^\ast)^\star$, we get the following chain of equalities:
$$
f_\ast\pa{\arg}{(h',\tilde{h}')}=\pa{f^\ast\cdot}{(h',\tilde{h}')}=\pa{\arg}{f_\ast(h',\tilde{h}')},
$$
that is to say, the diagram:
\begin{equation}
\xymatrix{
\C^k_{sc}(M;\Z) \ar[d]_-{f_\ast} \ar[r]^-{\mathcal{O}} & \mathfrak{D}^k(M;\Z) \ar[d]^-{f_\ast} \\
\C^k_{sc}(M';\Z) \ar[r]_-{\mathcal{O}} & \mathfrak{D}^k(M';\Z)\\
}
\end{equation}
is commutative and $\mathcal{O}$ is natural.
\end{proof}

\subsection{Covariant field theory and quantization} \label{subsec:2:quantiz}
Brunetti, Fredenhagen and Verch proposed, in 2003, a novel approach to quantum field theory \cite{BFV03}. Instead of looking at the specific properties of each spacetime, they take a category-theoretic perspective and they establish sufficient conditions which allow to discuss the quantization of a given field theory on all spacetimes at once in a coherent way. In this view, they define a covariant functor, called \emph{locally covariant quantum field theory}, from the category $\mathsf{Loc}_m$ to the category of $C^\ast$-algebras $\mathsf{C}^\ast\mathsf{Alg}$, whose morphisms are unit-preserving $C^\ast$-homomorphisms, fulfilling suitable properties called \emph{causality} and \emph{time-slice axiom}. Besides allowing to recover the Haag-Kastler framework \cite{HK64} as a particular case, such an assignment of a $C^\ast$-algebra to each globally hyperbolic spacetime enjoys the remarkable feature of being automatically consistent with the covariance requirements imposed by general relativity, as it implements naturally the general covariance under a group of isometries of the spacetime.

In mathematical terms, we have:

\begin{defn}[Covariant quantum field theory {\cite[Definition 2.1]{BFV03}}] We call \emph{covariant quantum field theory} a functor:
\begin{equation*}
\mathfrak{A}:\mathsf{Loc}_m \to \mathsf{C^\ast Alg}.
\end{equation*}
A locally covariant quantum field theory is a covariant quantum field theory such that all morphisms in $\mathsf{Loc}_m$ induce injective $\ast$-homomorphism in $\mathsf{C^\ast Alg}$.

A covariant quantum field theory $\mathfrak{A}$ is called \emph{causal} if 
\begin{equation}
[\mathfrak{A}(f_1)(\mathfrak{A}(M_1)),\mathfrak{A}(f_2)(\mathfrak{A}(M_2))]=\{0\}
\end{equation}
for all morphisms $f_1:M_1\to M$, $f_2:M_2\to M$ in $\mathsf{Loc}_m$ such that $f_1(M_1)$ and $f_2(M_2)$ are causally separated in $M$.

A covariant quantum field theory $\mathfrak{A}$ satisfies the \emph{time-slice axiom} if
$$
\mathfrak{A}(f): \mathfrak{A}(M) \to \mathfrak{A}(M^\prime)
$$
is an isomorphism for all morphisms $f:M\to M'$ in $\mathsf{Loc}_m$ such that $f(M)$ contains a Cauchy surface $\Sigma'$ of $M'$.
\end{defn}

The first step towards a covariant quantum field theory for differential cohomology is the construction of a pre-symplectic structure on the space of semi-classical observables. Let $\mathsf{PSAb}$ be the category whose objects are pre-symplectic Abelian groups and whose morphisms are the homomorphisms of Abelian groups preserving the pre-symplectic structure. Let us state a preparatory lemma first.

\begin{lem}\label{lem:2:grading_Ih_h'}
The pairing \eqref{pairing:2:comp_supp_char} is graded symmetric, i.e.\ it holds:
$$
\pa{Ih}{h'}_c=(-1)^{k(m-k)}\pa{Ih'}{h}_c,
$$
for all $h\in \hat{H}^k_c(M;\Z)$, $h'\in\hat{H}^{m-k}_c(M;\Z)$.
\end{lem}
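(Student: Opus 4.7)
The plan is to import graded commutativity of the internal product from absolute to compactly supported differential characters via relative representatives. First, I would fix compact subsets $K,L\subseteq M$ and relative representatives $\hat h\in\hat H^k(M,M\setminus L;\Z)$ and $\hat h'\in\hat H^{m-k}(M,M\setminus K;\Z)$ whose images under the colimits defining differential cohomology with compact support are $h$ and $h'$ respectively.

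The module-theoretic product \eqref{eq:2:module_structure} extends, essentially by the construction of the colimit, to a relative internal product
\[
\hat h\cdot\hat h'\in\hat H^m(M,M\setminus(K\cap L);\Z),
\]
which satisfies the analogue of the graded commutativity axiom of Definition \ref{def:2:ring_structure}(ii), namely $\hat h\cdot\hat h'=(-1)^{k(m-k)}\hat h'\cdot\hat h$. I would then identify both sides of the claim as evaluations on a single relative fundamental class. The fundamental class $\mu_{K\cap L}\in H_m(M,M\setminus(K\cap L);\Z)$ is mapped to $\mu_K$ (resp.\ $\mu_L$) under the natural map induced by the inclusion $M\setminus K\hookrightarrow M\setminus(K\cap L)$ (resp.\ $M\setminus L\hookrightarrow M\setminus(K\cap L)$), since all three classes restrict to the chosen orientation of $M$ on $K\cap L$. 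Combining this with the naturality of the map $I$ on relative characters yields
\[
\pa{Ih}{h'}_c=(\hat h\cdot\hat h')(\mu_{K\cap L}),\qquad \pa{Ih'}{h}_c=(\hat h'\cdot\hat h)(\mu_{K\cap L}),
\]
and substituting the graded commutativity relation gives at once the desired identity $\pa{Ih}{h'}_c=(-1)^{k(m-k)}\pa{Ih'}{h}_c$.

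The main technical obstacle is the careful handling of the relative internal product: establishing its well-definedness at the level of relative characters (compatible with passage to the colimit in both entries) and verifying that it inherits graded commutativity from the absolute ring structure. Both points ultimately reduce to the axiomatic compatibility of the product with curvature, characteristic class, topological trivialization and inclusion of flat classes, extended to relative cohomology in the sense of Subsection \ref{subsec:2:rel_diff_coho}. Once this is in place, the rest is bookkeeping around the naturality of $I$ and of the fundamental class under restriction to smaller compact supports.
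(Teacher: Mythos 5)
Your argument is essentially the paper's: both reduce the claim to graded commutativity of the internal product on relative differential characters, with the key technical step --- identifying the module product $Ih\cdot h'$ appearing in the pairing with the genuinely relative product of the chosen representatives --- deferred rather than carried out, exactly as in the paper (``technical and tedious computations''). The only differences are cosmetic: the paper takes a single compact set $K$ large enough to carry representatives of both $h$ and $h'$ instead of your $K$, $L$ and $K\cap L$, and note that the map induced by the inclusion of pairs goes $H_m(M,M\setminus K;\Z)\to H_m(M,M\setminus(K\cap L);\Z)$, sending $\mu_K\mapsto\mu_{K\cap L}$, i.e.\ in the direction opposite to the one you wrote, which is harmless for the argument.
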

\begin{proof}
It is possible to find a sufficiently large compact set $K\subseteq M$ for which $h\in \hat{H}^k(M,M\setminus K;\Z)$ and $h'\in\hat{H}^{m-k}(M,M\setminus K;\Z)$ represent $h\in \hat{H}^k_c(M;\Z)$ and $h'\in\hat{H}^{m-k}_c(M;\Z)$ respectively. It can be checked by technical and tedious computations that the following equalities hold:
$$
Ih\cdot h'=h\cdot h',\qquad Ih'\cdot h=h'\cdot h,
$$
where on the left-hand side $\cdot$ denotes the module structure on $\hat{H}^\ast(M,M\setminus K;\Z)$, while on the right-hand side it denotes the (non-unital) ring structure on relative differential characters. Since the ring structure is graded commutative, we get:
$$
Ih\cdot h'=h\cdot h'=(-1)^{k(m-k)} h'\cdot h= (-1)^{k(m-k)}Ih'\cdot h.
$$
\end{proof}

\begin{prop}
The bilinear map:
\begin{eqnarray}\label{sympl:2:D}
&\tau: \mathfrak{D}^k(M;\Z)\times \mathfrak{D}^k(M;\Z)\to \T \nonumber&\\
& (\varphi, \varphi ')\mapsto \pa{I(\mathcal{O}^{-1}\varphi)}{\mathcal{O}^{-1}\varphi '} &
\end{eqnarray}
endows $\mathfrak{D}^k(M;\Z)$ with a pre-symplectic structure whose radical is $\mathcal{O}(\text{ker}(I))$.
\end{prop}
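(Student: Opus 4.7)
The plan is to verify two things: that $\tau$ is antisymmetric (bilinearity is automatic, since $\tau$ is the composition of the $\Z$-linear map $I$, the group isomorphism $\mathcal{O}^{-1}$, and the bilinear pairing $\pa{\cdot}{\cdot}$ from \eqref{pairing:2:config_gauge_fields}), and then to identify the radical as $\mathcal{O}(\ker I)$. The two essential ingredients are Lemma \ref{lem:2:grading_Ih_h'} (graded symmetry of the $\T$-valued pairing $\pa{I\,\cdot}{\cdot}_c$ on differential characters with compact support) and Proposition \ref{prop:2:non_deg_pair_C_Csc} (weak non-degeneracy of $\pa{\cdot}{\cdot}$).

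For antisymmetry, I would set $(h_1',\tilde h_1'):=\mathcal{O}^{-1}\varphi$ and $(h_2',\tilde h_2'):=\mathcal{O}^{-1}\varphi'$ in $\C^k_{sc}(M;\Z)$ and unfold $\tau(\varphi,\varphi')$ using the definition \eqref{pairing:2:config_gauge_fields} of the pairing between semi-classical configurations and space-like compact gauge fields. Since $I$ commutes with the pullback $\iota^\ast_\Sigma$ along the Cauchy surface embedding (by naturality of the transformation $I$, which is preserved in passing to the colimit of Proposition \ref{prop:2:diff_coho_comp_is_functor}), this yields
\[
\tau(\varphi,\varphi') \;=\; \pa{I\iota^\ast_\Sigma \tilde h_1'}{\iota^\ast_\Sigma h_2'}_c \;-\; (-1)^{k(m-k)}\pa{I\iota^\ast_\Sigma h_1'}{\iota^\ast_\Sigma \tilde h_2'}_c.
\]
Applying Lemma \ref{lem:2:grading_Ih_h'} to each summand swaps the two entries at the price of a factor $(-1)^{k(m-k)}$; after bookkeeping of the signs, comparing the resulting expression with the analogously expanded $\tau(\varphi',\varphi)$ gives $\tau(\varphi,\varphi') = -\tau(\varphi',\varphi)$.

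For the radical, I would observe that $\varphi$ belongs to $\ker\tau$ if and only if $\pa{I(\mathcal{O}^{-1}\varphi)}{\mathcal{O}^{-1}\varphi'}=0$ for every $\varphi'\in\mathfrak{D}^k(M;\Z)$. Since $\mathcal{O}$ is an isomorphism onto $\mathfrak{D}^k(M;\Z)$ by the previous proposition, $\mathcal{O}^{-1}\varphi'$ ranges over the whole of $\C^k_{sc}(M;\Z)$, so the vanishing condition reads: $\pa{I(\mathcal{O}^{-1}\varphi)}{(h',\tilde h')}=0$ for all $(h',\tilde h')\in\C^k_{sc}(M;\Z)$. By the weak non-degeneracy of $\pa{\cdot}{\cdot}$ (Proposition \ref{prop:2:non_deg_pair_C_Csc}), this forces $I(\mathcal{O}^{-1}\varphi)=0$, i.e.\ $\mathcal{O}^{-1}\varphi\in\ker I$, equivalently $\varphi\in\mathcal{O}(\ker I)$; the converse inclusion is immediate from the definition of $\tau$. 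I expect the main technical nuisance to be the sign bookkeeping in the antisymmetry computation, together with checking that the map $I\colon\C^k_{sc}(M;\Z)\to\C^k(M;\Z)$ intertwines correctly with $\iota^\ast_\Sigma$ under the isomorphisms of Theorems \ref{thm:2:isomorphism_i_C} and \ref{thm:2:isomorphism_i_C_sc} — both of which reduce to naturality statements already encoded in the functorial setup.
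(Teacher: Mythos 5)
Your proposal is correct and follows essentially the same route as the paper: transport $\tau$ through the isomorphism $\mathcal{O}$ to the pairing $\pa{I\arg}{\arg}$ on $\C^k_{sc}(M;\Z)$, obtain antisymmetry from the graded symmetry of Lemma \ref{lem:2:grading_Ih_h'} applied to the two summands of \eqref{pairing:2:config_gauge_fields}, and identify the radical with $\mathcal{O}(\mathrm{ker}(I))$ via the weak non-degeneracy of the pairing (your citation of Proposition \ref{prop:2:non_deg_pair_C_Csc} and the paper's reference to Corollary \ref{cor:2:non_deg_pair_c} amount to the same fact). The only cosmetic difference is that the paper leaves the pullbacks $\iota^\ast_\Sigma$ implicit, whereas you track them explicitly through the naturality of $I$.
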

\begin{proof}
Via the isomorphism \eqref{eq:2:iso_D_C_sc}, \eqref{sympl:2:D} becomes:
\begin{eqnarray}\label{sympl:2:C_sc}
&\sigma: \C^k_{sc}(M;\Z)\times \C^k_{sc}(M;\Z) \to \T &\nonumber\\
&\left((h,\tilde{h}),(h',\tilde{h}') \right)\mapsto \pa{I(h,\tilde{h})}{(h',\tilde{h}')}.    &
\end{eqnarray}
Antisymmetry is a straightforward check:
\begin{equation}
\begin{split}
\pa{I(h,\tilde{h})}{(h',\tilde{h}')}&= \left( I\tilde{h}\cdot h' -(-1)^{k(m-k)} Ih\cdot \tilde{h}'\right)\mu\\
&= \left( (-1)^{k(m-k)} Ih'\cdot \tilde{h} - I\tilde{h}'\cdot h\right) \mu\\
&=- \left(I\tilde{h}'\cdot h - (-1)^{k(m-k)} Ih'\cdot h \right)\mu\\
&=- \pa{I(h',\tilde{h}')}{(h,\tilde{h})}.
\end{split}
\end{equation}
In doing the passages, we have used \eqref{pairing:2:config_gauge_fields} and Lemma \ref{lem:2:grading_Ih_h'}. The pull-back $\iota^\ast_\Sigma$ of differential characters to a Cauchy surface $\Sigma$ has been understood. 

The radical of $\sigma$ coincides with the kernel of $I$, because the pairing \eqref{pairing:2:config_gauge_fields} is weakly non-degenerate (see Corollary \ref{cor:2:non_deg_pair_c}). Then, the radical of $\tau$ is given by $\text{ker}(I\circ\mathcal{O}^{-1})=\mathcal{O}(\text{ker}(I))$.
\end{proof}

\begin{prop}
The pre-symplectic structure $(\mathfrak{D}^k(M;\Z),\tau)$ is natural.
\end{prop}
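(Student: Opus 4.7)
Given a morphism $f: M \to M'$ in $\mathsf{Loc}_m$, my aim is to show that the induced homomorphism $f_\ast: \mathfrak{D}^k(M;\Z) \to \mathfrak{D}^k(M';\Z)$ preserves the pre-symplectic form, i.e.\ $\tau(f_\ast \varphi, f_\ast \varphi') = \tau(\varphi, \varphi')$ for all $\varphi, \varphi' \in \mathfrak{D}^k(M;\Z)$. The plan is to transport the question to $\C^k_{sc}$ via the natural isomorphism $\mathcal{O}$ and then to reduce the matter to the interaction between $f^\ast$, $f_\ast$ and the forgetful map $I$ at the relative level. Since $\mathcal{O}$ is a natural iso and $\tau$ is, by definition, the pullback along $\mathcal{O}^{-1}\times \mathcal{O}^{-1}$ of the bilinear map $\sigma(\eta,\eta') := \pa{I\eta}{\eta'}$, the statement reduces to proving naturality of $\sigma$ on $\C^k_{sc}$, namely $\pa{I f_\ast \eta}{f_\ast \eta'}_{M'} = \pa{I\eta}{\eta'}_M$ for every $\eta, \eta' \in \C^k_{sc}(M;\Z)$.

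The second step is to invoke the naturality of the pairing $\pa{\arg}{\arg}: \C^k \times \C^k_{sc} \to \T$, i.e.\ the identity $\pa{x}{f_\ast \eta}_{M'} = \pa{f^\ast x}{\eta}_M$ for all $x \in \C^k(M';\Z)$ (precisely \cite[Lemma A.4]{BPhys}, already invoked in the proof of the previous proposition to establish naturality of $\mathcal{O}$). Specialising it to $x = I(f_\ast \eta) \in \C^k(M';\Z)$ recasts the desired equality as
$$\pa{f^\ast I f_\ast \eta}{\eta'}_M = \pa{I\eta}{\eta'}_M,$$
so the whole problem collapses to the single identity $f^\ast \circ I \circ f_\ast = I$ of homomorphisms $\C^k_{sc}(M;\Z) \to \C^k(M;\Z)$.

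The third and main step is to verify this identity by unwinding the colimit defining $\C^k_{sc}$. Given a representative $\bar\eta \in \C^k(M, M\setminus J(K);\Z)$ of $\eta$ for some compact $K \subseteq M$, the factorisation $f = i \circ g$ used in the proof of Proposition~\ref{prop:2:diff_coho_comp_is_functor}, together with the excision theorem in relative differential cohomology, supplies an isomorphism $f^\ast_{\mathrm{rel}}: \C^k(M', M'\setminus J(f(K));\Z) \xrightarrow{\simeq} \C^k(M, M\setminus J(K);\Z)$, whose inverse $(f^\ast_{\mathrm{rel}})^{-1}$ realises $f_\ast$ at the relative level. Naturality of $I$ as a transformation from relative to absolute differential cohomology (cf.\ the discussion around \eqref{eq:2:I}) provides the commutative square $f^\ast \circ I = I \circ f^\ast_{\mathrm{rel}}$; composing on the right with $(f^\ast_{\mathrm{rel}})^{-1}$ yields $f^\ast \circ I \circ f_\ast = I$ at the relative level, and passing to the colimit over $K \in \mathcal{K}_M$ produces the required identity on $\C^k_{sc}(M;\Z)$.

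The main obstacle is the colimit bookkeeping in this last step, in particular the precise identification of $f_\ast$ on $\C^k_{sc}$ with the inverse of the relative pullback induced by $f$ via excision; once that is in place, the argument merely assembles naturality of $I$ in the relative setting with naturality of the pairing $\pa{\arg}{\arg}$, both of which are already available from the earlier sections and from \cite{BPhys}.
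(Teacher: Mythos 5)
Your proposal is correct and follows essentially the same route as the paper's proof: reduce to $(\C^k_{sc}(M;\Z),\sigma)$ via the natural isomorphism $\mathcal{O}$, use the naturality of the pairing $\pa{\arg}{\arg}$ from \cite[Lemma A.4]{BPhys} to collapse the claim to the identity $f^\ast\circ I\circ f_\ast=I$, and verify that identity on colimit representatives using the naturality of $I$ together with the realisation of $f_\ast$ as the inverse relative pullback from the proof of Proposition \ref{prop:2:diff_coho_comp_is_functor}. Your extra care with the colimit bookkeeping only makes explicit what the paper leaves implicit.
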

\begin{proof}
We have to prove that the diagram:
\begin{equation*}
\xymatrix{
{\mathfrak{D}^k(M;\Z)\times \mathfrak{D}^k(M;\Z)} \ar[rd]^-{\tau} \ar[dd]_-{f_\ast \times f_\ast} &  \\
  & \T \\
{\mathfrak{D}^k(M';\Z)\times \mathfrak{D}^k(M';\Z)} \ar[ru]_-{\tau} & \\
}
\end{equation*}
is commutative for every morphism $f:M\to M'$ in $\mathsf{Loc}_m$. Recalling that $\mathcal{O}$ is a natural isomorphism, it is enough to prove that the diagram:
\begin{equation*}
\xymatrix{
\mathfrak{C}^k_{sc}(M;\Z)\times \mathfrak{C}^k_{sc}(M;\Z) \ar[dr]^-{\sigma} \ar[dd]_-{f_\ast \times f_\ast} & \\
 & \T \\
\mathfrak{C}^k_{sc}(M';\Z)\times \mathfrak{C}^k_{sc}(M';\Z) \ar[ur]_-{\sigma} &
}
\end{equation*}
commutes. By the naturality of \eqref{pairing:2:config_gauge_fields}, we obtain:
\begin{equation}
\begin{split}
\sigma(f_\ast(h,\tilde{h}),f_\ast(h',\tilde{h}'))= & \pa{If_\ast(h,\tilde{h})}{f_\ast(h',\tilde{h}')}\\
= & \pa{f^\ast If_\ast(h,\tilde{h})}{(h'\tilde{h}')},
\end{split}
\end{equation}
for all $(h,\tilde{h}),(h',\tilde{h}')\in\C^k_{sc}(M;\Z)$. We need to show that $f^\ast\circ I\circ f_\ast=I$. Let $(\eta,\tilde{\eta})$ be an arbitrary element in $\C^{k}_{sc}(M;\Z)$ and let $(\eta,\tilde{\eta})\in\C^{k}(M,M\setminus J(K);\Z)$ be a representative in the colimit, denoted by the same symbol, for some Cauchy surface $\Sigma\subseteq M$ and some compact set $K\subseteq \Sigma$. Since $I$ is natural, we have:
$$
f^\ast\circ I=I\circ f^\ast: \C^k(M',M'\setminus J(f(K));\Z)\to \C^k(M;\Z).
$$
Hence, recalling the proof of Proposition \ref{prop:2:diff_coho_comp_is_functor}, we obtain:
$$
f^\ast I (f^\ast)^{-1}(\eta,\tilde{\eta})=I f^\ast (f^\ast)^{-1} (\eta,\tilde{\eta})= I(\eta,\tilde{\eta}).
$$
This concludes the proof.
\end{proof}

Due to the above results, we can interpret
\begin{equation*}
(\mathfrak{D}^k(\arg;\Z),\tau): \mathsf{Loc}_m \to \mathsf{PSAb}
\end{equation*}
as a covariant functor assigning to each spacetime manifold $M$ in $\mathsf{Loc}_m$ the pre-symplectic Abelian group $(\mathfrak{D}^k(M;\Z),\tau)$ in $\mathsf{PSAb}$ and to each morphism $f:M\to M'$ in $\mathsf{Loc}_m$ the morphism $(\mathfrak{D}^k(f;\Z),\tau)=:f_\ast:(\mathfrak{D}^k(M;\Z),\tau)\to (\mathfrak{D}^k(M';\Z),\tau)$ in $\mathsf{PSAb}$. Such a functor fulfils analogues of the causality axiom and the time-slice axiom introduced by Brunetti, Fredenhagen and Verch \cite{BFV03}:

\begin{prop}
\begin{enumerate}[(i)]
\item Let $f_1: M_1\to M$ and $f_2:M_2\to M$ be morphisms in $\mathsf{Loc}_m$ such that $J(f_1(M_1))\cap f_2(M_2)=\emptyset$. Then $\tau (\varphi, \varphi ')=0$ for all $(\varphi,\varphi ')\in {f_1}_\ast(\mathfrak{D}^k(M_1;\Z))\times {f_2}_\ast (\mathfrak{D}^k(M_2;\Z))$; 
\item Let $f:M\to M'$ be a morphism in $\mathsf{Loc}_m$ such that there exists a Cauchy surface $\Sigma '\subseteq f(M)\subseteq M'$. Then $\mathfrak{D}^k(f;\Z)$ is an isomorphism.
\end{enumerate}
\end{prop}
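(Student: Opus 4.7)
The plan is to work entirely on $\C^k_{sc}(M;\Z)$ via the natural isomorphism $\mathcal{O}$ of \eqref{eq:2:iso_D_C_sc} and the induced pre-symplectic form $\sigma$ of \eqref{sympl:2:C_sc}, so that both parts reduce to statements about $\C^k_{sc}(f;\Z)$ and the pairing \eqref{pairing:2:config_gauge_fields}.

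For part (i), causal separation of $f_1(M_1)$ and $f_2(M_2)$ allows me to select a Cauchy surface $\Sigma$ of $M$ such that $\Sigma\cap f_i(M_i)$ is itself a Cauchy surface of $f_i(M_i)$ for $i=1,2$ (via a Bernal--S\'anchez-type refinement, cf.\ \cite{BS06}). By Theorem \ref{thm:2:isomorphism_i_C_sc}, the pullbacks $(\iota_\Sigma^\ast\times\iota_\Sigma^\ast)f_{i\ast}(h_i,\tilde h_i)$ admit representatives in $\hat H^{k,m-k}(\Sigma,\Sigma\setminus K_i;\Z)$ with $K_i\subseteq \Sigma\cap f_i(M_i)$ compact; since $f_1(M_1)\cap f_2(M_2)\subseteq J(f_1(M_1))\cap f_2(M_2)=\emptyset$, the compact sets $K_1$ and $K_2$ are disjoint. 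Expanding $\sigma$ by \eqref{pairing:2:config_gauge_fields} and \eqref{pairing:2:comp_supp_char}, each summand is of the form $(Ih_a\cdot h_b)\,\mu$ with relative supports $K_1$ and $K_2$; tracking the product through the module structure on relative differential characters, $Ih_a\cdot h_b$ lifts to a class in $\hat H^m(\Sigma,\Sigma\setminus(K_1\cap K_2);\Z)=\hat H^m(\Sigma,\Sigma;\Z)=0$, so the evaluation on the relative fundamental class vanishes, and hence $\sigma$, and therefore $\tau$, vanishes on the desired product.

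For part (ii), set $\Sigma:=f^{-1}(\Sigma')\subseteq M$: causal compatibility of $f$ together with the fact that $\Sigma'\subseteq f(M)$ is a Cauchy surface of $M'$ promote $\Sigma$ to a Cauchy surface of $M$, with $f|_\Sigma:\Sigma\to \Sigma'$ an orientation-preserving diffeomorphism. Naturality of the Cauchy-surface isomorphism of Theorem \ref{thm:2:isomorphism_i_C_sc} places $\C^k_{sc}(f;\Z)$ in a commutative square whose vertical arrows $\iota_\Sigma^\ast\times\iota_\Sigma^\ast$ and $\iota_{\Sigma'}^\ast\times\iota_{\Sigma'}^\ast$ are isomorphisms, and whose bottom arrow is $(f|_\Sigma)_\ast:\hat H^{k,m-k}_c(\Sigma;\Z)\to \hat H^{k,m-k}_c(\Sigma';\Z)$, which is an isomorphism because it is induced by a diffeomorphism and $\hat H^k_c(\arg;\Z)$ is a functor on oriented manifolds with open embeddings. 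Hence $\C^k_{sc}(f;\Z)$ and, via $\mathcal{O}$, also $\mathfrak{D}^k(f;\Z)$ are isomorphisms.

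The main obstacle is the product-vanishing step in (i): verifying that $Ih_a\cdot h_b$ genuinely lies in $\hat H^m(\Sigma,\Sigma\setminus(K_1\cap K_2);\Z)$ requires lifting the module structure of $\hat H^\ast_c$ over $\hat H^\ast$ to the relative level, exploiting the compatibility of the ring structure on differential characters with excision in relative cohomology. A secondary subtlety is producing a single Cauchy surface restricting simultaneously to Cauchy surfaces of both causally disjoint regions, but this is a standard application of the Bernal--S\'anchez machinery. Part (ii) is then essentially a clean diagram-chase powered by the Cauchy-surface isomorphisms already established.
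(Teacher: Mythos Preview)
Your argument for part~(ii) is essentially the paper's own: set $\Sigma=f^{-1}(\Sigma')$, invoke Theorem~\ref{thm:2:isomorphism_i_C_sc} on both sides, and close the square with the isomorphism $(f|_\Sigma)_\ast$ on compactly supported differential characters.

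For part~(i) the two approaches diverge. The paper does \emph{not} attempt to realise both pushforwards on a single Cauchy surface of $M$, nor does it invoke any disjoint-support vanishing of the ring product. Instead it uses naturality of the pairing \eqref{pairing:2:config_gauge_fields} to write
\[
\sigma\big({f_1}_\ast(h,\tilde h),{f_2}_\ast(h',\tilde h')\big)
=\big\langle f_2^\ast I{f_1}_\ast(h,\tilde h),(h',\tilde h')\big\rangle,
\]
and then argues directly from the definition of relative differential characters: a representative of ${f_1}_\ast(h,\tilde h)$ lives in $\C^k(M,M\setminus J(f_1(K));\Z)$, and any cycle pushed forward from $M_2$ lies in $f_2(M_2)\subseteq M\setminus J(f_1(K))$, hence represents the zero relative cycle. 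Thus $f_2^\ast I{f_1}_\ast(h,\tilde h)=0$ as an element of $\C^k(M_2;\Z)$, and the pairing vanishes.

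Your route is not wrong in spirit, but it front-loads two technical ingredients that the paper neither states nor proves: a doubly-relative refinement of the module structure (so that a product of classes supported in $K_1$ and $K_2$ lands in $\hat H^m(\Sigma,\Sigma\setminus(K_1\cap K_2);\Z)$), and the existence of a single smooth spacelike Cauchy surface of $M$ restricting to Cauchy surfaces of \emph{both} $f_i(M_i)$ simultaneously. Both are plausible, but neither is immediate from what is developed here; the naturality-and-cycle argument bypasses both and is correspondingly shorter.
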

\begin{proof}
(i). Again, we resort to the isomorphism $\mathcal{O}$ and we prove the analogous statement for $\C^k_{sc}(M;\Z)$ and $\sigma$. Let $(h,\tilde{h})\in\C^k_{sc}(M_1;\Z)$ and $(h',\tilde{h} ')\in\C^k_{sc}(M_2;\Z)$. We have:
\begin{equation*}
\sigma ({f_1}_\ast (h,\tilde{h}),{f_2}_\ast (h',\tilde{h} '))=\pa{I {f_1}_\ast (h,\tilde{h})}{ {f_2}_\ast (h',\tilde{h} ')}=\pa{f_2^\ast I {f_1}_\ast (h,\tilde{h})}{(h',\tilde{h} ')}.
\end{equation*}
Let $\Sigma$ be a Cauchy surface in $M_1$ and find $K\subseteq \Sigma$ such that $(h,\tilde{h})\in \C^k(M_1,M_1\setminus J(K);\Z)$ is a representative for $(h,\tilde{h})\in\C^k_{sc}(M_1;\Z)$. Besides, let $(f_1^\ast)^{-1}(h,\tilde{h})\in \C^k(M,M\setminus J(f_1(K));\Z)$ be a representative for ${f_1}_\ast (h,\tilde{h})\in \C^k_{sc}(M;\Z)$. By construction, the push-forward ${f_2}_\ast (z,\tilde{z}) \in Z_{k-1,m-k-1}(M)$ gives cycles supported in $f_2(M_2)$, for every pair $(z,\tilde{z})\in Z_{k-1,m-k-1}(M_2)$. By assumption, $f(M_2)\subseteq M \setminus J(M_1) \subseteq M\setminus J(f_1 (K))$. Hence, ${f_2}_\ast (z,\tilde{z})$ is a representative of $0$ in $Z_{k-1,m-k-1}(M,M\setminus J(f_1(K));\Z)$. We deduce that $f_2^\ast I (f_1^\ast)^{-1}(h,\tilde{h})=0$ and, consequently, that $f_2^\ast I {f_1}_\ast (h,\tilde{h})=0$.

(ii). The preimage $f^{-1}(\Sigma ')=:\Sigma$ is a Cauchy surface for $M$. Consider the commutative diagram:
\begin{equation*}
\xymatrix{
\C^k_{sc}(M;\Z) \ar[r]^-{f_\ast} \ar[d]_-{\iota^\ast_\Sigma \times \iota^\ast_\Sigma} & \C_{sc}^k(M';\Z) \ar[d]^-{\iota^\ast_{\Sigma'} \times \iota^\ast_{\Sigma'}}\\
\hat{H}^{k,m-k}_{c}(\Sigma;\Z) \ar[r]_-{{f_\Sigma}_\ast \times {f_\Sigma}_\ast} & \hat{H}^{k,m-k}_{c}(\Sigma ';\Z).
}
\end{equation*}
The vertical arrows are isomorphisms by Theorem \ref{thm:2:isomorphism_i_C_sc}. As for the bottom row, observe that the restriction of $f$ to $\Sigma$, $f_\Sigma: \Sigma \to \Sigma '$, is an isometry and preserves the orientation. The induced push-forward is, therefore, an isomorphism too. This entails that also the top horizontal arrow must be an isomorphism. Resorting to $\mathcal{O}$, we find that $\mathfrak{D}^k(f;\Z):\mathfrak{D}^k(M;\Z)\to \mathfrak{D}^k(M';\Z)$ is an isomorphism, as sought.
\end{proof}

Our goal is, as stated before, to construct a covariant quantum field theory. Following \cite{MAN73} and \cite{BDHS14}, the next step consists in building a covariant functor:
\begin{equation}\label{eq:2:CCR_functor}
\mathfrak{CCR}: \mathsf{PSAb} \to \mathsf{C^\ast Alg}
\end{equation}
called \emph{CCR-functor}, which assigns a $C^\ast$-algebra $\mathfrak{CCR}(\mathfrak{D}^k(M,\Z),\tau)$ to each pre-symplectic Abelian group $(\mathfrak{D}^k(M;\Z),\tau)$. In doing this, particular attention must be devoted to mathematical details, as the groups we are dealing with are not symplectic, but rather pre-symplectic, i.e.\ possibly with degeneracies.

Let $M$ be an object in $\mathsf{Loc}_m$ and let $(\mathfrak{D}^k(M;\Z),	\tau)$ be the associated pre-symplectic structure. Define the $\mathbb{C}$-vector space:
\begin{equation*}
\mathcal{A}_0(\mathfrak{D}^k(M;\Z),\tau)=\text{span}_{\mathbb{C}}\left\{\mathcal{W}(\varphi)\,\,\vline \,\, \varphi\in\mathfrak{D}^k(M;\mathbb{Z})\right\},
\end{equation*}
where $\{\mathcal{W}(\varphi), \varphi \in \mathfrak{D}^k(M;\Z)\}$ are abstract symbols.
$\mathcal{A}_0(\mathfrak{D}^k(M;\Z),\tau)$ can be given an associative, unital algebra structure by:
\begin{equation*}
\mathcal{W}(\varphi)\mathcal{W}(\tilde{\varphi}):=\exp\left(2\pi i \tau (\varphi,\tilde{\varphi})\right) \mathcal{W}(\varphi+\tilde{\varphi}),
\end{equation*}
for all $\varphi,\tilde{\varphi}\in\mathfrak{D}^k(M;\Z)$, with unit element $\mathcal{W}(0)$.
Furthermore, introduce the map $^\ast: \mathcal{A}_0(\mathfrak{D}^k(M;\Z),\tau)\to \mathcal{A}_0(\mathfrak{D}^k(M;\Z),\tau)$ defined by $\mathcal{W}(\varphi)^\ast:=\mathcal{W}(-\varphi)$ for every $\varphi\in\mathfrak{D}^k(M;\Z)$. The automorphism $^\ast$ endows the unital algebra with an involution, turning $\mathcal{A}_0(\mathfrak{D}^k(M;\Z),\tau)$ into a unital $\ast$-algebra. Every element $a\in\mathcal{A}_0(\mathfrak{D}^k(M;\Z),\tau)$ can be written as $a=\sum_i^N \alpha_i \mathcal{W}(\varphi_i)$, for some $N\in\mathbb{N}$, with $\alpha_i \in \mathbb{C}$ and $\varphi_i\in \mathfrak{D}^k(M;\Z)$ for all $i=1,\dots,N$. Without loss of generality, we assume $\varphi_i\neq \varphi_j$ whenever $i\neq j$.

Let $\psi: (\mathfrak{D}^k(M;\Z),\tau) \to (\mathfrak{D}^k(M';\Z),\tau')$ be a morphism in $\mathsf{PSAb}$. Define the homomorphism:
\begin{eqnarray*}
&\mathcal{A}_0(\psi): \mathcal{A}_0(\mathfrak{D}^k(M;\Z),\tau) \to \mathcal{A}_0(\mathfrak{D}^k(M';\Z),\tau') &\\
& a=\sum_{i=1}^N \alpha_i \mathcal{W}(\varphi_i) \mapsto \mathcal{A}_0(\psi)(a):= \sum_{i=1}^N \alpha_i \mathcal{W}'(\psi(\varphi_i)). &
\end{eqnarray*}
As $\psi$ is a group homomorphism preserving the pre-symplectic structure, $\mathcal{A}_0(\psi)$ is a $\ast$-algebra homomorphism. It is straightforward to check that $\mathcal{A}_0(\text{Id}_{(\mathfrak{D}^k(M;\Z),\tau)})=\text{Id}_{(\mathfrak{D}^k(M';\Z),\tau')}$ and that, given a second morphism $\psi': (\mathfrak{D}^k(M';\Z),\tau') \to (\mathfrak{D}^k(M'';\Z),\tau'')$, it holds $\mathcal{A}_0(\psi '\circ \psi)=\mathcal{A}_0(\psi')\circ \mathcal{A}_0(\psi)$. Therefore, we have obtained that:
\begin{equation}
\mathcal{A}_0: \mathsf{PSAb}\to \mathsf{^\ast Alg}
\end{equation}
is a covariant functor.

As an intermediate step, let us endow $\mathcal{A}_0(\mathfrak{D}^k(M;\Z),\tau)$ with the structure of a Banach $\ast$-algebra. Define the norm:
\begin{eqnarray*}
&\norm{\cdot}^{\text{Ban}}:\mathcal{A}_0(\mathfrak{D}^k(M;\Z),\tau) \to \mathbb{R}_{\geq 0}&\\
&a= \sum_{i=1}^N \alpha_i \mathcal{W}(\varphi_i)\mapsto \norm{a}^{\text{Ban}}=\sum_{i=1}^N |\alpha_i|.&
\end{eqnarray*}
The completion $\mathcal{A}_1(\mathfrak{D}^k(M;\Z),\tau):=\overline{\mathcal{A}_0(\mathfrak{D}^k(M;\Z),\tau)}^{\norm{\cdot}^{\text{Ban}}}$ is a unital Banach *-algebra. An arbitrary element in $\mathcal{A}_1(\mathfrak{D}^k(M;\Z),\tau)$ can be written as $a=\sum_{i=1}^\infty \alpha_i \mathcal{W}(\varphi_i)$, with $\alpha_i\in\mathbb{C}$, $\sum_{i=1}^\infty |\alpha_i| < +\infty$. Given a morphism $\psi: (\mathfrak{D}^k(M;\Z),\tau) \to (\mathfrak{D}^k(M';\Z),\tau')$ in $\mathsf{PSAb}$, it holds $\norm{\mathcal{A}_0(\psi)(a)}^{\text{Ban}\,\prime}\leq \norm{\mathcal{A}_0(\psi)}^{\text{Ban}}$. By Hahn-Banach theorem, $\mathcal{A}_0(\psi)$ admits a unique continuous extension:
\begin{equation*}
\mathcal{A}_1(\psi): \mathcal{A}_1(\mathfrak{D}^k(M;\Z),\tau) \to \mathcal{A}_1(\mathfrak{D}^k(M';\Z),\tau').
\end{equation*}
It is a straightforward check that:
\begin{equation*}
\mathcal{A}_1: \mathsf{PSAb} \to \mathsf{B^\ast Alg}
\end{equation*}
fulfils the requirements for it to be a functor from the category $\mathsf{PSAb}$ to the category of Banach $\ast$-algebras $\mathsf{B^\ast Alg}$, whose morphisms are unital $\ast$-Banach algebra homomorphisms.\\
Any positive linear functional on $\mathcal{A}_0(\mathfrak{D}^k(M;\Z),\tau)$ extends, by \cite[Proposition 2.17]{MAN73}, to a continuous positive linear functional on the algebra $\mathcal{A}_1(\mathfrak{D}^k(M;\Z),\tau)$. Consider the functional:
\begin{eqnarray*}
&\omega: \mathcal{A}_0(\mathfrak{D}^k(M;\Z),\tau) \to \mathbb{C}&\\
&\mathcal{W}(\varphi)\mapsto 0, \quad \varphi \neq 0&\\
&\mathcal{W}(0)\mapsto  \omega(\mathcal{W}(0))=\omega(\mathbf{1})=1.&
\end{eqnarray*}
For what just said, $\omega$ can be promoted to a state on $\mathcal{A}_1(\mathfrak{D}^k(M;\Z),\tau)$. Such a state, still denoted by $\omega$, is faithful, viz $\omega(a^\ast a)>0$ for every $a\in \mathcal{A}_1(\mathfrak{D}^k(M;\Z),\tau)$, $a\neq 0$. Therefore, we are granted the existence of at least one faithful state on $\mathcal{A}_1(\mathfrak{D}^k(M;\Z),\tau)$; this entails that the following $C^\ast$-norm is meaningful:

\begin{defn}
Denoting by $\mathcal{F}$ the set of states on $\mathcal{A}_1(\mathfrak{D}^k(M;\Z),\tau)$, i.e.\ of normalized positive continuous linear functionals, we call \emph{minimal regular norm} on $\mathcal{A}_1(\mathfrak{D}^k(M;\Z),\tau)$ the norm defined by:
\begin{eqnarray*}
& \norm{\arg}: \mathcal{A}_1(\mathfrak{D}^k(M;\Z),\tau) \to \mathbb{R}_{\geq 0}&\\
&a \mapsto \norm{a}:= \sup _{\omega \in \mathcal{F}} \sqrt{\omega(a^\ast a)}.&
\end{eqnarray*}
\end{defn}
We eventually define our algebra of observables to be the completion 
$$
\mathfrak{CCR}(\mathfrak{D}^k(M;\Z),\tau):=\overline{\mathcal{A}_1(\mathfrak{D}^k(M;\Z),\tau)}^{\norm{\cdot}}.
$$
It turns out that $\mathfrak{CCR}(\mathfrak{D}^k(M;\Z),\tau)$ is a unital $C^\ast$-algebra and that the assignment
\begin{eqnarray}\label{eq:2:CCR_functor}
&\mathfrak{CCR}: \mathsf{PSAb}\to \mathsf{C^\ast Alg} &\nonumber\\
&(\mathfrak{D}^k(M;\Z),\tau)\mapsto \mathfrak{CCR}(\mathfrak{D}^k(M;\Z),\tau)&
\end{eqnarray}
defines a covariant functor (cfr. \cite[Appendix A]{MAN73, BDHS14}).

We now have all the ingredients for our recipe. We define a family of covariant quantum field theories, labelled by the degree $k\in\{1,\dots, m-1\}$ of the gauge theory, by the composition:
\begin{equation}\label{eq:2:LCQFT}
\mathfrak{A}^k:=\mathfrak{CCR}\circ (\mathfrak{D}^k(\arg;\Z),\tau): \mathsf{Loc}_m \to \mathsf{C^\ast Alg}.
\end{equation}
$\mathfrak{A}^k$ inherits the properties of causality and time-slice axiom from $(\mathfrak{D}^k(\arg,\Z),\tau)$ (see \cite[Theorem 5.1]{BPhys}).

\chapter{States for Differential Cohomology QFT}\label{chapter:3}
This chapter comprises the core of the original part of our work. The concepts and tools introduced in the previous chapters will constitute the basis upon which to lay the foundation of our construction. The issue of how to build Hadamard states for covariant quantum field theory on differential cohomology will be addressed, and a thorough procedure will be provided for spacetimes possessing a compact Cauchy surface. As far as the non-compact case is concerned, some hints will be given along with some comments.\\

In the case of compact Cauchy surfaces, the outline of our construction can be briefly sketched as follows: Firstly, considering the analogous of the commutative diagram of short exact sequences \eqref{diag:2:scconfig} with spacelike compact support, which we interpret physically as the diagram of the semi-classical observables, we endow suitable combinations of the Abelian groups appearing in it with a pre-symplectic structure. This makes it possible to activate the quantization scheme discussed above for the Abelian group $\mathfrak{D}^k(M;\Z)$, thus obtaining a collection of $C^\ast$-algebras of the observables. Secondly, we show that the pre-symplectic structure over $\C^k_{sc}(M;\Z)$ allows for a pre-symplectic decomposition of $\C^k_{sc}(M;\Z)$ in suitable pre-symplectically orthogonal pre-symplectic Abelian groups. Consequently, the relevant $C^\ast$-algebra, i.e.\ the central one, emerges as the tensor product of the $C^\ast$-algebras previously constructed, related to other elements in the diagram, and a state on $\mathfrak{CCR}(\C^k_{sc}(M;\Z),\sigma)$ can be built by assigning a product of states defined on such algebras. Eventually, we construct explicitly these auxiliary states and we prove that their singularities are of Hadamard type.

\section{Pre-symplectic structures}\label{section:3.1_presymplectic_strunctures}
So as to be a little more explicit, represent schematically diagram \eqref{diag:2:scconfig} as follows:
\begin{equation}\label{diag:3:schematics}
\xymatrix{
& 0 \ar[d] & 0\ar[d]  & 0\ar[d] &\\
0 \ar[r] & A \ar[r]\ar[d] & B \ar[r]\ar[d] & C \ar[r]\ar[d] & 0\\
0 \ar[r] & D \ar[r]\ar[d] & E \ar[r]\ar[d] & F \ar[r]\ar[d] & 0\\
0 \ar[r] & G \ar[r]\ar[d] & H \ar[r]\ar[d] & I \ar[r]\ar[d] & 0\\
&0&0&0&\\
}
\end{equation}
We aim at showing that the pre-symplectic information stored in $E$ can actually be fully recostructed out of $G$, $C$ and $A \oplus I$. Since we would like to obtain non-trivial pre-symplectic structures covering all the diagram, derived from or closely related to $(E,\sigma)$ in a sense that will be clear in a while, it turns out that not all of the Abelian groups in diagram \eqref{diag:2:scconfig} can be considered individually. Specifically, $B$ comes with $F$, $A$ comes with $I$ and $D$ comes with $H$. Therefore, the first step will be the definition of non-trivial bilinear, antisymmetric maps on $G$, $ D \times H$, $A \times I$, $B \times F$ and $C$.

For the reader's convenience, we reproduce below the complete diagram analogous to \eqref{diag:2:scconfig} with spacelike compact support:
\begin{equation}\label{diag:3:constr_pre_sympl_str}
{\scriptsize
\xymatrix{
& 0 \ar[d] & 0 \ar[d] & 0 \ar[d] &\\
0\ar[r] & \dfrac{H^{k-1,m-k-1}_{sc}(M;\mathbb{R})}{H^{k-1,m-k-1}_{sc,\free}(M;\mathbb{Z})} \ar[r]^-{\tilde{\kappa}\times \tilde{\kappa}} \ar[d]^-{\tilde{\iota}\times \tilde{\iota}} & \mathfrak{T}_{sc}^k(M;\mathbb{Z}) \ar[r]^-{d_1} \ar[d]^-{\iota \times \iota} & d\Omega_{sc}^{k-1}\cap *d\Omega_{sc}^{m-k-1}(M) \ar[r]\ar[d]^-{\subseteq}& 0\\
0\ar[r]& H^{k-1,m-k-1}_{sc}(M;\mathbb{T}) \ar[r]^-{\kappa \times \kappa} \ar[d]^-{\beta\times\beta}& \mathfrak{C}_{sc}^k(M;\mathbb{Z})  \ar[r]^-{\curv_1} \ar[d]^-{\cha \times \cha} & \Omega^{k}_{sc,\mathbb{Z}}\cap *\Omega^{m-k}_{sc,\mathbb{Z}}(M) \ar[r] \ar[d]^-{([\cdot],[*^{-1}\cdot])}& 0\\
0 \ar[r] & H^{k,m-k}_{sc,\tor}(M;\mathbb{Z}) \ar[r]^-{i\times i} \ar[d] & H^{k,m-k}_{sc}(M;\mathbb{Z}) \ar[r]^-{q\times q}\ar[d] & H^{k,m-k}_{sc,\free}(M;\mathbb{Z}) \ar[r] \ar[d] &0\\
& 0 & 0 & 0 & \\
}}
\end{equation}
The idea that guides us consists in exploiting the exactness of rows and columns in diagram \eqref{diag:3:constr_pre_sympl_str} to map or to compute pre-images of the elements of the various Abelian groups to $\C^k_{sc}(M;\Z)$ and subsequently in resorting to the pre-symplectic structure there defined to induce pre-symplectic structures elsewhere.

\begin{rem}
In the following, we will refer to $\C^k_{sc}(M;\Z)$ or $\mathfrak{D}^k(M;\Z)$ indifferently, in view of the isomorphism \eqref{eq:2:iso_D_C_sc}. Besides, we will occasionally work on an arbitrary Cauchy surface instead of considering the whole spacetime, making use of the isomorphism provided by Theorem \ref{thm:2:isomorphism_i_C_sc}.
\end{rem}

\begin{rem}
In order to make our computations more intelligible, let us introduce some lighter notation. Set:
\begin{align*}
& \i{h}:=\iota ^\ast_\Sigma h,\\
&\I{h}:=Ih,\\
& \ii{h}:= \iota^\ast_\Sigma I h.
\end{align*}
In this way, the pre-symplectic product writes:
\begin{equation*}
\sigma ((h,\tilde{h}),(h',\tilde{h}'))=\left( \ii{\tilde{h}}\cdot \i{h'} -(-1)^{k(m-k)} \ii{h}\cdot \i{\tilde{h}'} \right) \mu.
\end{equation*}
\end{rem}

\subsection*{Pre-symplectic structure on $d\Omega_{sc}^{k-1}\cap *d\Omega_{sc}^{m-k-1}(M)$}
\begin{prop}
Let $M$ be an object in $\mathsf{Loc}_m$ and $\Sigma\subseteq M$ a Cauchy surface. The bilinear map 
\begin{eqnarray}\label{tau:3:u}
&\tau_{u}:\left(d\Omega^{k-1}_{sc}\cap *d\Omega^{m-k-1}_{sc}(M)\right)\times \left(d\Omega^{k-1}_{sc}\cap *d\Omega^{m-k-1}_{sc}(M)\right) \rightarrow \mathbb{T}\nonumber &\\
&(dA=*d\tilde{A},dA'=*d\tilde{A'})\mapsto \int_\Sigma \tilde{A}\wedge dA' -(-1)^{k(m-k)}A\wedge d\tilde{A}'\,\, \emph{\mod}\mathbb{Z}&
\end{eqnarray}
defines a pre-symplectic structure on $d\Omega^{k-1}_{sc}\cap *d\Omega^{m-k-1}_{sc}(M)$.
\end{prop}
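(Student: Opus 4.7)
Plan. Bilinearity of $\tau_u$ is immediate from the integral formula, so the substantive tasks are: (i) well-definedness, independent both of the choice of primitives $A,\tilde A,A',\tilde A'$ and of the Cauchy surface $\Sigma$, and (ii) antisymmetry. The conceptual route, consistent with the overarching strategy announced at the start of Section~\ref{section:3.1_presymplectic_strunctures}, is to realise $\tau_u$ as a pullback of the pre-symplectic form $\sigma$ on $\C^k_{sc}(M;\Z)$ through the composition $(\iota\times\iota)\circ d_1^{-1}$: given the two inputs, pick lifts $([A],[\tilde A]),([A'],[\tilde A'])\in\mathfrak{T}_{sc}^k(M;\Z)$ through the surjection $d_1$, push them into $\C^k_{sc}(M;\Z)$ via $\iota\times\iota$, and evaluate $\sigma$. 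Using the definition of $\sigma$ in \eqref{pairing:2:config_gauge_fields} together with the compatibility $\iota[B]\cdot h=\iota[B\wedge\curv h]$ of the module structure with topological trivialization~\eqref{eq:2:module_structure}, a short calculation yields precisely the formula \eqref{tau:3:u}.

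Well-definedness. By exactness of the top row of~\eqref{diag:3:constr_pre_sympl_str}, two lifts of the same element through $d_1$ differ by an element in the image of $\tilde\kappa\times\tilde\kappa$, i.e.\ by a pair $([C],[\tilde C])$ with $C,\tilde C$ closed. The induced change in $\tau_u$ is $\int_\Sigma(\tilde C\wedge dA'-(-1)^{k(m-k)}\,C\wedge d\tilde A')$; since $d\tilde C|_\Sigma=0=dC|_\Sigma$, the Leibniz rule makes each integrand exact on $\Sigma$, and Stokes' theorem on the boundaryless (compact) Cauchy surface annihilates it. The parallel argument gives independence of the Cauchy surface: between two Cauchy surfaces $\Sigma_1,\Sigma_2$ bounding a region $\Omega\subset M$, Stokes reduces the difference $\tau_u^{\Sigma_1}-\tau_u^{\Sigma_2}$ to a bulk integral $\int_\Omega[d\tilde A\wedge dA'-(-1)^{k(m-k)}\,dA\wedge d\tilde A']$, and the self-duality conditions $dA=\ast d\tilde A$, $dA'=\ast d\tilde A'$ together with the pointwise symmetry $\alpha\wedge\ast\beta=\beta\wedge\ast\alpha$ of equal-degree forms (inherited from the symmetry of the Lorentzian metric) make the integrand vanish identically.

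Antisymmetry. Swapping the two arguments in~\eqref{tau:3:u} and integrating by parts on the boundaryless Cauchy surface one checks the two relations $\int_\Sigma\tilde A'\wedge dA=(-1)^{k(m-k)}\int_\Sigma A\wedge d\tilde A'$ and $\int_\Sigma A'\wedge d\tilde A=(-1)^{k(m-k)}\int_\Sigma\tilde A\wedge dA'$; substituting these into the swapped formula and factoring reproduces $-\tau_u(v,v')$. The main obstacle of the proof, and indeed essentially all of its computational content, is the bookkeeping of the graded signs $(-1)^{k-1},(-1)^{m-k-1},(-1)^{k(m-k)}$ that arise from the Koszul sign rule and the Leibniz formula in these integration-by-parts steps; the prefactor $(-1)^{k(m-k)}$ in~\eqref{tau:3:u} is chosen precisely so that these exchanges conspire to yield antisymmetry rather than symmetry.
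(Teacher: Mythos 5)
Your proof is correct and follows essentially the same route as the paper's: $\tau_u$ is defined by lifting the two arguments along $d_1$ and $\iota\times\iota$ and pulling back $\sigma$, the explicit formula comes from the module-structure identity \eqref{eq:2:module_structure}, antisymmetry is checked by integration by parts, and independence of the choice of primitives reduces, since the ambiguity consists of closed forms, to Stokes' theorem on the boundaryless Cauchy surface. The extra verification of Cauchy-surface independence is a harmless bonus not required by the statement (which fixes $\Sigma$); just note that $\Sigma$ need not be compact here — what makes Stokes applicable is the compact support of the restricted forms coming from the spacelike-compact support on $M$.
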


\begin{proof}
Given $dA=*d\tilde{A},dA'=*d\tilde{A'}\in d\Omega^{k-1}_{sc}\cap *d\Omega^{m-k-1}_{sc}(M)$, by resorting to diagram \eqref{diag:3:constr_pre_sympl_str} we sort out pre-images $([A],[\tilde{A}]),([A'],[\tilde{A}'])\in \mathfrak{T}^k_{sc}(M;\mathbb{Z})$. Let $(h,\tilde{h}),(h',\tilde{h}')\in \mathfrak{C}^k_{sc}(M;\mathbb{Z})$ such that $(h,\tilde{h}):=(\iota[A],\iota[\tilde{A}])$ and $(h',\tilde{h}'):=(\iota[A'],\iota[\tilde{A}'])$. Exploiting the pre-symplectic structure set on $\mathfrak{C}^k_{sc}(M;\mathbb{Z})$, define:
\begin{equation*}
\begin{split}
\tau_{u}(dA,dA'):=&\sigma((h,\tilde{h}),(h',\tilde{h}'))=\sigma ((\iota[A],\iota[\tilde{A}]),(h',\tilde{h}'))\\
=&\int_\Sigma \tilde{A} \wedge \curv h'-(-1)^{k(m-k)} A \wedge \curv \tilde{h}'\,\,\mod \Z\\
=&\int_\Sigma \tilde{A} \wedge \curv (\iota[A'])-(-1)^{k(m-k)} A \wedge \curv (\iota[\tilde{A}'])\,\,\mod \Z\\
=&\int_\Sigma \tilde{A} \wedge dA '-(-1)^{k(m-k)} A \wedge d\tilde{A}'\,\,\mod\mathbb{Z},
\end{split}
\end{equation*}
where in the second line we have used \eqref{eq:2:module_structure} and in the last line the commutativity of the upper-right square of diagram \eqref{diag:3:constr_pre_sympl_str}.

Let us check the antisymmetry property. Recalling that, given $\omega \in \Omega^p(M)$ and $\tilde{\omega}	\in\Omega_c^q(M)$, $d(\omega\wedge \tilde{\omega})=d\omega \wedge \tilde{\omega}+ (-1)^{p} \omega \wedge d\tilde{\omega}$, it ensues that:
$$
\int_{\Lambda} d\omega \wedge \tilde{\omega}=(-1)^{p+1}\int \omega\wedge d\tilde{\omega}
$$
and
$$
\int_{\Lambda} \omega \wedge d\tilde{\omega}=(-1)^{-p+1}\int d\omega\wedge \tilde{\omega}
$$
for every smooth $(p+q+1)$-dimensional submanifold without boundary $\Lambda\subset M$. Therefore:
\begin{equation*}
\begin{split}
\tau_{u}(dA,dA')=&\int_{\Sigma} (-1)^{(m-k)}d\tilde{A}\wedge A'-(-1)^{k(m-k)}(-1)^{k}dA\wedge \tilde{A}'\,\,\mod \Z\\
=&\int_{\Sigma} (-1)^{(m-k)} (-1)^{(k-1)(m-k)}A'\wedge d\tilde{A}+\\
&-(-1)^{k(m-k)}(-1)^{k+k(m-k-1)}\tilde{A}'\wedge dA\,\,\mod \Z\\
=&\int_{\Sigma} (-1)^{k(m-k)}A'\wedge d\tilde{A} - (-1)^{2k(m-k)}\tilde{A}'\wedge dA\,\,\mod \Z\\
=&-\int_{\Sigma} \tilde{A}'\wedge dA - (-1)^{k(m-k)}A'\wedge d\tilde{A}\,\,\mod \Z\\
=& -\tau(dA',dA).
\end{split}
\end{equation*}
What is left to prove is the independence of the above construction from the choice of the representatives in the cosets. Since the freedom in the choice of representatives corresponds to closed forms and the symplectic product formula involves an exterior derivative which can act equivalently - up to a sign - on either members of the wedge product via an integration by parts, we immediately conclude. 
 \end{proof}

\begin{rem}
We would like to point out that the map $\tau_u$ is \emph{always} weakly non-degenerate, also in the case of non-compact Cauchy surface; therefore, it gives rise to a \emph{symplectic} structure. This fact provides a further justification for the approach we adopted in our attempt at separating the topological degrees of freedom from the dynamical ones:  as the latter always show a good behaviour, the possible degeneracy of the pre-symplectic form $\sigma$ in the general case is only due to the former.
\end{rem}

\subsection*{Pre-symplectic structure on {\small$\mathfrak{T}_{sc}^k(M;\mathbb{Z})\times \Omega^{k}_{sc,\mathbb{Z}}\cap *\Omega^{m-k}_{sc,\mathbb{Z}}(M)$}}
As prospected, $\mathfrak{T}_{sc}^k(M;\mathbb{Z})$ and $\Omega^{k}_{sc,\mathbb{Z}}\cap *\Omega^{m-k}_{sc,\mathbb{Z}}(M)$ have to be dealt with together. It is their intertwining that gives rise to an interesting theory.
 
\begin{prop}
Let $M$ be an object in $\mathsf{Loc}_m$ and let $\Sigma \subseteq M$ be a Cauchy surface. The bilinear map
\begin{align}\label{tau:3:ur}
\tau_{ur}: \left( \mathfrak{T}^k_{sc}(M) \times  \Omega^{k}_{sc, \mathbb{Z}}\cap *\Omega^{m-k}_{sc,\mathbb{Z}}(M) \right)\times \left( \mathfrak{T}^k_{sc}(M)  \times  \Omega^{k}_{sc, \mathbb{Z}}\cap *\Omega^{m-k}_{sc,\mathbb{Z}}(M) \right) \to \mathbb{T} \nonumber \\
(([A],[\tilde{A}],\omega=*\tilde{\omega}),([A'],[\tilde{A}'],\omega '=*\tilde{\omega}'))\mapsto \int_\Sigma \left( \tilde{A} \wedge \omega ' -(-1)^{k(m-k)} A\wedge \tilde{\omega}'\right. \nonumber\\
\left. -\tilde{A}'\wedge \omega+(-1)^{k(m-k)}A'\wedge \tilde{\omega} \right) \,\,\emph{\mod} \Z
\end{align}
defines a pre-symplectic structure on $\mathfrak{T}^k_{sc}(M) \times  \Omega^{k}_{sc, \mathbb{Z}}\cap *\Omega^{m-k}_{sc,\mathbb{Z}}(M)$.
\end{prop}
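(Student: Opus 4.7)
My plan is to mimic, at a slightly more refined level, the strategy used in the construction of $\tau_u$: reduce the map to the pre-symplectic structure $\sigma$ on $\mathfrak{C}^k_{sc}(M;\Z)$ and read off the stated properties from there. Given $X=([A],[\tilde A],\omega=*\tilde\omega)$ and $Y=([A'],[\tilde A'],\omega'=*\tilde\omega')$, the surjectivity of $\curv_1$ in diagram~\eqref{diag:3:constr_pre_sympl_str} lets me pick $(h,\tilde h),(h',\tilde h')\in\mathfrak{C}^k_{sc}(M;\Z)$ with $\curv_1(h,\tilde h)=\omega$ and $\curv_1(h',\tilde h')=\omega'$. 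Applying the formula already derived in the proof for $\tau_u$, namely
\[
\sigma\bigl((\iota[A],\iota[\tilde A]),(h',\tilde h')\bigr)=\int_\Sigma \tilde A\wedge \omega' -(-1)^{k(m-k)}A\wedge \tilde\omega' \mod \Z,
\]
together with the analogous identity obtained by exchanging the roles of the primed and unprimed data, I recognise
\[
\tau_{ur}(X,Y)=\sigma\bigl((\iota[A],\iota[\tilde A]),(h',\tilde h')\bigr)-\sigma\bigl((\iota[A'],\iota[\tilde A']),(h,\tilde h)\bigr).
\]

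Bilinearity is then inherited from the bilinearity of $\sigma$ (equivalently, from that of wedge product and integration). Antisymmetry is manifest from the explicit formula for $\tau_{ur}$, which is built precisely as $F(X,Y)-F(Y,X)$ with $F(X,Y):=\int_\Sigma\tilde A\wedge\omega'-(-1)^{k(m-k)}A\wedge\tilde\omega'$; swapping $X$ and $Y$ therefore changes the sign. So the only genuine content lies in checking well-definedness.

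Well-definedness has two layers. First, independence of the chosen lifts $(h,\tilde h),(h',\tilde h')$: the ambiguity lies in $\ker(\curv_1)=(\kappa\times\kappa)\bigl(H^{k-1,m-k-1}_{sc}(M;\T)\bigr)$, but since my reduction expresses $\tau_{ur}$ only through the curvatures of those lifts, and $\curv\circ\kappa=0$, the ambiguity drops out at once. Second, independence of the representatives of $[A],[\tilde A],[A'],[\tilde A']$ modulo $\Omega^{k-1}_{sc,\Z}(M)$: replacing, say, $A\mapsto A+B$ with $B\in\Omega^{k-1}_{sc,\Z}(M)$ produces the extra term $(-1)^{k(m-k)+1}\int_\Sigma B\wedge\tilde\omega' \mod \Z$, and the analogous changes in the other three arguments produce similar extra terms.

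The main obstacle is showing that each such extra integral is an integer, so that it vanishes in $\T$. The plan is to restrict the integrand to the Cauchy surface $\Sigma$ and then invoke the $\Z$-valued pairing characterisations of $\Omega^\ast_\Z$ recorded in~\eqref{pairing:2:Z1}--\eqref{pairing:2:Z4}. Concretely: Proposition~\ref{prop:2:cauchy_probl_relative_forms} (and, for the relevant $\Z$-periodicity of the restriction, Lemma A.2 of~\cite{BPhys}) shows that $\iota_\Sigma^\ast$ sends $\Omega^{k-1}_{sc,\Z}(M)$ and $\Omega^{m-k}_{sc,\Z}(M)$ into $\Omega^{k-1}_{c,\Z}(\Sigma)$ and $\Omega^{m-k}_{c,\Z}(\Sigma)$ respectively; the degrees match the dimension $m-1$ of $\Sigma$, and the pairing~\eqref{pairing:2:Z} on $\Sigma$ takes integer values on such pairs. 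All four residual contributions then lie in $\Z$, closing the verification.
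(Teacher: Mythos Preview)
Your approach is essentially the same as the paper's: both express $\tau_{ur}$ through $\sigma$ by lifting $\omega,\omega'$ to $\mathfrak{C}^k_{sc}(M;\Z)$ via surjectivity of $\curv_1$, then check independence of the lift using $\curv\circ\kappa=0$. The paper writes the identity as $\sigma((\iota[A],\iota[\tilde A]),(h',\tilde h'))+\sigma((h,\tilde h),(\iota[A'],\iota[\tilde A']))$, which is your expression after one application of the antisymmetry of $\sigma$.

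One small remark: your ``second layer'' of well-definedness (independence of representatives of $[A],[\tilde A],[A'],[\tilde A']$) is redundant once you have the reduction to $\sigma$, since $\iota$ is already a map on equivalence classes $\Omega^{k-1}_{sc}/\Omega^{k-1}_{sc,\Z}$. The paper simply omits this check for that reason. Your direct verification via the $\Z$-valued pairing is correct but unnecessary.
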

\begin{proof}
Let $$([A],[\tilde{A}],\omega=*\tilde{\omega}),([A'],[\tilde{A}'],\omega '=*\tilde{\omega}')\in \mathfrak{T}^k_{sc}(M) \times  \Omega^{k}_{sc, \mathbb{Z}}\cap *\Omega^{m-k}_{sc,\mathbb{Z}}(M).$$ By diagram \eqref{diag:3:constr_pre_sympl_str}, there exist $(h,\tilde{h}),(h',\tilde{h}')\in \C^k_{sc}(M;\mathbb{Z})$ such that
\begin{align*}
&\curv h=\omega=*\tilde{\omega}=*\curv \tilde{h},\\
&\curv h'=\omega'=*\tilde{\omega}'=*\curv \tilde{h}'.
\end{align*}
Define:
\begin{align*}
\tau_{ur}(([A],[\tilde{A}],\omega= & *\tilde{\omega}),([A'], [\tilde{A}'], \omega '=*\tilde{\omega}')):=\\
=&\sigma((\iota[A],\iota[\tilde{A}]),(h',\tilde{h}'))+\sigma((h,\tilde{h}), (\iota [A'],\iota [\tilde{A}']))\\
=& \left( \ii{\iota [\tilde{A}]}\cdot \i{h'} -(-1)^{k(m-k)}\ii{\iota[A]}\cdot \i{\tilde{h}'}+\right.\\
&\left.+ \ii{\tilde{h}}\cdot \i{\iota [A']} -(-1)^{k(m-k)} \ii{h}\cdot \i{\iota [\tilde{A}']}\right) \mu\\
=& \left( \ii{\iota [\tilde{A}]}\cdot \i{h'} -(-1)^{k(m-k)}\ii{\iota[A]}\cdot \i{\tilde{h}'} +\right. \\
&\left. -(\i{\iota [\tilde{A}']}\cdot \ii{h} -(-1)^{k(m-k)} \i{\iota [A']}\cdot \ii{\tilde{h}} )\right)\mu\\
=& \int_\Sigma \left(\tilde{A}\wedge \curv\i{h'}-(-1)^{k(m-k)}A\wedge\curv\i{\tilde{h}'} +\right.\\
&-\left. \tilde{A}'\wedge \curv\ii{h}+(-1)^{k(m-k)}A'\wedge \curv\ii{\tilde{h}} \right)\,\,\mod \Z\\
=& \int_\Sigma \left(\tilde{A}\wedge \omega'-(-1)^{k(m-k)}A\wedge\tilde{\omega}'-\tilde{A}'\wedge \omega\right.\\
&\left.+(-1)^{k(m-k)}A'\wedge \tilde{\omega} \right)\,\,\mod \Z.\\
\end{align*}
Notice that for the differential forms the homomorphism $I$ is nothing but an inclusion, which has been understood. A swap of the arguments of $\tau_{ur}$ is equivalent to separate swaps of the arguments of each $\sigma$; therefore, antisymmetry ensues.

We will now prove that the definition is independent of the choices we have made. $(h,\tilde{h}),(h',\tilde{h}')$ are defined up to elements in the image of $\kappa\times \kappa$. We are allowed to replace them with:
\begin{equation*}
(h,\tilde{h})+\kappa\times \kappa (z,\tilde{z}), \quad \qquad  (h',\tilde{h}')+ \kappa\times\kappa (z',\tilde{z}'),
\end{equation*}
for some $(z,\tilde{z}),(z',\tilde{z}')\in H_{sc}^{k-1,m-k-1}(M;\mathbb{T})$. However, this brings about the appearance of vanishing extra-terms. For instance, as far as the the first summand is concerned we obtain:
\begin{equation*}
\begin{split}
\ii{\iota[\tilde{A}]}\cdot (\i{h'}+\i{\kappa(z')})&=\ii{\iota[\tilde{A}]}\cdot \i{h'}+ \ii{\iota[\tilde{A}]}\cdot\i{\kappa(z)}\\
&=\ii{\iota[\tilde{A}]}\cdot \i{h'}+ \i{\iota(\tilde{A} \wedge \curv\circ\kappa(z))}\\
&= \ii{\iota[\tilde{A}]}\cdot \i{h'}.
\end{split}
\end{equation*}
where in the last passage we used $\curv \circ \kappa = 0$ and \eqref{eq:2:module_structure}. 
\end{proof}

\subsection*{Pre-symplectic structure on {\small$\frac{H^{k-1,m-k-1}_{sc}(M;\mathbb{R})}{H^{k-1,m-k-1}_{sc,\mathrm{free}}(M;\mathbb{Z})}\times H^{k,m-k}_{sc,\mathrm{free}}(M;\mathbb{Z})$}}
As before, $\frac{H^{k-1,m-k-1}_{sc}(M;\mathbb{R})}{H^{k-1,m-k-1}_{sc,\mathrm{free}}(M;\mathbb{Z})}$ and $H^{k,m-k}_{sc,\mathrm{free}}(M;\mathbb{Z})$ have to be considered together. Let $M$ be an object in $\mathsf{Loc}_m$ and $\Sigma\subseteq M$ a Cauchy surface. We aim at obtaining a pre-symplectic map:

{\scriptsize
\begin{equation*}
\tau_{lr}: \left(\frac{H^{k-1,m-k-1}_{sc}(M;\mathbb{R})}{H^{k-1,m-k-1}_{sc,\free}(M;\mathbb{Z})}\times H^{k,m-k}_{\free}(M;\mathbb{Z}) \right)\times \left( \frac{H^{k-1,m-k-1}_{sc}(M;\mathbb{R})}{H^{k-1,m-k-1}_{sc,\free}(M;\mathbb{Z})}\times H^{k,m-k}_{\free}(M;\mathbb{Z}) \right) \to \mathbb{T}
\end{equation*}
} \vspace{-4mm}
\begin{equation*}
 \left((u,\tilde{u},v,\tilde{v}),(u',\tilde{u}',v',\tilde{v}') \right) \mapsto \tau_{lr}((u,\tilde{u},v,\tilde{v}),(u',\tilde{u}',v',\tilde{v}')).
\end{equation*}

To this end, let $(u,\tilde{u},v,\tilde{v}),(u',\tilde{u}',v',\tilde{v}')\in \frac{H^{k-1,m-k-1}_{sc}(M;\mathbb{R})}{H^{k-1,m-k-1}_{sc,\free}(M;\mathbb{Z})}\times H^{k,m-k}_{\free}(M;\mathbb{Z})$. Realising $\frac{H^{k-1,m-k-1}_{sc}(M;\mathbb{R})}{H^{k-1,m-k-1}_{sc,\free}(M;\mathbb{Z})}\subset \frac{\Omega_{sc}^{k-1,m-k-1}}{\Omega_{sc,\mathbb{Z}}^{k-1,m-k-1}}(M)$ via de Rham theorem, define $([U],[\tilde{U}]),([U'],[\tilde{U}'])\in \frac{\Omega_{sc}^{k-1,m-k-1}}{\Omega_{sc,\mathbb{Z}}^{k-1,m-k-1}}(M)$ by:
\begin{equation*}
\begin{split}
&([U],[\tilde{U}]):=\tilde{\kappa}\times \tilde{\kappa}(u,\tilde{u})\\
&([U'],[\tilde{U}']):=\tilde{\kappa}\times \tilde{\kappa}(u',\tilde{u}').
\end{split}
\end{equation*}
Then, resorting to diagram \eqref{diag:3:constr_pre_sympl_str}, choose $V=*\tilde{V},V'=*\tilde{V}'\in \Omega_{sc,\mathbb{Z}}^{k}\cap *\Omega_{sc,\mathbb{Z}}^{m-k}(M)$ such that:
\begin{equation*}
\begin{split}
&([V],[*^{-1}V])=(v,\tilde{v})\\
&([V'],[*^{-1}V'])=(v',\tilde{v}').
\end{split}
\end{equation*}
This way, with reference to diagram \eqref{diag:3:schematics}, we have assigned a pair in $B \times F$ to a pair in $A\times I$. Making use of the results of the previous section, we define:
\begin{equation}\label{tau:3:lr}
\begin{split}
\tau_{lr}((u,\tilde{u},v,\tilde{v}),(u',\tilde{u}',v',\tilde{v}')):=\int_\Sigma & \left( \tilde{U} \wedge V'-(-1)^{k(m-k)}U \wedge \tilde{V} '\right.\\
& \left. -\tilde{U}'\wedge V +(-1)^{k(m-k)} U'\wedge \tilde{V} \right) \,\,\mod \mathbb{Z}.
\end{split}
\end{equation}
Whereas elements in $B$ are uniquely identified, a freedom is available in the choice of elements in $F$. Nonetheless, the construction is independent of such choices. In fact, even though we can in principle add to $V=*\tilde{V},V'=*\tilde{V}'$ elements in $d\Omega_{sc}^{k-1}\cap *d\Omega_{sc}^{m-k-1}(M)$, the extra-terms vanish upon integration by parts. Consider, for instance, the first summand and set:
$$
\hat{V'}=V'+d\psi,
$$
for some $\psi\in \Omega_{sc}^{k-1}(M)$.
Then:
\begin{align*}
\int_{\Sigma} \tilde{U}\wedge \hat{V'}&=\int_\Sigma \tilde{U}\wedge V' + \int_{\Sigma} \tilde{U}\wedge d\psi\\
&=\int_\Sigma \tilde{U}\wedge V' +(-1)^{m-k} \int_{\Sigma} d\tilde{U}\wedge \psi\\
&=\int_\Sigma \tilde{U}\wedge V' +(-1)^{m-k} \int_{\Sigma} d(\tilde{k}(u))\wedge \psi\\
&=\int_\Sigma \tilde{U}\wedge V'.
\end{align*}
In the last passage we made use of $d\circ \tilde{\kappa}=0$. We end up, therefore, with a well-defined pre-symplectic structure.

\subsection*{Pre-symplectic structure on {\footnotesize $H^{k-1,m-k-1}_{sc}(M;\mathbb{T})\times H^{k,m-k}_{sc}(M;\mathbb{Z})$}}
\begin{prop}
Let $M$ be an object in $\mathsf{Loc}_m$ and $\Sigma\subseteq M$ a Cauchy surface. The bilinear map:
{\footnotesize
\begin{align*}
\tau_{lb}: \left(H^{k-1,m-k-1}_{sc}(M;\mathbb{T})\times H^{k,m-k}_{sc}(M;\mathbb{Z})\right) \times \left( H^{k-1,m-k-1}_{sc}(M;\mathbb{T})\times H^{k,m-k}_{sc}(M;\mathbb{Z}) \right)\to \mathbb{T} 
\end{align*}}
defined by:
\begin{equation}\label{tau:3:lb}
\begin{split}
\tau_{lb}((a,\tilde{a},z,\tilde{z}),(a',\tilde{a}',z',\tilde{z}') ):=\kappa &\left( \ii{\tilde{a}}\smile \i{z'} -(-1)^{k(m-k)} \ii{a} \smile \i{\tilde{z}'} \right.\\
 & \left. - \ii{\tilde{a}'} \smile \i{z} +(-1)^{k(m-k)} \ii{a'} \smile \i{\tilde{z}}  \right) \mu
\end{split}
\end{equation}
for all $(a,\tilde{a},z,\tilde{z}),(a',\tilde{a}',z',\tilde{z}')\in H^{k-1,m-k-1}_{sc}(M;\mathbb{T})\times H^{k,m-k}_{sc}(M;\mathbb{Z})$, establishes a pre-symplectic structure on $H^{k-1,m-k-1}_{sc}(M;\mathbb{T})\times H^{k,m-k}_{sc}(M;\mathbb{Z})$.
\end{prop}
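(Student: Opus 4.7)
The plan is to mimic the construction of $\tau_{ur}$: use the exactness of the central row and column of diagram \eqref{diag:3:constr_pre_sympl_str} to lift the data into $\C^k_{sc}(M;\Z)$ and then evaluate the pre-symplectic form $\sigma$ on these lifts. Concretely, given $(a,\tilde{a},z,\tilde{z})$, the pair $(\kappa a,\kappa\tilde{a})$ already lies in $\C^k_{sc}(M;\Z)$, while the surjectivity of $\cha\times\cha$ lets us choose some $(h,\tilde{h})\in\C^k_{sc}(M;\Z)$ with $\cha\, h=z$, $\cha\,\tilde{h}=\tilde{z}$ (and similarly primed). I would then set
\[
\tau_{lb}\bigl((a,\tilde{a},z,\tilde{z}),(a',\tilde{a}',z',\tilde{z}')\bigr):=\sigma\bigl((\kappa a,\kappa\tilde{a}),(h',\tilde{h}')\bigr)+\sigma\bigl((h,\tilde{h}),(\kappa a',\kappa\tilde{a}')\bigr).
\]

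Next I would unfold the right-hand side. By naturality of $\kappa$ under $I$ and under the pull-back $\iota^\ast_\Sigma$, one has $\ii{\kappa a}=\kappa(\ii{a})$ and $\cha(\i{h'})=\i{z'}$, so the compatibility relation $\kappa u\cdot h=\kappa(u\smile \cha h)$ from \eqref{eq:2:module_structure} yields
\[
\ii{\kappa\tilde{a}}\cdot \i{h'}=\kappa(\ii{\tilde{a}}\smile \i{z'}),\qquad \ii{\kappa a}\cdot \i{\tilde{h}'}=\kappa(\ii{a}\smile \i{\tilde{z}'}),
\]
and analogously for the second summand. Assembling the four contributions reproduces precisely the formula \eqref{tau:3:lb}, in particular showing that the dependence on the chosen lifts $(h,\tilde{h})$ and $(h',\tilde{h}')$ disappears from the final expression.

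For the genuine well-definedness check one must verify that replacing $(h,\tilde{h})$ by $(h,\tilde{h})+\kappa\times\kappa(b,\tilde{b})$ (the only ambiguity permitted by exactness of the central column) leaves $\tau_{lb}$ unchanged. The extra contribution is $\sigma\bigl((\kappa b,\kappa \tilde{b}),(\kappa a',\kappa\tilde{a}')\bigr)$, whose entries all involve products of the form $\kappa u\cdot \kappa v=\kappa(u\smile \cha(\kappa v))=0$ since $\cha\circ\kappa=0$. An identical argument disposes of the ambiguity in $(h',\tilde{h}')$. Bilinearity is immediate from the bilinearity of $\sigma$, and antisymmetry follows by swapping the two $\sigma$-terms in the definition and invoking antisymmetry of $\sigma$.

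The only genuinely delicate point I anticipate is the bookkeeping of the various $I$'s and $\iota^\ast_\Sigma$'s that are hidden inside the notation $\ii{\,\cdot\,}$: one has to be sure that naturality of $\kappa$ with respect to both the homomorphism $I$ on compactly-supported (co)homology and the pull-back to $\Sigma$ is available, so that the identity $\ii{\kappa u}=\kappa\,\ii{u}$ really holds. Granted this, the calculation above is essentially algebraic and the proof closes as in the previous propositions of the section.
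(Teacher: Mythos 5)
Your construction, the unfolding via the module relations, and the antisymmetry argument coincide with the paper's proof: lift $(z,\tilde z)$ and $(z',\tilde z')$ along $\cha\times\cha$, evaluate $\sigma\bigl((\kappa a,\kappa\tilde a),(h',\tilde h')\bigr)+\sigma\bigl((h,\tilde h),(\kappa a',\kappa\tilde a')\bigr)$, and use $\kappa u\cdot h=\kappa(u\smile \cha h)$ to arrive at \eqref{tau:3:lb}. The genuine gap is in the paragraph you present as the well-definedness check. The preimages of $(z,\tilde z)$ under $\cha\times\cha$ are fixed up to the kernel of $\cha\times\cha$, which by exactness of the central \emph{column} of diagram \eqref{diag:3:constr_pre_sympl_str} is the image of $\iota\times\iota$ (the topological trivialization), not the image of $\kappa\times\kappa$; the image of $\kappa\times\kappa$ is the kernel of $\curv\times\curv$ (central row), so the modification you test is not the relevant ambiguity. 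Worse, the vanishing you invoke rests on $\cha\circ\kappa=0$, which is false: commutativity of the lower-left square gives $\cha\circ\kappa=i\circ\beta$, which is nonzero exactly when torsion is present — indeed the torsion pre-symplectic form $\tau_d$ in this very section is built on $\cha\,\kappa z=i\beta z\neq 0$. The identities that do vanish are $\curv\circ\kappa=0$ and $\cha\circ\iota=0$. The correct check, as in the paper, replaces $(h',\tilde h')$ by $(h'+\iota[A'],\tilde h'+\iota[\tilde A'])$ (and likewise for the unprimed pair) and kills the cross terms $\kappa\tilde a\cdot\iota[A']$ through the module relation $h\cdot\iota[A']=(-1)^{k}\iota[\curv h\wedge A']$ combined with $\curv(\kappa\tilde a)=0$.

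That said, your earlier remark — that after applying \eqref{eq:2:module_structure} the two $\sigma$-terms only involve $\cha\, h'=z'$, $\cha\,\tilde h'=\tilde z'$ (and, in the second summand, $\cha\, h=z$, $\cha\,\tilde h=\tilde z$) — already shows the final expression is independent of the chosen lifts, so well-definedness can be salvaged by promoting that observation to the actual argument and discarding the faulty paragraph; as written, however, the check addresses the wrong ambiguity and uses an identity that fails in the presence of torsion.
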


\begin{proof}
In view of the surjectivity of $\cha \times \cha$ in diagram \eqref{diag:3:constr_pre_sympl_str}, it is possible to find $(h,\tilde{h}),(h',\tilde{h}')\in \mathfrak{C}^k(M;\mathbb{Z})$ such that:
\begin{equation}\label{eq:3:passage1}
\begin{split}
&(z,\tilde{z})=\cha\times \cha(h,\tilde{h})\\
&(z',\tilde{z}')=\cha\times \cha(h',\tilde{h}').
\end{split}
\end{equation}
Then define:
\begin{align*}
&\tau_{lb}((a,\tilde{a},z,\tilde{z}),(a',\tilde{a}',z',\tilde{z}')):= \sigma((\kappa a, \kappa \tilde{a}),(h',\tilde{h}')) + \sigma((h,\tilde{h}),(\kappa a', \kappa \tilde{a}'))\\
=& \left[ \ii{\kappa \tilde{a}} \cdot \i{h'}-(-1)^{k(m-k)}\ii{\kappa a} \cdot \i{\tilde{h}'} + \ii{\tilde{h}} \cdot \i{\kappa a'} -(-1)^{k(m-k)}\ii{h}\cdot \i{\kappa \tilde{a}'}      \right]\mu\\
=& \left[ {\kappa(\ii{\tilde{a}} \smile \i{\cha h'} )} -(-1)^{k(m-k)}{\kappa ( \ii{a} \smile \i{\cha \tilde{h}'}	)}\right.\\
&\left. -( {\kappa(\ii{\tilde{a}'} \smile \i{\cha h})} -(-1)^{k(m-k)} {\kappa( \ii{a'} \smile \i{\cha \tilde{h}})} \right]\mu\\
=& {\kappa ( \ii{\tilde{a}}\smile \i{z'} -(-1)^{k(m-k)} \ii{a} \smile \i{\tilde{z}'} - \ii{\tilde{a}'} \smile \i{z} +(-1)^{k(m-k)} \ii{a'} \smile \i{\tilde{z}}  )}\mu.
\end{align*}

A swap of the arguments of $\tau$ is equivalent to separate swaps of the arguments of each $\sigma$;  therefore, antisymmetry ensues as a consequence of the antisymmetry of $\sigma$.

As far as the independence of the construction on the made choices is concerned, observe that the differential characters $(h,\tilde{h}),(h',\tilde{h}')$ given by \eqref{eq:3:passage1} are fixed up to elements in the image of $\iota\times \iota$. We are allowed to replace them with:
\begin{equation*}
(h+\iota[A], \tilde{h}+\iota[\tilde{A}]), \qquad (h'+\iota[A'], \tilde{h}'+\iota[\tilde{A}']),
\end{equation*}
for some $([A],[\tilde{A}]),([A'],[\tilde{A}'])\in \mathfrak{T}^k_{sc}(M;\mathbb{Z})$.
Once more, the extra-terms are vanishing. Consider, for example, the first summand:
\begin{equation*}
\begin{split}
\ii{\kappa \tilde{a}} \cdot \i{(h'+\iota[A'])}=& \ii{\kappa \tilde{a}} \cdot \i{h'} + \ii{\kappa \tilde{a}} \cdot \i{\iota[A']}\\
=& \ii{\kappa \tilde{a}} \cdot \i{h'} + (-1)^{m-k}\i{\iota ( \curv\circ \kappa (\tilde{a}) \wedge A' )}\\
=& \ii{\kappa \tilde{a}} \cdot \i{h'} .
\end{split}
\end{equation*}
In the last passages we resorted to \eqref{eq:2:module_structure} and to $\curv \circ \kappa=0$.
\end{proof}

\subsection*{Pre-symplectic structure on {\small$H_{sc,\tor}^{k,m-k}(M;\mathbb{Z})$}}
Finally, let us consider the torsion subgroup $H^{k,m-k}_{sc,\tor}(M;\Z)$. We procede as follows: Let $(t,\tilde{t}),(t',\tilde{t}')\in H^{k,m-k}_{sc,\tor}(M;\Z)$.  Since the map $\beta \times \beta$ in diagram \eqref{diag:3:schematics} is surjective, we can find $(z,\tilde{z}),(z',\tilde{z}')\in H_{sc}^{k-1,m-k-1}(M;\mathbb{T})$ such that:
\begin{equation*}
\begin{split}
& \beta \times \beta (z,\tilde{z})=(t,\tilde{t})\\
& \beta \times \beta (z',\tilde{z}')=(t',\tilde{t}').
\end{split}
\end{equation*}
Define the bilinear map:
\begin{eqnarray}\label{tau:3:d}
& \tau_{d}: H_{sc,tor}^{k,m-k}(M;\mathbb{Z})\times  H_{sc,tor}^{k,m-k}(M;\mathbb{Z})\to \mathbb{T} &\nonumber\\
&(t,\tilde{t}),(t',\tilde{t}') \mapsto \left( \ii{\kappa \tilde{z}}\cdot \i{\kappa z'}-(-1)^{k(m-k)}\ii{\kappa z}\cdot \i{\kappa \tilde{z}'}\right) \mu &.
\end{eqnarray}
It enjoys the antisymmetry property, as a straightforward consequence of the fact that:
$$
\tau_d ( (t,\tilde{t}),(t',\tilde{t}') )=\sigma(\kappa \times \kappa (t, \tilde{t}),\kappa \times \kappa (t', \tilde{t}')).
$$

Furthermore, the bilinear map $\tau_d$ is well-defined. Firstly, observe that, thanks to \eqref{eq:2:module_structure}, the following equality holds:
\begin{equation}\label{eq:3:ring_struct_31}
\begin{split}
\tau_d ( (t,\tilde{t}),(t',\tilde{t}'))&=\left[ \i{\kappa(\tilde{z}\smile \cha\, \kappa z')}-(-1)^{k(m-k)}\i{\kappa (z\smile \cha\, \kappa \tilde{z}')} \right]\mu\\
&=\left[ (-1)^{m-k}\i{\kappa(\cha \kappa \tilde{z}\smile  z')}-(-1)^{km} \i{\kappa (\cha \kappa z\smile  \tilde{z}')} \right]\mu.
\end{split}
\end{equation}
Secondly, the pre-images $(z,\tilde{z})$ and $(z',\tilde{z}')$ via $\beta \times \beta$ are fixed up to elements in $\frac{H^{k-1,m-k-1}_{sc}(M;\mathbb{R})}{H^{k-1,m-k-1}_{sc,\free}(M;\Z)}$. We can modify them as follows:
\begin{equation*}
(z+\tilde{\iota} u,\tilde{z}+\tilde{\iota} \tilde{u}),\quad\qquad (z'+\tilde{\iota} u',\tilde{z}'+\tilde{\iota} \tilde{u}'),
\end{equation*}
for some $(u,\tilde{u}),(u',\tilde{u}')\in \frac{H^{k-1,m-k-1}_{sc}(M;\mathbb{R})}{H^{k-1,m-k-1}_{sc,\free}(M;\Z)}$.
Furthermore, introducing elements $([A],[\tilde{A}]),([A'],[\tilde{A}'])\in \mathfrak{T}^k_{sc}(M;\Z)$ by:
\begin{equation*}
\begin{split}
&([A],[\tilde{A}]):=\tilde{\kappa}\times \tilde{\kappa}(u,\tilde{u})\\
&([A'],[\tilde{A}]'):=\tilde{\kappa}\times \tilde{\kappa}(u',\tilde{u}'),
\end{split}
\end{equation*}
the commutativity of the upper-left square in diagram \eqref{diag:3:schematics} yields:
\begin{equation*}
\begin{split}
&(\kappa\times\kappa)(\tilde{\iota}\times \tilde{\iota})(u,\tilde{u})=(\iota \times \iota)([A],[\tilde{A}]),\\
&(\kappa\times\kappa)(\tilde{\iota}\times \tilde{\iota})(u',\tilde{u}')=(\iota \times \iota)([A'],[\tilde{A}']).
\end{split}
\end{equation*}
Hence, recalling equations \eqref{eq:3:ring_struct_31} and \eqref{eq:2:module_structure}, the extra-terms vanish:
\begin{equation*}
\begin{split}
&\ii{(\kappa z' + \iota [A'])}\cdot \i{(\kappa \tilde{z}+\iota[\tilde{A}])}=\\
=& \ii{\kappa z'} \cdot \i{\kappa \tilde{z}}+\ii{\iota [A']}\cdot \i{\kappa \tilde{z}} + \ii{\kappa z'} \cdot \i{\iota[\tilde{A}]} + \ii{\iota[A']}\cdot \i{\iota[\tilde{A}]}\\
=&\ii{\kappa z'} \cdot \i{\kappa \tilde{z}} + (-1)^k {\kappa (\ii{\cha \iota [A']}\smile \i{\tilde{z}})} \\
&+ {\kappa (\ii{z'} \smile \i{\cha \iota[\tilde{A}]})}+ \ii{\iota[A']}\cdot \i{\iota[\tilde{A}]}\\
=&\ii{\kappa z'} \cdot \i{\kappa \tilde{z}}  + {\iota(\ii{A'} \wedge \i{\curv\kappa(\alpha \tilde{u} )})}\\
=&\ii{\kappa z'} \cdot \i{\kappa \tilde{z}},
\end{split}
\end{equation*}where we used $\cha\circ \iota=0$ and $\curv \circ \kappa=0$.

\section{Pre-symplectic decomposition}\label{section:3.2_Presymplectic_decomposition}
Before activating the quantization machinery introduced in Section \ref{subsec:2:quantiz}, let us show that the pre-symplectic Abelian group $(\C^k_{sc}(M;\Z),\sigma)$ decomposes as the direct sum of three pre-symplectically orthogonal pre-symplectic Abelian groups. Due to computational convenience, we change perspective and, by means of the isomorphisms provided by \eqref{diag:2:iso_Csc_H_sigma} and \cite[Diagram (5.30)]{BMath}, we perform our construction on the diagram for differential cohomology with compact support over an arbitrary Cauchy surface $\Sigma\subseteq M$. The isomorphisms induce pre-symplectic structures on the Abelian groups of such a diagram in a natural way. Notice that this does not entail any conceptual novelty, as we can pass indifferently from one diagram to the other under the action of the isomorphism maps.\\

In doing this, we need to map the groups in the corners into the central one. To this end, let us recall the following definition.
\begin{defn}[Split short exact sequence]
A short exact sequence in the category of Abelian groups $\mathsf{Ab}$
\begin{equation}
\xymatrix{
0 \ar[r] &A \ar[r]^-{f} &B \ar[r]^-{g}& C\ar[r]&0\\
}
\end{equation}
is called \emph{split} if and only if one of the following equivalent statements is fulfilled:
\begin{enumerates}[(i)]
\item \emph{Left split}: there exists a morphism $\xymatrix{B\ar[r]^-{\psi}& A}$ such that $\psi\circ f=\mathrm{Id}_A$;
\item \emph{Right split}: there exists a morphism $\xymatrix{C\ar[r]^-{\phi}& B}$ such that $g\circ \phi=\mathrm{Id}_C$;
\item \emph{Direct sum isomorphism}: there is an isomorphism of short exact sequences
$$
\xymatrix{
0 \ar[r] &A \ar[d] \ar[r]^-f & B \ar[d]^\simeq \ar[r]^-g & C \ar[d] \ar[r]& 0\\
0 \ar[r] &A \ar[r]^-i & A\oplus C \ar[r]^-\pi& C \ar[r]&0
}
$$
where $i:A\to A\oplus C$ is the canonical inclusion and $\pi:A\oplus C\to C$ is the canonical projection.
\end{enumerates}
\end{defn}

\begin{rem}
Observe that the above conditions implicitly affirm that the following decompositions ensue:
\begin{align*}
& B=\mathrm{Im}(\phi)\oplus \mathrm{ker}(g)\\
& B=\mathrm{Im}(f) \oplus \mathrm{ker}(\psi).
\end{align*}

The result concerning the equivalence of the conditions (i)-(iii) is known as \emph{splitting lemma}. We refer the interested reader to \cite{HAT02,ML95} for details and proof.
\end{rem}

All of the sequences in diagram \eqref{diag:3:constr_pre_sympl_str} are split. A proof is given by Becker, Schenkel and Szabo in \cite[Appendix A]{BSS14}, where only slight refinements are necessary.
The splitting maps are, in general, non canonical and they involve arbitrary choices. We know that, in our case, splittings exist; nonetheless, we have no guarantees that it is possible to select them so that they fulfil the further requirement of preserving the pre-symplectic products. In fact, our goal is to obtain the following decomposition:
\begin{equation}
\begin{split}
\left(\mathfrak{C}^k_{sc}(M;\mathbb{Z}), \sigma\right)=&\left( d\Omega_{sc}^{k-1}\cap *d\Omega^{m-k-1}_{sc}, \tau_u \right) \oplus \left( H_{sc,\tor}^{k}(M,\mathbb{Z}),\tau_d \right) \oplus\\
& \oplus \left(\dfrac{H_{sc}^{k-1,m-k-1}(M;\mathbb{R})}{H_{sc,\free}^{k-1,m-k-1}(M;\mathbb{Z})}\oplus H_{sc,\free}^{k,m-k}(M;\mathbb{Z}),\tau_{lr} \right),
\end{split}
\end{equation}
or an equivalent one up to isomorphism. We will show that, if we restrict the topologies of the class of admissible spacetimes, this is always possible.

\begin{prop}\label{prop:3:existence_of_splitting}
Let $M$ be an object in $\mathsf{Loc}_m$ with compact Cauchy surface $\Sigma$. With reference to diagram \eqref{diag:3:splitting}, there exist splitting homomorphisms:
\begin{subequations}\label{eq:3:splitting}
\begin{align}
& a: d\Omega^{k-1,m-k-1}(\Sigma) \to \dfrac{\Omega^{k-1,m-k-1}(\Sigma)}{\Omega_{\mathbb{Z}}^{k-1,m-k-1}(\Sigma)}, \label{eq:3:splitting_a}\\
& b : H^{k,m-k}_{\tor}(\Sigma;\mathbb{Z}) \to H^{k-1,m-k-1}(\Sigma;\mathbb{T}), \label{eq:3:splitting_b}\\
& x : H^{k,m-k}_{\free}(\Sigma;\mathbb{Z}) \to \hat{H} ^{k,m-k}(\Sigma;\mathbb{Z}), \label{eq:3:splitting_x}
\end{align}
\end{subequations}
satisfying the following conditions:
\begin{subequations}\label{eq:3:splitting_condition}
\begin{align}
& \sigma(x(v,\tilde{v}),x(v',\tilde{v}'))=0, \label{eq:3:splitting_condition_xx}\\
& \sigma((\iota\times \iota) a(dA,d\tilde{A}),x(v,\tilde{v}))=0, \label{eq:3:splitting_condition_ax}\\
& \sigma((\kappa\times \kappa) b(t,\tilde{t}), x(v,\tilde{v}))=0 \label{eq:3:splitting_condition_bx},
\end{align}
\end{subequations}
for all $(v,\tilde{v}),(v',\tilde{v}')\in H^{k,m-k}_{\free}(\Sigma;\mathbb{Z})$, $(dA,d\tilde{A})\in d\Omega^{k-1,m-k-1}(\Sigma)$ and $(t,\tilde{t})\in H^{k,m-k}_{\tor}(\Sigma;\mathbb{Z})$.
\end{prop}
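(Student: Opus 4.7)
The existence of unconstrained splittings $a$, $b$, $x_0$ of the three short exact sequences follows from standard arguments, essentially those in \cite[Appendix A]{BSS14}: the splitting $a$ is provided by Hodge theory on a Riemannian Cauchy surface $\Sigma$, since $d$ restricts to an isomorphism between co-exact and exact forms; the splitting $b$ exists because $\ker(\beta\times\beta)=\frac{H^{k-1,m-k-1}(\Sigma;\R)}{H^{k-1,m-k-1}_\free(\Sigma;\Z)}$ is divisible and the target $H^{k,m-k}_\tor(\Sigma;\Z)$ is a finite Abelian group (by compactness of $\Sigma$), so for each torsion generator of order $n$ one may correct an arbitrary $\beta$-preimage by a divisible kernel element to render it $n$-torsion; the splitting $x_0$ exists because $H^{k,m-k}_\free(\Sigma;\Z)$ is finitely generated and free, hence projective.

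A diagram chase in \eqref{diag:3:constr_pre_sympl_str} yields $\ker(q\circ\cha)=\mathrm{Im}(\iota\times\iota)+\mathrm{Im}(\kappa\times\kappa)$, so the affine torsor of splittings of $q\circ\cha$ above $x_0$ is parametrised by pairs of homomorphisms $(\alpha,\gamma)$ via $x=x_0+(\iota\times\iota)\alpha+(\kappa\times\kappa)\gamma$. Combining \eqref{eq:2:module_structure} with $\cha\circ\iota=0$ and $\curv\circ\kappa=0$ gives $\sigma(\mathrm{Im}(\iota\times\iota),\mathrm{Im}(\kappa\times\kappa))=0$; condition \eqref{eq:3:splitting_condition_ax} therefore depends only on the $\alpha$-part of $x$ and \eqref{eq:3:splitting_condition_bx} only on the $\gamma$-part, so the two may be adjusted independently.

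To satisfy \eqref{eq:3:splitting_condition_ax}, I would fix $\alpha$ so that $\curv x(v,\tilde v)$ is the harmonic representative of the de Rham class $(v,\tilde v)$; since $a$ delivers co-exact primitives, Hodge orthogonality between co-exact and harmonic forms on $\Sigma$ forces the integrals defining $\sigma((\iota\times\iota)a(dA,d\tilde A),x(v,\tilde v))$ to vanish. To satisfy \eqref{eq:3:splitting_condition_bx}, note that a $\gamma$-shift changes $\cha\, x(v,\tilde v)$ by the Bockstein $\beta(\gamma(v,\tilde v))\in H^{k,m-k}_\tor$, so these shifts realise all torsion corrections of the characteristic class; the perfect Pontryagin/Poincar\'e pairing between $H^{\ast}_\tor(\Sigma;\Z)$ and $H^{m-1-\ast}(\Sigma;\T)$ on the compact oriented surface $\Sigma$ then lets one pick $\gamma(v,\tilde v)$ cancelling the $\T$-valued obstruction, one generator of $H^{k,m-k}_\free$ at a time.

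The delicate step, and the main obstacle, is condition \eqref{eq:3:splitting_condition_xx}. After the previous adjustments, $\sigma(x(v,\tilde v),x(v',\tilde v'))$ is an antisymmetric bilinear form $\Lambda\colon H^{k,m-k}_\free\otimes H^{k,m-k}_\free\to\T$ to be annihilated without spoiling \eqref{eq:3:splitting_condition_ax} or \eqref{eq:3:splitting_condition_bx}. The residual freedom is precisely to shift $\alpha$ by a \emph{closed}-form cocycle (which preserves the harmonic curvature, hence \eqref{eq:3:splitting_condition_ax}) and to shift $\gamma$ by an element of $\ker\beta$ (which preserves \eqref{eq:3:splitting_condition_bx}). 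Using \eqref{eq:2:module_structure} a direct computation shows that a closed-form shift modifies $\Lambda$ additively by a bilinear expression in the de Rham classes of the shift and in the harmonic representatives of $(v,\tilde v),(v',\tilde v')$, and the non-degeneracy of the Poincar\'e-duality pairing on $H^\ast(\Sigma;\R)$ combined with the divisibility of $\T$ guarantees that this finite linear system in the real-torus-valued unknowns admits a solution killing $\Lambda$, completing the proof.
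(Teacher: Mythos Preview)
Your argument is correct, but it is organised in the opposite order to the paper's and uses different mechanisms at each step. The paper first constructs $x$ directly from a choice of $\Z$-bases of $H^{k}_\free(\Sigma;\Z)$ and $H^{m-k}_\free(\Sigma;\Z)$: it picks arbitrary lifts $h_i,\tilde h_j'$ and then corrects only the $\tilde h_j'$ by torus elements $\iota\tilde\kappa\tilde u_j$, solving for $\tilde u_j$ via the perfect pairing $\langle\,\cdot\,,\,\cdot\,\rangle_f$ to force $\langle I\tilde h_j,h_i\rangle_c=0$; the vanishing of $\sigma$ on the ``diagonal'' generator pairs is automatic from the shape of $\sigma$, so this already gives \eqref{eq:3:splitting_condition_xx}. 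With $x$ fixed, the paper then adjusts $a$ and $b$ (not $x$) by torus-valued corrections $\Delta_a,\Delta_b$, again determined generator-by-generator through $\langle\,\cdot\,,\,\cdot\,\rangle_f$, to obtain \eqref{eq:3:splitting_condition_ax} and \eqref{eq:3:splitting_condition_bx}. Your route instead fixes $a,b$ structurally (Hodge splitting for $a$, divisibility for $b$) and pushes all corrections into $x$: harmonicity of $\curv x$ plus Hodge orthogonality replaces the paper's $\Delta_a$, the torsion linking form on the closed oriented $\Sigma$ replaces the paper's $\Delta_b$, and only then do you kill \eqref{eq:3:splitting_condition_xx} with the residual torus freedom. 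Your approach is more geometric and identifies the obstructions with classical pairings, at the cost of invoking an auxiliary Riemannian metric; the paper's approach is purely algebraic, treats all three conditions with the single tool $\langle\,\cdot\,,\,\cdot\,\rangle_f$, and keeps the corrections to $a,b$ localised---which later pays off in Chapter~\ref{chapter:4} when one needs the splittings in the diagonal form $a\times a$, $b\times b$, $x\times x$ compatible with duality.
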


{\footnotesize
\begin{equation}\label{diag:3:splitting}
\xymatrix{
{\color{white}{0}}& 0 \ar[d] & 0 \ar[d] & 0 \ar[d] & {\color{white}{0}}\\
0\ar[r] & \dfrac{H^{k-1,m-k-1}(\Sigma;\mathbb{R})}{H^{k-1,m-k-1}_{\free}(\Sigma;\mathbb{Z})} \ar[r]^-{\tilde{\kappa}\times \tilde{\kappa}} \ar[d]^-{\tilde{\iota}\times \tilde{\iota}} & \dfrac{\Omega^{k-1,m-k-1}(\Sigma)}{\Omega_{\mathbb{Z}}^{k-1,m-k-1}(\Sigma)} \ar[r]^-{d\times d} \ar[d]^-{\iota \times \iota} & d\Omega^{k-1,m-k-1}(\Sigma) \ar[r]\ar[d]^-{\subseteq \times \subseteq} \ar@/^1em/[l]^{a}  & 0\\
0\ar[r]& H^{k-1,m-k-1}(\Sigma;\mathbb{T}) \ar[r]^-{\kappa \times \kappa} \ar[d]^-{\beta\times\beta}& \hat{H} ^{k,m-k} (\Sigma;\mathbb{Z})  \ar[r]^-{\curv \times\curv} \ar[d]^-{\cha \times \cha} & \Omega^{k,m-k}_{\mathbb{Z}}(\Sigma) \ar[r] \ar[d]^-{([\cdot]\times[\cdot])}& 0\\
0 \ar[r] & H^{k,m-k}_{\tor}(\Sigma;\mathbb{Z})  \ar@/^1em/[u]^{b} \ar[r]^-{i\times i} \ar[d] & H^{k,m-k}(\Sigma;\mathbb{Z}) \ar[r]^-{q\times q}\ar[d] & H^{k,m-k}_{\free}(\Sigma;\mathbb{Z}) \ar[lu]_{x} \ar[r] \ar[d] &0\\
{\color{white}{0}}& 0 & 0 & 0 & {\color{white}{0}}\\
}
\end{equation}
}

\begin{proof}
Being $a$, $b$ and $x$ splitting homomorphisms entails that:
\begin{subequations}\label{eq:3:def_split}
\begin{align}
&(d\times d)\circ a =\text{id}_{d\Omega^{k-1,m-k-1}(\Sigma)}, \label{eq:3:def_split_a}\\
&(\beta\times\beta)\circ b=\text{id}_{H^{k,m-k}_{\tor}(\Sigma;\mathbb{Z})}, \label{eq:3:def_split_b}\\
&([\cdot]\times[\cdot])\circ (\curv\times \curv) \circ x=\text{id}_{H^{k,m-k}_{\free}(\Sigma;\mathbb{Z})} \label{eq:3:def_split_x}.
\end{align}
\end{subequations}
To begin with, let us consider the existence of $x$. The pairing \eqref{pairing:2:duality_cohomology} induces a non-degenerate pairing:
\begin{eqnarray} \label{pair:3:f}
\langle \cdot,\cdot \rangle_f: \dfrac{H^{k-1}(\Sigma;\mathbb{R})}{H_{\free}^{k-1}(\Sigma;\mathbb{Z})}\times H_{\free}^{m-k}(\Sigma;\mathbb{Z})\to \mathbb{T},\nonumber \\
([\theta],\omega)\mapsto \left(\theta \smile \omega \right)\mu.
\end{eqnarray}
Since $H^k_{\free}(\Sigma;\mathbb{Z})$ and $H^{m-k}_{\free}(\Sigma;\mathbb{Z})$ are free Abelian groups, there exist $n,\tilde{n}\in \mathbb{N}$ such that $H_{\free}^k(\Sigma;\mathbb{Z})\simeq \mathbb{Z}^n$ and $H^{m-k}_{\free}(\Sigma;\mathbb{Z})\simeq \mathbb{Z}^{\tilde{n}}$. Pick out sets of generators $\{z_i,i\in\{1,\dots,n\}\}$, $\{\tilde{z}_j, j=\{1,\dots,\tilde{n}\}\}$ for $\mathbb{Z}^n$ and $\mathbb{Z}^{\tilde{n}}$ respectively. With abuse of notation, $z_i$ and $\tilde{z}_j$ will denote also the corresponding generators in the cohomology groups via the isomorphism. Therefore, a set of generators for $H_{\free}^{k,m-k}(\Sigma;\mathbb{Z})$ is given by:
$$
\{(z_i,0),(0,\tilde{z}_j)\,|\,i\in\{1,\dots,n\}, j\in\{1,\dots,\tilde{n}\}\}.
$$
Interpreting $H^\ast_{\free}(\Sigma;\mathbb{Z})$ as a lattice in $H^\ast(\Sigma;\mathbb{R})$ , a basis for the former is also a basis for the latter. Then, consider the isomorphisms $H^{m-k-1}(\Sigma;\mathbb{R})\simeq \mathbb{R}^n$ and $H_c^{k-1}(\Sigma;\mathbb{R})\simeq\mathbb{R}^{\tilde{n}}$ and sort out sets of generators dual to the previous ones with respect to the pairing \eqref{pairing:2:duality_cohomology}, once we have replaced $\T$ and $\Z$ with $\R$:
\begin{align*}
& \{\tilde{r}_i, i\in\{1,\dots,{n}\}\,|\, \tilde{r}_i\smile z_{i'}=\delta_{ii'}\,\, \forall i'\in\{1,\dots,n\} \}, \\
& \{r_j, j\in\{1,\dots,\tilde{n}\}\,|\, r_j\smile \tilde{z}_{j'}=\delta_{jj'}\,\, \forall j' \in\{1,\dots,\tilde{n}\}\}.
\end{align*}
For all $i\in\{1,\dots,n\}$ choose $h_i$ such that $[\curv h_i]=z_i$ and for all $j\in\{1,\dots,\tilde{n}\}$ choose $\tilde{h}'_j$ such that $[\curv \tilde{h}'_j]=\tilde{z}_j$.  This is certainly possible due to the surjectivity of $\curv$ and $[\arg]$.

The next step consists in modifying $\tilde{h}_j'$ so that the new differential character $\tilde{h}_j$ satisfies:
$$
\langle I \tilde{h}_j, h_i\rangle_c=0 	\qquad \forall j\in\{1,\dots,\tilde{n}\}, \forall i \in\{1,\dots,n\}.
$$
For all $j\in\{1,\dots,\tilde{n} \}$ set:
\begin{equation*}
\tilde{h}_j:=\tilde{h}_j'+\iota (\tilde{\kappa} \tilde{u}_j),
\end{equation*}
for some $\tilde{u}_j \in \frac{H^{m-k-1}(\Sigma;\mathbb{R})}{H_{\free}^{m-k-1}(\Sigma;\mathbb{Z})}$ to be determined.
Now:
\begin{equation*}
\langle I\tilde{h}_j,h_i \rangle_c=\langle I\tilde{h}'_j,h_i\rangle_c + \langle I \iota (\tilde{\kappa} \tilde{u}_j), h_i\rangle_c= \langle I \tilde{h}'_j,h_i\rangle_c + \langle I \tilde{u}_j, z_i\rangle_f.
\end{equation*}
Our constraint on $\tilde{u}_j$ takes the form:
$$
\langle I \tilde{u}_j, z_i\rangle_f=-\langle I \tilde{h}'_j,h_i\rangle_c.
$$
Let us introduce an array of real numbers $(c_{ij})\in \mathbb{M}(n,\tilde{n};\mathbb{R})$ defined as follows:
\begin{equation*}
c_{ij}\,\,\mod\Z=-\langle I\tilde{h}'_j,h_i\rangle_c.
\end{equation*}
Furthermore, define:
\begin{equation*}
\tilde{s}_j:=\sum_{i'=1}^n c_{i'j}\tilde{r}_{i'} \in H^{m-k-1}_{c}(\Sigma;\mathbb{R}).
\end{equation*}
We claim that $\widetilde{u}_j$  obtained as the image of $\widetilde{s}_j$ via the quotient map fulfils our requirement. As a matter of fact:
\begin{equation*}
\begin{split}
(\tilde{s}_j\smile z_j)\mu\,\,\mod\Z= & \left(\sum_{i'=1}^n c_{i'j}\tilde{r}_{i'} \smile z_i \right)\mu\,\,\mod\Z= \\
= &\sum_{i'=1}^n c_{i'j} \delta_{i'i} \,\,\mod\Z= c_{ij}\,\,\mod\Z=\\
=& -\langle I\tilde{h}'_j\cdot h_i\rangle_c.
\end{split}
\end{equation*}
We are now in the position to define $x$ through its action on the generators:
\begin{gather*}
x: 
\begin{cases}
(z_i,0)\mapsto (h_i,0) \qquad \forall i\in\{1,\dots,n \},\\
(0,\tilde{z}_j)\mapsto (0,\tilde{h}_j)\qquad \forall j\in\{1, \dots, \tilde{n}\}.
\end{cases}
\end{gather*}
Notice that the homomorphism $x$ splits the map from the central group to the down-right one. This is true by construction when we have $h_i$ and $\tilde{h}_j'$. The replacement of $\tilde{h}_j'$ with $\tilde{h}_j$ does not affect this property.

Let us check that \eqref{eq:3:splitting_condition_xx} holds for generators:
\begin{equation*}
\sigma(x(z_i,0),x(z_{i'},0))=\sigma((h_i,0),(h_{i'},0))=0,
\end{equation*}
\begin{equation*}
\sigma(x(0,\tilde{z}_j),x(0,\tilde{z}_{j'}))=\sigma((0,\tilde{h}_j),(0,\tilde{h}_{j'}))=0,
\end{equation*}
\begin{equation*}
\begin{split}
\sigma((z_i,0),(0,\tilde{z}_j))=&\sigma((h_i,0),(0,\tilde{h}_j))\\
=&0-(-1)^{k(m-k)}\langle I h_i, \tilde{h}_j\rangle_c\\
=&-\langle I \tilde{h}_j, h_i\rangle_c=0,
\end{split}
\end{equation*}
where the first two equalities are true by definition of $\sigma$, whereas the last one is true by construction. Then \eqref{eq:3:splitting_condition_xx} holds for every $(v,\tilde{v}), (v',\tilde{v}')\in H^{k,m-k}_{c,\free}(\Sigma;\mathbb{Z})$: the existence of $x$ is proved.\\


Let us move to $a$. As the upper horizontal sequence is split, we can find a homomorphism 
$$
a':  d\Omega_{c}^{k-1,m-k-1}(\Sigma) \to \dfrac{\Omega_c^{k-1,m-k-1}(\Sigma)}{\Omega_{c,\mathbb{Z}}^{k-1,m-k-1}(\Sigma)}
$$
such that $(d\times d)\circ a'=\mathrm{Id}_{d\Omega_{c}^{k-1,m-k-1}(\Sigma)}$. For the general properties of split sequences, other splittings of the same sequence can be obtained by adding an element of the form:
$$
\Delta_a \in \text{Hom}\left(d\Omega_{c}^{k-1,m-k-1}(\Sigma);\frac{H^{k-1,m-k-1}_{c}(\Sigma;\mathbb{R})}{H^{k-1,m-k-1}_{c,\free}(\Sigma;\mathbb{Z})} \right).$$ 
The strategy is to set:
\begin{equation*}
a:=a'+(\tilde{\kappa}\times \tilde{\kappa})\circ \Delta_a
\end{equation*}
and to search for a $\Delta_a$ such that $a$ fulfils \eqref{eq:3:splitting_condition_ax}. 

Introduce the following notation: for every map $f: X\to Y \times Z$ denote by $f_1:=\pi_1\circ f$ and $f_2:=\pi_2\circ f$ the maps to the first and to the second component of the image respectively, where $\pi_1: Y\times Z\to Y$ and $\pi_2:Y\times Z\to Z$ are the canonical projections.

The constraint imposed by \eqref{eq:3:splitting_condition_ax} reads:
\begin{equation*}
\begin{split}
&\sigma ((\iota\times\iota)a(dA,d\tilde{A}),x(v,\tilde{v}))=\\
&= \sigma( (\iota\times\iota)a'(dA,d\tilde{A}),x(v,\tilde{v})) +\sigma((\iota\times\iota)(\tilde{\kappa}\times \tilde{\kappa})\Delta_a(dA,d\tilde{A}) , x(v,\tilde{v}))\\
&= \left[\I{\iota a'_2(dA,d\tilde{A})}\cdot x_1(v,\tilde{v})-(-1)^{k(m-k)}\I{\iota a_1'(dA,d\tilde{A})}\cdot x_2(v,\tilde{v})+\right. \\
&\left. + \I{\iota \tilde{\kappa}\Delta_{a2}(dA,d\tilde{A})}\cdot x_1(v,\tilde{v}) -(-1)^{k(m-k)}\I{\iota \tilde{\kappa}\Delta_{a1}(dA,d\tilde{A})}\cdot x_2(v,\tilde{v})\right]\mu\\
&= \left[\I{\iota a'_2(dA,d\tilde{A})}\cdot x_1(v,\tilde{v})-(-1)^{k(m-k)}\I{\iota a_1'(dA,d\tilde{A})}\cdot x_2(v,\tilde{v})\right]\mu\\
& + \left[\I{\Delta_{a2}(dA,d\tilde{A})}\smile  [\curv x_1(v,\tilde{v})] +\right.\\
&\left. -(-1)^{k(m-k)} \I{\Delta_{a1}(dA,d\tilde{A})}\smile [\curv x_2(v,\tilde{v})]\right]\mu\\
& \overset{\downarrow}{=}0.
\end{split}
\end{equation*}
The symbol $\overset{\downarrow}{=}$ means that we are \emph{imposing} the left-hand side and the right-hand side to be equal.
Let us evaluate the second entry of the pre-symplectic form on the generators alternately.

$(z_i,0),\, i\in\{1,\dots,n\}:$
\begin{equation*}
\begin{split}
&\left[\I{\iota a'_2(dA,d\tilde{A})}\cdot x_1(z_i,0) -(-1)^{k(m-k)}\I{\iota a_1'(dA,d\tilde{A})}\cdot x_2(z_i,0)\right]\mu+\\
&+ 	\left[\I{\Delta_{a2}(dA,d\tilde{A})}\smile  [\curv x_1(z_i,0)]\right.+\\
&\left. -(-1)^{k(m-k)} \I{\Delta_{a1}(dA,d\tilde{A})}\smile [\curv x_2(z_i,0)]\right]\mu=\\
&=\left[ 	\I{\iota a'_2(dA,d\tilde{A})}\cdot h_i\right]\mu + \left[\I{\Delta_{a2}(dA,d\tilde{A})}\smile  z_i \right]\mu.
\end{split}
\end{equation*}

$(0,\tilde{z}_j),\,j\in\{1,\dots,\tilde{n}\}:$
\begin{equation*}
\begin{split}
&\left[ \I{\iota a'_2(dA,d\tilde{A})}\cdot x_1(0,\tilde{z}_j) -(-1)^{k(m-k)}\I{\iota a_1'(dA,d\tilde{A})}\cdot x_2(0,\tilde{z}_j)\right]\mu+\\
& + \left[\I{\Delta_{a2}(dA,d\tilde{A})}\smile  [\curv x_1(0,\tilde{z}_j)]+\right. \\
& \left. -(-1)^{k(m-k)} \I{\Delta_{a1}(dA,d\tilde{A})}\smile [\curv x_2(0,\tilde{z}_j)]\right]\mu\\
&= \left[-(-1)^{k(m-k)}\I{\iota a_1'(dA,d\tilde{A})}\cdot \tilde{h}_j\right]\mu -(-1)^{k(m-k)} \left[\I{\Delta_{a1}(dA,d\tilde{A})}\smile \tilde{z}_j\right]\mu.\\
\end{split}
\end{equation*}
Hence we have to impose:
\begin{align*}
&\langle I\Delta_{a1}(dA,d\tilde{A}),\tilde{z}_j\rangle_f=-\langle I\iota a_1'(dA,d\tilde{A}),\tilde{h}_j \rangle_c \qquad \forall j\in\{1,\dots,\tilde{n}\},\\
&\langle I\Delta_{a2}(dA,d\tilde{A}),z_i \rangle_f=-\langle I\iota a'_2(dA,d\tilde{A}), h_i \rangle_c \qquad \forall i\in\{1,\dots,n\}.
\end{align*}

Making use of the above ingredients, we construct the homomorphism:
\begin{gather*}
\Delta_a: d\Omega^{k-1,m-k-1}(\Sigma)\to \frac{H^{k-1,m-k-1}(\Sigma;\mathbb{R})}{H^{k-1,m-k-1}_{\free}(\Sigma;\mathbb{Z})} \\
(dA,d\tilde{A})\mapsto 
\begin{pmatrix}
(z_i,0)\mapsto -\langle I\iota a'_2(dA,d\tilde{A}), h_i \rangle_c\\
(0,\tilde{z}_j)\mapsto -\langle I\iota a_1'(dA,d\tilde{A}),\tilde{h}_j \rangle_c
\end{pmatrix}.
\end{gather*}
The right-hand side is an element in $\left(H_{\free}^{k(m-k)}(\Sigma;\mathbb{Z}) \right)^\star\simeq  \frac{H^{k-1,m-k-1}(\Sigma;\mathbb{R})}{H_{\free}^{k-1,m-k-1}(\Sigma;\mathbb{Z})}$. The map $\Delta_a$ is consequently well-defined. The existence of $a$ ensues.\\


As far as the existence of the homomorphism $b$ is concerned, let $b':H_{\tor}^{k,m-k}(\Sigma;\mathbb{Z})\to H^{k-1,m-k-1}(\Sigma;\mathbb{T})$ be an arbitrary splitting homomorphism, whose existence is given by the left column in \eqref{diag:3:splitting} being split. Define:
\begin{equation*}
b:=b'+\tilde{\iota}\times\tilde{\iota}\circ\Delta_b,
\end{equation*}
where $\Delta_b:H^{k,m-k}_{\tor}(\Sigma;\mathbb{Z})\to \frac{H^{k-1,m-k-1}(\Sigma;\mathbb{R})}{H^{k-1,m-k-1}_{\free}(\Sigma;\mathbb{Z})}$ is a homomorphism to be determined. As above, our aim is to construct $\Delta_b$ such that $b$ fulfils \eqref{eq:3:splitting_condition_bx}. Observe that $b$ is automatically a splitting homomorphism, in view of $\beta\circ \tilde{\iota}=0$.\\
The constraint \eqref{eq:3:splitting_condition_bx} writes:
\begin{equation*}
\begin{split}
&\sigma  ( (\kappa\times \kappa)b(t,\tilde{t}), x(v,\tilde{v})) =\\
&= \sigma( (\kappa\times\kappa)b'(t,\tilde{t}),x(v,\tilde{v})) + \sigma( (\kappa\times\kappa)(\tilde{\iota}\times\tilde{\iota})\Delta_b(t,\tilde{t}), x(v,\tilde{v}) ) \\
&= \left[ \I{\kappa b'_2(t,\tilde{t})}\cdot x_1(v,\tilde{v}) -(-1)^{k(m-k)}\I{\kappa b'_1(t,\tilde{t})}\cdot x_2(v,\tilde{v}) \right.\\
&+\left. \I{\kappa (\tilde{\iota} \Delta_{b2}(t,\tilde{t}))}\cdot x_1(v,\tilde{v}) -(-1)^{k(m-k)}\I{\kappa(\tilde{\iota} \Delta_{b1}(t,\tilde{t}))}\cdot x_2(v,\tilde{v})\right]\mu\\
&= \left[ \I{\kappa b'_2(t,\tilde{t})}\cdot x_1(v,\tilde{v}) -(-1)^{k(m-k)}\I{\kappa b'_1(t,\tilde{t}}\cdot x_2(v,\tilde{v})\right]\mu\\
&+ \left[\I{\Delta_{b2}(t,\tilde{t})}\smile [\curv x_1(v,\tilde{v})]+\right.\\
&\left.  -(-1)^{k(m-k)} \I{\Delta_{b1}(t,\tilde{t})}\smile [\curv x_2(v,\tilde{v})]\right]\mu\overset{\downarrow}{=}0.\\
\end{split}
\end{equation*}
Replacing $(v,\tilde{v})$ with the generators alternately, we get:
\begin{align*}
& \langle I\Delta_{b2}(t,\tilde{t}),z_i\rangle_f=- \langle I\kappa b_2'(t,\tilde{t}),h_i \rangle_c\qquad \forall i\in\{1,\dots,n\},\\
& \langle I\Delta_{b1}(t,\tilde{t}),\tilde{z}_j\rangle_f= -\langle I\kappa b_1'(t,\tilde{t}),\tilde{h}_j\rangle_c\qquad \forall j\in\{1,\dots,\tilde{n}\}.
\end{align*}
As a consequence, we define $\Delta_b$ by:
\begin{gather*}
\Delta_b:H^{k,m-k}_{\tor}(\Sigma;\mathbb{Z})\to \frac{H^{k-1,m-k-1}(\Sigma;\mathbb{R})}{H^{k-1,m-k-1}_{\free}(\Sigma;\mathbb{Z})}\\
(t,\tilde{t})\mapsto
\begin{pmatrix}
(z_i,0)\mapsto -\langle I \kappa b_2'(t,\tilde{t}), h_i \rangle_c\\
(0,\tilde{z}_j)\mapsto -\langle \kappa I b_1'(t,\tilde{t}),\tilde{h}_j \rangle_c
\end{pmatrix}.
\end{gather*}
The right-hand side is an element in $\left( H_{\free}^{k(m-k)}(\Sigma;\mathbb{Z}) \right)^\star\simeq  \frac{H^{k-1,m-k-1}(\Sigma;\mathbb{Z})}{H^{k-1,m-k-1}_{\free}(\Sigma;\mathbb{Z})}$. $\Delta_b$ is well-defined and the existence of $b$ then ensues.
\end{proof}

\begin{rem}
Denoting by $\mathfrak{U}^k(\Sigma;\Z)$ the pushout of the diagram:
\begin{equation}
\xymatrix{
\dfrac{H^{k-1,m-k-1}(\Sigma;\mathbb{R})}{H_{\free}^{k-1,m-k-1}(\Sigma;\mathbb{Z})} \ar[r]^-{\tilde{\kappa}\times \tilde{\kappa}} \ar[d]_-{\tilde{\iota}\times\tilde{\iota}} & \dfrac{\Omega^{k-1,m-k-1}(\Sigma)}{\Omega_\Z^{k-1,m-k-1}(\Sigma)}\\
 H^{k-1,m-k-1}(\Sigma;\T)  &
}
\end{equation}
the sequence:
\begin{equation}\label{sequence:3:x}
\xymatrix{
0 \ar[r] & \mathfrak{U}^k(\Sigma;\Z) \ar[r] & \hat{H}^{k,m-k}(\Sigma;\Z) \ar[r] & H^{k,m-k}_\free(\Sigma;\Z) \ar[r] & 0
}
\end{equation}
is short exact by construction and split on account of the properties of diagram \eqref{diag:3:splitting}. Notice that the sequence split by $x$ in the previous proposition is exactly \eqref{sequence:3:x}.\\
\end{rem}

\begin{thm} \label{thm:3:existence_decomposition_from_splittings}
Under the assumptions of Proposition \ref{prop:3:existence_of_splitting}, $a$, $b$ and $x$ yield a decomposition of $\hat{H}^{k,m-k}(\Sigma;\Z)$ as the direct sum of pre-symplectic Abelian groups, which is orthogonal with respect to the pre-symplectic structure:
\begin{equation}
\begin{split}
\left(\hat{H} ^{k,m-k}(\Sigma;\mathbb{Z}),\sigma \right)\simeq& \left( \frac{H^{k-1,m-k-1}(\Sigma;\mathbb{R})}{H^{k-1,m-k-1}_{\free}(\Sigma;\mathbb{Z})} \oplus H^{k,m-k}_{\free}(\Sigma;\mathbb{Z}), \tau_{lr} \right) \oplus \\
& \oplus \left(H^{k,m-k}_{\tor}(\Sigma;\mathbb{Z}),\tau_d \right) \oplus  \left(d\Omega^{k-1,m-k-1}(\Sigma),\tau_u \right).
\end{split}
\end{equation} 
\end{thm}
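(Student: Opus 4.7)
The plan is to use the three splittings $a,b,x$ from Proposition \ref{prop:3:existence_of_splitting} to define four explicit embeddings of the summands into $\hat H^{k,m-k}(\Sigma;\Z)$, to establish via a diagram chase that these yield a direct-sum decomposition of Abelian groups, and then to verify that the pre-symplectic form $\sigma$ restricts correctly on each summand and vanishes across distinct summands.

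The four embeddings are: $(\iota\times\iota)\circ a$ for $d\Omega^{k-1,m-k-1}(\Sigma)$; $(\kappa\times\kappa)\circ b$ for $H^{k,m-k}_{\tor}(\Sigma;\Z)$; $x$ for $H^{k,m-k}_{\free}(\Sigma;\Z)$; and $(\kappa\tilde\iota)\times(\kappa\tilde\iota)$ for $H^{k-1,m-k-1}(\Sigma;\R)/H^{k-1,m-k-1}_{\free}(\Sigma;\Z)$, the last coinciding with $(\iota\tilde\kappa)\times(\iota\tilde\kappa)$ by commutativity of the upper-left square in \eqref{diag:3:splitting}. Every $h\in\hat H^{k,m-k}(\Sigma;\Z)$ decomposes uniquely along these images via a four-step chase: set $v:=(q\times q)(\cha\times\cha)(h)$ and subtract $x(v)$; by \eqref{eq:3:def_split_x}, the residue has vanishing image in $H^{k,m-k}_{\free}$, so its curvature is exact in $d\Omega^{k-1,m-k-1}$; subtracting $(\iota\times\iota)\circ a$ of this curvature produces an element with zero curvature, which by exactness of the middle row equals $(\kappa\times\kappa)(u)$ for a unique $u\in H^{k-1,m-k-1}(\Sigma;\T)$; setting $t:=(\beta\times\beta)(u)$ and applying \eqref{eq:3:def_split_b} splits $u = b(t) + (\tilde\iota\times\tilde\iota)(w)$ uniquely. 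Injectivity of each embedding and triviality of all six pairwise intersections of the images follow from the exactness of the rows and columns of \eqref{diag:3:splitting} together with the splitting relations \eqref{eq:3:def_split}.

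It remains to verify compatibility with the pre-symplectic structures. The restrictions of $\sigma$ to the images of $(\iota\times\iota)\circ a$ and of $(\kappa\times\kappa)\circ b$ reproduce $\tau_u$ and $\tau_d$ respectively, essentially by construction, since these were defined in Section \ref{section:3.1_presymplectic_strunctures} by lifting $\sigma$ along precisely the same embeddings. The orthogonality conditions \eqref{eq:3:splitting_condition_ax} and \eqref{eq:3:splitting_condition_bx} kill the cross-pairings of $\mathrm{Im}(x)$ with $\mathrm{Im}((\iota\times\iota)\circ a)$ and with $\mathrm{Im}((\kappa\times\kappa)\circ b)$. The remaining cross-pairings reduce, via the module-structure identities \eqref{eq:2:module_structure}, to expressions containing either $\cha\circ\iota = 0$ or $\curv\circ\kappa = 0$ and hence vanish; the orthogonality between $\mathrm{Im}((\kappa\tilde\iota)\times(\kappa\tilde\iota))$ and $\mathrm{Im}((\kappa\times\kappa)\circ b)$ uses additionally that a $\tilde\iota$-class cups trivially modulo $\Z$ with an integer torsion class (since any real lift cups to a torsion element of real cohomology, which is necessarily zero).

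The main obstacle I anticipate is the combined summand $(H^{k-1,m-k-1}(\R)/H^{k-1,m-k-1}_{\free} \oplus H^{k,m-k}_{\free},\tau_{lr})$, which is a genuinely non-trivial $2\times 2$ pre-symplectic block rather than a pair of isotropic pieces. The self-pairings of $\mathrm{Im}(x)$ and of $\mathrm{Im}((\kappa\tilde\iota)\times(\kappa\tilde\iota))$ vanish — the former by \eqref{eq:3:splitting_condition_xx}, the latter via $\beta\circ\tilde\iota = 0$ — but the mixed pairing between the two must reproduce the cross-terms of $\tau_{lr}$ defined in \eqref{tau:3:lr}. Verifying this identity amounts to interpreting $(\curv\times\curv)\circ x(v,\tilde v)$ as a choice of representatives $(V,\tilde V)\in\Omega^{k,m-k}_{\Z}(\Sigma)$ for $(v,\tilde v)$ and observing that any torsion contribution to $(\cha\times\cha)\circ x$ pairs trivially with $(\tilde\iota\tilde w,\tilde\iota w)$ by the same real-vs-torsion argument; these conspire to turn $\sigma$ restricted to this summand into the integral expression in \eqref{tau:3:lr}, with the dual-basis construction of $x$ in Proposition \ref{prop:3:existence_of_splitting} ensuring no unwanted residues.
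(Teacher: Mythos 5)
Your proposal is correct and takes essentially the same route as the paper: embed the four summands via $(\iota\times\iota)\circ a$, $(\kappa\times\kappa)\circ b$, $x$ and $\iota\tilde{\kappa}=\kappa\tilde{\iota}$, check that $\sigma$ pulls back to $\tau_u$, $\tau_d$ and $\tau_{lr}$ (the $\tau_{lr}$ cross-term arising from reading $(\curv\times\curv)\circ x$ as integral representatives of $(v,\tilde{v})$), and derive orthogonality from \eqref{eq:3:splitting_condition} together with $\cha\circ\iota=0$ and $\curv\circ\kappa=0$. The only deviations are minor: you make the group-theoretic direct sum explicit through a diagram chase (the paper leaves it to the splitting lemma), you invoke well-definedness of $\tau_u$, $\tau_d$ instead of recomputing $D^\ast\sigma$ and $T^\ast\sigma$, and you treat the torsion-versus-$\tilde{\iota}$ cross-pairing by the real-lift-kills-torsion argument where the paper uses $\cha\circ\kappa\circ\tilde{\iota}=0$ — all of which are valid.
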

\begin{proof}
Consider the homomorphism:
\begin{eqnarray*}
& F: \frac{H^{k-1,m-k-1}(\Sigma;\mathbb{R})}{H^{k-1,m-k-1}_{\free}(\Sigma;\mathbb{Z})} \times H^{k,m-k}_{\free}(\Sigma;\mathbb{Z}) \to \hat{H} ^{k,m-k}(\Sigma;\mathbb{Z})	\nonumber &\\
& (u,\tilde{u},v,\tilde{v})\mapsto (\iota (\tilde{\kappa} u),\iota (\tilde{\kappa}\tilde{u}))+x(v,\tilde{v}).&
\end{eqnarray*}
$F$ preserves the pre-symplectic structure. In fact:
\begin{equation}
\begin{split}
&\sigma\left(F(u,\tilde{u},v,\tilde{v}),F(u',\tilde{u}',v',\tilde{v}')\right)\\
&= \sigma \left( (\iota (\tilde{\kappa} u),\iota (\tilde{\kappa}\tilde{u})),(\iota (\tilde{\kappa} u'),\iota (\tilde{\kappa}\tilde{u}'))\right)+ \sigma \left(x(v,\tilde{v}),x(v',\tilde{v}')\right)+\\
& + \sigma \left( (\iota (\tilde{\kappa} u),\iota (\tilde{\kappa}\tilde{u})), x(v', \tilde{v}') \right)+\sigma\left( x(v,\tilde{v}), (\iota (\tilde{\kappa} u'),\iota (\tilde{\kappa}\tilde{u}'))\right)=\\
& = 0 + 0 + \left[\I{\iota (\tilde{\kappa} \tilde{u})} \cdot x_1(v',\tilde{v}')-(-1)^{k(m-k)}\I{\iota(\tilde{\kappa} u)}\cdot x_2(v', \tilde{v}')+\right.\\
&+\left. \I{x_2(v,\tilde{v})}\cdot \iota (\tilde{\kappa}u')-(-1)^{k(m-k)} \I{x_1(v,\tilde{v})}\cdot \iota (\tilde{\kappa}\tilde{u}')\right]\mu=\\
&= \iota \left[\tilde{\kappa}\tilde{u}\wedge \curv x_1(v',\tilde{v}')-(-1)^{k(m-k)} \tilde{\kappa}u \wedge \curv x_2(v',\tilde{v}')\right.\\ 
&\left. -\tilde{\kappa}\tilde{u}'\wedge \curv x_1(v,\tilde{v})+(-1)^{k(m-k)}\tilde{\kappa} u'\wedge \curv x_1 (v',\tilde{v}')\right]\mu\\
\end{split}
\end{equation}
where in the first passage the first term is vanishing as one observes that $\iota\circ\tilde{\kappa}=\kappa\circ\tilde{\iota}$ and $\cha\circ\iota=0$, whilst the second one is zero in view of \eqref{eq:3:splitting_condition_xx}. It is a straightforward check that what we obtained is nothing but $\tau_{lr}$:
$$
F^*\sigma=\tau_{lr}.
$$

As far as $d\Omega^{k-1,m-k-1}(\Sigma)$ is concerned, define the homomorphism of Abelian groups:
\begin{eqnarray*}
&D: d\Omega^{k-1,m-k-1}(\Sigma)\to \hat{H} ^{k,m-k}(\Sigma;\mathbb{Z}) \nonumber &\\
&(dA,d\tilde{A})\mapsto (\iota\times \iota) a (dA,d\tilde{A}).&
\end{eqnarray*}
Let us show that it preserves the pre-symplectic structure:
\begin{equation*}
\begin{split}
\sigma &(D(dA,d\tilde{A}),D(dA',d\tilde{A}'))= \sigma ((\iota\times \iota) a (dA,d\tilde{A}), (\iota\times \iota) a(dA', d\tilde{A}'))\\
=&\left[ \I{\iota a_2 (dA,d\tilde{A})}\cdot \iota a_1(dA',d\tilde{A}') - (-1)^{k(m-k)} \I{\iota a_1(dA,d\tilde{A})}\cdot \iota a_2(dA',d\tilde{A}')\right]\mu\\
=& \iota \left[ a_2 (dA,d\tilde{A})\wedge \curv \iota a_1(dA',d\tilde{A}')+\right.\\
&\left. - (-1)^{k(m-k)} a_1(dA,d\tilde{A})\wedge \curv \iota a_2(dA',d\tilde{A}')\right]\mu\\
=& \iota \left[ a_2 (dA,d\tilde{A})\wedge \subseteq d a_1(dA',d\tilde{A}')+\right.\\
& \left.- (-1)^{k(m-k)} a_1(dA,d\tilde{A})\wedge \subseteq d a_2(dA',d\tilde{A}')\right]\mu\\
=& \iota \left[ a_2 (dA,d\tilde{A})\wedge dA' - (-1)^{k(m-k)} a_1(dA,d\tilde{A})\wedge d\tilde{A}'\right]\mu\\
=&\tau_u ((dA,d\tilde{A}),(dA',d\tilde{A}') ).
\end{split}
\end{equation*}
We have obtained that:
$$
D^\ast \sigma=\tau_{u}.
$$

The reasoning goes the same way for the torsion group. Define:
\begin{eqnarray*}
& T:H^{k,m-k}_{\tor}(\Sigma;\mathbb{Z}) \to \hat{H} ^{k,m-k}(\Sigma;\mathbb{Z})\nonumber &\\
&(t,	\tilde{t})\mapsto (\kappa\circ\kappa) b (t,\tilde{t}).&
\end{eqnarray*}
It satisfies:
$$
T^\ast\sigma=\tau_d.
$$
In fact:
\begin{equation*}
\begin{split}
&\sigma\left(T(t,\tilde{t}),T(t',\tilde{t}')\right)= \\
&= \left[\kappa b_2(t,\tilde{t})\cdot \kappa b_1(t',\tilde{t}')-(-1)^{k(m-k)}\kappa b_1(t,\tilde{t})\cdot\kappa b_2 (t',\tilde{t}')\right]\mu\\
&= \kappa\left[ b_2(t,\tilde{t})\smile \cha\kappa b_1(t',\tilde{t}')-(-1)^{k(m-k)} b_1(t,\tilde{t})\smile \cha\kappa b_2 (t',\tilde{t}')\right] \mu\\
&= \kappa\left[ b_2(t,\tilde{t})\smile i \beta b_1(t',\tilde{t}')-(-1)^{k(m-k)} b_1(t,\tilde{t})\smile i \beta b_2 (t',\tilde{t}')\right] \mu\\
&= \kappa\left[ b_2(t,\tilde{t})\smile i t' -(-1)^{k(m-k)} b_1(t,\tilde{t})\smile i \tilde{t}' \right] \mu\\
&= \tau_d \left((t,\tilde{t}),(t',\tilde{t}') \right).
\end{split}
\end{equation*}

Let us move to orthogonality. Recalling that $\curv \circ \kappa=0$, consider $d\Omega^{k-1,m-k-1}(\Sigma)$ and $H^{k,m-k}_{\tor}(\Sigma;\mathbb{Z})$:
\begin{equation*}
\begin{split}
&\sigma( D(dA,d\tilde{A}), T(t,\tilde{t}) )= \sigma ( (\iota\times \iota) a (dA,d\tilde{A}) ,(\kappa\circ\kappa) b (t,\tilde{t}))\\
&= \left[\I{\iota a_2(dA,d\tilde{A})}\cdot \kappa b_1(t,\tilde{t}) -(-1)^{k(m-k)}\I{\iota a_1(dA,d\tilde{A})}\cdot \kappa b_2(t,\tilde{t})\right]\mu\\
&= \iota \left[a_2(dA,d\tilde{A})\wedge \curv \kappa b_1(t,\tilde{t}) -(-1)^{k(m-k)}a_1(dA,d\tilde{A})\wedge \curv \kappa b_2(t,\tilde{t})\right]\mu\\
&=0.
\end{split}
\end{equation*}
Secondly, consider $d\Omega^{k-1,m-k-1}(\Sigma)$ and $\frac{H^{k-1,m-k-1}(\Sigma;\mathbb{R})}{H^{k-1,m-k-1}_{\free}(\Sigma;\mathbb{Z})} \oplus H^{k,m-k}_{\free}(\Sigma;\mathbb{Z})$:
\begin{equation*}
\begin{split}
&\sigma( D(dA,d\tilde{A}),F(u,\tilde{u},v,\tilde{v}) )=\sigma((\iota\times\iota)a(dA,d\tilde{A}),(\iota\tilde{\kappa}u,\iota \tilde{\kappa})+x(v,\tilde{v}))\\
&= \sigma((\iota\times\iota)a(dA,d\tilde{A}), (\iota\tilde{\kappa}u,\iota \tilde{\kappa})) +\sigma((\iota\times\iota)a(dA,d\tilde{A}), x(v,\tilde{v}))\\
&= 0 + \sigma((\iota\times\iota)a(dA,d\tilde{A}), x(v,\tilde{v}))=0,
\end{split}
\end{equation*}
where in the last passage we made use of \eqref{eq:3:splitting_condition_ax}.

Lastly, let us prove that $H^{k,m-k}_{\tor}(\Sigma;\mathbb{Z})$ and $\frac{H^{k-1,m-k-1}(\Sigma;\mathbb{R})}{H^{k-1,m-k-1}_{\free}(\Sigma;\mathbb{Z})} \oplus H^{k,m-k}_{\free}(\Sigma;\mathbb{Z})$ are orthogonal:
\begin{equation*}
\begin{split}
&\sigma(T(t,\tilde{t}),F(u,\tilde{u},v,\tilde{v}))= \sigma((\kappa\circ\kappa) b (t,\tilde{t}), (\iota\tilde{\kappa}u,\iota \tilde{\kappa})+x(v,\tilde{v}))\\
=& \sigma((\kappa\circ\kappa) b (t,\tilde{t}),(\iota\tilde{\kappa}u,\iota \tilde{\kappa}))+\sigma((\kappa\circ\kappa) b (t,\tilde{t}),x(v,\tilde{v}))\\
=& 0 + \sigma((\kappa\circ\kappa) b (t,\tilde{t}),x(v,\tilde{v}))=0,
\end{split}
\end{equation*}
where in the last passage we made use of \eqref{eq:3:splitting_condition_bx}. The proof is therefore completed.
\end{proof}

The quantization procedure can now be put forward, with a further remarkable feature: as the space of observables is presented as the orthogonal direct sum of pre-symplectic Abelian groups, the Weyl algebra on it is isomorphic to the tensor product of the Weyl algebras on the three direct summands. As a matter of fact, every $h\in \hat{H}^{k,m-k}(\Sigma;\Z)$ can be presented as $a_h \oplus b_h \oplus c_h$, with $a_h \in H^{k,m-k}_{\tor}(\Sigma;\mathbb{Z})$, $b_h\in \frac{H^{k-1,m-k-1}(\Sigma;\mathbb{R})}{H^{k-1,m-k-1}_{\free}(\Sigma;\mathbb{Z})} \oplus H^{k,m-k}_{\free}(\Sigma;\mathbb{Z})$ and $c_h\in d\Omega^{k-1,m-k-1}(\Sigma)$. 

Define:
\begin{equation*}
\begin{split}
\tilde{\mathcal{A}}_0:=&\mathcal{A}_0\left(H^{k,m-k}_{\tor}(\Sigma;\mathbb{Z}),\tau_d\right)\\  \otimes& \mathcal{A}_0\left(\frac{H^{k-1,m-k-1}(\Sigma;\mathbb{R})}{H^{k-1,m-k-1}_{\free}(\Sigma;\mathbb{Z})} \oplus H^{k,m-k}_{\free}(\Sigma;\mathbb{Z}),\tau_{lr}\right)\\
 \otimes &\mathcal{A}_0\left(d\Omega^{k-1,m-k-1}(\Sigma),\tau_u\right)
\end{split}
\end{equation*}

Consider the following homomorphism, defined by its action on the generators:
\begin{eqnarray}
& \mathcal{I}: \mathcal{A}_0\left( \hat{H}^{k,m-k}(\Sigma;\Z),\sigma\right) \to \tilde{\mathcal{A}}_0 &\nonumber\\
&\mathcal{W}(h)\mapsto \mathcal{W}(a_h)\otimes \mathcal{W}(b_h) \otimes \mathcal{W}(c_h).&
\end{eqnarray}
Set on $\tilde{\mathcal{A}}_0$ the algebra structure given by:
\begin{multline*}
[\mathcal{W}(a_h)\otimes \mathcal{W}(b_h) \otimes \mathcal{W}(c_h)][\mathcal{W}(a_{h'})\otimes \mathcal{W}(b_{h'}) \otimes \mathcal{W}(c_{h'})]:=\\
:=\mathcal{W}(a_h)\mathcal{W}(a_{h'})\otimes \mathcal{W}(b_h)\mathcal{W}(b_{h'}) \otimes \mathcal{W}(c_h)\mathcal{W}(c_{h'}).
\end{multline*}
We have then on the one hand:
\begin{equation*}
\begin{split}
\mathcal{I}\left(\mathcal{W}(h)\mathcal{W}(h')\right)=\mathcal{W}(a_h)\mathcal{W}(a_{h'})\otimes \mathcal{W}(b_h)\mathcal{W}(b_{h'}) \otimes \mathcal{W}(c_h)\mathcal{W}(c_{h'}).
\end{split}
\end{equation*}
On the other hand:
\begin{equation*}
\begin{split}
&\exp(2\pi i \sigma(h,h'))\mathcal{I}\mathcal{W}(h+h')=\\
&=\exp[2\pi i(\tau_d (a_h,a_{h'})+\tau_{lr}(b_h,b_{h'})+\tau_{u}(c_h,c_{h'}))]\,\mathcal{I}\mathcal{W}(h+h')\\
&=\exp(2\pi i \tau_d (a_h,a_{h'}))\,\mathcal{W}(a_h+a_{h'})\otimes\exp(2\pi i \tau_{lr} (b_h,b_{h'}))\,\mathcal{W}(b_h+b_{h'}) \otimes\\
&\otimes \exp(2\pi i \tau_u (c_h,c_{h'}))\,\mathcal{W}(c_h+c_{h'})
\end{split}
\end{equation*}
$\mathcal{I}$ is an algebra homomorphism, it preserves the unit and it is by construction injective and surjective, i.e.\ $\mathcal{I}$ is an algebra isomorphism. Therefore, we can choose $\tilde{\mathcal{A}}_0$ as the starting point for the quantization.\\

This property turns out to be extremely useful in the construction of states; in fact, it allows us to exhibit a state as a product of three contributions, each one corresponding to a tensor factor of the relevant $C^\ast$-algebra, as arising from the decomposition discussed above.
\section{The 2-dimensional case}\label{section:3.3_2dim_case}
The most basic, yet very interesting, case is $M=\R\times \S ^1$ with $k=1$. Far from being trivial, this example can be carried out in a fully explicit way. Whereas Schubert proved \cite{SHU13} the non-existence of ground states for 2-dimensional scalar massless Klein-Gordon field in general, a noteworthy feature of our model is that it admits a 2-dimensional Hadamard state which is ground.

\subsection{Elements in $\C^k_{sc}(\R\times \S^1;\Z)$} \label{section:3:differential_characters}
Let us endow $\R\times \S^1$ with the ultrastatic metric $g=-dt\otimes dt + d\theta\otimes d\theta$. Observing that $\S^1\times\{0\}$ is a compact Cauchy surface, the restriction to spacelike-compact supports is always automatically implemented. Furthermore, $\R\times \S^1$ is homotopy equivalent to $\S^1$. 
Since both homology and cohomology are invariants of homotopy type, we have $H_\ast(\mathbb{S}^1\times \mathbb{R};\mathbb{Z})\simeq H_\ast(\mathbb{S}^1;\mathbb{Z})$ and $H^\ast(\mathbb{S}^1\times \mathbb{R};\mathbb{Z})\cong  H^\ast(\mathbb{S}^1;\mathbb{Z})$. Explicitly:
\begin{align*}
& H^0(\mathbb{S}^1\times \mathbb{R};\mathbb{Z})=\mathbb{Z},\\
& H^1(\mathbb{S}^1\times \mathbb{R};\mathbb{Z})=\mathbb{Z},\\
& H^i(\mathbb{S}^1\times \mathbb{R};\mathbb{Z})=0 \quad \text{for}\,\,i>1,
\end{align*}
and
\begin{align*}
& H_0(\mathbb{S}^1\times \R;\mathbb{Z})=\mathbb{Z},\\
& H_1(\mathbb{S}^1\times \R;\mathbb{Z})=\mathbb{Z},\\
& H_i(\mathbb{S}^1\times R;\mathbb{Z})=0\quad \text{for}\,i>1.
\end{align*}
The central element in the last row is then $H^{1,1}(\R\times \S^1;\Z)\simeq \Z\times\Z$. The torsion subgroup is vanishing; consequently, $\frac{H^{0,0}_{sc}(\R\times \S^1;\R)}{H^{0,0}_{sc}(\R\times \S^1;\Z)}\simeq H^{0,0}_{sc}(\R\times \S^1;\T)\simeq (H^{1,1}(\R\times \S^1;\Z))^\star\simeq\T\times \T$.\\

Diagram \eqref{diag:3:constr_pre_sympl_str} becomes:
{\footnotesize
\begin{equation}\label{diag:3:R_S1}
\xymatrix{
& 0 \ar[d] & 0 \ar[d] & 0 \ar[d] &\\
0\ar[r] & \mathbb{T}\times \mathbb{T} \ar[r] \ar[d]& \mathfrak{T}^1(\R\times \S^1;\Z) \ar[r] \ar[d] & dC^\infty(M)\cap *dC^\infty(M) \ar[d] \ar[r]& 0\\
0\ar[r]& \mathbb{T}\times \mathbb{T} \ar[r] \ar[d]& \C^1(\R\times \S^1;\Z) \ar[r] \ar[d] & \Omega^{1}_{\mathbb{Z}}\cap *\Omega^{1}_{\mathbb{Z}}(M) \ar[r] \ar[d]& 0\\
0 \ar[r] &0 \ar[r] \ar[d] &\mathbb{Z}\times \mathbb{Z} \ar[r]\ar[d] & \mathbb{Z}\times \mathbb{Z} \ar[r] \ar[d] &0\\
& 0 & 0 & 0 & \\
}
\end{equation}
} Exploiting the Fourier decomposition, we claim that a generic element $(h,\tilde{h})\in \C^1(\R\times \S^1;\Z)$ can be presented as:
{\small \begin{equation}\label{eq:3:h}
\begin{split}
h(t,\theta)=h_0 +n\theta-\Bigg(\tilde{n}t-& \sum_{k=1}^\infty \left\{ - b_k ^{-} \cos[2\pi k(t-\theta)] -b_k^{+} \cos[2\pi k(t+\theta)]\right. \\
& \left. +a_k^{-} \sin[2\pi k(t-\theta)] + a_k^{+} \sin[2\pi k(t+\theta)]\right\} \,\,\mod\mathbb{Z} \Bigg) 
\end{split}
\end{equation}
\begin{equation}\label{eq:3:h_tilde}
\begin{split}
\tilde{h}(t,\theta)=\tilde{h}_0+\tilde{n}\theta-\Bigg(nt- & \sum_{k=1}^\infty  \left\{ - b_k ^{-} \cos[2\pi k(t-\theta)] +b_k^{+} \cos[2\pi k(t+\theta)]\right. \\
& \left. + a_k^{-} \sin[2\pi k(t-\theta)] - a_k^{+} \sin[2\pi k(t+\theta)]\right\}\,\,\mod \mathbb{Z}\Bigg) 
\end{split}
\end{equation}}
where $(h_0,\tilde{h}_0) \in \mathbb{T}\times \T$, $(n,\tilde{n})\in\mathbb{Z}\times \Z$. Basic calculus guarantees that the above expression for $h$ is the most general one. Observe that all of the series of the present section are uniformly convergent, as they are the Fourier expansions of smooth functions. The coefficients of $\tilde{h}$ are arranged so that $\curv h=\ast \curv \tilde{h}$. In fact:
\begin{align}\label{eq:3:curvh}
\curv h &=   +nd\theta-\tilde{n}dt\nonumber\\
+2\pi \sum_{k=1}^\infty & \left\{ + b_k ^{-} k \sin[2\pi k(t-\theta)] \,dt -b_k^{-} k\sin[2\pi k(t-\theta)]\,d\theta \right. \nonumber\\
&+ b_k ^{+} k \sin[2\pi k(t+\theta)] \,dt +b_k^{+} k \sin[2\pi k(t+\theta)]\,d\theta\nonumber\\
&+a_k^{-} k \cos[2\pi k(t-\theta)]\,dt - a_k^{-} k \cos[ 2\pi k(t-\theta)]\,d\theta\nonumber\\
&\left. +a_k^{+} k \cos[2\pi k(t+\theta)]\,dt + a_k^{+} k \cos[2\pi k(t+\theta)]\,d\theta \right\},
\end{align}
\begin{equation}\label{eq:3:curvh_tilde}
\begin{split}
\curv \tilde{h}&=   +\tilde{n}d\theta-ndt\\
+ 2\pi\sum_{k=1}^\infty & \left\{ + b_k ^{-} k \sin[2\pi k(t-\theta)] \,dt -b_k^{-} k \sin[2\pi k(t-\theta)]\,d\theta \right. \\
& - b_k ^{+} k \sin[2\pi k(t+\theta)] \,dt -b_k^{+} k \sin[2\pi k(t+\theta)]\,d\theta\\
& +a_k^{-} k \cos[2\pi k(t-\theta)]\,dt - a_k^{-} k \cos[2\pi k(t-\theta)]\,d\theta\\
& \left. -a_k^{+} k \cos[2\pi k(t+\theta)]\,dt - a_k^{+} k \cos[2\pi k(t+\theta)]\,d\theta \right\}.
\end{split}
\end{equation}
As far as the Hodge dual of the generators of $\Omega^1(	\R\times \S^1)$ is concerned, we have:
\begin{gather*}
\psi_t\, dt\wedge * d\theta+\psi_\theta\, d\theta\wedge * d\theta=\psi \wedge * d\theta=\langle \psi,d\theta\rangle * 1= g^{\theta\theta}\psi_\theta \,dt\wedge d\theta\\
\Rightarrow * d\theta=-dt,
\end{gather*}
and
\begin{gather*}
\psi_t \,dt\wedge *dt+\psi_\theta\, d\theta \wedge *dt=\psi \wedge *dt= \langle \psi,dt \rangle *1= g^{tt}\psi_t\,dt\wedge d\theta\\ 
\Rightarrow  *dt=-d	\theta,
\end{gather*}
where $\psi=\psi_t\, dt + \psi_\theta\, d\theta$ is a generic one-form.
A mere substitution in \eqref{eq:3:curvh} and \eqref{eq:3:curvh_tilde} yields the desired result.\\

From its expression, we can identify the different contributions to $(h,\tilde{h})$. First of all, $(h_0,\tilde{h}_0)$ lies in the image of $\kappa\times \kappa$, as it vanishes under the action of $\curv \times \curv$.  Defining: 
\begin{equation}\label{eq:3:h_phi}
\begin{split}
\varphi:=  \sum_{k=1}^\infty  \left\{ - b_k ^{-} \cos[2\pi k(t-\theta)] \right.&-b_k^{+} \cos[2\pi k(t+\theta)] \\
&\left. +a_k^{-} \sin[2\pi k(t-\theta)] + a_k^{+} \sin[2\pi k(t+\theta)] \right\},
\end{split}
\end{equation}
\begin{equation}\label{eq:3:h_phi_tilde}
\begin{split}
\tilde{\varphi}:= \sum_{k=1}^\infty  \left\{ - b_k ^{-} \cos[2\pi k(t-\theta)]\right. &+b_k^{+} \cos[2\pi k(t+\theta)]\\
&\left. + a_k^{-} \sin[2\pi k(t-\theta)] - a_k^{+} \sin[2\pi k(t+\theta)] \right\},
\end{split}
\end{equation}
we realize that:
\begin{equation*}
\curv\times \curv (h,\tilde{h})=(nd\,\theta-\tilde{n}\,dt + d\varphi, \tilde{n}\,d\theta-n\,dt + d\tilde{\varphi}).
\end{equation*}
Furthermore, the characteristic class is:
$$
\cha\times\cha (h,\tilde{h})=(n,\tilde{n}).
$$
At last, $d\varphi=*d\tilde{\varphi}$ is the contribution due to $dC^\infty\cap *dC^\infty (\R\times \S^1)$. Notice that the topologically trivial contribution to $(h,  \tilde{h})$ is given by $(h_0 + \varphi \,\,\mod\Z, \tilde{h}_0 +\tilde{\varphi}\,\,\mod\Z)\in \mathfrak{T}^1(\R\times \S^1;\Z)$. 

\subsection{Computation of the pre-symplectic product}\label{subsection:3.3.2_comput_presympl_prod}
Since for the case in hand the homomorphism $I$ is the identity map, the expression for $\sigma$ simplifies considerably. We can write:
\begin{eqnarray*}
&\sigma: \C^1(M;\Z)\times \C^1(M;\Z)\to \T&\\
&\left((h,\tilde{h}),(h',\tilde{h}') \right)\mapsto\left(\iota^\ast_{\S^1\times\{0\}}\tilde{h}\cdot \iota^\ast_{\S^1\times\{0\}} h'+ \iota^\ast_{\S^1\times\{0\}} h\cdot \iota^\ast_{\S^1\times\{0\}} \tilde{h}'  \right)\mu. &
\end{eqnarray*}
With the above notation, we define:
\begin{align*}
\i{h}:=\iota^*_{\mathbb{S}^1\times \{0\}}h(\theta)=h_0+n\theta+ \sum_{k=1}^\infty &\left\{ (- b_k ^{-} -b_k^{+}) \cos(2\pi k\theta) \right.\\
&\left. +(a_k^{+} -a_k^{-}) \sin(2\pi k\theta)\right\}  \,\, \mod\Z,\\
\i{\tilde{h}}:=\iota^*_{\mathbb{S}^1\times \{0\}}\tilde{h}(\theta)=\tilde{h}_0+\tilde{n}\theta+ \sum_{k=1}^\infty &\left\{ (b_k ^{+} -b_k^{-}) \cos(2\pi k\theta)\right.\\
&\left. +(-a_k^{+} -a_k^{-}) \sin(2\pi k\theta)\right\}  \,\,\mod\Z,\\
\i{h'}:=\iota^*_{\mathbb{S}^1\times \{0\}} h'(\theta)=h_0'+n'\theta+ \sum_{k=1}^\infty &\left\{ (- b_k^{-\prime} -b_k^{+\prime}) \cos(2\pi k\theta) \right.\\
&\left. +(a_k^{+\prime} -a_k^{-\prime}) \sin(2\pi k\theta)\right\} \,\,\mod\Z,\\
\i{\tilde{h}'}:=\iota^*_{\mathbb{S}^1\times \{0\}}\tilde{h}'(\theta)=\tilde{h}_0'+\tilde{n}'\theta+ \sum_{k=1}^\infty &\left\{ (b_k ^{+\prime} -b_k^{-\prime}) \cos(2\pi k\theta) \right.\\
&\left. +(-a_k^{+\prime} -a_k^{-\prime}) \sin(2\pi k\theta)\right\}  \,\,\mod\Z.\\
\end{align*}
The same way, we set:
{\small \begin{equation}\label{eq:3:phi_S1}
\begin{split}
&\phi:=\varphi(0,\theta)=\sum_{k=1}^\infty \left\{ (- b_k ^{-} -b_k^{+}) \cos(2\pi k\theta) +(a_k^{+} -a_k^{-}) \sin(2\pi k\theta)\right\},\\
&\tilde{\phi}:=\tilde{\varphi}(0,\theta)=\sum_{k=1}^\infty \left\{ (b_k ^{+} -b_k^{-}) \cos(2\pi k\theta) +(-a_k^{+} -a_k^{-}) \sin(2\pi k\theta)\right\},\\
&\phi ':=\varphi'(0,\theta)=\sum_{k=1}^\infty \left\{ (- b_k^{-\prime} -b_k^{+\prime}) \cos(2\pi k\theta) +(a_k^{+\prime} -a_k^{-\prime}) \sin(2\pi k\theta) \right\},\\
&\tilde{\phi}':= \tilde{\varphi}(0,\theta)=\sum_{k=1}^\infty \left\{ (b_k ^{+\prime} -b_k^{-\prime}) \cos(2\pi k\theta) +(-a_k^{+\prime} -a_k^{-\prime}) \sin(2\pi k\theta)\right\} .
\end{split}
\end{equation}}

Let us compute the two summands contributing to $\sigma$ separately. The first one reads:
{\small \begin{equation*}
\begin{split}
\i{\tilde{h}}\cdot \i{h'}&= \kappa(\tilde{h}_0)\cdot \i{h'} + \i{\tilde{h}}\cdot \kappa(h'_0)+ \tilde{n}\theta \cdot n'\theta + \tilde{n}\theta \cdot \iota (\phi') + \iota(\tilde{\phi})\cdot n'\theta + \iota(\tilde{\phi})\cdot \iota(\phi')\\
&=\kappa (\tilde{h}_0\smile \i{\cha {h'}})-\kappa (h_0'\smile \i{\cha \tilde{h}})-\iota (\phi'\tilde{n}\,d\theta)+\iota(\tilde{\phi}n'\,d\theta)+\iota(\tilde{\phi}\,d\phi ')\\
&=\kappa(n'\tilde{h}_0-\tilde{n}h_0')+\iota(-\tilde{n}\phi'\,d\theta+n'\tilde{\phi}\,d\theta+\tilde{\phi}\,d\phi')
\end{split}
\end{equation*}}
As far as the second one is concerned, we have:
{\small \begin{equation*}
\begin{split}
\i{h}\cdot \i{\tilde{h}'}&=(\kappa(h_0)+n\theta +\iota(\phi))\cdot (\kappa(\tilde{h}_0')+\tilde{n}'\theta+\iota(\tilde{\phi}'))\\
&=\kappa(h_0)\cdot \i{\tilde{h}'}-\kappa(\tilde{h_0}')\cdot \i{h}+n\theta\cdot \tilde{n}'\theta+n\theta \cdot \iota(\tilde{\phi}')+\iota(\phi)\cdot \tilde{n}'\theta+\iota(\phi)\cdot \iota(\tilde{\phi}')\\
&=\kappa(h_0\smile \i{\cha\tilde{h}'})-\kappa(\tilde{h}_0\smile \i{\cha h})+\iota(\tilde{\phi}'n\,d\theta)+\iota(\phi\tilde{n}'\,d\theta)+\iota(\phi \,d\tilde{\phi}') \\
&=\kappa(\tilde{n}'h_0-n\tilde{h}_0')+\iota(-n\tilde{\phi}'\,d\theta+\tilde{n}'\phi \,d\theta+\phi \,d\tilde{\phi}').
\end{split}
\end{equation*}}
What is left is just the evaluation on $\mu$, which, in our case, amounts to integrating over $\mathbb{S}^1$. Recall that, according to the adopted convention, the radius of $\mathbb{S}^1$ equals $(2\pi)^{-1}$. Retaining the non-vanishing terms only, we obtain:
{\small \begin{align}\label{eq:3:pre-symplectic_product_R_S1}
\sigma((h,\tilde{h}),(h',\tilde{h}')) &=\tilde{n}'h_0-n\tilde{h}_0'+n'\tilde{h}_0-\tilde{n}h_0' + \int_{\mathbb{S}^1} \phi \,d\tilde{\phi}' +\tilde{\phi}\,d\phi' \nonumber\\
 &= \tilde{n}'h_0-n\tilde{h}_0'+n'\tilde{h}_0-\tilde{n}h_0'+\nonumber\\
 &+ \bigg( \sum_{k=1}^\infty \pi k \left\{ (b_k ^{+} -b_k^{-})(a_k^{+\prime} -a_k^{-\prime})-(a_k^{+} +a_k^{-})(b_k^{-\prime} +b_k^{+\prime}) \right.\nonumber\\
 &\left. + (b_k ^{-}+b_k^{+})(a_k^{+\prime} + a_k^{-\prime}) - (a_k^{+} -a_k^{-})(b_k ^{+\prime} -b_k^{-\prime}) \right\}\bigg) \,\,\mod\Z\nonumber \\
 &= \tilde{n}'h_0-n\tilde{h}_0'+n'\tilde{h}_0-\tilde{n}h_0'+\nonumber \\
 & + \bigg( \sum_{k=1}^\infty 2\pi k \left\{ b_k^+ a_k^{+\prime} + b_k^- a_k^{-\prime} - a_k^+ b_k^{+\prime} - a_k^- b_k^{-\prime} \right\} \bigg)\,\,\mod\Z.
\end{align}}

\subsection{Hadamard State for $dC^\infty\cap \ast dC^\infty(\R\times \S^1)$} \label{section:3:Hadamard_2D}
We are now in the position to construct a quasifree Hadamard state. As we have no torsion, we have to build two states only, one for the topological sector $(\T\times\T)\times (\Z\times\Z)$ and one for the upper-right corner of diagram \eqref{diag:3:R_S1}, $dC^\infty \cap \ast dC^\infty(\R\times \S^1)$. Let us start from the latter.\\

In pursuing our goal, we will follow the prescriptions given in \cite[Section 4.3]{WAL94}, where a general procedure is described to build quasifree states in the case of stationary spacetimes. Since both in the case in hand and in the following examples we work with ultrastatic metrics, stationarity is always granted. The same procedure is discussed, from a more abstract and mathematical perspective, in \cite{ARA71a,ARA71b}. We also refer the reader to \cite{BR03} for additional comments and insights.\\

Let $\mathcal{S}:=dC^\infty \cap \ast dC^\infty(\R\times \S^1)$ and let $\mathcal{S}_\mathbb{C}=\mathcal{S}\otimes_\R \mathbb{C}$ be its complexification. Introduce the map:
\begin{eqnarray*}
&\tau_\mathbb{C}: \mathcal{S}_\mathbb{C} \times \mathcal{S}_\mathbb{C} \to \mathbb{C}\nonumber &\\
&(d\varphi_1, d\varphi_2)\mapsto \tau_u^\mathbb{C}(\overline{d\varphi_1},d\varphi_2), &
\end{eqnarray*}
where the overline denotes the complex conjugation. $\tau_u^\mathbb{C}$ is the sesquilinear extension of $\tau_u^\R$, i.e.\ the map \eqref{tau:3:u} where the $\mod\Z$ operation has been waived.  An arbitrary element $d\varphi\in\mathcal{S}$ can be decomposed as:
{\small \begin{align*}
d\varphi=\sum_{k=1}^\infty  & \bigg[ \left( a_k^{(0)} e^{2\pi ik(t+\theta)} + b_k^{(0)} e^{-2\pi ik(t+\theta)} + c_k^{(0)} e^{2\pi ik(t-\theta)} +d_k^{(0)} e^{-2\pi ik(t-\theta)}\right) dt \\
& + \left( a_k^{(1)} e^{2\pi ik(t+\theta)} + b_k^{(1)} e^{-2\pi ik(t+\theta)} + c_k^{(1)} e^{2\pi ik(t-\theta)} +d_k^{(1)} e^{-2\pi ik(t-\theta)} \right) d\theta \bigg].
\end{align*}}
Notice that not all the coefficients are independent, as a consequence of the fact that $d\varphi$ must be in $\mathcal{S}$. We define the map:
\begin{eqnarray}
& \mathcal{P}: \mathcal{S} \to \mathcal{S}_\mathbb{C}&\nonumber \\
& d\varphi \mapsto d\varphi^+, &
\end{eqnarray}
where:
{\small 
$$
d\varphi^+:=  \sum_{k=1}^\infty e^{-2\pi ikt}\bigg[ \left( b_k^{(0)} e^{-2\pi ik\theta} + d_k^{(0)} e^{2\pi ik\theta} \right) dt + \left( b_k^{(1)} e^{-2\pi ik\theta} + d_k^{(1)} e^{2\pi ik\theta} \right) d\theta \bigg].
$$}
The map $\mathcal{P}$ can be physically interpreted as the projection on the ``positive frequency" subspace of $\mathcal{S}_\mathbb{C}$. In fact, it suppresses the coefficients of $e^{ikt}$ while it preserves the part in $e^{-ikt}$. The map $\mu$ defined by:
\begin{eqnarray*}
& \mu: \mathcal{S}\times \mathcal{S}\to \mathbb{R}&\\
& (d\varphi_1, d\varphi_2)\mapsto \Im\, \tau_\mathbb{C}({\mathcal{P} d\varphi_1},\mathcal{P}d\varphi_2)&
\end{eqnarray*}
is a positive, symmetric bilinear map and it fulfils the inequality:
\begin{equation}\label{eq:3:constraint_mu}
\dfrac{1}{2}|\tau_u^\R(d\varphi_1,d\varphi_2)|\leq |\mu(d\varphi_1,d\varphi_1)|^{\frac{1}{2}} |\mu(d\varphi_2,d\varphi_2)|^{\frac{1}{2}}.
\end{equation}
In view of \cite[Equation (3.24)]{KW91}, the quasifree state $\omega_\mu$ associated with $\mu$ is obtained by the prescription:
\begin{eqnarray*}
&\omega_\mu : \mathfrak{CCR}(\mathcal{S},\tau_u) \to \mathbb{C}\nonumber &\\
&\mathcal{W}(d\varphi)\mapsto \omega_\mu(\mathcal{W}(d\varphi))= e^{-\frac{1}{2}\mu(d\varphi ,d\varphi)}. &
\end{eqnarray*}

Let us carry out the computation explicitly. With the notation adopted in Section \ref{section:3:differential_characters}, replace the trigonometric functions in \eqref{eq:3:h_phi} and \eqref{eq:3:h_phi_tilde} with their expression in terms of complex exponentials:
\begin{equation*}
\begin{split}
\varphi= \sum_{k=1}^\infty \dfrac{1}{2} \bigg\{ & e^{2\pi ikt} \bigg[ \left(-b_k^+ - i a_k^+ \right) e^{2\pi ik\theta} + \left( -b_k^- -i a_k^- \right) e^{-2\pi ik\theta}\bigg] \\
 +  & e^{-2\pi ikt} \Big[ \left( -b_k^- +ia_k^- \right)e^{2\pi ik\theta} + \left(-b_k^+ + ia_k^+ \right) e^{-2\pi ik\theta}\bigg] \bigg\},
\end{split}
\end{equation*}
\begin{equation*}
\begin{split}
\tilde{\varphi}= \sum_{k=1}^\infty \dfrac{1}{2} \bigg\{ & e^{2\pi ikt} \bigg[ \left(b_k^+ + i a_k^+ \right) e^{2\pi ik\theta} + \left( -b_k^- -i a_k^- \right) e^{-2\pi ik\theta}\bigg] \\
 +  & e^{-2\pi ikt} \Big[ \left( -b_k^- +ia_k^- \right)e^{2\pi ik\theta} + \left(b_k^+ - ia_k^+ \right) e^{-2\pi ik\theta}\bigg] \bigg\}.
\end{split}
\end{equation*}
The exterior derivative yields:
\begin{equation*}
\begin{split}
d\varphi= \sum_{k=1}^\infty i\pi k \bigg\{ 
& e^{2\pi ikt} \left[ \left(-b_k^- -i a_k^- \right) e^{-2\pi ik\theta} + \left( -b_k^+ -ia_k^+ \right)e^{2\pi ik\theta }\right] dt \\
&+ e^{2\pi ikt} \left[ \left( b_k^-+ia_k^- \right) e^{-2\pi ik\theta}+ \left(-b_k^+ -ia_k^+ \right) e^{2\pi ik\theta} \right] d\theta\\
&+ e^{-2\pi ikt} \left[ \left(b_k^+ -i a_k^+ \right) e^{-2\pi ik\theta} + \left( b_k^- -ia_k^- \right)e^{2\pi ik\theta} \right] dt \\
&+ e^{-2\pi ikt}  \left[ \left( b_k^+-ia_k^+ \right) e^{-2\pi ik\theta}+ \left(-b_k^- +ia_k^- \right) e^{2\pi ik\theta} \right] d\theta \bigg\}, 
\end{split}
\end{equation*}
\begin{equation*}
\begin{split}
d\tilde{\varphi}= \sum_{k=1}^\infty i\pi k \bigg\{ 
& e^{2\pi ikt} \left[ \left(-b_k^- -i a_k^- \right) e^{-2\pi ik\theta} + \left( +b_k^+ +ia_k^+ \right)e^{2\pi ik\theta }\right] dt \\
&+ e^{2\pi ikt} \left[ \left( b_k^-+ia_k^- \right) e^{-2\pi ik\theta}+ \left(b_k^+ +ia_k^+ \right) e^{2\pi ik\theta} \right] d\theta\\
&+ e^{-2\pi ikt} \left[ \left(-b_k^+ +i a_k^+ \right) e^{-2\pi ik\theta} + \left( b_k^- -ia_k^- \right)e^{2\pi ik\theta} \right] dt \\
&+ e^{-2\pi ikt}  \left[ \left( -b_k^++ia_k^+ \right) e^{-2\pi ik\theta}+ \left(-b_k^- +ia_k^- \right) e^{2\pi ik\theta} \right] d\theta \bigg\}.
\end{split}
\end{equation*}
It ensues that:
\begin{align*}
& \tau_\mathbb{C}(\mathcal{P}(d\varphi=\ast d\tilde{\varphi}), \mathcal{P}(d\varphi '=\ast d\tilde{\varphi}'))=\\
& = \int_{\{0\}\times\S^1} \overline{{\mathcal{P}\tilde{\varphi}}}\wedge {\mathcal{P}d\varphi'} + \overline{{\mathcal{P}\varphi}}\wedge {\mathcal{P} d\tilde{\varphi}'}\\
& = \sum_{k=1}^\infty i\dfrac{\pi k}{2} \bigg\{ \left( ia_k^-+b_k^- \right) \left( -ia_k^{-\prime} +b_k^{-\prime}\right) + \left( ia_k^+ + b_k^+ \right) \left( -ia_k^{+\prime} +b_k^{+\prime} \right)\\
&\qquad \quad+ \left(ia_k^- + b_k^- \right)\left( -ia_k^{-\prime} + b_k^{-\prime}\right) + \left( -ia_k^+ -b_k^+\right)\left( ia_k^{+\prime} -b_k^{+\prime} \right) \bigg\}\\
&=\sum_{k=1}^\infty \pi k \bigg\{ i \left( a_k^+a_k^{+\prime} + b_k^+b_k^{+\prime} +a_k^-a_k^{-\prime}+b_k^-b_k^{-\prime} \right)\\
&\qquad \quad - \left( a_k^-b_k^{-\prime} -a_k^{-\prime}b_k^- + a_k^+b_k^{+\prime}- a_k^{+\prime}b_k^+ \right)\bigg\}\\
&= \sum_{k=1}^\infty \pi k
\begin{pmatrix}
b_k^+ & a_k^+ & b_k^- & a_k^-
\end{pmatrix}
\begin{pmatrix}
i & 1 & 0 & 0 \\
-1 & i & 0 & 0 \\
0&0& i & 1 \\
0&0& -1 & i 
\end{pmatrix}
\begin{pmatrix}
b_k^{+\prime}\\
a_k^{+\prime}\\
b_k^{-\prime}\\
a_k^{-\prime}
\end{pmatrix}.
\end{align*}
Hence:
\begin{align}\label{eq:3:mu_R_S1}
\mu(d\varphi, d\varphi')&= \Im\, \tau_\mathbb{C}(\mathcal{P}(d\varphi=\ast d\tilde{\varphi}), \mathcal{P}(d\varphi '=\ast d\tilde{\varphi}'))\nonumber\\
&=\sum_{k=1}^\infty \pi k \bigg\{ a_k^+a_k^{+\prime} + b_k^+b_k^{+\prime} +a_k^-a_k^{-\prime}+b_k^-b_k^{-\prime} \bigg\}.
\end{align}
The state $\omega_\mu$ takes the form:
\begin{equation}\label{eq:3:state_2D}
\omega_\mu(\mathcal{W}(d\varphi))=\exp \left( -\dfrac{1}{2}\sum_{k=1}^\infty \pi k\bigg\{ (a_k^+)^2 + (b_k^+)^2 + (a_k^-)^2 +(b_k^-)^2 \bigg\}  \right).
\end{equation}

\begin{rem}
The state $\omega_\mu$ obtained this way enjoys the further property of being pure, namely extremal in the convex set of states; in fact, it can be proved that $\mu$ satisfies:
$$
\mu(d\varphi_1,d\varphi_1)=\dfrac{1}{4}\sup _{d\varphi_2 \neq 0} \dfrac{|\tau_u^\R(d\varphi_1,d\varphi_2)|^2}{\mu(d\varphi_2,d\varphi_2)},
$$
which is one of the equivalent criteria to assess whether a quasifree state is pure (cfr. \cite[Equations (3.34)-(3.35)]{KW91}). As a consequence, the associated GNS representation is irreducible.
\end{rem}
Define:
\begin{equation}\label{eq:3:ansatz_two_point}
\omega_{2}(d\varphi, d\varphi')= \mu(d\varphi, d\varphi) + \dfrac{i}{2} \tau_u^\R(d\varphi, d\varphi ').
\end{equation}
We say that $\omega_\mu$ is a Hadamard state in a weak sense if there exists a well-defined two-point function $\tilde{\omega}_2$ with singularity structure of Hadamard form that restricts to $\omega_{2}$ on $\mathcal{S}\times \mathcal{S}$.\\
Introduce the continuous map:
\begin{eqnarray}
&\mathcal{F}: \Omega^1_{c}(M) \to \Omega^1_{c}(M)&\nonumber\\
&\psi' \mapsto \psi=\psi' - \int_{\S^1} \psi' .&
\end{eqnarray}
Define the pre-symplectic map:
\begin{eqnarray}
&\tilde{\sigma}: G_1\left[ \mathcal{F}(\Omega_c^1(M))\right] \times G_1\left[ \mathcal{F}(\Omega_c^1(M))\right]\to \mathbb{R}&\nonumber\\
&(G_1\tilde{\psi}, G_1\psi) \mapsto \int_M \tilde{\psi}\wedge \ast G_1 \psi,&
\end{eqnarray}
where $G_1$ is the causal propagator for the d'Alembert-de Rham operator on $1$-forms. Our choice of the ultrastatic metric entails that $G_1$ is block-diagonal, i.e.\
\begin{eqnarray*}
&G_1: \Omega_0^1(M) \mapsto \Omega^1(M)&\\
&\psi=\psi_0\, dt + \psi_1\, d\theta\mapsto (G_0 \psi_0)\,dt+ (G_0\psi_1)\,d\theta,&\\
\end{eqnarray*}
with
$$
G_0(t,\theta)=t + \sum_{k=1}^\infty \dfrac{\sin(2\pi k t)\cos (2\pi k\theta)}{\pi k}.
$$
The effect of $\mathcal{F}$ is to wipe out the zero mode, in order to prevent the infrared divergences otherwise present in the two-dimensional case \cite{SHU13}. 

Following \cite{WAL94}, we can repeat the same passages as above with $\tilde{\sigma}$ in place of $\tau_u$. Writing $\psi=\psi_0\,dt + \psi_1 \,d\theta \in \Omega_c^1(M)$, we have:
\begin{align}\label{eq:3:G_1psi}
G_1\psi= \sum_{k=1}^\infty \dfrac{1}{4\pi i k}\bigg\{ & e^{2\pi i k t}\left[ c_{k,0}^+e^{2\pi i k\theta} + c_{k,0}^- e^{-2\pi i k \theta} \right] \nonumber\\
+&  e^{-2\pi i k t} \left[ d_{k,0}^+e^{2\pi i k\theta} + d_{k,0}^- e^{-2\pi i k \theta} \right]\bigg\} \,dt \nonumber\\
+\sum_{k=1}^\infty \dfrac{1}{4\pi i k} \bigg\{ & e^{2\pi i k t}\left[ c_{k,1}^+e^{2\pi i k\theta} + c_{k,1}^- e^{-2\pi i k \theta} \right] \nonumber\\
+&  e^{-2\pi i k t} \left[ d_{k,1}^+e^{2\pi i k\theta} + d_{k,1}^- e^{-2\pi i k \theta} \right]\bigg\} \,d\theta,
\end{align}
where
\begin{align*}
& c_{k,s}^+= \int_\R dt \int_{\S^1} d\theta\, e^{-2\pi i k t'} e^{-2\pi i k\theta'} \psi_s(t',\theta '),\\
& c_{k,s}^-= \int_\R dt \int_{\S^1} d\theta\, e^{-2\pi i k t'} e^{+2\pi i k\theta'} \psi_s(t',\theta '),\\
& d_{k,s}^+= -\int_\R dt \int_{\S^1} d\theta\, e^{+2\pi i k t'} e^{-2\pi i k\theta'} \psi_s(t',\theta '),\\
& d_{k,s}^-= -\int_\R dt \int_{\S^1} d\theta\, e^{+2\pi i k t'} e^{+2\pi i k\theta'} \psi_s(t',\theta ').
\end{align*}
One realises that the coefficients are not independent, but they fulfil the following relations:
$$
d_{k,0}^+=-\overline{c_{k,0}^-}, \qquad d_{k,0}^-=-\overline{c_{k,0}^+}, \qquad d_{k,1}^+=-\overline{c_{k,1}^-}, \qquad d_{k,1}^-=-\overline{c_{k,1}^+}.
$$
Carrying out the computation, we find:
\begin{equation}\label{eq:4:mu_4D}
\mu(G_1\tilde{\psi},G_1\psi)= \sum_{k=1}^\infty \dfrac{1}{4\pi k} \left\{\overline{\tilde{d}_{k,0}^+}d_{k,0}^+  + \overline{\tilde{d}_{k,0}^-}d_{k,0}^- + \overline{\tilde{d}_{k,1}^+}d_{k,1}^+ + \overline{\tilde{d}_{k,1}^-}d_{k,1}^- \right\}.
\end{equation}
We define:
$$
\tilde{\omega}_2(G_1\tilde{\psi},G_1\psi):=\tilde{\mu}(G_1\tilde{\psi}, G_1\psi)+ \dfrac{i}{2}\tilde{\sigma}(G_1\tilde{\psi}, G_1\psi),
$$
and we call $\tilde{\omega}_\mu$ the quasifree state having $\tilde{\omega}_2$ as two-point function. As $\tilde{\omega}_2(G_1\arg,G_1\arg)$ is a well-defined bi-distribution, it makes sense to investigate the structure of its singularities.

\begin{prop}\label{prop:3:omega_2D_is_Hadamard}
$\tilde{\omega}_\mu$ is a Hadamard state.
\end{prop}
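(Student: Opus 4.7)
The plan is to reduce the microlocal analysis to the well-known scalar case and then compare the state with the standard ground state of the massless scalar field on $\mathbb{R}\times\mathbb{S}^1$. Since the metric is ultrastatic, the d'Alembert--de Rham operator $\square_1$ on $1$-forms decomposes in the global frame $\{dt,d\theta\}$ as $\square_0\oplus\square_0$, where $\square_0$ is the scalar d'Alembertian. Consequently, the causal propagator $G_1$ acts component-wise as $G_0$ on both the $dt$- and $d\theta$-part of any test $1$-form, and the bilinear form $\tilde{\mu}$ given in \eqref{eq:4:mu_4D} splits into two identical scalar contributions indexed by the two components. I would therefore reformulate the problem as: show that the scalar two-point kernel
$$
W(x,y):=\sum_{k=1}^\infty \frac{1}{4\pi k}\left(e^{-2\pi i k(t-t')}e^{2\pi i k(\theta-\theta')}+e^{-2\pi i k(t-t')}e^{-2\pi i k(\theta-\theta')}\right)
$$
on $M\times M$ has wave front set of Hadamard form.

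Next, I would recognise $W$ as the integral kernel of the ground state of the massless Klein--Gordon field on $\mathbb{R}\times\mathbb{S}^1$ with the $k=0$ mode omitted. Introducing null coordinates $u=t+\theta$, $v=t-\theta$, one finds the closed form
$$
W(x,y)=-\frac{1}{4\pi}\log\bigl(1-e^{-2\pi i (u-u'-i\varepsilon)}\bigr)-\frac{1}{4\pi}\log\bigl(1-e^{-2\pi i (v-v'-i\varepsilon)}\bigr),
$$
understood as the $\varepsilon\to 0^+$ boundary value. This kernel has singular support only at points with $u=u'$ or $v=v'$, i.e.\ at null-separated pairs, and each logarithmic factor is a boundary value of a function analytic in the lower half-plane of the corresponding null variable. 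Comparing with the Minkowski massless ground state lifted along the universal cover $\mathbb{R}^2\to \mathbb{R}\times\mathbb{S}^1$, the difference is a convergent sum of image contributions that can be seen, by the usual estimates in stationary spacetimes with compact Cauchy surface (Kay--Wald \cite{KW91}, Fulling--Narcowich--Wald), to define a smooth function on $M\times M$; hence $W$ differs from a local Hadamard parametrix by a smooth kernel.

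To conclude, I would invoke Radzikowski's theorem: since $W$ differs from the local Hadamard parametrix by a smooth function, its wave front set satisfies the microlocal spectrum condition, i.e.\ it is contained in $\{(x,y,k_x,-k_y): (x,k_x)\sim (y,k_y),\, k_x\vartriangleright 0\}$. The positive-frequency structure — expressed by the presence of $e^{-2\pi i k t}$ rather than $e^{+2\pi i k t}$ in every non-trivial mode — guarantees that the cotangent fibre component lies in the future light cone, as required. By the block-diagonal reduction, the same holds for $\tilde{\omega}_2$ regarded as a bi-distribution on $\Omega^1_c(M)\times\Omega^1_c(M)$.

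The main obstacle is the infrared behaviour in two dimensions. The removal of the $\theta$-zero mode by the operator $\mathcal{F}$ is what makes $\tilde{\omega}_2$ a genuine bi-distribution; without it one would face the well-known absence of a ground state for the 2D massless scalar pointed out in \cite{SHU13}. The delicate part is to verify that this infrared subtraction, which amounts to dropping a finite-dimensional, manifestly smooth contribution, does not affect the ultraviolet singular structure — and since the wave front set is insensitive to additive smooth perturbations, this is precisely what allows the comparison with the lifted Minkowski kernel to go through. Once this is observed, Hadamardness follows.
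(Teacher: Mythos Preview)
Your argument is correct, but it follows a different route from the paper's own proof. The paper does \emph{not} analyse the wave front set directly: instead it verifies that $\tilde\omega_\mu$ is a \emph{ground state} in the sense of Sahlmann--Verch \cite{SV00}, checking by hand that $t\mapsto\tilde\omega_\mu(\mathcal W(G_1\tilde\psi)\,\alpha_t(\mathcal W(G_1\psi)))$ is bounded and that $\int\hat f(t)\,\tilde\omega_2(G_1\tilde\psi,\beta_t(G_1\psi))\,dt=0$ for all $f\in C_c^\infty((-\infty,0))$, the latter because the state selects only positive-frequency modes. It then invokes the general result of \cite{SV00} that ground states on stationary globally hyperbolic spacetimes are passive, hence Hadamard. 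Your approach --- block-diagonal reduction to the scalar kernel, closed form in null coordinates, and comparison with a local Hadamard parametrix --- is precisely the alternative the paper acknowledges in the remark immediately following its proof, where it writes down the same kernel you obtained and notes that it ``explicitly satisfies the microlocal spectrum condition,'' without carrying out the details. Your route is more explicit and self-contained for this particular spacetime; the paper's route trades the concrete microlocal computation for an abstract dynamical argument that would transfer verbatim to the higher-dimensional case of Section~\ref{subsection:4.3.1_Hadamard_state_general}. One small comment: invoking Kay--Wald and Fulling--Narcowich--Wald for the smoothness of the image-sum remainder is slightly off-target --- those are deformation arguments --- whereas your own closed logarithmic formula already presents the kernel as the boundary value of a function holomorphic in $\Im(u-u')<0$ and $\Im(v-v')<0$, so H\"ormander's criterion for wave front sets of boundary values gives the microlocal spectrum condition directly, with no comparison step needed.
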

\begin{proof}
In order to prove the statement, we show that $\tilde{\omega}_\mu$ is a ground state. 

Following \cite{SV00}, we call \emph{ground state} for $(\mathcal{A}, \{\alpha_t\}_{t\in\R})$, where $\mathcal{A}$ is a $C^\ast$-algebra and $\{\alpha_t\}_{t\in\R}$ a one-parameter group of automorphisms of $\mathcal{A}$, a state $\omega$ such that, for each $\mathcal{W}(a),\mathcal{W}(b)\in\mathcal{A}$, the function:
$$
\R \ni t\mapsto \omega \left(\mathcal{W}(a)\alpha_t(\mathcal{W}(b)) \right) 
$$
is bounded and it holds:
\begin{equation}\label{eq:3:integral_Sahlmann_Verch_ground}
\int_{-\infty}^{+\infty} \hat{f}(t)\,\omega\left(\mathcal{W}(a)\alpha_t(\mathcal{W}(b)) \right)=0
\end{equation}
for all $f\in C^\infty_c((-\infty,0))$.\\

Let $\mathcal{A}=\mathfrak{CCR}(G_1[\mathcal{F}(\Omega^1_{c}(M))],\tilde{\sigma})$ and introduce the map:
\begin{eqnarray*}
&\beta_t: G_1[\mathcal{F}(\Omega^1_{c}(M))] \to G_1[\mathcal{F}(\Omega^1_{c}(M))]&\\
&G_1\psi\mapsto \left( (s,\theta)\mapsto G_1\psi (s+t,\theta) \right).&
\end{eqnarray*}
The one-parameter group of automorphism is defined by:
$$
\alpha_t(\mathcal{W}(G_1 \psi)):=\mathcal{W}(\beta_t (G_1 \psi)).
$$

Let us check the boundedness property by direct computation. In view of the algebra structure, we have:
\begin{equation*}
\tilde{\omega}_\mu(\mathcal{W}(G_1\tilde{\psi})\alpha_t(\mathcal{W}(G_1\psi)))=e^{2\pi i \tilde{\sigma}(G_1\tilde{\psi},\beta_t(G_1\psi))} \tilde{\omega}_\mu(\mathcal{W}(G_1\tilde{\psi}+\beta_t(G_1\psi))).
\end{equation*}
Let us consider the second factor in the right-hand side; the complex exponential can then be handled the same way. Looking at \eqref{eq:3:G_1psi}, we realise that the Fourier coefficients for $G_1\tilde{\psi}+\beta_t(G_1\psi)$ are:
\begin{align*}
D_{k,j}^{\pm}=\tilde{d}_{k,j}^{\pm}+e^{-2\pi i k t}d_{k,j}^{\pm}, \qquad\quad j=0,1.
\end{align*}
In order to evaluate the state, we need the square modulus of the coefficients:
\begin{align*}
\overline{D_{k,j}^\pm}D_{k,j}^\pm=& \overline{\tilde{d}_{k,j}^\pm}\tilde{d}_{k,j}^\pm + \overline{d_{k,j}^\pm}d_{k,j}^\pm + e^{-2\pi i k t}\overline{\tilde{d}_{k,j}^\pm}d_{k,j}^\pm + e^{2\pi i k t}\overline{d_{k,j}^\pm}\tilde{d}_{k,j}^\pm\\
=&\overline{\tilde{d}_{k,j}^\pm}\tilde{d}_{k,j}^\pm + \overline{d_{k,j}^\pm}d_{k,j}^\pm + i \sin(2\pi k t)\left[\overline{d_{k,j}^\pm}\tilde{d}_{k,j}^\pm-\overline{\tilde{d}_{k,j}^\pm}d_{k,j}^\pm\right]\\ 
&+ \cos(2\pi k t)\left[\overline{d_{k,j}^\pm}\tilde{d}_{k,j}^\pm+\overline{\tilde{d}_{k,j}^\pm}d_{k,j}^\pm\right].
\end{align*}
On account of \eqref{eq:4:mu_4D}, it ensues that:
\begin{align*}
&\tilde{\omega}_\mu(\mathcal{W}(G_1\tilde{\psi}+\beta_t(G_1\psi)))=\tilde{\omega}_\mu(\mathcal{W}(G_1\tilde{\psi}))\,\,\tilde{\omega}_\mu(\mathcal{W}(G_1\psi))\times\\
&\times \exp\left(-i\sum_{k=1}^\infty\sum_{j=0}^1\frac{1}{8\pi k}\sin(2\pi k t) \left[\overline{d_{k,j}^+}\tilde{d}_{k,j}^+-\overline{\tilde{d}_{k,j}^+} d_{k,j}^+ + \overline{d_{k,j}^-}\tilde{d}_{k,j}^- -\overline{\tilde{d}_{k,j}^-}d_{k,j}^- \right] \right)\\
&\times \exp \left( -\sum_{k=1}^\infty\sum_{j=0}^1\frac{1}{8\pi k}\cos(2\pi k t) \left[\overline{d_{k,j}^+}\tilde{d}_{k,j}^+ +\overline{\tilde{d}_{k,j}^+} d_{k,j}^+ + \overline{d_{k,j}^-}\tilde{d}_{k,j}^- +\overline{\tilde{d}_{k,j}^-}d_{k,j}^- \right] \right).
\end{align*}
The part independent of $t$ can be left aside, as it does not affect boundedness. As for the exponentials containing the sine and the cosine functions, they can be thought of as a real function which is continuous and periodic in $t$. Consequently, it must be bounded. Observe that the series at the exponent is convergent, since the multiplication by the sine or by the cosine does not worsen the convergence properties.\\


As far as the integral \eqref{eq:3:integral_Sahlmann_Verch_ground} is concerned, in view of the state being quasifree, it is enough to prove that it is vanishing when we replace $\tilde{\omega}_\mu$ with $\tilde{\omega}_2$. For every $\psi=\mathcal{F}\psi', \tilde{\psi}=\mathcal{F}\tilde{\psi}' \in \Omega^1_c(M)$, we have:
\begin{align*}
&\int_{-\infty}^{+\infty} \hat{f}(t)\, \tilde{\omega}_2(G_1 \tilde{\psi}, \beta_t(G_1\psi)) \,dt = \int_{-\infty}^{+\infty} dt\, \left(  \int_{-\infty}^{+\infty} dE\, e^{-2\pi iEt} f(E) \right) \,\times \\
&\times \left( \int_{M\times M} \tilde{\omega}_2^{ii}(\theta-\theta',s-s')\tilde{\psi}_i(\theta,s)\psi_i(\theta',s'+t)\, d\mu(M)d\mu'(M)\right) \\
=&\int_{-\infty}^{+\infty}dt \int_{-\infty}^{+\infty} dE\, e^{-2\pi iEt}f(E) \int_{M\times M} d\theta d\theta' ds d\eta \, \times\\
&\times \left( \int_{-\infty}^{+\infty}dk\, e^{2\pi i k (s-\eta+t)} \dot{\hat{\tilde{\omega}}}^{ii}_2(\theta-\theta',k) \right)\,\tilde{\psi}_i(\theta,s)\psi_i(\theta',\eta)  \\
=& \int_{-\infty}^{+\infty}dE f(E)\int_{M\times M} d\theta d\theta' ds d\eta\times\\
&\times \int_{-\infty}^{+\infty}dk\,   e^{2\pi ik(s-\eta)}  \dot{\hat{\tilde{\omega}}}^{ii}_2(\theta-\theta',k) \tilde{\psi}_i(\theta,s)\psi_i(\theta',\eta) \int_{-\infty}^{+\infty}dt\, e^{-2\pi i(E-k)t}\\
=& \int_{-\infty}^{+\infty} dE\, f(E) \delta(E-k) \int_{M\times M} d\theta d\theta' ds d\eta\times\\
&\times \int_{-\infty}^{+\infty}dk\,  e^{2\pi ik(s-\eta)}  \dot{\hat{\tilde{\omega}}}^{ii}_2(\theta-\theta',k) \tilde{\psi}_i(\theta,s)\psi_i(\theta',\eta)\\
=& \int_{-\infty}^{+\infty} dE\, f(E) \int_{M\times M} d\theta d\theta' \,\dot{\hat{\tilde{\omega}}}^{ii}_2(\theta-\theta',E) \dot{\hat{\tilde{\psi}}}(\theta,-E)\dot{\hat{\psi}}(\theta',E) =0
\end{align*}
where $\theta$ and $\theta'$ are local coordinates on $\S^1$, $s$ and $s'$ coordinates on $\R$, $d\mu(M)$ is the standard measure on $M$ and where the Einstein notation has been adopted. The intex $i$ can take the value $0$ or $1$ and it refers to the components of the one-forms. The dot over the Fourier transform symbol points out that we have Fourier-transformed the time component only. In the last passage our conclusion is due to the fact that in constructing the state we selected the positive frequencies only, whereas $f$ is supported in $(-\infty,0)$: the action of $\delta(E-k)$ annihilates the integral.

We then conclude by using Sahlmann and Verch's result stating that each ground state in a globally hyperbolic, stationary spacetime descending from a well-defined bi-distribution fulfils the passivity condition and is, consequently, Hadamard \cite{SV00}.
\end{proof}

\begin{rem}
Notice that we could have proved that $\tilde{\omega}_\mu$ is Hadamard by computing directly the wavefront set of its two-point function. In fact, putting together the above ingredients, a straightforward calculation gives that the two-point function reads:
{\small \begin{equation}
\begin{split}
\tilde{\omega}_2(\tilde{\psi},\psi)=&\sum_{k=1}^\infty \dfrac{1}{2\pi k}\int dt\, d\theta\int dt'\,d\theta' \, e^{-2\pi i k(t-t')}\cos\left( 2\pi k(\theta - \theta ') \right) \tilde{\psi}_0(t,\theta)\psi_0(t',\theta')\\
+&\sum_{k=1}^\infty \dfrac{1}{2\pi k}\int dt\, d\theta\int dt'\,d\theta' \, e^{-2\pi i k(t-t')}\cos\left( 2\pi k(\theta - \theta ') \right) \tilde{\psi}_1(t,\theta)\psi_1(t',\theta')
\end{split}
\end{equation}}
which explicitly satisfies the microlocal spectrum condition.
\end{rem}

\subsection{State for the topological sector}\label{subsection:3.3.4_state_topological_sector}
Finally, let us construct a state on 
\begin{equation}
\mathcal{A}_{lr}:=\mathfrak{CCR}\left(\frac{H^{k-1,m-k-1}(M;\R)}{H_{\free}^{k-1,m-k-1}(M;\Z)}\times H_{\free}^{k,m-k}(M;\Z),\tau_{lr}\right).
\end{equation}
To this end, there are several possibilities. We will provide, with quite a general construction, two different examples: a faithful state and a non-faithful one. The latter will turn out to possess interesting properties at the level of the associated GNS Hilbert space.

\begin{prop}
The continuous linear functional specified by:
\begin{eqnarray}\label{eq:3:state_topol_0}
& \omega^0_t: \mathcal{A}_{lr} \to \mathbb{C}& \nonumber \\
& \mathcal{W}(u,\tilde{u},v,\tilde{v}) \mapsto
\begin{cases}
1\qquad \mathrm{if}\,\,u=\tilde{u}=0, v=\tilde{v}=0\\
0 \qquad \mathrm{otherwise}
\end{cases} &
\end{eqnarray}
is a faithful state.
\end{prop}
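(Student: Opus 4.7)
The plan is to verify the three defining properties of a state on $\mathcal{A}_{lr}$ --- linearity, normalization, positivity --- and then reduce faithfulness to the same positivity calculation. Since $\mathcal{A}_{lr}$ is built as the $C^\ast$-completion of the Banach $\ast$-algebra $\mathcal{A}_1$, which is itself the $\|\cdot\|^{\mathrm{Ban}}$-completion of the unital $\ast$-algebra $\mathcal{A}_0$ of finite linear combinations of Weyl symbols, I would first establish the claim on $\mathcal{A}_0$ and then propagate it by continuity using Manuceau's extension result, in close analogy with what is done in Subsection \ref{subsec:2:quantiz} for the faithful state used to define the minimal regular $C^\ast$-norm.

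On $\mathcal{A}_0$, linearity and normalization are built into the definition. The heart of the matter is the computation of $\omega^0_t(a^\ast a)$ for a generic element $a = \sum_{i=1}^N \alpha_i \mathcal{W}(g_i)$, with $\alpha_i \in \mathbb{C}$ and pairwise distinct $g_i = (u_i,\tilde u_i,v_i,\tilde v_i)$ in $\tfrac{H^{k-1,m-k-1}(M;\R)}{H^{k-1,m-k-1}_{\free}(M;\Z)} \times H^{k,m-k}_{\free}(M;\Z)$. Using the involution $\mathcal{W}(g)^\ast = \mathcal{W}(-g)$ together with the Weyl relation,
\begin{equation*}
a^\ast a = \sum_{i,j=1}^N \bar{\alpha}_i \alpha_j \, e^{-2\pi i \tau_{lr}(g_i,g_j)} \, \mathcal{W}(g_j-g_i).
\end{equation*}
Applying $\omega^0_t$ annihilates every summand for which $g_j \neq g_i$; on the diagonal $i=j$ antisymmetry of $\tau_{lr}$ forces $e^{-2\pi i \tau_{lr}(g_i,g_i)}=1$, so that
\begin{equation*}
\omega^0_t(a^\ast a) = \sum_{i=1}^N |\alpha_i|^2 \geq 0,
\end{equation*}
with equality if and only if all $\alpha_i$ vanish, i.e.\ $a=0$. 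This establishes simultaneously positivity and faithfulness on $\mathcal{A}_0$.

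Extension to $\mathcal{A}_1$ is a direct application of \cite[Proposition 2.17]{MAN73}: continuity follows from the trivial bound $|\omega^0_t(a)| \leq \|a\|^{\mathrm{Ban}}$, and the identity $\omega^0_t(a^\ast a) = \sum_i |\alpha_i|^2$ survives the passage to countable sums because $\ell^1$-summability of the coefficients entails their $\ell^2$-summability, so faithfulness is preserved. The upgrade to the $C^\ast$-algebra $\mathcal{A}_{lr}$ is then automatic, since by construction the minimal regular norm dominates $\sqrt{\omega^0_t(\,\cdot^\ast\cdot\,)}$, allowing a unique continuous extension. The only genuinely non-routine step is the collapse of the Weyl double sum to the squared $\ell^2$-norm of the coefficients; beyond that, I do not foresee any real obstacle.
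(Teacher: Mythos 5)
Your proposal is correct and follows essentially the same route as the paper: normalization from $\mathcal{W}(0,0,0,0)$ being the unit, and the Weyl-relation computation collapsing $\omega^0_t(a^\ast a)$ to $\sum_i |\alpha_i|^2$ for combinations of pairwise distinct generators, which gives positivity and faithfulness at once. The only difference is that you spell out the continuity/extension steps through $\mathcal{A}_1$ and the $C^\ast$-completion, which the paper's proof leaves implicit by working directly with finite linear combinations.
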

\begin{proof}
The functional $\omega^0_t$ is normalized, being $\mathcal{W}(0,0,0,0)$ the unit of the algebra. Furthermore, it is positive; in fact, for every finite linear combination $a=\sum_i \alpha_i \mathcal{W}(u_i,\tilde{u}_i, v_i, \tilde{v}_i)\in \mathcal{A}_{lr}$, we have:
\begin{align*}
\omega_t^0(a^\ast a)=\omega_t^0\left(  \sum_{ij} \overline{\alpha_i}\alpha_j \mathcal{W}(u_i,\tilde{u}_i, v_i, \tilde{v}_i)^\ast  \mathcal{W}(u_j,\tilde{u_j},v_j, \tilde{v}_j) \right)= \sum_i \overline{\alpha_i} \alpha_i\geq 0,
\end{align*}
where we assumed that $(u_i,\tilde{u}_i, v_i, \tilde{v}_i)\neq(u_j,\tilde{u}_j, v_j, \tilde{v}_j)$ whenever $i\neq j$. It ensues that $\omega_t^0(a^\ast a)=0$ if and only if $\alpha_i=0\,\,\forall i$, i.e.\ the state is faithful.
\end{proof}

\begin{prop}\label{prop:3:omega_t}
The continuous linear functional specified by:
\begin{eqnarray}\label{eq:3:state_topol}
& \omega_t:  \mathcal{A}_{lr} \to \mathbb{C} &\nonumber\\
& \mathcal{W}(u,\tilde{u},v,\tilde{v}) \mapsto
\begin{cases}
1\qquad \text{if}\,\,v=\tilde{v}=0\\
0 \qquad \text{otherwise}
\end{cases} &
\end{eqnarray}
is a state.
\end{prop}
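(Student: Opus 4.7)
The plan is to verify the two defining properties of a state: normalisation and positivity. Normalisation is immediate since the unit of the algebra is $\mathcal{W}(0,0,0,0)$, which satisfies the condition $v=\tilde{v}=0$, whence $\omega_t(\mathbf{1})=1$. The main work is positivity.

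First I would reorganise any finite linear combination $a=\sum_i\alpha_i\,\mathcal{W}(u_i,\tilde u_i,v_i,\tilde v_i)$ according to the value of the pair $(v,\tilde v)\in H^{k,m-k}_{\free}(M;\Z)$. Writing $a=\sum_s a_s$ with $a_s$ the partial sum over those $i$ with $(v_i,\tilde v_i)=s$, the Weyl relations give
\begin{equation*}
a^{\ast}a=\sum_{s,s'}\sum_{i,j}\overline{\alpha_i}\alpha_j\,e^{2\pi i\tau_{lr}(-x_i,x_j)}\,\mathcal{W}\!\left(x_j-x_i\right),
\end{equation*}
where $x_i=(u_i,\tilde u_i,v_i,\tilde v_i)$ and the sum in $i,j$ is restricted to pairs inside the blocks labelled by $s,s'$. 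Since $\omega_t$ vanishes on any generator whose $(v,\tilde v)$-component is non-zero, only the diagonal blocks $s=s'$ survive, so $\omega_t(a^{\ast}a)=\sum_s\omega_t(a_s^{\ast}a_s)$ and it suffices to show each block is non-negative.

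Fix a block $s=(s_1,s_2)$. Using bilinearity of $\tau_{lr}$ and the fact that $\tau_{lr}$ vanishes identically on the subgroup $\{(u,\tilde u,0,0)\}$ (immediate from \eqref{tau:3:lr}, as one may choose the representative forms $V=\tilde V=0$) as well as on the diagonal $(0,0,s_1,s_2)\times(0,0,s_1,s_2)$, the symplectic phase decomposes as
\begin{equation*}
\tau_{lr}\bigl((-u_i,-\tilde u_i,-s_1,-s_2),(u_j,\tilde u_j,s_1,s_2)\bigr)=g(u_j,\tilde u_j;s)-g(u_i,\tilde u_i;s),
\end{equation*}
where $g(u,\tilde u;s):=\tau_{lr}\bigl((u,\tilde u,0,0),(0,0,s_1,s_2)\bigr)$. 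This is the key structural observation: the phase between index $i$ and $j$ factorises into a difference depending separately on $i$ and $j$.

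Setting $c_i(s):=e^{2\pi i\,g(u_i,\tilde u_i;s)}$, the matrix of phases in the block becomes $M_{ij}=\overline{c_i(s)}\,c_j(s)$, which is rank one and manifestly positive semi-definite. Consequently
\begin{equation*}
\omega_t(a_s^{\ast}a_s)=\sum_{i,j}\overline{\alpha_i}\alpha_j\,\overline{c_i(s)}c_j(s)=\Bigl|\sum_i\alpha_i\,c_i(s)\Bigr|^{2}\geq 0,
\end{equation*}
and summing over the finitely many blocks $s$ arising from $a$ yields $\omega_t(a^{\ast}a)\geq 0$. Continuity is automatic from the construction on the Weyl algebra. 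The main obstacle is simply to verify rigorously that $\tau_{lr}$ vanishes on the two subgroups above; once that algebraic check is in place, the rank-one factorisation does the rest.
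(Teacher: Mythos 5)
Your proof is correct and follows essentially the same route as the paper: partition the generators by their $(v,\tilde{v})$-component, note that $\omega_t$ annihilates all off-diagonal blocks, and factorise the surviving phases so that each block contributes $\left|\sum_i \alpha_i c_i(s)\right|^2\geq 0$. The only cosmetic difference is that the paper writes the phases out explicitly in terms of the pairing $\langle\cdot,\cdot\rangle_f$, whereas you obtain the same factorisation abstractly from bilinearity of $\tau_{lr}$ together with the isotropy of the subgroups $\{(u,\tilde{u},0,0)\}$ and $\{(0,0,v,\tilde{v})\}$, both of which are indeed immediate from \eqref{tau:3:lr}.
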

\begin{proof}
Let $\mathcal{I}$ be an index set of finite cardinality and let 
$$
a=\sum_{i\in \mathcal{I}} \alpha_i \mathcal{W}(u_i, \tilde{u}_i, v_i, \tilde{v}_i)
$$ 
be an element in $\mathcal{A}_{lr}$, where we assume $(u_i, \tilde{u}_i, v_i, \tilde{v}_i)\neq (u_j, \tilde{u}_j, v_j, \tilde{v}_j)$ for $i\neq j$, without loss of generality. 

As above, $\omega_t$ is normalized, due to $\mathcal{W}(0,0,0,0)$ being the unit of the algebra. 

As far as positivity is concerned, introduce in $\mathcal{I}$ the equivalence relation $\sim$ defined by: $i\sim j$ if and only if $v_i=v_j$ and $\tilde{v}_i=\tilde{v}_j$. Define $\tilde{\mathcal{I}}:=\mathcal{I}\slash\sim$. We have:
\begin{align*}
&\omega_t(a^\ast a)=\\
=& \omega_t\Bigg( \sum_{\substack{\tilde{i}\in \tilde{\mathcal{I}} \\ \tilde{j}\in \tilde{\mathcal{I}}}} \sum_{\substack{i\in \tilde{i} \\ j\in\tilde{j}}} \overline{\alpha_i} \alpha_j e^{-2\pi i (\pa{\tilde{u}_i}{v_j}_f -(-1)^{k(m-k)} \pa{u_i}{\tilde{v}_j}_f - \pa{\tilde{u}_j}{v_i}_f+(-1)^{k(m-k)}\pa{u_j}{ \tilde{v}_i}_f)}  \,\,\times\\
&\quad \times  \mathcal{W}(u_j-u_i, \tilde{u}_j-\tilde{u}_i, v_j-v_i, \tilde{v}_j-\tilde{v}_i) \Bigg)\\
=& \sum_{\substack{\tilde{i}\in \tilde{\mathcal{I}} \\ \tilde{j}\in \tilde{\mathcal{I}}}} \sum_{\substack{i\in \tilde{i} \\ j\in\tilde{j}}} \overline{\alpha_i} \alpha_j e^{-2\pi i (\pa{\tilde{u}_i}{v_j}_f -(-1)^{k(m-k)} \pa{u_i}{\tilde{v}_j}_f - \pa{\tilde{u}_j}{v_i}_f+(-1)^{k(m-k)}\pa{u_j}{ \tilde{v}_i}_f)} \,\,\delta_{\tilde{i}\tilde{j}}\\
=& \sum_{\substack{\tilde{i}\in \tilde{\mathcal{I}}}} \sum_{\substack{i,j \in \tilde{i}}} \overline{\alpha_i} \alpha_j e^{-2\pi i (\pa{\tilde{u}_i}{v_i}_f -(-1)^{k(m-k)} \pa{u_i}{\tilde{v}_i}_f - \pa{\tilde{u}_j}{v_j}_f+(-1)^{k(m-k)}\pa{u_j}{ \tilde{v}_j}_f)} \\
=& \sum_{\tilde{i}\in\tilde{\mathcal{I}}}\, \left| \sum_{i\in\tilde{i}} \alpha_i e^{2\pi i(\pa{\tilde{u_i}}{v_i}_f-(-1)^{k(m-k)} \pa{u_i}{\tilde{v_i}}_f)} \right|^2 \geq 0.
\end{align*}
\end{proof}

\begin{rem}
Notice that the prescription to build the above states is general: four-dimensional spacetimes with compact Cauchy surface are a particular instances.
\end{rem}

The state $\omega_t$ is not faithful. Let us try to characterize to what extent faithfulness fails. Let $(0,0,v,\tilde{v}),(u,\tilde{u},v,\tilde{v})\in \frac{H^{k-1,m-k-1}(M;\R)}{H_{\free}^{k-1,m-k-1}(M;\Z)}\times H_{\free}^{k,m-k}(M;\Z)$, let 
\begin{equation}\label{eq:3:elements_gelfand_ideal}
b=\mathcal{W}(0,0,v,\tilde{v})- e^{-2\pi i (\pa{\tilde{u}}{v}_f -(-1)^{k(m-k)} \pa{u}{\tilde{v}}_f)} \mathcal{W}(u,\tilde{u},v,\tilde{v})
\end{equation}
and consider the following computation:
\begin{align*}
 &\omega_t\left( b^\ast b\right)= \omega_t\left( \mathcal{W}(0,0,v,\tilde{v})^\ast \mathcal{W}(0,0,v,\tilde{v}) \right) + \omega_t\left( \mathcal{W}(u,\tilde{u},v,\tilde{v})^\ast \mathcal{W}(u,\tilde{u},v,\tilde{v})  \right) \\
&-  e^{-2\pi i (\pa{\tilde{u}}{v}_f -(-1)^{k(m-k)} \pa{u}{\tilde{v}}_f)}  \omega_t \left(e^{2\pi i (\pa{\tilde{u}}{v}_f -(-1)^{k(m-k)} \pa{u}{\tilde{v}}_f)} \mathcal{W}(u,\tilde{u},0,0) \right) \\
&- e^{2\pi i (\pa{\tilde{u}}{v}_f -(-1)^{k(m-k)} \pa{u}{\tilde{v}}_f)} \omega_t\left( e^{-2\pi i (\pa{\tilde{u}}{v}_f -(-1)^{k(m-k)} \pa{u}{\tilde{v}}_f)}\mathcal{W}(-u,-\tilde{u},0,0) \right)\\
&= 1+1-1-1=0.
\end{align*}
This result suggests that, from the point of view of the state, the elements of the algebra are identified as follows:
\begin{equation}
\mathcal{W}(u,\tilde{u},v,\tilde{v}) \sim e^{2\pi i (\pa{\tilde{u}}{v}_f -(-1)^{k(m-k)} \pa{u}{\tilde{v}}_f)}\mathcal{W}(0,0,v,\tilde{v}).
\end{equation}
We will return extensively to this point later.
\section{The 4-dimensional case}\label{section:3.4_4D_case}
The four-dimensional case is definitely the most interesting one from the perspective of physics, since four is the degree of the spacetimes thoroughly investigated by the theory of general relativity. In this section we will analyse in full detail the case of manifolds with compact Cauchy surface, making also reference to specific examples. Having in mind a duality argument, we set to two the dimension $k$ of the field theory. In conclusion, some remarks on the non-compact case will be given.\\

Considering that the existence of a compact Cauchy surface $\Sigma$ entails that $M=J(\Sigma)$, the assumption of spacelike compactness made for the observables is redundant and can be safely waived. Again, all of the manifolds appearing in the present section will be assumed to be endowed with an ultrastatic metric, unless otherwise stated. This, in particular, guarantees that all the three-dimensional spatial sections, namely the \emph{folia} of the spacetime, are Riemannian manifolds.\\

In all the relevant examples, as we shall show, the torsion subgroup is vanishing. We can therefore define a state for the covariant quantum field theory by constructing a state for the topological sector and a state for the upper-right corner.

\subsection{Hadamard state for $d\Omega^{k-1}(M)\cap \ast d\Omega^{m-k-1}(M)$}\label{subsection:4.3.1_Hadamard_state_general}
A considerable part of the construction we are going to present prescinds from the dimension of the spacetime; therefore, we will remain general as far as we can.\\

Let $M$ be an ultrastatic object of $\mathsf{Loc}_m$, with Cauchy surface $\Sigma$ and metric $g=-dt\otimes dt + h$, where $h$ is a Riemannian metric independend of $t$. Hodge theory provides us with the following results:

\begin{thm}[Hodge Theorem for forms {\cite[Theorem 1.30]{Ros97}}]\label{thm:3:Hodge_theorem}
Let $N$ be a compact, connected, oriented Riemannian manifold. There exists an orthonormal basis of the space of $L^2$ $k$-forms consisting of eigenforms of the Laplacian on $k$-forms:
$$
\Delta= d\delta+\delta d: \Omega^k(N)\to \Omega^k(N),
$$
where $d$ is the exterior derivative and $\delta$ the codifferential.
All the eigenvalues are non-negative. Each eigenvalue has finite multiplicity and the eigenvalues accumulate only at infinity.
\end{thm}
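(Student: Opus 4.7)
The plan is to establish the statement as an application of the spectral theorem for compact self-adjoint operators, once suitable analytic properties of the Hodge Laplacian $\Delta = d\delta + \delta d$ have been verified. Throughout, the $L^2$-inner product on $k$-forms is the one induced by the Riemannian metric and the orientation of $N$.

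First I would record the basic algebraic facts. The codifferential $\delta$ is the formal adjoint of $d$ with respect to the $L^2$-pairing, which follows from Stokes' theorem together with the absence of a boundary (here the compactness of $N$ enters in a crucial way, so that no boundary terms arise). This gives
\begin{equation*}
\langle \Delta\omega,\eta\rangle_{L^2} = \langle d\omega,d\eta\rangle_{L^2}+\langle \delta\omega,\delta\eta\rangle_{L^2} = \langle \omega,\Delta\eta\rangle_{L^2},
\end{equation*}
so $\Delta$ is formally self-adjoint on $\Omega^k(N)$ and non-negative, i.e.\ $\langle\Delta\omega,\omega\rangle\geq 0$. In particular any eigenvalue must be non-negative, as required.

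Second, I would invoke that $\Delta$ is an elliptic second-order differential operator acting on the sections of the bundle $\Lambda^k T^\ast N$. This is a local check on the principal symbol, which coincides with the symbol of the scalar Laplacian tensored with the identity on $\Lambda^k$. Ellipticity together with the compactness of $N$ is the analytic input that I would feed into the standard machinery: the Friedrichs extension produces a genuine self-adjoint operator, still denoted $\Delta$, with domain inside the Sobolev space $H^2\Omega^k(N)$; the operator $\Delta+\mathrm{Id}$ is strictly positive and hence invertible, and elliptic regularity (G{\aa}rding's inequality plus Sobolev embedding on the compact manifold $N$) shows that $(\Delta+\mathrm{Id})^{-1}\colon L^2\Omega^k(N)\to H^2\Omega^k(N)\hookrightarrow L^2\Omega^k(N)$ is a compact, self-adjoint, positive operator by the Rellich--Kondrachov theorem.

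Third, I would apply the spectral theorem for compact self-adjoint operators to $(\Delta+\mathrm{Id})^{-1}$. This yields an $L^2$-orthonormal basis $\{\omega_j\}_{j\in\N}$ of eigenforms with eigenvalues $\mu_j>0$, each of finite multiplicity, satisfying $\mu_j\to 0$. Undoing the shift gives $\Delta\omega_j=\lambda_j\omega_j$ with $\lambda_j=\mu_j^{-1}-1\geq 0$ of finite multiplicity and $\lambda_j\to+\infty$, which is the claimed accumulation behaviour. Finally, elliptic regularity bootstrapping shows that each $\omega_j\in H^s\Omega^k(N)$ for every $s$, hence $\omega_j\in\Omega^k(N)$ by Sobolev embedding on the compact manifold.

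The main obstacle is not the spectral theorem itself, which is entirely standard, but rather the analytic input: verifying the ellipticity of $\Delta$, producing the self-adjoint Friedrichs extension, and deducing compactness of the resolvent from Rellich's theorem. Since a thesis of this kind typically just quotes these facts from standard references on elliptic operators on compact manifolds (e.g.\ Wells or Warner), I would simply cite them and concentrate the exposition on the organisation above.
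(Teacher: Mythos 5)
The paper does not prove this theorem at all: it is quoted verbatim as a standard result from the cited reference (Rosenberg, Theorem 1.30), so there is no in-paper argument to compare against. Your sketch is the standard proof given in such references — formal self-adjointness and non-negativity via Stokes on the closed manifold, ellipticity of $\Delta$, compactness of $(\Delta+\mathrm{Id})^{-1}$ by Rellich, the spectral theorem for compact self-adjoint operators, and elliptic bootstrapping for smoothness of the eigenforms — and it is correct in outline.
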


\begin{thm}[Hodge decomposition {\cite[Theorem 1.37]{Ros97}}]
Let $N$ be a compact, connected, oriented Riemannian manifold. Define the space of \emph{harmonic $k$-forms}:
$$
\mathcal{H}^k_\Delta:=\{\alpha \in \Omega^k(N)\,\,|\,\, \Delta \alpha=0 \}=\{ \alpha\in \Omega^k(N)\,\,|\,\, \alpha\in \mathrm{ker}(d)\cap \mathrm{ker}(\delta) \}.
$$
The following decomposition holds:
\begin{equation}\label{eq:3:hodge_decomposition}
\Omega^k(N)=\mathcal{H}^k_\Delta(N)\oplus d\Omega^{k-1}(N)\oplus \delta\Omega^{k+1}(N).
\end{equation}
The decomposition is orthogonal with respect to the natural metric-induced $L^2$ scalar product.
\end{thm}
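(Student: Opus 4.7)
The plan is to derive the Hodge decomposition directly from the preceding Theorem 1.30, which supplies an $L^2$-orthonormal basis of eigenforms of $\Delta$ with eigenvalues accumulating only at infinity. The proof naturally splits into two parts: showing the three summands are pairwise orthogonal, and showing they span $\Omega^k(N)$.

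First I would verify orthogonality. For $\alpha \in \mathcal{H}^k_\Delta(N)$, $d\beta \in d\Omega^{k-1}(N)$, $\delta\gamma \in \delta\Omega^{k+1}(N)$, integration by parts (valid since $N$ is compact without boundary) gives
\[
\langle \alpha, d\beta\rangle_{L^2} = \langle \delta\alpha,\beta\rangle_{L^2}=0,\qquad
\langle \alpha, \delta\gamma\rangle_{L^2} = \langle d\alpha,\gamma\rangle_{L^2}=0,
\]
because a harmonic form is simultaneously closed and coclosed (this equivalence itself follows from $\langle \Delta\alpha,\alpha\rangle = \|d\alpha\|^2+\|\delta\alpha\|^2$). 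Similarly $\langle d\beta,\delta\gamma\rangle = \langle d^2\beta,\gamma\rangle = 0$. Thus the three subspaces are mutually $L^2$-orthogonal.

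Next, spanning. Let $\{\phi_i\}$ be the orthonormal basis of $L^2$-eigenforms given by Theorem \ref{thm:3:Hodge_theorem}, with $\Delta\phi_i=\lambda_i\phi_i$. For $\lambda_i=0$ the $\phi_i$ lie in $\mathcal{H}^k_\Delta(N)$ by definition. For $\lambda_i>0$ one has the key algebraic identity
\[
\phi_i = \tfrac{1}{\lambda_i}\Delta\phi_i = d\!\left(\tfrac{\delta\phi_i}{\lambda_i}\right) + \delta\!\left(\tfrac{d\phi_i}{\lambda_i}\right),
\]
which exhibits each positive-eigenvalue basis element as the sum of an exact and a coexact smooth form. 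Given $\alpha\in\Omega^k(N)$, expand $\alpha=\sum_i c_i\phi_i$ with $c_i=\langle\alpha,\phi_i\rangle$; the harmonic projection $H(\alpha):=\sum_{\lambda_i=0}c_i\phi_i$ is a finite sum (finite multiplicities) and one formally writes
\[
\alpha - H(\alpha) = d\!\left(\sum_{\lambda_i>0} \tfrac{c_i}{\lambda_i}\delta\phi_i\right) + \delta\!\left(\sum_{\lambda_i>0}\tfrac{c_i}{\lambda_i}d\phi_i\right).
\]

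The main obstacle — the only nontrivial analytic point — is to justify that these series converge not merely in $L^2$ but in $C^\infty$, so that the exact and coexact pieces are genuine smooth forms. This is where ellipticity of $\Delta$ enters: since $\alpha$ is smooth, $\Delta^N\alpha\in L^2$ for every $N$, hence $\lambda_i^N c_i = \langle \Delta^N\alpha,\phi_i\rangle$ is $\ell^2$ in $i$ for each $N$. Combined with Weyl-type growth of the eigenvalues and the standard elliptic estimate $\|\phi_i\|_{C^r}\lesssim \lambda_i^{r/2 + m/4}$, one concludes that both series converge absolutely in every Sobolev norm, hence in $C^\infty$. This upgrades the $L^2$ identity to an equality in $\Omega^k(N)$ and furnishes the decomposition; orthogonality then forces the decomposition to be direct.
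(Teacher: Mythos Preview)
The paper does not prove this theorem at all: it is quoted verbatim as \cite[Theorem 1.37]{Ros97} and used as a black box, so there is no ``paper's own proof'' to compare against. Your argument is a correct and standard derivation of the Hodge decomposition from the spectral theorem for $\Delta$ (the preceding cited result): the orthogonality computations are fine, the identity $\phi_i=\lambda_i^{-1}(d\delta\phi_i+\delta d\phi_i)$ is the right mechanism for the nonzero eigenspaces, and you have correctly isolated the only genuine analytic issue, namely upgrading $L^2$ convergence of the eigenexpansion to $C^\infty$ convergence via rapid decay of the Fourier coefficients of a smooth form against polynomial growth of $\|\phi_i\|_{C^r}$. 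A slightly cleaner way to package that last step is to observe that $G(\alpha):=\sum_{\lambda_i>0}\lambda_i^{-1}c_i\phi_i$ converges in $C^\infty$ (same estimate), so that $\alpha-H(\alpha)=\Delta G(\alpha)=d(\delta G(\alpha))+\delta(dG(\alpha))$ with $\delta G(\alpha)\in\Omega^{k-1}(N)$ and $dG(\alpha)\in\Omega^{k+1}(N)$ manifestly smooth; this avoids having to control $d\phi_i$ and $\delta\phi_i$ separately.
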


We aim at exploiting these theorems to obtain a decomposition for a generic element $dA=\ast d\tilde{A}\in d\Omega^{k-1}\cap \ast d\Omega^{m-k-1}(M)$ in terms of an orthonormal basis. Let us state and prove a technical lemma first.

\begin{lem}\label{lem:3:hodge_dual_M_Sigma}
Let $\omega$ be a purely spatial $k$-form. Then:
\begin{enumerate}[(i)]
\item $\ast (dt\wedge \omega) = -\ast_\Sigma \omega$;
\item $\ast \omega=(-1)^k dt\wedge \ast_\Sigma \omega$,
\end{enumerate}
where $\ast_\Sigma$ denotes the Hodge dual on $\Sigma$.
\end{lem}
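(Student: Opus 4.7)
The plan is to reduce the identities to a direct computation in a local orthonormal coframe adapted to the product structure of $M$. Since $M$ is ultrastatic with $g=-dt\otimes dt+h$, pick a local orthonormal coframe $e^1,\dots,e^{m-1}$ for $(\Sigma,h)$ and set $e^0:=dt$, so that $g(e^0,e^0)=-1$ and $g(e^i,e^j)=\delta^{ij}$ for $i,j\geq 1$. With this choice the volume form factorises as
\begin{equation*}
\mathrm{dVol}_M=e^0\wedge e^1\wedge\cdots\wedge e^{m-1}=dt\wedge\mathrm{dVol}_\Sigma .
\end{equation*}
Both sides of (i) and (ii) are $\R$-linear in $\omega$, so by expanding a purely spatial $k$-form as $\omega=\sum_I\omega_I\,e^{i_1}\wedge\cdots\wedge e^{i_k}$ with $I=(i_1<\cdots<i_k)\subseteq\{1,\dots,m-1\}$, it is enough to verify both identities for a single basic form $\omega=e^{i_1}\wedge\cdots\wedge e^{i_k}$.

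For part (ii), the idea is to use the defining relation of the Hodge star, $\alpha\wedge\ast\beta=\langle\alpha,\beta\rangle\,\mathrm{dVol}$. Since $\omega$ carries only spatial indices, one has $\langle\omega,\omega\rangle_M=\langle\omega,\omega\rangle_\Sigma$, the Lorentzian sign never being activated. I will then check that the candidate $(-1)^{k}\,dt\wedge\ast_\Sigma\omega$ satisfies this relation by computing
\begin{equation*}
\omega\wedge\bigl[(-1)^{k}dt\wedge\ast_\Sigma\omega\bigr]
=(-1)^{k}(-1)^{k}\,dt\wedge\omega\wedge\ast_\Sigma\omega
=dt\wedge\langle\omega,\omega\rangle_\Sigma\,\mathrm{dVol}_\Sigma
=\langle\omega,\omega\rangle_M\,\mathrm{dVol}_M,
\end{equation*}
where the factor $(-1)^{k}$ arises from moving $dt$ across the $k$-form $\omega$. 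Uniqueness of the Hodge dual then yields $\ast\omega=(-1)^{k}dt\wedge\ast_\Sigma\omega$.

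For part (i), I proceed analogously, noting that $dt\wedge\omega$ has a timelike leg so that
\begin{equation*}
\langle dt\wedge\omega,dt\wedge\omega\rangle_M
=g(dt,dt)\,\langle\omega,\omega\rangle_\Sigma
=-\langle\omega,\omega\rangle_\Sigma .
\end{equation*}
Testing the candidate $-\ast_\Sigma\omega$ against the defining relation yields
\begin{equation*}
(dt\wedge\omega)\wedge(-\ast_\Sigma\omega)
=-dt\wedge(\omega\wedge\ast_\Sigma\omega)
=-\langle\omega,\omega\rangle_\Sigma\,dt\wedge\mathrm{dVol}_\Sigma
=\langle dt\wedge\omega,dt\wedge\omega\rangle_M\,\mathrm{dVol}_M,
\end{equation*}
whence $\ast(dt\wedge\omega)=-\ast_\Sigma\omega$ by uniqueness. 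The only real obstacle is bookkeeping of signs: the factor $(-1)^{k}$ in (ii) and the overall minus in (i) both originate from the Lorentzian sign $g(dt,dt)=-1$ combined with the degree of $\omega$, and it is essential to be consistent with the convention already adopted in Section 3.3.2 (where $\ast d\theta=-dt$ and $\ast dt=-d\theta$), which this calculation correctly reproduces in the case $m=2$, $k=1$.
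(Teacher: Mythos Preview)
Your proof is correct and follows essentially the same approach as the paper: both exploit the defining relation $\alpha\wedge\ast\beta=\langle\alpha,\beta\rangle\,\mathrm{dVol}$ of the Hodge star, together with the factorisation $\mathrm{dVol}_M=dt\wedge\mathrm{dVol}_\Sigma$ and the Lorentzian sign $g(dt,dt)=-1$. Your version is slightly more explicit in introducing an adapted orthonormal coframe and reducing to basis forms, while the paper proceeds directly with a generic $\omega$, but the computations are the same.
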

\begin{proof}
(i). By the definition of the Hodge dual, we have:
$$
 \omega \wedge \ast_\Sigma \omega = h(\omega,\omega)\mathrm{dVol}_\Sigma
$$
$$
dt\wedge \omega \wedge \ast (dt\wedge \omega)=-h(\omega,\omega) dt\wedge \mathrm{dVol}_\Sigma.
$$
It ensues that:
$$
\ast(dt\wedge \omega)=-\ast_\Sigma \omega.
$$
(ii). Again by the definition of the Hodge dual, we write:
\begin{align*}
\omega\wedge \ast\omega=& h(\omega,\omega) dt\wedge \mathrm{dVol}_\Sigma\\
=& dt\wedge \omega \wedge \ast_\Sigma \omega= (-1)^k \omega\wedge dt\wedge \ast_\Sigma\omega.
\end{align*}
\end{proof}

To avoid confusion, we will denote by $d_\Sigma$ the exterior derivative on the Cauchy surface $\Sigma$ and by $d_M$ the exterior derivative on $M$. We want to obtain an orthonormal basis for $d_{\Sigma}\Omega^{k-1}(\Sigma)$. Resorting to the Hodge decomposition \eqref{eq:3:hodge_decomposition}, we have:
$$
d_\Sigma\Omega^{k-1}(\Sigma)=d_\Sigma(\delta_\Sigma \Omega^{k}(\Sigma)).
$$
Via Theorem \ref{thm:3:Hodge_theorem}, choose an orthonormal set $\mathcal{U}\subset\delta_\Sigma\Omega^k(\Sigma)$, i.e.\ such that:
$$
\int_\Sigma {u}_i\wedge\ast {u}_j= \delta_{ij}
$$
and
$$
\Delta_\Sigma {u}_i=\lambda_i^2 {u}_i, \qquad \lambda_i>0.
$$
Then:
$$
d_\Sigma\Omega^{k-1}(\Sigma)=\overline{\mathrm{span}\{d_\Sigma u_i\,:\, u_i\in \mathcal{U}\}}.
$$
Furthermore, we observe that, by the properties of the Hodge decomposition, defining $\mathcal{V}:=\{\ast u_i\,:\,u_i\in\mathcal{U}\}$, it holds $\mathcal{V}\subset d_\Sigma\Omega^{m-k-1}(\Sigma)$ and:
$$
d_\Sigma\Omega^{m-k-1}(\Sigma)=\overline{\mathrm{span}\{ v_i\,:\,v_i\in\mathcal{V} \}}.
$$

\begin{prop}\label{prop:3:solution_four_1}
A solution $(d_MA_i,d_M\tilde{A}_i)\in d\Omega^{k-1,m-k-1}(M)$ to:
\begin{equation}\label{eq:3:cauchy_4d_1}
d_M A_i=\ast d_M \tilde{A}_i
\end{equation}
with initial data $(\iota^\ast_\Sigma d_M A_i= d_\Sigma u_i, \iota^\ast_\Sigma d_M\tilde{A}_i=0)$ is given by:
\begin{align*}
& d_MA_i=d_M \left( \cos(\lambda_i t) u_i \right),\\
& d_M\tilde{A}_i= d_M \left( (-1)^{km+1}\lambda_i^{-1}\sin(\lambda_i t)\ast_\Sigma d_\Sigma u_i \right).
\end{align*}
\end{prop}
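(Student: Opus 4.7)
The plan is to verify the proposition by straightforward direct computation, taking as representatives $A_i=\cos(\lambda_i t)\,u_i$ and $\tilde A_i=(-1)^{km+1}\lambda_i^{-1}\sin(\lambda_i t)\,\ast_\Sigma d_\Sigma u_i$, both regarded as pulled back to $M\simeq \mathbb{R}\times\Sigma$ via the standard splitting. The identity $d_M=dt\wedge\partial_t+d_\Sigma$ will give
\begin{align*}
d_MA_i&=-\lambda_i\sin(\lambda_i t)\,dt\wedge u_i+\cos(\lambda_i t)\,d_\Sigma u_i,\\
d_M\tilde A_i&=(-1)^{km+1}\cos(\lambda_i t)\,dt\wedge\ast_\Sigma d_\Sigma u_i+(-1)^{km+1}\lambda_i^{-1}\sin(\lambda_i t)\,d_\Sigma\ast_\Sigma d_\Sigma u_i.
\end{align*}
The initial conditions then drop out immediately: at $t=0$, the pull-back $\iota^\ast_\Sigma$ kills every summand containing $dt$, so $\iota^\ast_\Sigma d_MA_i=d_\Sigma u_i$ and $\iota^\ast_\Sigma d_M\tilde A_i=0$ (the surviving term in the second equation carries $\sin(\lambda_i\cdot 0)=0$).

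The substantive step is the verification of $d_MA_i=\ast d_M\tilde A_i$. I would compute $\ast d_M\tilde A_i$ term by term with the help of Lemma \ref{lem:3:hodge_dual_M_Sigma}. Applying part (i) to the term carrying $dt$ and using the standard identity $\ast_\Sigma\ast_\Sigma=(-1)^{k(m-1-k)}$ on the $k$-form $d_\Sigma u_i$, a routine parity count (the overall sign reduces to $(-1)^{k(k+1)}=1$) shows that
\[
\ast\!\left((-1)^{km+1}\cos(\lambda_i t)\,dt\wedge\ast_\Sigma d_\Sigma u_i\right)=\cos(\lambda_i t)\,d_\Sigma u_i.
\]
For the purely spatial term I would apply Lemma \ref{lem:3:hodge_dual_M_Sigma}(ii) to pull a factor $dt$ out, arriving at $\ast_\Sigma d_\Sigma\ast_\Sigma d_\Sigma u_i$; this is where the eigenvalue hypothesis enters. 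Since $u_i\in\delta_\Sigma\Omega^k(\Sigma)$ is coexact one has $\delta_\Sigma u_i=0$, hence $\Delta_\Sigma u_i=\delta_\Sigma d_\Sigma u_i=\lambda_i^2 u_i$, and the sign identity $\delta_\Sigma=(-1)^{(m-1)(k+1)+1}\ast_\Sigma d_\Sigma\ast_\Sigma$ applied to the $k$-form $d_\Sigma u_i$ yields
\[
\ast_\Sigma d_\Sigma\ast_\Sigma d_\Sigma u_i=(-1)^{(m-1)(k+1)+1}\lambda_i^2\,u_i.
\]
Collecting all signs ($km+1$ from $\tilde A_i$, $m-k$ from Lemma \ref{lem:3:hodge_dual_M_Sigma}(ii), and $(m-1)(k+1)+1$ from the codifferential), the total parity is $2km+2m-2k+1$, i.e.\ an overall minus; combined with $\lambda_i^{-1}\cdot\lambda_i^2=\lambda_i$ this produces $-\lambda_i\sin(\lambda_i t)\,dt\wedge u_i$. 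Summing the two contributions reproduces $d_MA_i$ exactly.

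The only real obstacle is the careful sign-bookkeeping inherent in manipulating $\ast$, $\ast_\Sigma$, and $\delta_\Sigma$ in arbitrary dimension and degree; the rest is mechanical. Once this is done, the pair $(d_MA_i,d_M\tilde A_i)$ lies in $d\Omega^{k-1,m-k-1}(M)$ by construction, solves \eqref{eq:3:cauchy_4d_1}, and matches the prescribed initial data, so it is the (unique, by Proposition \ref{prop:2:iso_config_diffchar}) solution claimed in the statement.
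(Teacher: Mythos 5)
Your proposal is correct and follows essentially the same route as the paper: a direct computation of both sides using $d_M=dt\wedge\partial_t+d_\Sigma$, Lemma \ref{lem:3:hodge_dual_M_Sigma}, the sign of $\ast_\Sigma\ast_\Sigma$, and the coexact-eigenform identity $\delta_\Sigma d_\Sigma u_i=\Delta_\Sigma u_i=\lambda_i^2u_i$, with your sign bookkeeping matching the paper's result. The only additions are the (trivial) explicit check of the initial data and the remark on uniqueness, neither of which changes the argument.
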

\begin{proof}
The left-hand side in \eqref{eq:3:cauchy_4d_1} is given by:
\begin{align*}
d_M (\cos(\lambda_i t)= -\lambda_i \sin(\lambda_i u_i) dt\wedge u_i + \cos(\lambda_i t) du_i,
\end{align*}
while the right-hand side is given by:
\begin{align*}
&\ast_M d_M((-1)^{km+1}\lambda_i^{-1}\sin(\lambda_i t)\ast_\Sigma d_\Sigma u_i ) \\
&= \ast_M( (-1)^{km+1}\cos(\lambda_i t) dt\wedge \ast_\Sigma d_\Sigma u_i+ (-1)^{km+1}\lambda_i^{-1} \sin(\lambda_i t)d_\Sigma \ast_\Sigma d_\Sigma u_i)\\
&= - (-1)^{km+1}\cos(\lambda_i t) \ast_\Sigma \ast_\Sigma d_\Sigma u_i + (-1) \lambda_i^{-1} \sin(\lambda_i t) dt\wedge \Delta_\Sigma u_i\\
&= \cos(\lambda_i t) d_\Sigma u_i - \lambda_i \sin(\lambda_i t) dt\wedge u_i.
\end{align*}
The statement then follows.
\end{proof}

\begin{prop}\label{prop:3:solution_four_2}
A solution $(d_MA_j,d_M\tilde{A}_j)\in d\Omega^{k-1,m-k-1}(M)$ to:
\begin{equation}\label{eq:3:cauchy_4d_2}
d_M A_j=\ast_M d_M \tilde{A}_j
\end{equation}
with initial data $(\iota^\ast_\Sigma d_M A_j=0,\iota^\ast_\Sigma d_M\tilde{A}=\ast_\Sigma u_j)$ is given by:
\begin{align*}
& d_MA_j=d_M \left( (-1)^{k(m-k)}\lambda_j^{-1}\sin (\lambda_j t) u_j \right),\\
& d_M\tilde{A}_j= d_M \left( (-1)^k \lambda_j^{-2} \cos(\lambda_j t) \ast_\Sigma d_\Sigma u_j \right).
\end{align*}
\end{prop}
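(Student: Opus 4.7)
The proof mirrors exactly the verification strategy used for Proposition \ref{prop:3:solution_four_1}: the plan is to substitute the proposed ansatz into both sides of $d_MA_j=\ast_M d_M\widetilde A_j$, apply Lemma \ref{lem:3:hodge_dual_M_Sigma}, check that the two sides agree term by term, and finally evaluate at $t=0$ to confirm the initial data. I will first compute
$$
d_MA_j = (-1)^{k(m-k)}\cos(\lambda_jt)\, dt\wedge u_j + (-1)^{k(m-k)}\lambda_j^{-1}\sin(\lambda_jt)\,d_\Sigma u_j,
$$
obtained by direct differentiation. Similarly, $d_M\widetilde A_j$ expands into a $dt\wedge\ast_\Sigma d_\Sigma u_j$ contribution proportional to $\sin(\lambda_jt)$ and a purely spatial contribution $d_\Sigma\ast_\Sigma d_\Sigma u_j$ proportional to $\cos(\lambda_jt)$.

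The central algebraic step is the identity $d_\Sigma\ast_\Sigma d_\Sigma u_j=(-1)^k\lambda_j^2\,\ast_\Sigma u_j$, which I would derive as follows: since $u_j\in\delta_\Sigma\Omega^k(\Sigma)$ we have $\delta_\Sigma u_j=0$, so by Hodge theory $\delta_\Sigma d_\Sigma u_j=\Delta_\Sigma u_j=\lambda_j^2 u_j$. Rewriting $\delta_\Sigma$ on a $k$-form via $\delta_\Sigma=(-1)^{(m-1)(k+1)+1}\ast_\Sigma d_\Sigma\ast_\Sigma$ and applying $\ast_\Sigma$ once more, then collapsing $\ast_\Sigma\ast_\Sigma$ on forms of the appropriate degrees using the Riemannian sign $(-1)^{p(m-1-p)}$, produces the stated identity after the powers of $(-1)$ collapse. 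Substituting this into $d_M\widetilde A_j$ converts the $\cos$-term into $(-1)^{k}\lambda_j^{-2}\cdot(-1)^k\lambda_j^2\cos(\lambda_jt)\,\ast_\Sigma u_j=\cos(\lambda_jt)\,\ast_\Sigma u_j$. I then apply $\ast_M$: by Lemma \ref{lem:3:hodge_dual_M_Sigma}(i) the $dt\wedge\ast_\Sigma d_\Sigma u_j$ piece becomes (after using $\ast_\Sigma\ast_\Sigma d_\Sigma u_j=(-1)^{k(m-1-k)}d_\Sigma u_j$) the expected spatial term $(-1)^{k(m-k)}\lambda_j^{-1}\sin(\lambda_jt)\,d_\Sigma u_j$, while by Lemma \ref{lem:3:hodge_dual_M_Sigma}(ii) the purely spatial piece becomes (after $\ast_\Sigma\ast_\Sigma u_j=(-1)^{(k-1)(m-k)}u_j$) the expected $dt$-term $(-1)^{k(m-k)}\cos(\lambda_jt)\,dt\wedge u_j$. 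Comparing, the equation $d_MA_j=\ast_M d_M\widetilde A_j$ holds.

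For the initial data, setting $t=0$ makes $\sin(\lambda_jt)=0$ and $\cos(\lambda_jt)=1$. In $d_MA_j$ only the $dt\wedge u_j$ piece survives; since $\iota^\ast_\Sigma$ kills any form containing $dt$, one obtains $\iota^\ast_\Sigma d_MA_j=0$. In $d_M\widetilde A_j$ the $dt$-piece vanishes and only the spatial part remains, which at $t=0$ equals $(-1)^k\lambda_j^{-2}d_\Sigma\ast_\Sigma d_\Sigma u_j=\ast_\Sigma u_j$ by the identity above. This matches the prescribed Cauchy data.

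The only genuine obstacle is sign bookkeeping: four distinct sources of $(-1)$ intervene, namely the Hodge star of $\Sigma$ squared on forms of degree $k-1$, $k$, and $m-k$, the conversion between $d_\Sigma\ast_\Sigma$ and $\delta_\Sigma$, the spacetime Hodge-star identities of Lemma \ref{lem:3:hodge_dual_M_Sigma}, and the prefactors $(-1)^{k(m-k)}$ and $(-1)^k$ in the ansatz. The prefactor $(-1)^k$ in $\widetilde A_j$ and $(-1)^{k(m-k)}$ in $A_j$ are precisely what is needed for all these signs to conspire into the simple exponent $(-1)^{k(m-k)}$ on both matching terms; the verification is therefore routine but requires careful parity arithmetic modulo 2 to collapse expressions such as $k+(m-k)+k(m-k)+2k-m\equiv k(m-k)\pmod 2$.
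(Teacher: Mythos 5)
Your verification is correct and follows essentially the same route as the paper: substitute the ansatz, expand with $d_M$, apply Lemma \ref{lem:3:hodge_dual_M_Sigma} together with $\delta_\Sigma u_j=0$ and $\Delta_\Sigma u_j=\lambda_j^2 u_j$, and track the signs to $(-1)^{k(m-k)}$ (the paper converts $\ast_\Sigma d_\Sigma\ast_\Sigma d_\Sigma u_j$ to $\Delta_\Sigma u_j$ after applying $\ast_M$, whereas you simplify $d_\Sigma\ast_\Sigma d_\Sigma u_j=(-1)^k\lambda_j^2\ast_\Sigma u_j$ first, which is the same computation in a different order and consistent with the paper's own Lemma \ref{lem:3:integral_states}). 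Your explicit check of the Cauchy data at $t=0$ is a harmless addition that the paper leaves implicit.
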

\begin{proof}
The left-hand side in \eqref{eq:3:cauchy_4d_2} is given by:
\begin{align*}
&d_M\left( (-1)^{k(m-k)}\lambda_j^{-1} \sin(\lambda_j t) u_j \right)= \\
&=(-1)^{k(m-k)}\cos(\lambda_j t) dt\wedge u_j + (-1)^{k(m-k)}\lambda_j^{-1}\sin(\lambda_j t) d_\Sigma u_j,
\end{align*}
whereas the right-hand side is given by:
\begin{align*}
& \ast_M d_M \left( (-1)^k \lambda_j^{-2} \cos(\lambda_j t)\ast_\Sigma d_\Sigma u_j \right)=\\
&= \ast_M \left( -(-1)^{k} \lambda_j^{-1}\sin(\lambda_j t) dt	\wedge \ast_\Sigma d_\Sigma u_j + (-1)^k \lambda_j^{-2} \cos(\lambda_j t) d_\Sigma \ast_\Sigma d_\Sigma u_j \right)\\
&= (-1)^k \lambda_j^{-1} \sin(\lambda_j t) \ast_\Sigma \ast_\Sigma d_\Sigma u_j + (-1)^{m+1} \lambda_j^{-2} \cos(\lambda_j t) dt\wedge \ast_\Sigma d_\Sigma \ast_\Sigma d_\Sigma u_j\\
&= (-1)^{km-k^2}\lambda_j^{-1} \sin(\lambda_j t) d_\Sigma u_j + (-1)^{km-k} \lambda_j^{-2} \cos(\lambda_j t) dt\wedge \Delta_\Sigma u_j\\
&= (-1)^{k(m-k)}\lambda_j^{-1} \sin(\lambda_j t) d_\Sigma u_j + (-1)^{k(m-k)} \cos(\lambda_j t) dt\wedge u_j.
\end{align*}
The proposition is proved.
\end{proof}

If we now consider arbitrary initial data $(d_\Sigma A_0,d_\Sigma \tilde{A}_0)\in d_\Sigma\Omega^{k-1,m-k-1}(\Sigma)$ we can find sequences of real coefficients $\{\alpha_i\}, \{\widetilde {\alpha}_i\}$ such that:
\begin{align*}
&d_\Sigma A_0=\sum_{i}  \alpha_i \lambda_i^{-1} d_\Sigma u_i, \\
&d_\Sigma \tilde{A}_0=\sum_i \tilde{\alpha}_i \ast u_i.
\end{align*}
On account of Proposition \ref{prop:3:solution_four_1} and Proposition \ref{prop:3:solution_four_2} it follows that the solutions corresponding to the given initial data are:
\begin{align*}
& d_M A =d_M \left\{ \sum_i \left[ \alpha_i \lambda_i^{-1} \cos(\lambda_i t) + (-1)^{k(m-k)}\tilde{\alpha}_i \lambda_i^{-1} \sin(\lambda_i t)\right] u_i  \right\} =\\
&=\ast_M d_M \left\{ \sum_{i} \left[ (-1)^{km+1}\alpha_i \lambda_i^{-2} \sin(\lambda_i t) + (-1)^k \tilde{\alpha}_i \lambda_i^{-2} \cos(\lambda_i t)\right] \ast_\Sigma d_\Sigma u_i   \right\}\\
&= \ast_M d_M \tilde{A}.
\end{align*}
Following the prescription discussed in the previous section to construct a quasifree state, we select the positive frequency part of $d_M A$ and of $d_M\tilde{A}$:
\begin{align*}
& \mathcal{P} d_M A= d_M \left\{  \sum_i \dfrac{1}{2} e^{-i \lambda_i t} \left[ \alpha_i \lambda_i^{-1} + (-1)^{k(m-k)} i \tilde{\alpha}_i \lambda_i^{-1} \right] u_i  \right\},\\
& \mathcal{P} d_M\tilde{A}= d_M \left\{ \sum_{i} \dfrac{1}{2} e^{-i\lambda_i t} \left[ - (-1)^{km} i \alpha_i \lambda_i^{-2} + (-1)^k \tilde{\alpha}_i \lambda_i^{-2}\right] \ast_\Sigma d_\Sigma u_i  \right\}.
\end{align*}
It ensues that:
\begin{align*}
& \mathcal{P}d_\Sigma A_0= \dfrac{1}{2}\sum_i \left[ \alpha_i \lambda_i^{-1} + i (-1)^{k(m-k)} \tilde{\alpha_i}\lambda_i^{-1} \right] d_\Sigma u_i,\\
& \mathcal{P}d_\Sigma \tilde{A}_0= \dfrac{1}{2} \sum_i \left[\tilde{\alpha_i} - i (-1)^{k(m-k)}  \alpha_i \right] \ast_\Sigma u_i.
\end{align*}

\begin{lem}\label{lem:3:integral_states}
The following equations hold:
\begin{enumerate}[(i)]
\item $\int_\Sigma u_i\wedge d_\Sigma \ast_\Sigma d_\Sigma u_j = (-1)^k \lambda_i^2 \delta_{ij}$
\item $\int_\Sigma \ast d_\Sigma u_i \wedge d_\Sigma u_j= (-1)^{mk} \lambda_i^2 \delta_{ij}.$
\end{enumerate}
\end{lem}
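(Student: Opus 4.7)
The plan is to exploit the fact that each $u_i$ is coexact on $\Sigma$ and hence coclosed, so the Laplacian eigenvalue equation simplifies nicely, and then to treat both identities by integration by parts (Stokes on the closed manifold $\Sigma$) together with graded commutativity of the wedge product.

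First I would record two preliminary observations about the basis $\mathcal{U}$. Since each $u_i \in \delta_\Sigma\Omega^k(\Sigma)$ is coexact, it is automatically coclosed: $\delta_\Sigma u_i = \delta_\Sigma^{\,2}\beta_i = 0$. Hence the eigenvalue equation $\Delta_\Sigma u_i = \lambda_i^2 u_i$ collapses to $\delta_\Sigma d_\Sigma u_i = \lambda_i^2 u_i$. Combined with the $L^2$-orthonormality $\langle u_i, u_j\rangle_{L^2(\Sigma)} = \delta_{ij}$ this gives the master formula
\begin{equation*}
\langle d_\Sigma u_i, d_\Sigma u_j\rangle_{L^2(\Sigma)} = \langle u_i, \delta_\Sigma d_\Sigma u_j\rangle_{L^2(\Sigma)} = \lambda_j^{2}\,\delta_{ij},
\end{equation*}
which will do essentially all the work.

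For (i), I would observe that $u_i \wedge \ast_\Sigma d_\Sigma u_j$ is a form of degree $m-2$ on the closed $(m-1)$-manifold $\Sigma$, so Stokes' theorem applied to
\begin{equation*}
d_\Sigma\bigl(u_i \wedge \ast_\Sigma d_\Sigma u_j\bigr) = d_\Sigma u_i \wedge \ast_\Sigma d_\Sigma u_j + (-1)^{k-1}\, u_i \wedge d_\Sigma \ast_\Sigma d_\Sigma u_j
\end{equation*}
yields $\int_\Sigma u_i \wedge d_\Sigma\ast_\Sigma d_\Sigma u_j = (-1)^k \int_\Sigma d_\Sigma u_i\wedge \ast_\Sigma d_\Sigma u_j = (-1)^k \langle d_\Sigma u_i, d_\Sigma u_j\rangle_{L^2(\Sigma)}$, and the master formula delivers the desired $(-1)^k\lambda_i^2\delta_{ij}$.

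For (ii), no integration by parts is needed: since $d_\Sigma u_i$ is a $k$-form on the $(m-1)$-manifold $\Sigma$, graded commutativity of the wedge gives $\ast_\Sigma d_\Sigma u_i \wedge d_\Sigma u_j = (-1)^{k(m-1-k)} d_\Sigma u_j \wedge \ast_\Sigma d_\Sigma u_i$, and the right-hand side integrates to $\langle d_\Sigma u_j,d_\Sigma u_i\rangle_{L^2(\Sigma)}$. The sign simplifies using $k^2 \equiv k \pmod 2$:
\begin{equation*}
k(m-1-k) \equiv km - k - k \equiv mk \pmod{2},
\end{equation*}
so the result is $(-1)^{mk}\lambda_i^2\delta_{ij}$. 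I do not anticipate any real obstacle here; the only thing to be careful with is bookkeeping of the Koszul signs that arise from the wedge and from $d_\Sigma$ moving past $u_i$, together with the implicit use of $\lambda_i = \lambda_j$ whenever $\delta_{ij}\neq 0$.
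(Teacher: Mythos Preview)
Your proof is correct and uses the same ingredients as the paper---coexactness of the $u_i$ so that $\delta_\Sigma d_\Sigma u_j=\lambda_j^2 u_j$, Stokes' theorem on the closed manifold $\Sigma$, and graded commutativity of the wedge---just arranged slightly differently: the paper handles (i) via the pointwise identity $d_\Sigma\ast_\Sigma d_\Sigma u_j=(-1)^k\ast_\Sigma\delta_\Sigma d_\Sigma u_j$ and then reduces (ii) to (i) by an integration by parts plus a swap, whereas you integrate by parts for (i) and swap directly for (ii). These are cosmetic reorderings of the same computation.
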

\begin{proof}
The proof is a straightforward application of Hodge theory. As far as (i) is concerned, we have:
\begin{align*}
\int_\Sigma u_i\wedge d_\Sigma \ast_\Sigma d_\Sigma u_j &=\int_\Sigma u_i\wedge (-1)^k \ast_\Sigma \delta_\Sigma d_\Sigma u_j\\
&= \int_\Sigma u_i \wedge \lambda_j^2 \ast_\Sigma u_j = (-1)^k\lambda_i^2 \delta_{ij}.
\end{align*}
As regards (ii), it holds:
\begin{align*}
\int_\Sigma \ast d_\Sigma u_i \wedge d_\Sigma u_j &= -(-1)^{m-k-1}\int_\Sigma d_\Sigma \ast_\Sigma d_\Sigma u_i \wedge u_j\\
&= (-1)^{m-k+(m-k)(k-1)}\int_\Sigma u_j\wedge d_\Sigma \ast_\Sigma d_\Sigma u_i= (-1)^{mk}\lambda_i^2 \delta_{ij}.
\end{align*}
\end{proof}

Using Lemma \ref{lem:3:integral_states}, we are now in a position to compute $\tau_\mathbb{C}$:
\begin{align*}
&\tau_\mathbb{C}(\mathcal{P}(d_MA=\ast_M d_M\tilde{A}), \mathcal{P}(d_MA'=\ast_Md_M \tilde{A}'))=\\
=& \int_\Sigma \overline{\mathcal{P}\tilde{A}}\wedge \mathcal{P}d_M A' -(-1)^{k(m-k)}\overline{\mathcal{P}A}\wedge \mathcal{P}d_M\tilde{A}\\
=& \int_\Sigma \dfrac{1}{4} \Bigg\{ \sum_{j'} \left[ (-1)^{km} i \alpha_{j'} \lambda_{j'}^{-2} + (-1)^k \tilde{\alpha}_{j'}\lambda_{j'}^{-2} \right]\times\\
&\times \sum_j \left[ \alpha'_j \lambda_j^{-2} +(-1)^{k(m-k)}i \tilde{\alpha}_j' \lambda_j^{-1} \right] \ast_\Sigma d_\Sigma u_{j'} \wedge d_\Sigma u_j \Bigg\} \\
&-(-1)^{k(m-k)}\int_\Sigma \dfrac{1}{4} \Bigg\{ \sum_{j'} \left[ \alpha_{j'}\lambda_{j'}^{-1} - (-1)^{k(m-k)} i \tilde{\alpha}_{j'} \lambda_{j'}^{-1} \right]\times\\
&\times \sum_j \left[ (-1)^{km+1} i {\alpha}_j'\lambda_j^{-2} + (-1)^k \tilde{\alpha}_j'\lambda_j^{-2} \right] u_{j'} \wedge d_\Sigma \ast_\Sigma d_\Sigma u_j\Bigg\}\\
=& \dfrac{1}{4}\sum_j \Bigg\{ \left( i\alpha_j  + (-1)^{k(m-k)}\tilde{\alpha}_j\right) \left( \alpha_j'\lambda_j^{-1} +(-1)^{k(m-k)}i\tilde{\alpha}_j'\lambda_j^{-1} \right) \\
& -(-1)^{k(m-k)} \left( \alpha_j\lambda_j^{-1} -(-1)^{k(m-k)}i\tilde{\alpha}_j \lambda_j^{-1} \right)\left( \tilde{\alpha}_j' \lambda_j-(-1)^{k(m-k)}i\alpha_j'\lambda_j \right)\Bigg\}\\
=& \dfrac{1}{2} \sum_j \left\{ i \lambda_j^{-1} \left[ \alpha_j \alpha_j' + \tilde{\alpha}_j\tilde{\alpha}_j' \right] + (-1)^{k(m-k)} \left[ \tilde{\alpha}_j\alpha_j'-\alpha_j\tilde{\alpha}_j' \right] \right\}.
\end{align*}
The positive-definite bilinear map $\mu$ given by:
\begin{align*}
\mu(d_M A, d_M A')=&\Im \tau_\mathbb{C}(d_M A=\ast_M d_M\tilde{A}, d_M A'=\ast_M d_M\tilde{A}')\\
=& \sum_j \dfrac{1}{2\lambda_j} \left( \alpha_j\alpha_j'+ \tilde{\alpha}_j \tilde{\alpha}_j' \right)
\end{align*}
defines the quasifree state:
\begin{eqnarray}\label{def:3:omega_mu}
&\omega_\mu: \mathfrak{CCR}(d_M\Omega^{k-1}\cap \ast_M d_M\Omega^{m-k-1}(M),\tau_u)\to \mathbb{C}& \nonumber\\
&\mathcal{W}(d_MA)\mapsto e^{-\frac{1}{4}\sum_i \frac{1}{\lambda_i}(\alpha_i^2 + \tilde{\alpha_i}^2)}.&
\end{eqnarray}
Recall that $\lambda_i >0$ for every $i$; the state is well-defined and, in particular, the exponent is convergent, on account of the fact that the Fourier expansion of the initial data is uniformly convergent.

We claim that the quasifree state $\omega_\mu$ is Hadamard in the weak sense described above. 
At variance with the two-dimensional case, we will just outline the procedure to obtain the state whose two-point function is a bi-distribution that restricts to $\omega_2$ on the subspace of interest, because there is no conceptual novelty confronted with a diverging amount of sterile computation. To construct such a state, consider the subspace of $\Omega^{k}(M)$ whose elements solve the wave equation $\square \omega=0$, namely $G_k(\Omega_c^k(M))$. By means of the Hodge decomposition theorem, it is possible to choose $L^2$ bases in $\delta_\Sigma\Omega^{k+1}(\Sigma)$, $d_\Sigma\Omega^{k-1}(\Sigma)$, $\mathcal{H}^k_\Delta(\Sigma)$ respectively. The initial data $\iota^\ast_\Sigma \omega=:\omega_\Sigma \in \Omega^k(\Sigma)$ can be decomposed with respect to such bases. We can then reconstruct the full solution in terms of the coefficients in the expansions of the initial data with respect to the chosen bases. Firstly, we define an analogous of the map $\mathcal{F}$, which projects the initial data on the subspace orthogonal to the harmonic forms, thus ruling out the zero mode.  Subsequently, resorting to the structure guaranteed by the ultrastatic metric, it is possible to separate out the positive frequency component and activate the machinery presented in Section \ref{section:3:Hadamard_2D} to define a positive-definite bilinear map, to which a quasifree state can be associated. If we choose consistently the basis in $d_\Sigma\Omega^{k-1}(\Sigma)$, upon identification of the coefficients of the decomposition, the state obtained this way reduces to $\omega_\mu$ on $d_M \Omega^{k-1}\cap \ast_M d_M \Omega^{m-k-1}(M)$. The proof to Proposition \ref{prop:3:omega_2D_is_Hadamard}, with minor adaptations, guarantees that it is, in addition, Hadamard.

\subsection{Examples}\label{subsection:3.4.2_examples}
In this section we will compute the diagram of the observables for a number of examples of four-dimensional spacetimes with compact Cauchy surface for $k=2$. In particular, we will observe that the torsion subgroup is always vanishing.

\begin{ex}[$M=\R\times \S^1\times \S^2$]
Noticing that the Cauchy surface is $\Sigma=\S^1\times \S^2$, let us calculate $H^{2,2}(\Sigma,\Z)$. By the K\"unneth formula we have:
\begin{align*}
H^2(\S^1\times \S^2;\Z)&=\oplus_{i=0}^2 H^i(\S^1;\Z)\otimes_\Z H^{2-i}(\S^2;\Z)\\
&= \Z\otimes_\Z \Z + 0 + 0 = \Z.
\end{align*}
Therefore, $H^{2,2}(\Sigma;\Z)=\Z^2$, which is free. Since $\Sigma$ is compact, it immediately ensues that:
$$
H^{1,1}(\Sigma;\T)=\left( H^{2,2}(\Sigma;\Z) \right)^\star= \T^2.
$$
The diagram of the observables then is:
\begin{equation}
\xymatrix{
& 0 \ar[d] & 0 \ar[d] & 0 \ar[d] & \\
0 \ar[r] & \T^2 \ar[d] \ar[r] & \dfrac{\Omega^{1,1}(\Sigma)}{\Omega^{1,1}_\Z(\Sigma)}\ar[r]\ar[d]& d\Omega^{1,1}(\Sigma) \ar[d] \ar[r] &0\\
0 \ar[r] & \T^2 \ar[d] \ar[r] & \hat{H}^{2,2}(\Sigma;\Z) \ar[r]\ar[d] & \Omega_\Z ^{2,2}(\Sigma) \ar[d] \ar[r] &0\\
0 \ar[r] & 0  \ar[d] \ar[r]& \Z^2 \ar[d] \ar[r]&\Z^2\ar[d] \ar[r] & 0\\
& 0 & 0 & 0 &
}
\end{equation}
\end{ex}

\begin{ex}[$M=\R\times \T^3$]
The K\"unneth formula gives:
\begin{align*}
H^{2}(\T^3;\Z)=&\oplus_{i=0}^2 H^i(\T;\Z) \otimes_\Z H^{2-i}(\T^2;\Z)\\
=& \Z \otimes_\Z \left( \oplus_{j=0}^2 H^j(\T;\Z)\otimes_\Z H^{2-j}(\T;\Z) \right) \oplus\\
&\oplus \Z \otimes \left( \oplus H^j(\T;\Z) \otimes_\Z H^{1-j}(\T;\Z) \right)\\
=& \Z \otimes_\Z \left( 0\oplus \Z \oplus 0 \right) \oplus \Z \otimes_\Z \left( \Z\oplus \Z \right) =\Z\oplus \Z^2= \Z^3.
\end{align*}
Therefore, $H^{2,2}(\Sigma;\Z)=\Z^6$ is free and, with the same reasoning, we find that $H^{1,1}(\Sigma;\T)=\left(H^{2,2}(\Sigma;\Z)\right)^\star=\T^6$.\\
The diagram of the observables becomes:
\begin{equation}
\xymatrix{
& 0 \ar[d] & 0 \ar[d] & 0 \ar[d] & \\
0 \ar[r] & \T^6 \ar[d] \ar[r] & \dfrac{\Omega^{1,1}(\Sigma)}{\Omega^{1,1}_\Z(\Sigma)}\ar[r]\ar[d]& d\Omega^{1,1}(\Sigma) \ar[d] \ar[r] &0\\
0 \ar[r] & \T^6 \ar[d] \ar[r] & \hat{H}^{2,2}(\Sigma;\Z) \ar[r]\ar[d] & \Omega_\Z ^{2,2}(\Sigma) \ar[d] \ar[r] &0\\
0 \ar[r] & 0  \ar[d] \ar[r]& \Z^6 \ar[d] \ar[r]&\Z^6\ar[d] \ar[r] & 0\\
& 0 & 0 & 0 &
}
\end{equation}
\end{ex}

\begin{ex}[$M=\R\times \S^3$]
In this case, the computation is trivial: the first column and the last row are vanishing, because $H^{2,2}(\S^3;\Z)=0$. Hence, there is no contribution from the topological sector. Actually, all the non-vanishing groups in the diagram are isomorphic: the diagram is nothing but the upper-right corner:
\begin{equation}
\xymatrix{
& 0 \ar[d] & 0 \ar[d] & 0 \ar[d] & \\
0 \ar[r] & 0 \ar[d] \ar[r] & \dfrac{\Omega^{1,1}(\Sigma)}{\Omega^{1,1}_\Z(\Sigma)}\ar[r]\ar[d]& d\Omega^{1,1}(\Sigma) \ar[d] \ar[r] &0\\
0 \ar[r] & 0 \ar[d] \ar[r] & \hat{H}^{2,2}(\Sigma;\Z) \ar[r]\ar[d] & \Omega_\Z ^{2,2}(\Sigma) \ar[d] \ar[r] &0\\
0 \ar[r] & 0  \ar[d] \ar[r]& 0 \ar[d] \ar[r]&0\ar[d] \ar[r] & 0\\
& 0 & 0 & 0 &
}
\end{equation}
\end{ex}

Looking at the diagrams, we realize that we already have at our disposal the procedure to construct a state for the topological sector. Consequently, we have all the tools to discuss the role and the properties of duality in our theory: this will constitute the topic of the fourth and last chapter. Before moving to such an issue, let us linger a little bit on the case of non-compact Cauchy surface.

\section{The non-compact case: comments and perspectives}\label{section:3.5_non_compact_case}
If we release the assumption of compact Cauchy surface, almost all of our constructions and results stop working. So far, we do not have a general theory comprising the totality of the cases of interest to physics. However, there are evidences seemingly indicating that the lack of compactness destines the analysis to be dealt with case by case.\\

The compactness of the spatial slices is a sufficient condition for the pre-symplectic orthogonal decomposition; nonetheless, it is not necessary. A more careful inspection of the proof of Proposition \ref{prop:3:existence_of_splitting} reveals that the actual weaker requirement is the isomorphism of the cohomology groups of the right degree. Namely:
$$
H^k_{c,\free}(\Sigma;\Z)\simeq H^{k}_{\free}(\Sigma;\Z), \qquad H^{m-k}_{c,\free}(\Sigma;\Z)\simeq H^{m-k}_\free(\Sigma;\Z).
$$
Also this requirement, however, is not necessary for a decomposition to exist. Consider, for instance, the spacetime $M=\R^2 \times \S^2$, endowed with an ultrastatic metric. In this case, $m=4$ and $k=2$. At the level of the Cauchy surface, we have:
\begin{align*}
H^1_c(\R\times \S^2;\T)=\left( H^2(\R\times \S^2;\Z) \right)^\star =(\Z)^\ast=\T,
\end{align*}
and
\begin{align*}
H^2_c(\R\times \S^2;\Z)=\left( H^1(\R\times \S^2;\T) \right)^\ast=\left( \Hom(H_1(\R\times \S^2);\T)\right)^\ast=0.
\end{align*}
In the last equation we made use of the universal coefficient theorem for cohomology, along with the fact that $\T$ is a divisible group, thus entailing that the Ext functor is vanishing. The diagram of the observables is therefore:
\begin{equation}
\xymatrix{
& 0 \ar[d] & 0 \ar[d] & 0 \ar[d] & \\
0 \ar[r] & \T^2 \ar[d] \ar[r] & \dfrac{\Omega^{1,1}_c(\Sigma)}{\Omega^{1,1}_{c,\Z}(\Sigma)}\ar[r]\ar[d]& d\Omega_c^{1,1}(\Sigma) \ar@/_1em/[l]_-a \ar[d] \ar[r] &0\\
0 \ar[r] & \T^2 \ar[d] \ar[r] & \hat{H}_c^{2,2}(\Sigma;\Z) \ar[r]\ar[d] & \Omega_{c,\Z} ^{2,2}(\Sigma) \ar[d] \ar[r] &0\\
0 \ar[r] & 0  \ar[d] \ar[r]& 0 \ar[d] \ar[r]&  0\ar[d] \ar[r] & 0\\
& 0 & 0 & 0 &
}
\end{equation}
We find that the first two rows are isomorphic, in view of the third being zero. To show that an orthogonal decomposition exists, we just have to split the first row. Let:
$$
a=a_1\times a_2: d\Omega^{1,1}_c(\Sigma;\Z) \to \dfrac{\Omega^{1,1}_c(\Sigma;\Z)}{\Omega^{1,1}_{c,\Z}(\Sigma;\Z)}
$$
be a splitting homomorphism, whose existence is given by the argument discussed at the beginning of the chapter. Observe that $\tau_{lr}$ is identically zero, since the lower-right corner is vanishing. At the same time:
\begin{align*}
\sigma((\iota \tilde{\kappa} u,\iota \tilde{\kappa} \tilde{u}),(\iota \tilde{\kappa} u',\iota \tilde{\kappa} \tilde{u}')) &= \left(\I{\iota\tilde{\kappa} \tilde{u}}\cdot \iota \tilde{\kappa} u' - \I{\iota \tilde{\kappa} u}\cdot \iota \tilde{\kappa} \tilde{u}'  \right)\mu\\
&= \left(\I{\iota\tilde{\kappa} \tilde{u}} \cdot \kappa\tilde{\iota} u' - \I{\iota \tilde{\kappa} u}\cdot \kappa \tilde{\iota} \tilde{u}'  \right)\mu =0.
\end{align*}
Besides:
\begin{align*}
&\sigma((\iota \times \iota) a  (dA,d\tilde{A}), (\iota\times\iota)a (dA',d\tilde{A}'))\\
= & (\iota a_2(d\tilde{A})\cdot \iota a_1 (dA') - \iota a_1 (dA) \cdot \iota a_2 (d\tilde{A}'))\mu\\
=& \iota \left( a_2 (d\tilde{A}) \wedge \curv \iota a_1 (dA') - a_1 (dA) \wedge \curv \iota  a_2 (d\tilde{A}') \right)\mu\\
=& \iota \left( a_2 (d\tilde{A}) \wedge d\, a_1 (dA') - a_1 (dA) \wedge d\,  a_2 (d\tilde{A}') \right)\mu\\
=& \iota \left( a_2 (d\tilde{A}) \wedge dA' - a_1 (dA) \wedge d\tilde{A'} \right)\mu\\
=& \int_\Sigma a_2 (d\tilde{A}) \wedge dA' - a_1 (dA) \wedge d\tilde{A'}\,\,\mod\Z\\
=& \int_\Sigma \tilde{A}	\wedge dA' - A\wedge d\tilde{A}\,\,\mod\Z= \tau_u((dA,d\tilde{A}),(dA',d\tilde{A}')),
\end{align*}
where in the last passage we performed integration by parts twice. Hence the splitting preserves the pre-symplectic structure. The orthogonality is a consequence of $\iota \tilde{\kappa}(\arg) \cdot \iota (\arg)=0$. In the end, what we obtain is:
\begin{equation}
(\hat{H}^{2,2}(\Sigma;\Z),\sigma)\simeq (\T^2,0)\oplus (d\Omega^{1,1}_c(\Sigma),\tau_u).
\end{equation}

The obstruction to a decomposition in general stems from the homomorphism $I$. In particular, in the previous chapter, we showed an example in which $I$ is neither surjective nor injective (see the end of Section \ref{subsec:2:rel_diff_coho}). We remind the reader that the pairing $\pa{\arg}{\arg}_c$ is weakly non degenerate and that the radical of $\sigma$ is given by:
$$
\mathrm{Rad}(\sigma)=\mathrm{ker}\left[ I: \hat{H}_c^{k,m-k}(\Sigma;\Z) \to \hat{H}^{k,m-k}(\Sigma;\Z) \right].
$$
Consider the following diagram:
\begin{equation}
\xymatrix{
0 \ar[r]&  H_c^{k-1,m-k-1}(\Sigma;\T) \ar[r] \ar[d]_-I & \hat{H}^{k,m-k}_c(\Sigma;\Z) \ar[d]^-I \ar[r] & \Omega^{k,m-k}_{c,\Z}(\Sigma) \ar[d]^-I \ar[r] &0\\
0 \ar[r]&  H^{k-1,m-k-1}(\Sigma;\T) \ar[r]  & \hat{H}^{k,m-k}(\Sigma;\Z)  \ar[r] & \Omega^{k,m-k}_{\Z}(\Sigma) \ar[r] &0}
\end{equation}
Being the right vertical map nothing but an inclusion, therefore injective, the diagram tells us that the central vertical map is injective only if the left one is such. Consequently, the kernel of $I:H_c^{k-1,m-k-1}(\Sigma;\T) \to H^{k-1,m-k-1}(\Sigma;\T)$ gives us information on how $I$ acts on compactly supported differential characters.

The effect of $I$ is in some cases to suppress all the information, mapping the whole compactly supported group to zero. This is the case of $M=\R^2\times \T^2$, with $k=2$. As a matter of fact, it holds:
\begin{enumerate}[(i)]
\item $H^{1}(\Sigma;\Z)=\Z^2$, $\qquad [d\theta_1],[d\theta_2]$;
\item $H^{2}(\Sigma;\Z)=\Z$, $\qquad [d\theta_1\wedge d\theta_2]$;
\item $H^1_c(\Sigma;\Z)=\Z$, $\qquad [f(x)dx]$;
\item $H^2_c(\Sigma;\Z)=\Z^2$, $\qquad [f(x) d\theta_2\wedge dx],[f(x) dx\wedge d\theta_1]$,
\end{enumerate}
where on the right we reported the generators of the groups. $I$ maps (iii) and (iv) to zero.

Another undesired consequence of $I$ is that, for manifolds with certain topological properties, it brings about, when computing the pre-symplectic product, the coupling between the groups in the diagram in a complicated way. Take, for example, $M=\R^3\times \S^1$. The diagram of the observables is:
\begin{equation}\label{ex:3:R3_S1_obs}
\xymatrix{
& 0 \ar[d] & 0 \ar[d] & 0 \ar[d] & \\
0 \ar[r] & 0 \ar[d] \ar[r] & \dfrac{\Omega^{1,1}_c(\Sigma)}{\Omega^{1,1}_{c,\Z}(\Sigma)}\ar[r]\ar[d]& d\Omega_c^{1,1}(\Sigma) \ar[d] \ar[r] &0\\
0 \ar[r] & 0 \ar[d] \ar[r] & \hat{H}_c^{2,2}(\Sigma;\Z) \ar[r]\ar[d] & \Omega_{c,\Z} ^{2,2}(\Sigma) \ar[d] \ar[r] &0\\
0 \ar[r] & 0  \ar[d] \ar[r]& \Z^2 \ar[d] \ar[r]&  \Z^2\ar[d] \ar[r] & 0\\
& 0 & 0 & 0 &
}
\end{equation}
On the other hand, the diagram of the configurations is given by:
\begin{equation}\label{ex:3:R3_S1_config}
\xymatrix{
& 0 \ar[d] & 0 \ar[d] & 0 \ar[d] & \\
0 \ar[r] & \T^2 \ar[d] \ar[r] & \dfrac{\Omega^{1,1}(\Sigma)}{\Omega^{1,1}_{\Z}(\Sigma)}\ar[r]\ar[d]& d\Omega^{1,1}(\Sigma) \ar[d] \ar[r] &0\\
0 \ar[r] & \T^2 \ar[d] \ar[r] & \hat{H}^{2,2}(\Sigma;\Z) \ar[r]\ar[d] & \Omega_{\Z} ^{2,2}(\Sigma) \ar[d] \ar[r] &0\\
0 \ar[r] & 0  \ar[d] \ar[r]& 0 \ar[d] \ar[r]& 0 \ar[d] \ar[r] & 0\\
& 0 & 0 & 0 &
}
\end{equation}
When computing the pre-symplectic product for the central group, we get terms of the form $h\cdot I \tilde{h}'$, for some $h, \tilde{h}'\in \hat{H}^2_c(\Sigma;\Z)$. Since the third row in the configuration diagram \eqref{ex:3:R3_S1_config} is vanishing, the central group is isomorphic to the upper central one; consequently, there exists a unique $\tilde{B}'\in \frac{\Omega^{1,1}(\Sigma)}{\Omega^{1,1}_{\Z}(\Sigma)}$ such that $\iota(\tilde{B}')=I\tilde{h}'$. This means that whatever the contribution to the differential character $h$ is, it must be coupled with an element in $\frac{\Omega^{1,1}(\Sigma)}{\Omega^{1,1}_{\Z}(\Sigma)}$. This fact, even though it does not possess the cogency of a no-go theorem, strongly suggests that an orthogonal decomposition can not be attained following slavishly the above procedure.
\chapter{Quantum Abelian Duality}\label{chapter:4}

In this chapter we will discuss the duality properties of our covariant quantum field theory. In the following, we assume $m\geq 2$ and $k\in\{1,\dots,m-1\}$.

\begin{defn}[Quantum duality]
Let $\mathfrak{A}^k,\mathfrak{A}^{\tilde{k}}:\mathsf{Loc}_m \to \mathsf{C^\ast Alg}$ be two quantum field theories. We call \emph{duality} between $\mathfrak{A}^k$ and $\mathfrak{A}^{\tilde{k}}$ a natural isomorphism $\eta:\mathfrak{A}^k\Rightarrow \mathfrak{A}^{\tilde{k}}$.
\end{defn}

To begin with, following \cite{BPhys}, we construct explicitly a duality for our quantum field theory adopting a bottom-up approach, moving from the configuration space. Having in mind an extension of the duality between the electric and the magnetic Maxwell fields, we introduce the homomorphism:
\begin{eqnarray}\label{eq:4:xi}
&\zeta: \C^k(M;\Z) \to \C^{m-k}(M;\Z) &\nonumber\\
&(h,\tilde{h})\mapsto (\tilde{h},-(-1)^{k(m-k)}h).&
\end{eqnarray}
The homomorphism $\zeta$ is an isomorphism, because its components are such, and it is, furthermore, natural, since the diagram:
\begin{equation}
\xymatrix{
\C^{m-k}(M';\Z) \ar[r]^-{\zeta}\ar[d]_-{f^\ast} & \C^k(M';\Z)\ar[d]^-{f^\ast}\\
\C^{m-k}(M;\Z) \ar[r]_-{\zeta} &\C^k(M;\Z)
}
\end{equation}
commutes for every objects $M,M^\prime$ and for every morphism $f:M\to M'$ in $\mathsf{Loc}_m$. Let $\zeta^\star:\C^k_{sc}(M;\Z)\to\C^{m-k}_{sc}(M;\Z)$ be the homomorphism dual to $\zeta$ with respect to \eqref{pairing:2:config_gauge_fields}, defined by:
\begin{equation*}
\pa{(h,\tilde{h})}{\zeta^\star(h',\tilde{h}')}:= \pa{\zeta(h,\tilde{h})}{(h',\tilde{h}')}
\end{equation*}
for every $(h,\tilde{h})\in \C^{m-k}(M;\Z)$, $(h',\tilde{h}')\in \C_{sc}^{k}(M;\Z)$. A straightforward computation reveals that:
$$
\zeta^\star(h',\tilde{h}')=(-(-1)^{k(m-k)}\tilde{h}',h').
$$
Hence, $\zeta^\star$ defines an isomorphism, which is natural on account of the commutativity of the diagram:
\begin{equation}
\xymatrix{
\C^{k}_{sc}(M;\Z) \ar[r]^-{\zeta^\star}\ar[d]_-{f_\ast} & \C^{m-k}_{sc}(M;\Z)\ar[d]^-{f_\ast}\\
\C^{k}_{sc}(M';\Z) \ar[r]_-{\zeta^\star} &\C^{m-k}_{sc}(M';\Z).
}
\end{equation}
A noteworthy feature of $\zeta^\star$ is that it preserves the pre-symplectic structure. In fact, for every $(h,\tilde{h}),(h',\tilde{h}')\in \C^k_{sc}(M;\Z)$, it holds:
\begin{align*}
\sigma(\zeta^\star(h,\tilde{h}),\zeta^\star(h',\tilde{h}'))=& \sigma((-(-1)^{k(m-k)}\tilde{h},h),(-(-1)^{k(m-k)}\tilde{h}',h'))\\
=& -(-1)^{k(m-k)} \pa{\ii{h}}{\i{\tilde{h}'}}_c + [-(-1)^{k(m-k)}]^2 \pa{\ii{\tilde{h}}}{\i{h'}}_c\\
=& \sigma((h,\tilde{h}),(h',\tilde{h}')).
\end{align*}
Exploiting the isomorphism \eqref{eq:2:iso_D_C_sc}, $\mathcal{O}: \C^k_{sc}(\arg;\Z) \Rightarrow \D^k(\arg;\Z)$, we obtain a natural isomorphism:
$$
\zeta^\star_\D: (\D^k(\arg;\Z),\tau) \Rightarrow (\D^{m-k}(\arg;\Z);\tau)
$$
between functors from $\mathsf{Loc}_m$ to $\mathsf{Ab}$.

\begin{prop}
The homomorphism:
\begin{equation}
\eta:=\mathfrak{CCR}(\zeta^\star_\D):\mathfrak{A}^k\to \mathfrak{A}^{m-k}
\end{equation}
establishes a duality between the quantum field theories $\mathfrak{A}^k,\mathfrak{A}^{m-k}:\mathsf{Loc}_m \to \mathsf{C^\ast Alg}$.
\end{prop}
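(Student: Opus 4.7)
The plan is to reduce the statement to pure functoriality: since $\mathfrak{CCR}$ is a (covariant) functor from $\mathsf{PSAb}$ to $\mathsf{C^\ast Alg}$, it carries natural isomorphisms to natural isomorphisms. Thus it suffices to verify that $\zeta^\star_\D$ is a natural isomorphism between the functors $(\mathfrak{D}^k(\arg;\Z),\tau)$ and $(\mathfrak{D}^{m-k}(\arg;\Z),\tau)$ from $\mathsf{Loc}_m$ to $\mathsf{PSAb}$. All three ingredients needed for this are essentially in place already: $\zeta^\star$ is an isomorphism of Abelian groups, it is natural in $M$, and it preserves the pre-symplectic form $\sigma$. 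I would first spell out what $\zeta^\star_\D$ is, namely $\zeta^\star_\D := \mathcal{O}\circ \zeta^\star \circ \mathcal{O}^{-1}$, with $\mathcal{O}$ the natural isomorphism of \eqref{eq:2:iso_D_C_sc}.

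Next, I would check the two structural properties separately. Naturality of $\zeta^\star_\D$ is immediate from the naturality of $\zeta^\star$ already proved in the excerpt, together with the naturality of $\mathcal{O}$: given a morphism $f:M\to M'$ in $\mathsf{Loc}_m$, the outer square commutes because it decomposes into three commuting squares for $\mathcal{O}^{-1}$, $\zeta^\star$ and $\mathcal{O}$. Preservation of the pre-symplectic structure reduces via $\mathcal{O}^\ast \tau = \sigma$ to the identity $\sigma(\zeta^\star(\arg),\zeta^\star(\arg)) = \sigma(\arg,\arg)$, which was verified in the excerpt; consequently $\zeta^\star_\D$ is a morphism in $\mathsf{PSAb}$. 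Bijectivity is automatic since $\zeta^\star$ and $\mathcal{O}$ are both isomorphisms, so $(\zeta^\star_\D)^{-1} = \mathcal{O}\circ (\zeta^\star)^{-1}\circ \mathcal{O}^{-1}$ is a well-defined morphism in $\mathsf{PSAb}$ which is a two-sided inverse.

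Finally I would apply the CCR-functor. Any functor sends isomorphisms to isomorphisms and natural transformations to natural transformations, so $\eta = \mathfrak{CCR}(\zeta^\star_\D)$ is a natural transformation between $\mathfrak{A}^k = \mathfrak{CCR}\circ(\mathfrak{D}^k(\arg;\Z),\tau)$ and $\mathfrak{A}^{m-k} = \mathfrak{CCR}\circ(\mathfrak{D}^{m-k}(\arg;\Z),\tau)$ whose components are unital $C^\ast$-isomorphisms, with inverse $\mathfrak{CCR}((\zeta^\star_\D)^{-1})$. By definition this is a duality in the sense introduced at the beginning of the chapter.

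I do not expect a serious obstacle: all the analytic content has been absorbed into Proposition on the isomorphism $\mathcal{O}$ and into the computation preceding the statement. The only point that needs some care is the bookkeeping in the proof that $\zeta^\star_\D$ is \emph{natural}, since one must keep track of the contravariant functor $\C^k(\arg;\Z)$ versus the covariant $\C^k_{sc}(\arg;\Z)$ and ensure that the duality pairing \eqref{pairing:2:config_gauge_fields} transports correctly under morphisms in $\mathsf{Loc}_m$; this is however exactly the content of \cite[Lemma A.4]{BPhys} used earlier in the proof that $\mathcal{O}$ is natural, so no new work is required.
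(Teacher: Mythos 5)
Your proposal is correct and follows essentially the same route as the paper: the paper's proof also just invokes that $\mathfrak{CCR}$ is a functor (hence preserves isomorphisms and compositions) and that $\zeta^\star_\D$ is already a natural isomorphism of pre-symplectic Abelian groups, established in the preceding discussion via the naturality of $\zeta^\star$, the identity $\sigma(\zeta^\star\arg,\zeta^\star\arg)=\sigma(\arg,\arg)$, and the isomorphism $\mathcal{O}$. Your write-up merely spells out these ingredients in more detail than the paper does.
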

\begin{proof}
$\eta$ is an isomorphism, because $\mathfrak{CCR}$ is a functor, and functors preserve the isomorphisms. As $\mathfrak{CCR}$ preserves compositions, the naturality of $\eta$ follows at once from that of $\zeta_\D^\star$.
\end{proof}

\section{Duality and splittings}\label{section:4.1_duality _and_splittings}
Let $M$ be an object in $\mathsf{Loc}_{2k}$ with compact Cauchy surface $\Sigma$. The first question which naturally arises is whether it is possible to choose splittings $a$, $b$ and $x$ as in Proposition \ref{prop:3:existence_of_splitting} that are, in addition, compatible with the duality map. The answer will be, as we shall show, positive. Let us first translate in mathematical terms such compatibility condition.

Mimicking the above construction, we can introduce duality maps on the other Abelian groups in diagram \eqref{diag:3:splitting}. As far as $d\Omega^{k-1,k-1}(\Sigma)$ is concerned, define:
\begin{eqnarray}\label{eq:4:duality_u}
&\zeta_u: d\Omega^{k-1,k-1}(\Sigma) \to d\Omega^{k-1,k-1}(\Sigma)&\nonumber\\
&(dA,d\tilde{A})\mapsto(-(-1)^{k^2}d\tilde{A}, dA).&
\end{eqnarray}
We say that the splitting $a$ is compatible with the duality if and only if:
\begin{equation*}
[\zeta^\star\circ(\iota\times \iota)\circ a](dA,d\tilde{A})= [(\iota\times \iota)\circ a\circ \zeta_u](dA,d\tilde{A}).
\end{equation*}
Recalling that a splitting of a product homomorphism between product spaces is always a product, we write $a=a_1\times a_2$. The left-hand side is:
\begin{align*}
[\zeta^\star\circ(\iota\times \iota)\circ a](dA,d\tilde{A})=(-(-1)^{k^2}\iota a_2 d\tilde{A}, \iota a_1 dA),
\end{align*}
while the right-hand side is:
\begin{align*}
[(\iota\times \iota)\circ  a\circ \zeta_u](dA,d\tilde{A})=(-(-1)^{k^2}\iota a_1 d\tilde{A}, \iota a_2 dA).
\end{align*}
The compatibility condition is given by $a_1=a_2$. Analogously, defining the homomorphisms:
\begin{eqnarray*}
&\zeta_{lr}:\dfrac{H^{k-1,k-1}(\Sigma;\R)}{H^{k-1,k-1}_{\free}(\Sigma;\Z)} \times H^{k,k}_\free(\Sigma;\Z) \to \dfrac{H^{k-1,k-1}(\Sigma;\R)}{H^{k-1,k-1}_{\free}(\Sigma;\Z)} \times H^{k,k}_\free(\Sigma;\Z)  &\nonumber\\
&((u, \tilde{u}),(v,\tilde{v}))\mapsto ((-(-1)^{k^2}\tilde{u},u),(-(-1)^{k^2}\tilde{v},v))&
\end{eqnarray*}
and
\begin{eqnarray*}
&\zeta_d: H^{k,k}_\tor(\Sigma;\Z) \to H^{k,k}_\tor(\Sigma;\Z)  &\nonumber\\
& (t,\tilde{t})\mapsto (-(-1)^{k^2}\tilde{t},t),&
\end{eqnarray*}
the compatibility conditions:
\begin{align*}
&\zeta^\star[(x_1\times x_2)(v,\tilde{v})]=(x_1\times x_2)(-(-1)^{k^2}\tilde{v},v),\\
& \zeta^\star[(\kappa\times \kappa)(b_1\times b_2)(t,\tilde{t})]=(\kappa\times \kappa)(b_1\times b_2)\zeta_d (t,\tilde{t}),
\end{align*}
yield the constraints $x_1=x_2$ and $b_1=b_2$.

\begin{thm}
Let $M$ be an object in $\mathsf{Loc}_m$ with compact Cauchy surface $\Sigma$ and $m=2k$. With reference to diagram \eqref{diag:3:splitting}, there exist splitting homomorphisms:
\begin{subequations}\label{eq:4:splitting}
\begin{align}
& a\times a: d\Omega^{k-1,k-1}(\Sigma) \to \dfrac{\Omega^{k-1,k-1}(\Sigma)}{\Omega_{\mathbb{Z}}^{k-1,k-1}(\Sigma)}, \label{eq:4:splitting_a}\\
& b\times b : H^{k,k}_{\tor}(\Sigma;\mathbb{Z}) \to H^{k-1,k-1}(\Sigma;\mathbb{T}), \label{eq:4:splitting_b}\\
& x\times x : H^{k,k}_{\free}(\Sigma;\mathbb{Z}) \to \hat{H} ^{k,k}(\Sigma;\mathbb{Z}), \label{eq:4:splitting_x}
\end{align}
\end{subequations}
satisfying the following conditions:
\begin{subequations}\label{eq:4:splitting_condition}
\begin{align}
& \sigma((x\, v,x\,\tilde{v}),(x\, v',x\, \tilde{v}'))=0, \label{eq:4:splitting_condition_xx}\\
& \sigma((\iota\times \iota) (a\, dA,a\, d\tilde{A}),(x\,v,x\,\tilde{v}))=0, \label{eq:4:splitting_condition_ax}\\
& \sigma((\kappa\times \kappa) (b\, t,b\, \tilde{t}), (x\, v,x\, \tilde{v}))=0 \label{eq:4:splitting_condition_bx},
\end{align}
\end{subequations}
for all $(v,\tilde{v}),(v',\tilde{v}')\in H^{k,k}_{\free}(\Sigma;\mathbb{Z})$, $(dA,d\tilde{A})\in d\Omega^{k-1,k-1}(\Sigma)$ and $(t,\tilde{t})\in H^{k,k}_{\tor}(\Sigma;\mathbb{Z})$.
\end{thm}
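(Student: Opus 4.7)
The strategy is to refine the construction in the proof of Proposition \ref{prop:3:existence_of_splitting} so as to produce splittings of the \emph{diagonal} form $a\times a$, $b\times b$, $x\times x$, i.e.\ a single homomorphism on each of the three factor groups, replicated on both components of the Cartesian squares arising when $m=2k$. Under the identifications $H^{k,k}_\free(\Sigma;\Z) = H^k_\free(\Sigma;\Z)\times H^k_\free(\Sigma;\Z)$ and analogous ones for the other product groups, a direct computation using the definition of $\sigma$ shows that, for diagonal splittings, the three orthogonality requirements \eqref{eq:4:splitting_condition_xx}--\eqref{eq:4:splitting_condition_bx} collapse to \emph{single-factor} conditions; for instance \eqref{eq:4:splitting_condition_xx} is equivalent to
\begin{equation*}
\langle I\, x(v), x(v')\rangle_c = 0 \qquad \forall\, v, v' \in H^k_\free(\Sigma;\Z),
\end{equation*}
and similarly for \eqref{eq:4:splitting_condition_ax} and \eqref{eq:4:splitting_condition_bx}.

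I would then construct $x$ in the spirit of the proof of Proposition \ref{prop:3:existence_of_splitting}: pick a basis $\{z_i\}_{i=1}^n$ of $H^k_\free(\Sigma;\Z)\simeq \Z^n$, lift each $z_i$ to a differential character $h_i'$ with $[\curv h_i'] = z_i$, and modify it as $h_i := h_i' + \iota\,\tilde{\kappa}(u_i)$ for $u_i \in \frac{H^{k-1}(\Sigma;\R)}{H^{k-1}_\free(\Sigma;\Z)}$ to be determined. Using the compatibility rules \eqref{eq:2:module_structure} together with $\curv\circ\iota\circ\tilde{\kappa}=0$, the correction to the bilinear form $c_{ij} := \langle Ih_i', h_j'\rangle_c$ can be expressed entirely in terms of the Poincar\'e-type pairing between $H^k_\free(\Sigma;\Z)$ and $\frac{H^{k-1}(\Sigma;\R)}{H^{k-1}_\free(\Sigma;\Z)}$; since $\dim \Sigma = 2k-1$, this pairing is weakly non-degenerate. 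By Lemma \ref{lem:2:grading_Ih_h'} the matrix $(c_{ij})$ satisfies $c_{ij} = (-1)^{k^2} c_{ji}$, a symmetry type which, as a short computation with graded commutativity of the wedge product on $\Sigma$ reveals, is exactly matched by that of the available correction. Hence the resulting linear system in the $u_i$ admits a solution, yielding $\langle Ih_i, h_j\rangle_c = 0$ for all $i,j$.

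The splittings $a$ and $b$ are then built in complete analogy with the proof of Proposition \ref{prop:3:existence_of_splitting}. I would start with arbitrary splittings $a_0'$, $b_0'$ of the corresponding single-factor short exact sequences (whose existence is granted by \cite{BSS14}), and add corrections of the form $\tilde{\kappa}\circ\Delta_a$ and $\tilde{\iota}\circ\Delta_b$, chosen so that the cross-pairings with $x$ appearing in \eqref{eq:4:splitting_condition_ax} and \eqref{eq:4:splitting_condition_bx} vanish. The identities $\curv\circ\kappa = 0$ and $d\circ\tilde{\kappa} = 0$ ensure that these corrections preserve the splitting property, while the freedom available in $\Delta_a$, $\Delta_b$, combined once more with Poincar\'e duality on $\Sigma$, suffices to cancel any prescribed cross-pairings.

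The main obstacle lies in the symmetry bookkeeping in the construction of $x$: one must check that, for both parities of $k$, the signs $(-1)^{k^2}$ and $(-1)^{k(k-1)}$ conspire so that the available correction lies in the same symmetric (or antisymmetric) subspace as $c_{ij}$, and so that the resulting linear system in the $u_i$ is always solvable. Once this is verified, the remaining steps are purely structural, and Theorem \ref{thm:3:existence_decomposition_from_splittings} then furnishes a pre-symplectic orthogonal decomposition of $\hat{H}^{k,k}(\Sigma;\Z)$ through splittings that are also compatible with duality.
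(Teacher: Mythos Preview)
Your proposal is correct and follows essentially the same approach as the paper. The paper makes your abstract solvability argument concrete by exhibiting the explicit solution $u_j^{(i)} = -\tfrac{1}{2}c_{ij}$ to the linear system governing the correction terms for $x$, and then refers back to Proposition~\ref{prop:3:existence_of_splitting} for the construction of $a$ and $b$, exactly as you suggest.
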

\begin{proof}
Since $H^k_\free(\Sigma;\Z)$ is a free Abelian group there exists $n\in\N$ such that $H^k_\free(\Sigma;\Z)\simeq \Z^n$. Choose a set of generators $\{z_i, i\in\{1,\dots,n\}\}$ for $\Z^n$. With abuse of notation, $z_i$ will denote also the  corresponding generators in $H^k_\free(\Sigma;\Z)$ via the isomorphism. As $H^k_\free(\Sigma;\Z)$ is a lattice in $H^k(\Sigma;\R)$, a basis for the former is also a basis for the latter. Pick a basis $\{r_i, i\in\{1,\dots,n\}\}$ in $H^{k-1}_\free(\Sigma;\Z)$ such that:
$$
r_i	\smile z_j=\delta_{ij}, \qquad 	\forall i,j\in\{1,\dots,n\}.
$$

For every $i$ in $\{1,\dots,n\}$, choose $h_i'\in \hat{H}^k(\Sigma;\Z)$ such that $[\curv h_i']=z_i$. In order to fulfil the condition \eqref{eq:4:splitting_condition_xx}, we look for $u_i\in \frac{H^{k-1}(\Sigma;\R)}{H^{k-1}_\free(\Sigma;\Z)}$ such that, denoting by $h_i$ the modified differential character:
\begin{equation*}
h_i:=h_i'+\iota \tilde{\kappa} u_i,
\end{equation*}
the equation $h_i\cdot h_i=0$ is satisfied. 

Introduce an array of real numbers $(c_{ij})\in \mathbb{M}(n;\R)$ defined as:
\begin{equation}
c_{ij}:\begin{cases}
\text{if }i< j\qquad & c_{ij}\,\,\mod \Z= \pa{h_i'}{h_j'}_c\\
\text{if }i>j\qquad & c_{ij}=(-1)^k c_{ji}.
\end{cases}
\end{equation}
Furthermore, set $c_{ii}=0$ for $k$ odd and $c_{ii}\,\,\mod\Z=\pa{h_i'}{h_i'}_c$ for $k$ even.
If we impose:
\begin{equation}
\pa{h_i}{h_j}_c=\pa{h_i'}{h_j'}_c + \pa{\iota \tilde{\kappa}u_i}{h'_j}_c + \pa{h'_i}{\iota \tilde{\kappa} u_j}_c\overset{\downarrow}{=}0
\end{equation}
for all $i,j\in\{1,\dots,n\}$, we get a set of equations of the form:
\begin{equation*}
c_{ij}=-\pa{u_i}{z_j}_f + (-1)^{k^2+1}\pa{u_j}{z_i}_f.
\end{equation*}
By writing each $u_i$ in terms of the elements of the basis as:
$$
u_i=\sum_{k=1}^n u_k^{(i)}r_k\quad \mod H^{k-1}_\mathrm{free}(\Sigma;\mathbb{Z}),
$$
we obtain eventually $n^2$ equations for the $u_i$ components:
\begin{equation}
c_{ij}=(-1)^{k^2+1}u_i^{(j)}-u_j^{(i)},\qquad i,j\in\{1,\dots,n\}
\end{equation}
They admit as a solution: 
\begin{equation}
u^{(i)}_j = -\frac{1}{2} c_{ij}.
\end{equation}
Hence, elements $h_i$ with the required properties exist and are well-defined. The homomorphism $x$ can then be defined as:
\begin{eqnarray}
&x: H^{k,k}_\free(\Sigma;\Z) \to \hat{H}^k(\Sigma;\Z) &\\
&z_i\mapsto h_i.&
\end{eqnarray}
Observe that the map $z_i\mapsto h_i'$ is a splitting by construction; the additional term involved in passing from $h_i'$ to $h_i$ does not affect the properties of the map in this respect.

As far as $a$ and $b$ are concerned, the proof retraces step by step the one for Proposition \ref{prop:3:existence_of_splitting}, with $a\times a$ and $b\times b$ instead of $a$ and $b$ respectively.
\end{proof}

Once we have the splittings, Theorem \ref{thm:3:existence_decomposition_from_splittings} gives a pre-symplectically orthogonal decomposition of $\hat{H}^{k,k}(\Sigma;\Z)$ which, additionally, preserves the duality.

\section{Duality and GNS representation}\label{section:4.2_duality_and_GNS}

Let $M$ be an object in $\mathsf{Loc}_{2k}$ with compact Cauchy surface. Recall that $\mathcal{A}_{lr}=\mathfrak{CCR}\left(\frac{H^{k-1,k-1}(M;\R)}{H^{k-1,k-1}_\free (M;\Z)}\times H^{k,k}_\free(M;\Z) ,\tau_{lr}\right)$ and consider the state for the topological sector $\omega_t:\mathcal{A}_{lr}\to \mathbb{C}$ (see Proposition \ref{prop:3:omega_t}). 
Let us construct the GNS triple $(\mathcal{H}_{\omega_t},\pi_{\omega_t},\Psi_{\omega_t})$.

Define:
\begin{eqnarray*}
&\pa{\arg}{\arg}_{\omega_t}: \mathcal{A}_{lr}\times \mathcal{A}_{lr} \to \mathbb{C} &\nonumber\\
&(\mathcal{W}(u,\tilde{u},v,\tilde{v}),\mathcal{W}(u',\tilde{u}',v',\tilde{v}'))\mapsto \omega_t \left(\mathcal{W}(u,\tilde{u},v,\tilde{v})^\ast  \mathcal{W}(u',\tilde{u}',v',\tilde{v}') \right). &
\end{eqnarray*}
The pairing $\pa{\arg}{\arg}_{\omega_t}$ enjoys the property of being a Hermitian semi-inner product. The Gelfand ideal $\mathcal{J}_{\omega_t}$ for the $C^\ast$-algebra $\mathcal{A}_{lr}$ is generated (as left ideal) by elements of the form: 
$$
\mathcal{W}(0,0,v,\tilde{v})-e^{-2\pi i(\pa{\tilde{u}}{v}_f -(-1)^{k(m-k)}\pa{u}{\tilde{v}}_f) } \mathcal{W}(u,\tilde{u},v,\tilde{v})
$$ 
for some $(u,\tilde{u},v,\tilde{v})\in \frac{H^{k-1,k-1}(M;\R)}{H^{k-1,k-1}_\free (M;\Z)}\times H^{k,k}_\free(M;\Z)$ (cfr. equation \eqref{eq:3:elements_gelfand_ideal}). We introduce the quotient:
$$
\mathcal{D}_{\omega_t}:=\mathcal{A}/\mathcal{J}_{\omega_t},
$$
and we adopt the notation:
$$
\ket{v,\tilde{v}}:=\left[\mathcal{W}(0,0,v,\tilde{v}) \right],
$$
where the square brackets denote the coset in the quotient.

The $\mathbb{C}$-linear space $\mathcal{D}_{\omega_t}$ can be endowed with the Hermitian inner product:
$$
\braket{\arg}{\arg}: \mathcal{D}_{\omega_t}\times \mathcal{D}_{\omega_t}\to \mathbb{C}
$$
$$
(\ket{v,	\tilde{v}}, \ket{v',\tilde{v}'})\mapsto \braket{v,\tilde{v}}{v',\tilde{v}'}=\pa{\mathcal{W}(0,0,v,\tilde{v})}{\mathcal{W}(0,0,v',\tilde{v}')}_{\omega_t}.
$$
The GNS Hilbert space is then given by the completion of $\mathcal{D}_{\omega_t}$ with respect to the norm induced by such a product:
$$
\mathcal{H}_{\omega_t}:=\overline{\mathcal{D}_{\omega_t}}^{\braket{\arg}{\arg}}.
$$
Define the representation $\pi_{\omega_t}$ as:
\begin{eqnarray*}
& \pi_{\omega_t}:\mathcal{A}\to \mathcal{BL}(\mathcal{H}_{\omega_t}) &\\
& \mathcal{W}(u,\tilde{u},v,\tilde{v})\mapsto \left( \ket{v',	\tilde{v'}}\mapsto \left[\mathcal{W}(u,\tilde{u},v,\tilde{v})\mathcal{W}(0,0,v',\tilde{v}')\right] \right). &
\end{eqnarray*}
The cyclic vector is represented by $\Psi_{\omega_t}:=\ket{0,0}$; as a matter of fact, by acting with the representation on $\ket{0,0}$ we can reconstruct all the dense subspace $\mathcal{D}_{\omega_t}$, which is in turn invariant under the action of the representation.\\

At the level of the algebra $\mathcal{A}$ the duality is implemented by the homomorphism:
\begin{equation}
\zeta_\mathcal{A}: \mathcal{A}\to \mathcal{A}, \qquad \mathcal{W}(u,\tilde{u},v,\tilde{v})\mapsto \mathcal{W}(-(-1)^{k^2}\tilde{u},u,-(-1)^{k^2}\tilde{v},v).
\end{equation}
The following result characterises the effect of $\zeta_\mathcal{A}$ on the GNS Hilbert space.

\begin{thm}
The quantum Abelian duality is implemented on $\mathcal{H}_{\omega_t}$ by the unitary operator given by the continuous linear extension of:
\begin{eqnarray}
&U:\mathcal{D}_{\omega_t}\to \mathcal{H}_{\omega_t}&\nonumber\\
&\ket{v,\tilde{v}}\mapsto \ket{-(-1)^{k^2}\tilde{v},v}.&
\end{eqnarray}
\end{thm}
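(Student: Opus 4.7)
The plan is to identify a convenient orthonormal Hilbert basis of $\mathcal{H}_{\omega_t}$ first, then define $U$ on this basis, and finally verify the covariance $U\pi_{\omega_t}=\pi_{\omega_t}\circ\zeta_\mathcal{A}\cdot U$ by a direct Weyl-relation calculation. A glance at the explicit formula (3.6) for $\tau_{lr}$ reveals that it involves only mixed pairings between the $(u,\tilde u)$- and the $(v,\tilde v)$-components; in particular $\tau_{lr}\bigl((0,0,v,\tilde v),(0,0,v',\tilde v')\bigr)=0$. Together with the definition of $\omega_t$, this gives
\[
\braket{v,\tilde v}{v',\tilde v'}=\omega_t\bigl(\mathcal{W}(0,0,v'-v,\tilde v'-\tilde v)\bigr)=\delta_{v,v'}\,\delta_{\tilde v,\tilde v'}.
\]
Moreover, the generators of $\mathcal{J}_{\omega_t}$ recalled in (3.50) imply the ideal-level identity
\[
[\mathcal{W}(u,\tilde u,v,\tilde v)]=e^{2\pi i(\pa{\tilde u}{v}_f-(-1)^{k^{2}}\pa{u}{\tilde v}_f)}\,\ket{v,\tilde v},
\]
so $\mathcal{D}_{\omega_t}=\mathrm{span}\{\ket{v,\tilde v}\}$ and $\{\ket{v,\tilde v}\}_{(v,\tilde v)\in H^{k,k}_{\free}(M;\Z)^{2}}$ is an orthonormal Hilbert basis of $\mathcal{H}_{\omega_t}$.

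Next I would define $U$ on this basis by the stated prescription and extend by continuity. Since the assignment $\zeta_f:(v,\tilde v)\mapsto(-(-1)^{k^{2}}\tilde v,v)$ is an automorphism of $H^{k,k}_{\free}(M;\Z)$ (with inverse $(v,\tilde v)\mapsto(\tilde v,-(-1)^{k^{2}}v)$), the map $U$ permutes the orthonormal basis bijectively, hence its continuous linear extension is a well-defined unitary operator in $\mathcal{BL}(\mathcal{H}_{\omega_t})$.

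The substantive step is to verify that $U$ \emph{implements} the duality, that is,
\[
U\,\pi_{\omega_t}(a)=\pi_{\omega_t}\bigl(\zeta_\mathcal{A}(a)\bigr)\,U\qquad\forall\,a\in\mathcal{A}_{lr}.
\]
By density and linearity it suffices to test both sides on Weyl generators acting on a basis vector $\ket{v',\tilde v'}$. A direct application of the Weyl commutation relation followed by the ideal-level identity above displays $U\,\pi_{\omega_t}(\mathcal{W}(u,\tilde u,v,\tilde v))\ket{v',\tilde v'}$ as a phase multiplying the vector $\ket{-(-1)^{k^{2}}(\tilde v+\tilde v'),\,v+v'}$. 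The analogous computation for $\pi_{\omega_t}(\zeta_\mathcal{A}(\mathcal{W}(u,\tilde u,v,\tilde v)))\,U\ket{v',\tilde v'}$ produces exactly the same vector, each phase factoring as
\[
\exp\bigl(2\pi i\,\tau_{lr}(\cdot,\cdot)\bigr)\cdot\exp\bigl(2\pi i\bigl(\pa{\tilde u}{v+v'}_f-(-1)^{k^{2}}\pa{u}{\tilde v+\tilde v'}_f\bigr)\bigr),
\]
with the $\tau_{lr}$-arguments on the left being $(u,\tilde u,v,\tilde v)$ and $(0,0,v',\tilde v')$ and on the right being their images under $\zeta_f\oplus\zeta_f$.

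The main and only obstacle is therefore the identity $\tau_{lr}=\tau_{lr}\circ(\zeta_f\oplus\zeta_f)$, which ensures the two phases coincide. This is precisely the invariance property of $\sigma$ under $\zeta^{\star}$ established at the beginning of the chapter, specialised to the topological summand; it may be checked directly from the explicit formula for $\tau_{lr}$, where the substitutions $u\mapsto -(-1)^{k^{2}}\tilde u$, $\tilde u\mapsto u$ (and similarly on $v,\tilde v$) swap the first two summands with the last two up to the sign $-(-1)^{k^{2}}\cdot(-1)^{k^{2}}=-1$, which is reabsorbed by the opposite signs already appearing between the two pairs of terms. Once this invariance is in hand, the covariance follows and $U$ is the sought implementer of quantum Abelian duality on the GNS Hilbert space.
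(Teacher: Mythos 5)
Your proposal is correct, and its skeleton coincides with the paper's: both define $U$ on the distinguished vectors $\ket{v,\tilde{v}}$, both rely on the structure of $\omega_t$ on the quotient by the Gelfand ideal, and both use the sign identity $(-1)^{2k^2}=1$ to see that the substitution $(v,\tilde v)\mapsto(-(-1)^{k^2}\tilde v,v)$ is harmless. The differences are in how each step is verified, and yours is in one respect more complete. The paper establishes unitarity by showing $U$ is an isometry with dense range (the isometry computation being nothing but the $\zeta$-invariance of $\omega_t$), and it treats ``implements the duality'' rather lightly: it only checks that $[\zeta_\mathcal{A}\mathcal{W}(0,0,v,\tilde v)]=\ket{-(-1)^{k^2}\tilde v,v}$, with the independence of the choice of representative relegated to the subsequent remark; the intertwining relation $U\pi_{\omega_t}(a)U^{-1}=\pi_{\omega_t}(\zeta_\mathcal{A}(a))$ is never written out. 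You instead get unitarity for free from the observation that $\{\ket{v,\tilde v}\}$ is an orthonormal basis (using $\tau_{lr}$ vanishing on pairs with trivial $(u,\tilde u)$-components) which $U$ permutes bijectively, and you then prove the full covariance relation on Weyl generators, reducing it to the invariance $\tau_{lr}\circ(\zeta_{lr}\times\zeta_{lr})=\tau_{lr}$ together with the duality-invariance of the phase $e^{2\pi i(\pa{\tilde u}{v}_f-(-1)^{k^2}\pa{u}{\tilde v}_f)}$; the former is exactly the specialisation to the topological summand of the fact, proved at the start of the chapter, that $\zeta^\star$ preserves $\sigma$, and the latter is the content of the paper's remark following the theorem. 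This buys you a genuinely stronger conclusion (an honest GNS implementation of the automorphism) at the cost of the extra Weyl-relation bookkeeping. One cosmetic slip: in your direct check of the $\tau_{lr}$-invariance the substitution interchanges the first summand with the second and the third with the fourth (each picking up the compensating sign $-(-1)^{k^2}\cdot(-1)^{k^2}$), not ``the first two with the last two''; the conclusion is unaffected.
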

\begin{proof}
To begin with, observe that $U$ actually implements the duality:
$$
\left[ \zeta_\mathcal{A} \mathcal{W}(0,0,v,\tilde{v}) \right]=\left[ \mathcal{W}(0,0,-(-1)^{k^2}\tilde{v},v) \right]=\ket{-(-1)^{k^2}\tilde{v},v}.
$$
The operator $U$ is linear and densely defined by construction. Its range coincides with $\mathcal{D}_{\omega_t}$ and is, therefore, dense in $\mathcal{H}_{\omega_t}$. Furthermore, it preserves the inner product; in fact, for every $\ket{v,\tilde{v}},\ket{v',\tilde{v'}}\in \mathcal{D}_{\omega_t}$, it holds:
\begin{equation*}
\begin{split}
&\braket{-(-1)^{k^2}\tilde{v},v}{-(-1)^{k^2}\tilde{v}',v'}=\omega_{t}( \mathcal{W}(0,0,-(-1)^{k^2}(\tilde{v}'-\tilde{v}),v'-v))=\\
&=\omega_{t}\left( \mathcal{W}(0,0,v'-v, \tilde{v}'-\tilde{v})\right)=\braket{v,\tilde{v}}{v',\tilde{v}'}.
\end{split}
\end{equation*}
In the central passage we used that $(-(-1)^{k^2}(\tilde{v}'-\tilde{v}),v'-v)=(0,0)$ if and only if $(v'-v, \tilde{v}'-\tilde{v})=(0,0)$. Hence $U$ is bounded, because it is, in particular, an isometry. The continuous linear extension theorem yields a unitary operator defined on the whole Hilbert space $\mathcal{H}_{\omega_t}$.
\end{proof}

\begin{rem}\label{rem:4:duality_topological}
If we consider a generic element $\mathcal{W}(u,\tilde{u},v,\tilde{v})$, the associated vector in $\mathcal{H}_{\omega_t}$ is given by:
\begin{equation*}
[\mathcal{W}(u,\tilde{u},v,\tilde{v})]= e^{2\pi i(\pa{\tilde{u}}{v}_f-(-1)^{k^2}\pa{u}{\tilde{v}}_f)}\ket{v,\tilde{v}}.
\end{equation*}
Observe that the exponential is invariant under duality; namely:
\begin{equation*}
[\mathcal{W}(-(-1)^{k^2}\tilde{u},u,-(-1)^{k^2}\tilde{v},v)]= e^{2\pi i(-(-1)^{k^2}\pa{u}{\tilde{v}}_f+\pa{\tilde{u}}{v}_f)}\ket{-(-1)^{k^2}\tilde{v},v}.
\end{equation*}
Hence, $U$ actually unambiguously implements duality on $\mathcal{H}_{\omega_t}$, independently of the choice of the representative of $\ket{v,\tilde{v}}$.
\end{rem}

The representation comprises elements implementing \emph{rotation} and \emph{translation} operators on the Hilbert space $\mathcal{H}_{\omega_t}$ in a sense made precise by the definitions below.

\begin{defn}
We call \emph{rotation operators} the bounded operators given by the continuous linear extension of:
\begin{eqnarray}\label{eq:4:rotation_operators_1}
&\mathcal{R}(u'):=\pi_{\omega_t}(\mathcal{W}(u',0,0,0)): \mathcal{D}_{\omega_t}\to \mathcal{H}_{\omega_t}&\nonumber\\
&\ket{v,\tilde{v}}\mapsto e^{-(-1)^{k^2} 2\pi i \pa{u'}{\tilde{v}}_f}\ket{v,\tilde{v}},&
\end{eqnarray}
\begin{eqnarray}\label{eq:4:rotation_operators_2}
&\tilde{\mathcal{R}}(\tilde{u'}):=\pi_{\omega_t}(\mathcal{W}(0,\tilde{u'},0,0)):\mathcal{D}_{\omega_t}\to \mathcal{H}_{\omega_t}&\nonumber\\
&\ket{v,\tilde{v}}\mapsto e^{2\pi i \pa{\tilde{u}'}{v}_f}\ket{v,\tilde{v}}. &
\end{eqnarray}
\end{defn}
The effect of the rotation operators is that of multiplying vectors by a phase which depends on the vector itself. Observe that these unitary operators are parametrized by $\frac{H^{k-1}(\Sigma;\R)}{H^{k-1}_\free(\Sigma;\Z)}$. If we pick a set $\mathcal{G}=\{ z_i \}_{i=1}^n$ of independent generators of $\frac{H^{k-1}(\Sigma;\R)}{H^{k-1}_\free(\Sigma;\Z)}$, \eqref{eq:4:rotation_operators_1} and \eqref{eq:4:rotation_operators_2} indentify a one-parameter strongly continuous unitary group for each element of $\mathcal{G}$, where the parameter is a phase. Stone's theorem guarantees that each of such families is generated by a densely defined, essentially self-adjoint operator on $\mathcal{H}_{\omega_t}$.

\begin{defn}
We call \emph{translation operators} the bounded operators given by the continuous linear extension of:
\begin{eqnarray}\label{eq:4:translation_1}
&\mathcal{T}(v'):=\pi_{\omega_t}(\mathcal{W}(0,0,v',0)):\mathcal{D}_{\omega_t}\to \mathcal{H}_{\omega_t}&\nonumber \\
&\ket{v,	\tilde{v}}\mapsto \ket{v+v',\tilde{v}},&
\end{eqnarray}
\begin{eqnarray}\label{eq:4:translation_2}
&\mathcal{T}(\tilde{v}'):=\pi_{\omega_t}(\mathcal{W}(0,0,0,\tilde{v}')):\mathcal{D}_{\omega_t}\to \mathcal{H}_{\omega_t}&\nonumber \\
&\ket{v,	\tilde{v}}\mapsto \ket{v,\tilde{v}+\tilde{v}'}.&
\end{eqnarray}
\end{defn}

The effect of a translation operator is that of translating the first or the second basis vectors' label of an amount depending on the argument of the operator itself.

In conclusion, let us introduce other operators that offer the opportunity of an effective interpretation of our model in particular cases. Notice that, being a finitely generated free Abelian group, $H^k_{\free}(M;\Z)$ is isomorphic to $\Z^n$ for some $n\in\N$. Choose a basis $\{ s_i \}_{i=1}^n$ of $H^k_\free(\Sigma;\Z)$; every $v\in H^k_\free(\Sigma;\Z)$ can be written as $v=\sum_{i=1}^n v^i s_i, \,v^i\in\Z$. Consider the densely defined unbounded linear operators:
\begin{eqnarray}\label{eq:4:momentum_1}
&\Pi_i: \mathcal{D}_{\omega_t}\to \mathcal{H}_{\omega_t}&\nonumber\\
&\ket{v,\tilde{v}} \mapsto v^i \ket{v,\tilde{v}}&
\end{eqnarray}
and
\begin{eqnarray}\label{eq:4:momentum_2}
&\tilde{\Pi}_j: \mathcal{D}_{\omega_t}\to \mathcal{H}_{\omega_t}&\nonumber\\
&\ket{v,\tilde{v}} \mapsto \tilde{v}^j\ket{v,\tilde{v}}.&
\end{eqnarray}
The maps $\Pi_i$ and $\tilde{\Pi}_j$, $i,j\in\{1,\dots,n\}$, are well-defined, since they consist in the multiplication by an integer number. It is a remarkable fact that, if the generating sets $\mathcal{G}$ and $\{s_i\}_{i=1}^n$ are chosen accordingly, the operators $\Pi_i$ and $\tilde{\Pi}_j$ are exactly the essentially self-adjoint generators of the rotation operators.


\begin{rem}
To stress once more the role of the state defined in Proposition \ref{prop:3:omega_t}, let us briefly recall the peculiar features of the representation it induces. With reference to \eqref{eq:4:rotation_operators_1}, \eqref{eq:4:rotation_operators_2}, \eqref{eq:4:momentum_1} and \eqref{eq:4:momentum_2}, this representation allows us to interpret the topological degrees of freedom as corresponding to two (families of) point particles quantized on a circle, whose momenta are described by essentially self-adjoint operators (with discrete spectrum) that are interchanged by the unitary operator implementing duality at the level of the chosen representation. In fact, it is remarkable that the two different families of rotation operators are intertwined by the duality operator. As a matter of fact:
\begin{align*}
U\mathcal{R}(u')\ket{v,\tilde{v}}&= e^{-(-1)^{k^2}2\pi i \pa{u'}{\tilde{v}}_f}\ket{-(-1)^{k^2}\tilde{v},v}\\
&=e^{2\pi i \pa{u'}{-(-1)^{k^2}\tilde{v}}_f}\ket{-(-1)^{k^2}\tilde{v},v}=\tilde{R}(u') U\ket{v,\tilde{v}}.
\end{align*} 
The same holds true for the momentum operators:
$$
U\Pi_i \ket{v,\tilde{v}}=\tilde{\Pi}_i U\ket{v,\tilde{v}}.
$$
\end{rem}
Therefore, the duality proves to be consistent and coherent with the entire construction.\\

%

Let us come, in the end, to discuss the role of duality for the dynamical sector of the theory. From the point of view of the whole spacetime $M$, the map $\zeta_u$, originally defined on the Cauchy surface $\Sigma$ by \eqref{eq:4:duality_u}, becomes: 
\begin{eqnarray}\label{eq:4:duality_u_M}
&\zeta_u: d\Omega^{k-1}\cap \ast d\Omega^{k-1}(M)\to d\Omega^{k-1}\cap \ast d\Omega^{k-1}(M) &\nonumber\\
&dA=\ast d\tilde{A}\mapsto (-1)^{k^2+1}d\tilde{A}=\ast dA.&
\end{eqnarray}
Introducing the notation $\mathcal{A}_u:=\mathfrak{CCR}(d\Omega^{k-1}\cap \ast d\Omega^{k-1}(M), \tau_u)$, the analogue of $\zeta_u$ for the algebra of the observable is given by:
\begin{eqnarray}\label{eq:4:duality_u_M_algebra}
&\zeta_{\mathcal{A}_u}: \mathcal{A}_u \to \mathcal{A}_u &\nonumber\\
& \mathcal{W}(dA=\ast d\tilde{A})\mapsto \mathcal{W}((-1)^{k^2+1}d\tilde{A}=\ast dA). &
\end{eqnarray}

Let us fix the spacetime to be $M=\R\times \S^1$. The first question we ask ourselves is whether the state $\omega_\mu$ defined in \eqref{def:3:omega_mu} is invariant under the action of the spacetime symmetry group $\mathfrak{S}=\R\times \T$. In general, to each symmetry, i.e.\ an isometric diffeomorphism of the spacetime, we can associate, by virtue of the quantum field theory functor \eqref{eq:2:LCQFT}, an automorphism of the Weyl $C^\ast$-algebra, which implements the symmetry at the level of the algebra of observables. For the case in hand, the group of the symmetry automorphisms can be built as follows. For every $(t,\phi)\in \R\times \T$, define the one-parameter family of maps:
\begin{eqnarray*}
&\beta_{(s,\phi)}: dC^\infty \cap \ast dC^\infty(M) \to dC^\infty \cap \ast dC^\infty(M)&\\
&(df=\ast d\tilde{f})\mapsto (df(\arg+s,\arg+\phi)=\ast d\tilde{f}(\arg+s,\arg+\phi)).&
\end{eqnarray*}
The counterpart at the level of the algebra is fully specified by:
\begin{eqnarray*}
&\alpha_{(s,\phi)}:\mathcal{A}_u\to \mathcal{A}_u&\\
&\mathcal{W}(df=\ast d\tilde{f})\mapsto \mathcal{W}(\beta_{(s,\phi)}(df=\ast d\tilde{f})). &
\end{eqnarray*}

\begin{prop}\label{prop:4:state_invariant_symm}
The state $\omega_\mu$ is invariant under the action of the spacetime symmetries, i.e.:
$$
\omega_\mu \circ \alpha_{(s,\phi)}=\omega_\mu,
$$
for all $(s,\phi) \in \R	\times \T$.
\end{prop}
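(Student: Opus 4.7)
The plan is to show invariance by tracking how the Fourier coefficients $a_k^\pm, b_k^\pm$ transform under $(s,\phi)\in\R\times\T$ and observing that the combination appearing in the exponent of $\omega_\mu$ is a sum of squared moduli of $\mathrm{SO}(2)$-orbits. Since $\omega_\mu$ is quasifree, it is enough to verify $\omega_\mu(\alpha_{(s,\phi)}\mathcal{W}(d\varphi)) = \omega_\mu(\mathcal{W}(d\varphi))$ for all $d\varphi = *d\tilde{\varphi} \in dC^\infty\cap *dC^\infty(M)$; by the very definition of the group automorphism via the QFT functor, this reduces to comparing $\omega_\mu(\mathcal{W}(\beta_{(s,\phi)}(d\varphi)))$ with $\omega_\mu(\mathcal{W}(d\varphi))$.

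First, I would use the explicit Fourier expansion in \eqref{eq:3:h_phi} to observe that the action of $\beta_{(s,\phi)}$ on $\varphi$ is completely encoded in the substitutions $t\mapsto t+s$, $\theta\mapsto\theta+\phi$, i.e.\ in the shifts $(t-\theta)\mapsto(t-\theta)+(s-\phi)$ and $(t+\theta)\mapsto(t+\theta)+(s+\phi)$. Setting $\alpha := s-\phi$ and $\beta := s+\phi$ and applying the angle-addition formulae mode by mode, the transformed coefficients are
\begin{align*}
\begin{pmatrix} b_k^{-\prime} \\ a_k^{-\prime}\end{pmatrix}
&= \begin{pmatrix}\cos(2\pi k\alpha) & -\sin(2\pi k\alpha)\\ \sin(2\pi k\alpha) & \phantom{-}\cos(2\pi k\alpha)\end{pmatrix}\begin{pmatrix} b_k^- \\ a_k^-\end{pmatrix}, \\
\begin{pmatrix} b_k^{+\prime} \\ a_k^{+\prime}\end{pmatrix}
&= \begin{pmatrix}\cos(2\pi k\beta) & -\sin(2\pi k\beta)\\ \sin(2\pi k\beta) & \phantom{-}\cos(2\pi k\beta)\end{pmatrix}\begin{pmatrix} b_k^+ \\ a_k^+\end{pmatrix}.
\end{align*}
That is, for each $k\geq 1$ the left-movers and right-movers rotate independently by angles proportional to $\alpha$ and $\beta$ respectively.

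Since each such transformation is an element of $\mathrm{SO}(2)$, the Euclidean square $(a_k^\pm)^2+(b_k^\pm)^2$ is preserved mode by mode, and therefore
\[
\sum_{k=1}^\infty \pi k\Big\{(a_k^{+\prime})^2+(b_k^{+\prime})^2+(a_k^{-\prime})^2+(b_k^{-\prime})^2\Big\}
=\sum_{k=1}^\infty \pi k\Big\{(a_k^{+})^2+(b_k^{+})^2+(a_k^{-})^2+(b_k^{-})^2\Big\}.
\]
Comparing with the expression \eqref{eq:3:state_2D} for $\omega_\mu$ yields $\omega_\mu(\mathcal{W}(\beta_{(s,\phi)}(d\varphi)))=\omega_\mu(\mathcal{W}(d\varphi))$, i.e.\ $\omega_\mu\circ\alpha_{(s,\phi)}=\omega_\mu$, for every $(s,\phi)\in\R\times\T$.

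No real obstacle is anticipated here: the argument is essentially algebraic and the only subtlety is bookkeeping of the four trigonometric basis functions in \eqref{eq:3:h_phi}. The content of the statement is the rather transparent fact that the ultrastatic product metric on $\R\times\S^1$ diagonalises in light-cone coordinates, so that time-translations and rotations of the circle act as phase rotations in each Fourier mode, while the bilinear form $\mu$ of \eqref{eq:3:mu_R_S1} that defines the quasifree state is, in those same coordinates, a sum of rotation-invariant Euclidean norms.
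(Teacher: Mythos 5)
Your proposal is correct and follows essentially the same route as the paper: both arguments rest on the observation that $\alpha_{(s,\phi)}$ acts mode by mode on the Fourier coefficients of $d\varphi$ as a rotation (phase), so the exponent of \eqref{eq:3:state_2D}, being a sum of the rotation-invariant quantities $(a_k^\pm)^2+(b_k^\pm)^2$, is unchanged. The only cosmetic difference is that you treat a general $(s,\phi)$ at once via real $\mathrm{SO}(2)$ matrices in the light-cone combinations $t\pm\theta$, whereas the paper splits into $\alpha_{(s,0)}$ and $\alpha_{(0,\phi)}$ and argues with complex phases, invoking the one-parameter group structure.
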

\begin{proof}
The expression for $\omega_\mu(\mathcal{W}(df=\ast d\tilde{f}))$ is given by equation \eqref{eq:3:state_2D}, where in the sum at the exponent we have the square modulus of the coefficients of the Fourier decomposition of $df$ on $\R\times\S^1$. It is enough to prove the statement for $\alpha_{(0,\phi)}$ and $\alpha_{(s,0)}$ separately: the general assert then follows from the one-parameter group structure. On the one hand, the effect of $\alpha_{(0,\phi)}$ is the multiplication of the Fourier  coefficients by $e^{\pm 2\pi i k \phi}$: this has no consequences when computing the square modulus. On the other hand, the map $\alpha_{(s,0)}$ leads to the appearance of a prefactor $e^{-2\pi i k s}$ before the positive frequency terms and of a prefactor $e^{2\pi i k s}$ before the negative frequency terms. When computing \eqref{eq:3:mu_R_S1}, the two complex exponentials with opposite sign compensate and they give no contribution to the final result. The state, therefore, turns out to be invariant under the action of $\alpha_{(s,\phi)}$ for every $(s,\phi)\in \R\times \T$. 
\end{proof}

A similar result holds true also for the duality. 

\begin{prop}\label{prop:4:state_invariant_duality}
The state $\omega_\mu$ is invariant under the duality map $\zeta_{\mathcal{A}_u}$, i.e.:
$$
\omega_\mu \circ \zeta_{\mathcal{A}_u}=\omega_\mu.
$$
\end{prop}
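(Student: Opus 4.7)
Since $\omega_\mu$ is quasifree and the Weyl generators span a total subset of $\mathcal{A}_u$, it suffices to show that the exponent $\mu(d\varphi,d\varphi)$ is invariant under $\zeta_u$; that is, $\mu(\zeta_u\,d\varphi,\zeta_u\,d\varphi)=\mu(d\varphi,d\varphi)$ for every $d\varphi=\ast d\tilde\varphi$. Specialising to $M=\mathbb{R}\times\mathbb{S}^1$ (so $k=1$, $m=2$), we have $(-1)^{k^2+1}=1$ and the map \eqref{eq:4:duality_u_M} reduces to the pure swap $\zeta_u(d\varphi=\ast d\tilde\varphi)=(d\tilde\varphi=\ast d\varphi)$; consequently $\zeta_{\mathcal{A}_u}(\mathcal{W}(d\varphi))=\mathcal{W}(d\tilde\varphi)$, and the problem becomes: evaluate the state at $d\tilde\varphi$ and compare with its value at $d\varphi$.

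The cleanest route is structural. First, I would observe that $\tau_u$ itself is $\zeta_u$-invariant: by \eqref{tau:3:u} with $k(m-k)=1$,
\begin{equation*}
\tau_u(dA,dA')=\int_\Sigma\bigl(\tilde A\wedge dA'+A\wedge d\tilde A'\bigr)\,\,\mathrm{mod}\,\mathbb{Z},
\end{equation*}
which is manifestly symmetric under the simultaneous swap $(A,\tilde A)\leftrightarrow(\tilde A,A)$ on both slots. The sesquilinear extension $\tau_{\mathbb{C}}$ inherits this invariance trivially. Next, the positive-frequency projection $\mathcal{P}$ of Section~\ref{section:3:Hadamard_2D} is defined separately on each component of $(A,\tilde A)$ through its time-Fourier decomposition; since $\zeta_u$ acts by interchanging the two components without mixing frequencies, $\mathcal{P}$ and $\zeta_u$ commute. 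Combining,
\begin{equation*}
\mu(\zeta_u\,d\varphi,\zeta_u\,d\varphi)=\Im\,\tau_{\mathbb{C}}\bigl(\mathcal{P}\zeta_u\,d\varphi,\mathcal{P}\zeta_u\,d\varphi\bigr)=\Im\,\tau_{\mathbb{C}}\bigl(\zeta_u\mathcal{P}\,d\varphi,\zeta_u\mathcal{P}\,d\varphi\bigr)=\mu(d\varphi,d\varphi),
\end{equation*}
so that $\omega_\mu(\mathcal{W}(\zeta_u\,d\varphi))=\exp\bigl(-\tfrac12\mu(\zeta_u\,d\varphi,\zeta_u\,d\varphi)\bigr)=\omega_\mu(\mathcal{W}(d\varphi))$.

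As a sanity check, and to make the argument fully transparent, I would also carry out the direct Fourier computation: reading off from \eqref{eq:3:h_phi}–\eqref{eq:3:h_phi_tilde}, the coefficients of $\tilde\varphi$ in the same trigonometric basis used to parametrise $\varphi$ are
\begin{equation*}
(\tilde a_k^+,\tilde a_k^-,\tilde b_k^+,\tilde b_k^-)=(-a_k^+,\,a_k^-,\,-b_k^+,\,b_k^-),
\end{equation*}
so that $(\tilde a_k^+)^2+(\tilde a_k^-)^2+(\tilde b_k^+)^2+(\tilde b_k^-)^2=(a_k^+)^2+(a_k^-)^2+(b_k^+)^2+(b_k^-)^2$, and the closed form \eqref{eq:3:state_2D} yields $\omega_\mu(\mathcal{W}(d\tilde\varphi))=\omega_\mu(\mathcal{W}(d\varphi))$ term by term.

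The only point that requires real care—this is the main (though mild) obstacle—is the commutation $\mathcal{P}\circ\zeta_u=\zeta_u\circ\mathcal{P}$: one has to keep track of the sign conventions built into the parametrisation of $\varphi$ and $\tilde\varphi$ when splitting off the positive-frequency part, and verify that the swap of components does not mix the $e^{2\pi ikt}$ and $e^{-2\pi ikt}$ sectors. Once that is secured (and the direct Fourier computation above confirms it), the proof is complete.
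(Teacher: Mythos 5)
Your proposal is correct, and its leading argument takes a genuinely different route from the paper's, while your ``sanity check'' is in fact the paper's entire proof: the paper simply notes that $\zeta_{\mathcal{A}_u}$ changes the Fourier coefficients of $d\varphi$ only by a swap/sign (the grading $-(-1)^{k^2}$, trivial for $k=1$), and since the closed formula \eqref{eq:3:state_2D} involves only the squares of these coefficients the value on each Weyl generator is unchanged; your explicit identification $(\tilde a_k^+,\tilde a_k^-,\tilde b_k^+,\tilde b_k^-)=(-a_k^+,a_k^-,-b_k^+,b_k^-)$ is precisely what the paper means and is computed correctly. What you do differently is to derive the invariance structurally: $\zeta_u$ preserves $\tau_u$ (for $k=1$, $m=2$ the expression \eqref{tau:3:u} is manifestly symmetric under the simultaneous swap in both slots, which is also the mechanism behind $\zeta^\star$ preserving $\sigma$ earlier in the chapter), and $\zeta_u$ commutes with the positive-frequency projector $\mathcal{P}$ because the Hodge star of the ultrastatic metric acts pointwise and does not mix the $e^{\pm 2\pi i k t}$ sectors; hence $\mu=\Im\,\tau_{\mathbb{C}}(\mathcal{P}\,\cdot\,,\mathcal{P}\,\cdot\,)$ is $\zeta_u$-invariant, and equality on Weyl generators extends by linearity, density and continuity (both functionals being states). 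You correctly single out the commutation $\mathcal{P}\circ\zeta_u=\zeta_u\circ\mathcal{P}$ as the only step needing care, and your coefficient computation settles it. The trade-off: the paper's argument is shorter but tied to the explicit two-dimensional mode expansion, whereas your two structural inputs (invariance of the pre-symplectic form under the duality swap, compatibility of the frequency splitting with the swap) carry over verbatim to the higher-dimensional state \eqref{def:3:omega_mu} of Section \ref{subsection:4.3.1_Hadamard_state_general}, where the same conclusion holds without redoing any expansion.
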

\begin{proof}
When computing $\omega_\mu(\zeta_{\mathcal{A}_u}\mathcal{W}(df=\ast d\tilde{f}))$, the map $\zeta_{\mathcal{A}_u}$ swaps the coefficients of the Fourier decomposition, up to a grading. As the pre-factor $-(-1)^{k^2}$ has no relevance to the extent that we take the square modulus of the coefficients, the expression for the state remains unchanged and the assert follows.
\end{proof}

At the level of the algebra, the duality and the automorphisms implementing the symmetries of the spacetime commute, that is to say $\zeta_{\mathcal{A}_u}\circ \alpha_{(s,\phi)}=\alpha_{(s,\phi)}\circ \zeta_{\mathcal{A}_u}$ for every $(s,\phi)\in \R\times \T$. In fact:
{\small
\begin{align*}
\left[\zeta_{\mathcal{A}_u}\circ \alpha_{(s,\phi)}\right](\mathcal{W}(df=\ast d\tilde{f}))=&\zeta_{\mathcal{A}_u}\mathcal{W}(df(\arg+s,\arg+\phi)=\ast d\tilde{f}(\arg+s,\arg+\phi))\\
=&\mathcal{W}((-1)^{k^2+1} d\tilde{f}(\arg+s,\arg+\phi)=\ast df(\arg+s,\arg+\phi))\\
=& \alpha_{(s,\phi)}\mathcal{W}((-1)^{k^2+1}d\tilde{f}=\ast df)\\
=&\left[\alpha_{(s,\phi)}\circ \zeta_{\mathcal{A}_u}\right](\mathcal{W}(df=\ast d\tilde{f})).
\end{align*}}
This peculiar property proves to have interesting consequences:

\begin{thm}\label{thm:4:symm_duality_unitary_commuting}
Let $(\mathcal{H}_{\omega_\mu},\pi_{\omega_\mu}, \Psi_{\omega_\mu} )$ be the GNS triple of $\omega_\mu$. Then, the duality and the spacetime symmetries are implemented on $\mathcal{H}_{\omega_\mu}$ by unitary operators. Furthermore, these operators commute.
\end{thm}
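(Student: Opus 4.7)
The plan is to leverage the two invariance properties just established, namely Proposition \ref{prop:4:state_invariant_symm} and Proposition \ref{prop:4:state_invariant_duality}, which state that $\omega_\mu\circ\alpha_{(s,\phi)}=\omega_\mu$ and $\omega_\mu\circ\zeta_{\mathcal{A}_u}=\omega_\mu$. Whenever the GNS state is invariant under an automorphism of the underlying $C^\ast$-algebra, the automorphism is unitarily implemented on the GNS Hilbert space; this is the standard construction I will adapt to our situation.

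First I would define, on the dense subspace $\pi_{\omega_\mu}(\mathcal{A}_u)\Psi_{\omega_\mu}\subset \mathcal{H}_{\omega_\mu}$, the linear maps
\begin{equation*}
U_{(s,\phi)}\bigl(\pi_{\omega_\mu}(a)\Psi_{\omega_\mu}\bigr):=\pi_{\omega_\mu}\bigl(\alpha_{(s,\phi)}(a)\bigr)\Psi_{\omega_\mu},\qquad U_\zeta\bigl(\pi_{\omega_\mu}(a)\Psi_{\omega_\mu}\bigr):=\pi_{\omega_\mu}\bigl(\zeta_{\mathcal{A}_u}(a)\bigr)\Psi_{\omega_\mu}.
\end{equation*}
Both are well-defined and isometric: invariance of $\omega_\mu$ gives, for all $a,b\in\mathcal{A}_u$,
\begin{equation*}
\bigl\langle \pi_{\omega_\mu}(\alpha_{(s,\phi)}(a))\Psi_{\omega_\mu}\,\bigl|\,\pi_{\omega_\mu}(\alpha_{(s,\phi)}(b))\Psi_{\omega_\mu}\bigr\rangle=\omega_\mu\bigl(\alpha_{(s,\phi)}(a^\ast b)\bigr)=\omega_\mu(a^\ast b),
\end{equation*}
and analogously for $U_\zeta$. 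In particular both operators are well-defined (vectors with zero norm are mapped to vectors with zero norm). Since $\alpha_{(s,\phi)}$ and $\zeta_{\mathcal{A}_u}$ are $\ast$-automorphisms, the images of the dense domain are again dense, so $U_{(s,\phi)}$ and $U_\zeta$ admit unique continuous extensions to unitary operators on $\mathcal{H}_{\omega_\mu}$, which by construction satisfy $\pi_{\omega_\mu}(\alpha_{(s,\phi)}(a))=U_{(s,\phi)}\pi_{\omega_\mu}(a)U_{(s,\phi)}^{-1}$ and $\pi_{\omega_\mu}(\zeta_{\mathcal{A}_u}(a))=U_\zeta\pi_{\omega_\mu}(a)U_\zeta^{-1}$, while fixing the cyclic vector $\Psi_{\omega_\mu}$.

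Finally, I would deduce commutativity from the commutativity at the algebra level discussed just before the theorem, namely $\zeta_{\mathcal{A}_u}\circ\alpha_{(s,\phi)}=\alpha_{(s,\phi)}\circ\zeta_{\mathcal{A}_u}$. Evaluating both compositions on a generic vector of the dense subspace gives
\begin{equation*}
U_\zeta U_{(s,\phi)}\bigl(\pi_{\omega_\mu}(a)\Psi_{\omega_\mu}\bigr)=\pi_{\omega_\mu}\bigl(\zeta_{\mathcal{A}_u}\alpha_{(s,\phi)}(a)\bigr)\Psi_{\omega_\mu}=\pi_{\omega_\mu}\bigl(\alpha_{(s,\phi)}\zeta_{\mathcal{A}_u}(a)\bigr)\Psi_{\omega_\mu}=U_{(s,\phi)}U_\zeta\bigl(\pi_{\omega_\mu}(a)\Psi_{\omega_\mu}\bigr),
\end{equation*}
and an extension by continuity to all of $\mathcal{H}_{\omega_\mu}$ concludes the argument. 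There is no real obstacle here: the proof is essentially a direct application of the invariance properties together with the elementary observation that commuting automorphisms induce commuting implementers. The only small point to keep in mind is that $U_\zeta$ and $U_{(s,\phi)}$ are determined uniquely once one prescribes $U\Psi_{\omega_\mu}=\Psi_{\omega_\mu}$, which justifies identifying the implementers produced above with the intrinsic ones.
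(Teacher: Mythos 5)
Your proposal is correct and follows essentially the same route as the paper: both define the implementers on the dense GNS subspace by $[a]\mapsto[\zeta_{\mathcal{A}_u}a]$ (equivalently $\pi_{\omega_\mu}(a)\Psi_{\omega_\mu}\mapsto\pi_{\omega_\mu}(\zeta_{\mathcal{A}_u}a)\Psi_{\omega_\mu}$) and $[a]\mapsto[\alpha_{(s,\phi)}a]$, use the invariance of $\omega_\mu$ from Propositions \ref{prop:4:state_invariant_duality} and \ref{prop:4:state_invariant_symm} to get isometry and hence unitarity after continuous extension, and deduce commutativity of the implementers from the algebra-level identity $\zeta_{\mathcal{A}_u}\circ\alpha_{(s,\phi)}=\alpha_{(s,\phi)}\circ\zeta_{\mathcal{A}_u}$ evaluated on the dense subspace. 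Your explicit remark that isometry also settles well-definedness on zero-norm vectors is a slightly more careful phrasing of what the paper asserts directly, but it is not a different argument.
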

\begin{proof}
Given an element $a\in\mathcal{A}_u$, let $[a]\in \mathcal{H}_{\omega_\mu}$ be the associated vector in the Hilbert space via the GNS construction. Let $\mathcal{D}_{\omega_\mu}$ be the dense subspace of $\mathcal{H}_{\omega_\mu}$ whose elements are finite linear combinations $\sum_i c_i [a_i]$. We define the duality operator as the continuous linear extension of:
\begin{eqnarray}
&V:\mathcal{D}_{\omega_\mu}\to \mathcal{H}_{\omega_\mu} &\nonumber\\
& \sum_i c_i [a_i] \mapsto \sum_i c_i [\zeta_{\mathcal{A}_{u}}a]. &
\end{eqnarray}
$V$ is well-defined, because $\zeta_{\mathcal{A}_u}$ is an automorphism of $\mathcal{A}_u$. The range of $V$ is dense in $\mathcal{H}_{\omega_\mu}$; moreover, $V$ preserves the inner product:
\begin{align*}
\left\langle \sum_i c_i [\zeta_{\mathcal{A}_{u}}a_i]  \,\vline\, \sum_j d_j [\zeta_{\mathcal{A}_{u}} b_j]  \right\rangle_{\mathcal{H}_{\omega_\mu}}&=\sum_i \sum_j  \overline{c_i} d_j \omega_\mu (\zeta_{\mathcal{A}_{u}}(a_i^\ast b_j))\\
&= \sum_i \sum_j  \overline{c_i} d_j \omega_\mu (a_i^\ast b_j)\\
&= \left\langle \sum_i c_i [a_i]  \,\vline\, \sum_j d_j [b_j]  \right\rangle_{\mathcal{H}_{\omega_\mu}},
\end{align*}
where in the second step we used that the state is invariant under duality. Hence $V$ is unitary. Analogously, define the one-parameter group of operators $\{U_{(s,\phi)},\,(s,\phi)\in \R\times\T \}$ implementing the symmetries of the spacetime as the continuous linear extension of:
\begin{eqnarray}
& U_{(s,\phi)}: \mathcal{D}_{\omega_\mu}\to \mathcal{H}_{\omega_\mu}&\nonumber\\
& \sum_i c_i [a_i] \mapsto \sum_i c_i [\alpha_{(s,\phi)} a]. &
\end{eqnarray}
From the same reasoning, resorting to the invariance of the state under the symmetries of the spacetime, it follows that $U_{(s,\phi)}$ is unitary for every $(s,\phi)\in \R\times \T$. 

As far as the commutativity is concerned, we have:
\begin{align*}
U_{(s,\phi)} V \left(\sum_i c_i [a_i]\right) &= \sum_i c_i [\alpha_{(s,\phi)} (\zeta_{\mathcal{A}_{u}} a_i)]\\
&=\sum_i c_i [\zeta_{\mathcal{A}_{u}}(\alpha_{(s,\phi)} a_i)]= VU_{(s,\phi)} \left(\sum_i c_i [a_i]\right),
\end{align*}
for every $\sum_i c_i [a_i]\in \mathcal{D}_{\omega_\mu}$. The theorem is thus proved.
\end{proof}

It is a well-known result that the GNS Hilbert space of a quasifree state is a Fock space (see, e.g., \cite{KW91,BDFY15,WAL79}). In fact, the following results hold:

\begin{prop}[{\cite[Proposition 3.1]{KW91}}]\label{prop:4:one_particle_structure}
Let $S$ be a real vector space on which a symplectic form $\tilde{\sigma}$ and a bilinear positive symmetric form $\mu$ are defined such that:
$$
\dfrac{1}{2}|\tilde{\sigma}(v,w)|\leq |\mu(v,v)|^{1/2} |\mu(w,w)|^{1/2}
$$
for every $v,w\in S$. Then, there exists a pair $(K,H)$, called \emph{one-particle Hilbert space structure}, such that $H$ is a complex Hilbert space and $K:S	\to H$ is a map satisfying:
\begin{enumerate}[(i)]
\item $K$ is $\R$-linear and $K(S)+iK(S)$ is dense in $H$;
\item $(Kv\,\vline\, Kw)_H=\mu(v,w)+\frac{i}{2}\tilde{\sigma}(v,w)$ for every $v,w\in S$;
\item A pair $(K',H')$ fulfils (i) and (ii) if and only if there exists an isometric surjective operator $V:H\to H'$ for which $VK=K'$.
\end{enumerate}
\end{prop}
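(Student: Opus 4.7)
\medskip
\noindent
My plan is to follow a standard GNS-style construction: realise $H$ as the quotient-completion of a complexification of $S$ equipped with a Hermitian form built from $\mu$ and $\tilde\sigma$, and then deduce uniqueness up to unitary equivalence from the fact that the inner product on $K(S)+iK(S)$ is determined by $\mu$ and $\tilde\sigma$ alone.

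\medskip
\noindent
\textbf{Existence.} First I would complexify $S$ to $S_\mathbb{C} := S\oplus iS$ and extend $\mu$ and $\tilde\sigma$ sesquilinearly (conjugate-linear in the first slot, linear in the second). Because $\mu$ is real symmetric and $\tilde\sigma$ real skew-symmetric, the extended forms $\mu_\mathbb{C}$ and $\tilde\sigma_\mathbb{C}$ are respectively Hermitian and skew-Hermitian, so the combination
\[
h(v,w) \;:=\; \mu_\mathbb{C}(v,w) + \tfrac{i}{2}\,\tilde\sigma_\mathbb{C}(v,w), \qquad v,w\in S_\mathbb{C},
\]
is a Hermitian form. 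The crucial step is to show positivity: writing $v=v_1+iv_2$ with $v_1,v_2\in S$, a direct expansion using the (anti)symmetry of $\mu$ and $\tilde\sigma$ yields
\[
h(v,v) \;=\; \mu(v_1,v_1)+\mu(v_2,v_2) - \tilde\sigma(v_1,v_2),
\]
and the hypothesis $\tfrac12|\tilde\sigma(v_1,v_2)|\le \mu(v_1,v_1)^{1/2}\mu(v_2,v_2)^{1/2}$, combined with AM--GM, gives $h(v,v)\ge 0$. I would then let $N=\{v\in S_\mathbb{C}:h(v,v)=0\}$, which is a complex subspace by the Cauchy--Schwarz inequality for degenerate Hermitian forms; $h$ descends to a positive-definite Hermitian inner product on $S_\mathbb{C}/N$, whose completion defines $H$. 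Taking $K:S\to H$ to be the composition of $v\mapsto v\otimes 1$, the quotient map, and the isometric inclusion into the completion, (i) is automatic ($K$ is $\mathbb{R}$-linear and $K(S)+iK(S)$ is the image of $S_\mathbb{C}$, hence dense), while (ii) is nothing but the restriction of $h$ to real arguments.

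\medskip
\noindent
\textbf{Uniqueness.} For (iii), given a second pair $(K',H')$ satisfying (i) and (ii), I would define
\[
V\bigl(Kv + iKw\bigr) \;:=\; K'v + iK'w, \qquad v,w\in S,
\]
on the dense subspace $K(S)+iK(S)\subseteq H$. Expanding $\|Ka+iKb\|_H^2$ by sesquilinearity and invoking (ii) gives the identity
\[
\|Ka+iKb\|_H^2 \;=\; \mu(a,a)+\mu(b,b) - \tilde\sigma(a,b),
\]
which depends only on $\mu$ and $\tilde\sigma$ and therefore agrees with $\|K'a+iK'b\|_{H'}^2$. Hence $V$ is both well-defined (its kernel on $H$ matches that of the analogous map into $H'$) and isometric on $K(S)+iK(S)$; by bounded-linear extension it extends uniquely to a linear isometry $V:H\to H'$, and surjectivity is forced by density of $K'(S)+iK'(S)$ in $H'$, so $V$ is unitary. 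The converse implication, namely that the existence of such a $V$ forces $(K',H')$ to satisfy (i)--(ii) as well, is straightforward.

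\medskip
\noindent
\textbf{Main obstacle.} The only non-bookkeeping step is the positivity of $h$, which is precisely where the assumed bound $\tfrac12|\tilde\sigma(v,w)|\le \mu(v,v)^{1/2}\mu(w,w)^{1/2}$ is used; the remainder of the argument is a routine quotient-completion plus density-and-extension scheme. A minor point worth recording is that $K$ is in fact injective, since $\mu$ is already positive-definite ($\mu(v,v)=0$ forces $\tilde\sigma(v,\cdot)=0$ by the bound, contradicting non-degeneracy of $\tilde\sigma$).
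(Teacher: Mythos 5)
Your proof is correct, and it is essentially the standard argument: the paper itself gives no proof of this proposition (it is quoted verbatim from \cite[Proposition 3.1]{KW91}), and your quotient--completion of the complexification $(S_{\mathbb{C}},h)$ with $h=\mu_{\mathbb{C}}+\tfrac{i}{2}\tilde\sigma_{\mathbb{C}}$, positivity via the assumed bound, and uniqueness by noting that the Hermitian products of vectors in $K(S)+iK(S)$ are fixed by (ii) is precisely the construction of the cited reference. The only point worth recording explicitly is that the densely defined map $V(Kv+iKw):=K'v+iK'w$ is complex linear (since $i(Kv+iKw)=K(-w)+iKv$), which is what lets norm preservation upgrade to preservation of the complex inner product before extending by continuity.
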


\begin{thm}\label{thm:4:GNS_quasifree_state}
Let $\sigma:\mathcal{A}(S)\to \mathbb{C}$ be a quasifree state. The GNS triple $(\mathcal{H}_\sigma,\mathcal{D}_\sigma,\pi_\sigma, \Psi_\sigma)$ is specified as follows:
\begin{enumerate}[(i)]
\item $\mathcal{H}_\sigma$ is the symmetrized Fock space on the one-particle Hilbert space $H$ given by Proposition \ref{prop:4:one_particle_structure};
\item $\Psi_\sigma$ is the vacuum of the Fock space;
\item $\mathcal{D}_\sigma$ is the dense subspace of finite linear combinations of $\Psi_\sigma$ and the vectors of the form:
$$
\Phi(v_1)\dots \Phi(v_n)\Psi_\sigma,
$$
for $n\in\N$ and $v_1,\dots,v_n\in S$, where, denoting by $a^\dagger(v)$ and $a(v)$ the creation and the annihilation operator corresponding to $v\in S$, $\Phi(v)$ is the essentially self-adjoint operator on $\mathfrak{F}(\mathcal{H}_\sigma)$ defined by:
$$
\Phi(v)=2^{-1/2}(a^\dagger(v)+a(v)), \qquad v\in S;
$$
\item The representation $\pi_\sigma$ is completely specified by:
$$
\pi_\sigma(\mathcal{W}(v))= \exp\left( i\Phi(v) \right), \qquad\quad \forall v\in S.
$$
In particular, $\pi_\sigma(\mathcal{W}(v))$ is a unitary operator for every $v\in S$.
\end{enumerate}
\end{thm}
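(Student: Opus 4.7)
The plan is to explicitly realise a concrete GNS triple on a symmetric Fock space and then invoke the uniqueness clause of the GNS theorem to identify it with $(\mathcal{H}_\sigma,\pi_\sigma,\Psi_\sigma)$. First I would feed the positive symmetric form $\mu$ and the symplectic form $\tilde{\sigma}$ associated to the quasifree state $\sigma$ into Proposition \ref{prop:4:one_particle_structure} to produce a one-particle Hilbert space structure $(K,H)$. On the symmetric Fock space $\mathfrak{F}_+(H)$, with vacuum $\Omega$ and dense subspace $\mathcal{D}_0$ of vectors of finite total particle number, I would then define the Segal field operator $\Phi(v):=2^{-1/2}(a^\dagger(Kv)+a(Kv))$ for every $v\in S$, where $a^\dagger$ and $a$ are the creation and annihilation operators on $\mathfrak{F}_+(H)$.

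The second step is to verify the analytical properties of $\Phi(v)$. Standard Fock-space estimates show that every vector in $\mathcal{D}_0$ is an analytic vector for $\Phi(v)$, and Nelson's theorem then yields essential self-adjointness of each $\Phi(v)$ on $\mathcal{D}_0$. Using the canonical commutation relation $[a(Kv),a^\dagger(Kw)]=(Kv\,|\,Kw)_H\,\mathbb{I}$ together with property (ii) of Proposition \ref{prop:4:one_particle_structure} (which encodes $\tilde{\sigma}$ into the imaginary part of the Hilbert product), an elementary computation on $\mathcal{D}_0$ gives the CCR $[\Phi(v),\Phi(w)]=i\tilde{\sigma}(v,w)\,\mathbb{I}$.

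The third step is the construction of the representation and the verification of the Weyl relations. Setting $W(v):=\exp(i\overline{\Phi(v)})$ via functional calculus, the Baker--Campbell--Hausdorff identity applied on analytic vectors in $\mathcal{D}_0$ produces, after closure, the identity $W(v)W(w)=\exp(-\tfrac{i}{2}\tilde{\sigma}(v,w))\,W(v+w)$, so the assignment $\mathcal{W}(v)\mapsto W(v)$ extends by linearity to a $\ast$-representation $\pi:\mathcal{A}(S)\to\mathcal{B}(\mathfrak{F}_+(H))$. Cyclicity of $\Omega$ follows because the vectors $\Phi(v_1)\cdots\Phi(v_n)\Omega$, obtained by differentiating $t\mapsto W(tv)\Omega$ at $t=0$, span a dense subspace of $\mathfrak{F}_+(H)$ once one exploits the density of $K(S)+iK(S)$ in $H$. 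A direct evaluation on the vacuum, via the normal-ordering identity $W(v)=\exp(i2^{-1/2}a^\dagger(Kv))\,\exp(i2^{-1/2}a(Kv))\,\exp(-\tfrac{1}{4}\|Kv\|_H^2)$ and $a(Kv)\Omega=0$, then gives $\langle\Omega\,|\,W(v)\Omega\rangle_{\mathfrak{F}_+(H)}=\exp(-\tfrac{1}{2}\mu(v,v))=\sigma(\mathcal{W}(v))$, matching the defining formula of the quasifree state.

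Finally, the uniqueness part of the GNS theorem (Theorem 1.3.3 in the excerpt) produces a unitary $U:\mathcal{H}_\sigma\to\mathfrak{F}_+(H)$ with $U\Psi_\sigma=\Omega$ and $U\pi_\sigma(a)U^{-1}=\pi(a)$ for every $a\in\mathcal{A}(S)$; transporting every structure back along $U^{-1}$ yields claims (i)--(iv) in the statement. The main obstacle I foresee is the rigorous transfer of the algebraic Weyl relation from the dense domain $\mathcal{D}_0$, where everything reduces to a formal polynomial computation, to an identity between bounded operators defined by functional calculus: one must check that $\mathcal{D}_0$ is a common core for all the relevant self-adjoint generators and control the interchange of limits in the BCH expansion, which is precisely where the analytic-vector machinery becomes indispensable.
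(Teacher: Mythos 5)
The paper itself offers no proof of this theorem: it is quoted as a standard result with a pointer to \cite{KW91,BDFY15,WAL79}, and your construction --- one-particle structure from Proposition \ref{prop:4:one_particle_structure}, Segal fields on the symmetric Fock space, essential self-adjointness via Nelson's theorem on the finite-particle domain, Weyl relations by BCH on analytic vectors, evaluation of the vacuum generating functional, and finally GNS uniqueness --- is precisely the argument given in those references. So the strategy is the right one and is the proof the paper implicitly relies on; the analytic-vector care you flag at the end is indeed the only genuinely delicate point of that route.

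There is, however, a concrete numerical inconsistency at the one step that carries the whole identification. With $\Phi(v)=2^{-1/2}\bigl(a^\dagger(Kv)+a(Kv)\bigr)$ and Proposition \ref{prop:4:one_particle_structure}(ii), i.e.\ $\langle Kv\,|\,Kw\rangle_H=\mu(v,w)+\tfrac{i}{2}\tilde{\sigma}(v,w)$, one computes $[\Phi(v),\Phi(w)]=i\,\mathrm{Im}\langle Kv\,|\,Kw\rangle_H=\tfrac{i}{2}\tilde{\sigma}(v,w)$, not $i\tilde{\sigma}(v,w)$, and your (correct) normal-ordering identity then gives $\langle\Omega\,|\,W(v)\Omega\rangle=e^{-\frac{1}{4}\|Kv\|_H^2}=e^{-\frac{1}{4}\mu(v,v)}$, not $e^{-\frac{1}{2}\mu(v,v)}$. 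As written, the vector state of $\Omega$ therefore does \emph{not} coincide with the quasifree state (whose generating functional is $e^{-\frac{1}{2}\mu(v,v)}$, cf.\ the formula used for $\omega_\mu$ in Chapter \ref{chapter:3}), and the appeal to GNS uniqueness does not close. The repair is pure normalization: take $\Phi(v)=a^\dagger(Kv)+a(Kv)$ (equivalently rescale $K$ by $\sqrt{2}$), which yields $[\Phi(v),\Phi(w)]=i\tilde{\sigma}(v,w)$, the Weyl phase $e^{-\frac{i}{2}\tilde{\sigma}(v,w)}$ and $\langle\Omega\,|\,W(v)\Omega\rangle=e^{-\frac{1}{2}\mu(v,v)}$. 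While fixing this, also match the phase convention of the algebra actually being represented, $\mathcal{W}(\varphi)\mathcal{W}(\tilde{\varphi})=e^{2\pi i\,\tau(\varphi,\tilde{\varphi})}\mathcal{W}(\varphi+\tilde{\varphi})$ with $\tau$ valued in $\mathbb{T}$, i.e.\ state once and for all how $\tilde{\sigma}$ is related to $\tau$; otherwise $\mathcal{W}(v)\mapsto e^{i\Phi(v)}$ need not be a representation of the paper's $\mathfrak{CCR}$ algebra. Finally, it is worth one explicit sentence that agreement of the two states on the Weyl generators suffices, since linearity and continuity in the minimal regular norm determine both on the whole algebra.
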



We can then push the investigation a step forward and wonder whether the duality operator is second quantized. We find that the answer is positive.


Proposition \ref{prop:4:one_particle_structure} and Theorem \ref{thm:4:GNS_quasifree_state}, upon performing the identifications $S=d\Omega^{0,0}(\S^1)$, $\tilde{\sigma}=\tau_u$ and $\sigma=\omega_\mu$, yield a new GNS triple, unitarily isomorphic to $(\mathcal{H}_{\omega_\mu},\pi_{\omega_\mu},\Psi_{\omega_\mu})$ on account of the GNS theorem, whose Hilbert space is a bosonic Fock space. Let $H$ be the one-particle Hilbert space and $K:d\Omega^{0,0}(\Sigma)\to H$ the map given by the one-particle structure. The operator implementing duality on $H$ is the linear extension of:
\begin{eqnarray}
&\tilde{V}: H\to H&\nonumber\\
&K(df,d\tilde{f})\mapsto K(-(-1)^{k^2}d\tilde{f}, df), &\nonumber\\
&iK(df,d\tilde{f})\mapsto iK(-(-1)^{k^2}d\tilde{f}, df). &
\end{eqnarray}

It is a straightforward check that $\tilde{V}$ is unitary, as, when computing the inner product, we realise that the grading compensates for the swap of the components, in a way very similar to what happens to the exponential in Remark \ref{rem:4:duality_topological}. Its second quantization $\Gamma(\tilde{V})$ provides the counterpart of $V$ on the bosonic Fock space, thus confirming our claim.

\chapter*{Conclusions} \addcontentsline{toc}{chapter}{Conclusions} \markboth{Conclusions}{Conclusions}

The original result of the present work is the construction of states for a quantum field theory on differential cohomology for spacetimes with compact Cauchy surface, implementing the quantum Abelian duality.\\

In Section \ref{section:2.3_quantum_field_theory} we illustrated how the relevant space of observables, namely $\C^k_{sc}(M;\Z)$, fits in the commutative diagram of short exact sequences \eqref{diag:3:constr_pre_sympl_str} and can be endowed with a pre-symplectic structure. Section \ref{section:3.1_presymplectic_strunctures} was devoted to show that, resorting to the properties provided by differential cohomology, the pre-symplectic structure on the central object can be propagated to the other elements of the diagram in a consistent way. Furthermore, the existence of suitable splittings compatible with such structures has been proved in Section \ref{section:3.2_Presymplectic_decomposition}, yielding the following symplectically orthogonal decomposition:
\begin{align}\label{concl:pres_decomp}
\left(\C^k(M;\Z),\sigma\right)&=\left(H_{\tor}^{k,m-k}(M;\Z),\tau_d\right)\nonumber\\
&\oplus \left(H^{k,m-k}_{\free}(M;\Z)\times \frac{H^{k-1,m-k-1}(M;\R)}{H^{k-1,m-k-1}(M;\Z)},\tau_{lr}\right) \nonumber\\
&\oplus \left(d\Omega^{k-1}\cap \ast d\Omega^{m-k-1}(M),\tau_u\right).
\end{align}

The $\mathfrak{CCR}$ functor \eqref{eq:2:CCR_functor} assigns to $\left(\C^k(M;\Z),\sigma\right)$ an object in the category $\mathsf{C^\ast Alg}$, which, on account of \eqref{concl:pres_decomp}, can be presented as the tensor product of the algebras associated to the direct summands. Observing that the torsion subgroup is vanishing in a wide class of examples (see Section \ref{section:3:differential_characters} and Section \ref{subsection:3.4.2_examples}), we considered the algebra for the dynamical sector $\mathcal{A}_u$ and the algebra for the topological sector $\mathcal{A}_{lr}$ and we exhibited a state of the form:
\begin{equation}
\omega=\omega_\mu \otimes \omega_{t}: \mathcal{A}_u\otimes \mathcal{A}_{lr}\to \mathbb{C},
\end{equation}
where $\omega_\mu$ is Hadamard in a weak sense and $\omega_t$ is a state for the topological sector (see equations \eqref{eq:3:state_2D},\eqref{def:3:omega_mu},\eqref{eq:3:state_topol_0} and \eqref{eq:3:state_topol}). Notice that the construction of $\omega_t$ performed in Section \ref{subsection:3.3.4_state_topological_sector} is quite general and prescinds from the dimensionality of the spacetime.\\

Special attention has been devoted to the two- and the four-dimensional cases in Section \ref{section:3.3_2dim_case} and Section \ref{section:3.4_4D_case} respectively. For $M=\R\times \S^1$, we derived an explicit form for differential characters in Section \ref{section:3:differential_characters} and we computed the pre-symplectic product in Section \ref{subsection:3.3.2_comput_presympl_prod}. Furthermore, we proved that a Hadamard state exists, which is ground (see Section \ref{section:3:Hadamard_2D}). In Section \ref{subsection:4.3.1_Hadamard_state_general} we provided, in arbitrary spacetime dimension, an explicit formula for $\omega_\mu$ in terms of the coefficients of the Fourier decomposition of the initial data on a compact Cauchy surface.\\

At last, in Chapter \ref{chapter:4} quantum Abelian duality has been investigated. After showing the existence of a duality for our theory, in Section \ref{section:4.1_duality _and_splittings} we proved that it is possible to choose the splittings of the short exact sequences in such a way that they are compatible with the duality. In Section \ref{section:4.2_duality_and_GNS} we proved that, at the level of the GNS triple of $\omega$, the duality is implemented by the unitary operator:
\begin{equation}
V \otimes U: \mathcal{H}_{\omega_\mu}\otimes \mathcal{H}_{\omega_t}\to \mathcal{H}_{\omega_\mu}\otimes \mathcal{H}_{\omega_t}.
\end{equation}
In particular, moving to the Fock representation, $V$ turns out to be second quantized.
Furthermore, in  Proposition \ref{prop:4:state_invariant_symm} and Proposition \ref{prop:4:state_invariant_duality} we pointed out how the state $\omega_\mu$ is invariant both under the spacetime symmetries and under duality. Theorem \ref{thm:4:symm_duality_unitary_commuting} then guarantees that at the level of the representation, the spacetime symmetries are implemented by unitary operators commuting with the duality.\\

The question of how to build a state for an arbitrary globally hyperbolic spacetime still remains open. A possible continuation of the present work could be the pursuit of an alternative strategy to achieve a decomposition of the space of the observables for the general case. Nonetheless, as discussed in Section \ref{section:3.5_non_compact_case}, some clues suggest that if we put no further restrictions on the topology of the underlying manifold the analysis is destined to be a case by case inspection.
\chapter*{Acknowledgements}\addcontentsline{toc}{chapter}{Acknowledgements} \markboth{Acknowledgements}{Acknowledgements}

\thispagestyle{empty}

It is a pleasure for me to address my most grateful thank to Claudio and Marco, my advisor and my co-advisor, for their invaluable help and constant support: without them, this work would have never seen its end. Almost one year ago, bad luck overtook them and they happened to have a student who, besides all the shortcomings of an average Master's student, is teetotaller and unresponsive to any football ardour or enthusiasm, at least one of them without even knowing it in advance. In spite of this, they managed to make the thesis research literally fun and stimulating all the way.\\

I also wish to thankfully acknowledge the kind invitation of professor Christian B\"ar to visit the Institut f\"ur Mathematik of the University of Potsdam last April, where part of this work has been done.\\

\emph{Pavia, 21 July 2016}
\vspace{0.8cm}
\begin{flushright}
Matteo Capoferri \hspace{2cm}{\color{white}.}
\end{flushright}

\clearpage{\pagestyle{empty}\cleardoublepage}

\nocite{*}
\addcontentsline{toc}{chapter}{Bibliography}

\bibliography{bibliography.bib}

\end{document}